\newcommand{\PP}{\ensuremath{P}}
\newcommand{\CC}{\ensuremath{\mathcal{C}}}
\newcommand{\II}{\ensuremath{\mathcal{I}}}
\newcommand{\qinit}{\ensuremath{q_{in}}}
\newcommand{\set}[1]{\{#1\}}
\renewcommand{\Vert}[1]{\ensuremath{\textsf{V}(#1)}}
\newcommand{\Edges}[1]{\ensuremath{\textsf{E}(#1)}}
\newcommand{\Neigh}[1]{\ensuremath{\textsf{N}(#1)}}
\newcommand{\NeighG}[2]{\ensuremath{\textsf{N}_{#1}(#2)}}
\newcommand{\trans}{\ensuremath{\rightarrow}}
\newcommand{\transup}[1]{\ensuremath{\xrightarrow{#1}}}
\newcommand{\nat}{\mathbb{N}}
\newcommand{\Cover}{\ensuremath{\textsc{Cover}}}
\newcommand{\CoverLine}{\ensuremath{{\textsc{Cover}[\Lines]}}}
\newcommand{\CoverTree}{\ensuremath{{\textsc{Cover}[\Trees]}}}
\newcommand{\CoverStar}{\ensuremath{{\textsc{Cover}[\Stars]}}}
\newcommand{\VASSCover}{\ensuremath{\textsc{VASSControlReach}}}
\newcommand{\counter}{\textsf{x}}
\newcommand{\othercounter}{\textsf{y}}
\newcommand{\Loc}{\text{Loc}}
\newcommand{\TransM}{\text{Trans}}
\newcommand{\transRelM}[1]{\xrightarrow{#1}}
\newcommand{\ellinit}{\ensuremath{\ell_{in}}}
\newcommand{\inc}[1]{\ensuremath{\text{inc(}#1\text{)}}}
\newcommand{\dec}[1]{\ensuremath{\text{dec(}#1\text{)}}}
\newcommand{\test}[1]{\ensuremath{\text{test(}#1\text{)}}}
\newcommand{\zerostate}{\textsf{z}}
\newcommand{\incmess}[2]{\ensuremath{\text{todo}_{{#1}+1}^{#2}}}
\newcommand{\decmess}[2]{\ensuremath{\text{todo}_{{#1}-1}^{#2}}}
\newcommand{\testmess}[2]{\ensuremath{\text{test}_{#1}^{#2}}}
\newcommand{\overincmess}[2]{\ensuremath{{\text{done}_{{#1}+1}}^{#2}}}
\newcommand{\overdecmess}[2]{\ensuremath{{\text{done}_{{#1}-1}}^{#2}}}
\newcommand{\okincmess}[2]{\ensuremath{\overline{\text{todo}_{{#1}+1}}^{#2}}}
\newcommand{\okdecmess}[2]{\ensuremath{\overline{\text{todo}_{{#1}-1}}^{#2}}}
\newcommand{\okoverincmess}[2]{\ensuremath{\overline{\text{done}_{{#1}+1}}^{#2}}}
\newcommand{\okoverdecmess}[2]{\ensuremath{\overline{\text{done}_{{#1}-1}}^{#2}}}
\newcommand{\AOk}{\ensuremath{\textsf{ok}}}
\newcommand{\AOp}{\ensuremath{\textsf{op}}}
\newcommand{\todoinc}[2]{\ensuremath{\text{todo}_{{#1}+1}^{#2}}}
\newcommand{\tododec}[2]{\ensuremath{\text{todo}_{{#1}-1}^{#2}}}
\newcommand{\doneinc}[2]{\ensuremath{{\text{done}_{{#1}+1}}^{#2}}}
\newcommand{\ovtest}[2]{\ensuremath{\overline{\text{test}_{{#1}}}^{#2}}}
\newcommand{\ovdoneinc}[2]{\ensuremath{\overline{\text{done}_{{#1}+1}}^{#2}}}
\newcommand{\ovdonedec}[2]{\ensuremath{\overline{\text{done}_{{#1}-1}}^{#2}}}
\newcommand{\ovtodoinc}[2]{\ensuremath{\overline{\text{todo}_{{#1}+1}}^{#2}}}
\newcommand{\ovtododec}[2]{\ensuremath{\overline{\text{todo}_{{#1}-1}}^{#2}}}
\newcommand{\doneop}[2]{\text{done}_{#1}^{#2}}
\newcommand{\ovanOp}[2]{\ensuremath{\overline{#1}^{#2}}}
\newenvironment{proofof}[1]{\textit{Proof of #1.}}{\hfill$\square$\\ }
\newcommand{\Lines}{\ensuremath{\textsf{Lines}}}
\newcommand{\Trees}{\ensuremath{\textsf{Trees}}}
\newcommand{\Stars}{\ensuremath{\textsf{Stars}}}
\newcommand{\Topo}{\ensuremath{\textsf{Graphs}}}
\newcommand{\firstBroadcast}[2]{\ensuremath{b_{\textsf{first}}(#1,#2)}}
\newcommand{\lastBroadcast}[2]{\ensuremath{t_{\textsf{last}}(#1,#2)}}
\newcommand{\problemtitle}[1]{\gdef\@problemtitle{#1}}
\newcommand{\probleminput}[1]{\gdef\@probleminput{#1}}
\newcommand{\problemquestion}[1]{\gdef\@problemquestion{#1}}
\newcommand{\problemquestionline}[1]{\gdef\@problemquestionline{#1}}
	\par\addvspace{.2\baselineskip}
			\normalsize \textbf{Input:} &  \@probleminput \\
			\normalsize \textbf{Question:} &  \@problemquestion\\
	\par\addvspace{.2\baselineskip}
\newcommand{\LineTikzLong}[9]{%
	\node[rounded rectangle, draw] (v1) [yshift = #1] {\footnotesize{$v_1:$}\small\ensuremath{#2}} ;
	\node[rounded rectangle, draw] (v3) [right of =v1, xshift = 25] {\footnotesize$v_{N-2}:$\small\ensuremath{#3}};
	\node[rounded rectangle, draw] (v4) [right of =v3] {\footnotesize$v_{N-1}:$\small\ensuremath{#4}};
	\node[rounded rectangle, draw] (v5) [right of = v4] {\footnotesize$v_N:$\small\ensuremath{#5}};
	\node[rounded rectangle, draw] (v6) [right = of v5] {\footnotesize$v_{N+1}:$\small\ensuremath{#6}};
	\node[rounded rectangle, draw] (v7) [right = of v6] {\footnotesize$v_{N+2}:$\small\ensuremath{#7}};
	\node[rounded rectangle, draw] (v9) [right = of v7, xshift = 25] {\footnotesize$v_\ell:$\small\ensuremath{#8}};
	
	\path (v1) -- node[auto=false]{\ldots} (v3);
	\path (v7) -- node[auto=false]{\ldots} (v9);
	
	\path[-] 
	(v3) edge node {} (v4)
	(v4) edge node {} (v5) 
	(v5) edge node {} (v6) 
	(v6) edge node {} (v7) 
	;
}
\newcommand{\LineTikz}[7]{%
	\node[rounded rectangle, draw] (v0) [yshift = #1] {$v_0:$\ensuremath{#2}} ;
	\node[rounded rectangle, draw] (v1) [right of =v0] {$v_1:$\ensuremath{#3}};
	\node[rounded rectangle, draw] (v2) [right of =v1] {$v_2:$\ensuremath{#4}};
	\node[rounded rectangle, draw] (v0b) [right of =v2] {$v_3:$\ensuremath{#5}};
	\node[rounded rectangle, draw] (v1b) [right of = v0b] {$v_4:$\ensuremath{#6}};
	\node[rounded rectangle, draw] (v2t) [right = of v1b] {$v_5:$\ensuremath{#7}};

	\path[-] 
	(v0) edge node {} (v1)
	(v1) edge node {} (v2)
	(v2) edge node {} (v0b)
	(v0b) edge node {} (v1b) 
	(v1b) edge node {} (v2t) 
	;
}
\newcommand{\LineTikzFinal}[7]{%
	\node[rounded rectangle, draw] (v0) [yshift = #1] {$v_0:$\ensuremath{#2}} ;
	\node[rounded rectangle, draw] (v1) [right of =v0] {$v_1:$\ensuremath{#3}};
	\node[rounded rectangle, draw] (v2) [right of =v1] {$v_2:$\ensuremath{#4}};
	\node[rounded rectangle, draw] (v0b) [right of =v2] {$v_3:$\ensuremath{#5}};
	\node[rounded rectangle, draw] (v1b) [right of = v0b] {$v_4:$\ensuremath{#6}};
	\node[rounded rectangle, draw] (v2t) [right = of v1b, fill = green!20] {$v_5:q_f$};

	\path[-] 
	(v0) edge node {} (v1)
	(v1) edge node {} (v2)
	(v2) edge node {} (v0b)
	(v0b) edge node {} (v1b) 
	(v1b) edge node {} (v2t) 
	;
}
\newcommand{\LineTikzFrOne}[7]{%
	\node[rounded rectangle, draw] (v0) [yshift = #1] {$v_0:$\ensuremath{#2}} ;
	\node[rounded rectangle, draw, fill=red!20] (v1) [right of =v0] {$v_1:\frownie$};
	\node[rounded rectangle, draw] (v2) [right of =v1] {$v_2:$\ensuremath{#4}};
	\node[rounded rectangle, draw] (v0b) [right of =v2] {$v_3:$\ensuremath{#5}};
	\node[rounded rectangle, draw] (v1b) [right of = v0b] {$v_4:$\ensuremath{#6}};
	\node[rounded rectangle, draw] (v2t) [right = of v1b] {$v_5:$\ensuremath{#7}};

	\path[-] 
	(v0) edge node {} (v1)
	(v1) edge node {} (v2)
	(v2) edge node {} (v0b)
	(v0b) edge node {} (v1b) 
	(v1b) edge node {} (v2t) 
	;
}
\newcommand{\LineTikzZoomOne}[7]{%
	\node[rounded rectangle, draw, inner sep = 2] (v0) [yshift = #1] {$v_0:$\ensuremath{#2}} ;
	\node[rounded rectangle, draw, inner sep = 2] (v1) [right of =v0, xshift = -5] {$v_1:\ensuremath{#3}$};
	\node[rounded rectangle, draw, inner sep = 2] (v2) [right of =v1] {$v_2:$\ensuremath{#4}};
	\node[rounded rectangle, inner sep = 2] (v0b) [right of =v2,xshift = -25] {};

	\path[-] 
	(v0) edge node {} (v1)
	(v1) edge node {} (v2)
	(v2) edge node {} (v0b)
	;

}
\newcommand{\LineTikzZoomFive}[7]{%
	\node[rounded rectangle, inner sep = 2] (v0) [yshift = #1] {} ;
	\node[rounded rectangle, draw, inner sep = 2] (v1) [right of =v0, xshift = -25] {$v_1:\ensuremath{#3}$};
	\node[rounded rectangle, draw, inner sep = 2] (v2) [right of =v1] {$v_2:$\ensuremath{#4}};
	\node[rounded rectangle, inner sep = 2] (v0b) [right of =v2,xshift = -25] {};

	\path[-] 
	(v0) edge node {} (v1)
	(v1) edge node {} (v2)
	(v2) edge node {} (v0b)
	;
	
}
\newcommand{\LineTikzZoomFour}[7]{%
	\node[rounded rectangle, draw, inner sep = 2] (v0) [yshift = #1] {$v_0:$\ensuremath{#2}} ;
	\node[rounded rectangle, draw, inner sep = 2] (v1) [right of =v0, xshift = -5] {$v_1:\ensuremath{#3}$};
	\node[rounded rectangle, inner sep = 2] (v2) [right of =v1, xshift = -25] {};

	\path[-] 
	(v0) edge node {} (v1)
	(v1) edge node {} (v2)
	;
}
\newcommand{\LineTikzZoomFiveFrOne}[7]{%
	\node[rounded rectangle, inner sep = 1] (v0) [yshift = #1] {} ;
	\node[rounded rectangle, draw, fill=red!20, inner sep = 1, xshift = -25] (v1) [right of =v0] {$v_1:\frownie$};
	\node[rounded rectangle, draw, inner sep = 1] (v2) [right of =v1] {$v_2:$\ensuremath{#4}};
	\node[rounded rectangle, inner sep = 1] (v0b) [right of =v2,xshift = -25] {};

	\path[-] 
	(v0) edge node {} (v1)
	(v1) edge node {} (v2)
	(v2) edge node {} (v0b)
	;
}
\newcommand{\LineTikzZoomFourFrOne}[7]{%
	\node[rounded rectangle, draw, inner sep = 1] (v0) [yshift = #1] {$v_0:$\ensuremath{#2}} ;
	\node[rounded rectangle, draw, fill=red!20, inner sep = 1, xshift = -5] (v1) [right of =v0] {$v_1:\frownie$};
	\node[rounded rectangle,  inner sep = 1] (v2) [right of =v1, xshift = -25] {};

	\path[-] 
	(v0) edge node {} (v1)
	(v1) edge node {} (v2)
	;
}
\newcommand{\LineTikzZoomTwo}[7]{%
	\node[rounded rectangle, inner sep = 1] (v1) [yshift = #1] {};

	\node[rounded rectangle, draw, inner sep = 2] (v2) [right of =v1, xshift = -20] {$v_2:$\ensuremath{#4}};
	\node[rounded rectangle, draw, inner sep = 2] (v0b) [right of =v2, xshift = -5] {$v_3:$\ensuremath{#5}};
	\node[rounded rectangle, draw, inner sep = 2] (v1b) [right of = v0b, xshift =-5] {$v_4:$\ensuremath{#6}};
	\node[rounded rectangle, draw, inner sep = 2] (v2t) [right = of v1b, xshift = -10] {$v_5:$\ensuremath{#7}};

	\path[-] 
	(v1) edge node {} (v2)
	(v2) edge node {} (v0b)
		(v0b) edge node {} (v1b) 
		(v1b) edge node {} (v2t) 
	;
}
\newcommand{\AConfiguration}[6]{%
	\node (v21) [draw,xshift = #5, yshift = #6, rounded rectangle, fill=orange!0] {$v_1:$ \ensuremath{#1}};
	\node (v11) [draw, xshift = #5 +22, yshift = #6 -20, rounded rectangle, fill=cyan!0] {$v_2:$ \ensuremath{#2}};
	\node (v31) [draw, xshift = #5 -22, yshift = #6 -20,  rounded rectangle, fill=purple!0] {$v_3:$ \ensuremath{#3}};

	\path[-] 
	(v21) edge node {} (v11)
	(v21) edge node {} (v31)
	(v11) edge [opacity = #4] node {} (v31)  
	
	;
}
\newcommand{\idleState}{\textsf{idl}}
\newcommand{\idleStateAbrev}{\textsf{idl}}
\newcommand{\execState}{\textsf{ex}}
\newcommand{\execStateAbrev}{\textsf{ex}}
\newcommand{\restState}{\textsf{hlt}}
\newcommand{\restStateAbrev}{\textsf{hlt}}
\newcommand{\repState}{\textsf{tr}}
\newcommand{\repStateAbrev}{\textsf{tr}}
\newcommand{\checkState}{\textsf{ch}}
\newcommand{\checkStateAbrev}{\textsf{ch}}
\newcommand{\bprint}[1]{\mathbf{bprint}(#1)}
\newcommand{\lug}[1]{}
\newcommand{\nas}[1]{}
\newcommand{\ars}[1]{}
\newcommand{\lugtext}[1]{}
\newcommand{\nastext}[1]{}
\newcommand{\arstext}[1]{}
\newcommand{\provisoire}[1]{}
\title{Phase-Bounded Broadcast Networks over Topologies of Communication}
\author{Lucie Guillou}{IRIF, CNRS, Universit\'e Paris Cit\'e,
France}{guillou@irif.fr}{https://orcid.org/0000-0002-6101-2895}{ANR project PaVeDyS (ANR-23-CE48-0005)}
\author{Arnaud Sangnier}{DIBRIS, Università di Genova, Italy}{arnaud.sangnier@unige.it}{https://orcid.org/0000-0002-6731-0340}{}
\author{Nathalie Sznajder}{LIP6, CNRS, Sorbonne Universit\'e,
France}{nathalie.sznajder@lip6.fr}{https://orcid.org/0000-0002-4199-2443}{}
\authorrunning{L. Guillou and A. Sangnier and N. Sznajder}
\keywords{Parameterized verification, Coverability, Broadcast Networks}
\newif\iflong
\newif\ifshort
\newcommand{\Iflong}[1]{\iflong#1\fi}
\newcommand{\Ifshort}[1]{\ifshort#1\fi}
\begin{document}
\maketitle

%

\begin{abstract}
We study networks of processes that all execute the same finite state protocol  and that communicate through broadcasts. The processes are organized
in a graph (a \emph{topology}) and only the neighbors of a process in this graph can receive its broadcasts. 
The coverability problem asks, given a protocol and a state of the protocol, whether there is a topology for the processes such that one of them (at least) 
reaches the given state. This problem is undecidable~\cite{DelzannoSZ10}. We study here an under-approximation of the problem where processes
alternate a bounded number of times $k$ between phases of broadcasting and phases of receiving messages. We show that, if the problem remains undecidable when
$k$ is greater than 6, it becomes decidable for $k=2$, and $\textsc{ExpSpace}$-complete for $k=1$. Furthermore, we show that if we restrict ourselves to line topologies,
the problem is in $P$ for $k=1$ and $k=2$. 
\end{abstract}
\section{Introduction}

\emph{Verifying networks with an unbounded number of entities.} Ensuring safety properties for concurrent and distributed systems is a challenging task, since all possible interleavings must be taken into account; hence, even if each entity has a finite state behavior, the verification procedure has to deal with the state explosion problem. Another level of difficulty arises when dealing with distributed protocols designed for an unbounded number of entities. In that case, the safety verification problem consists in ensuring the safety of the system, for any number of participants. Here, the difficulty comes from the infinite number of possible instantiations of the network. In their seminal paper\cite{german92}, German and Sistla propose a formal model to represent and analyze such networks: in this work, all the processes 
in the network execute the same protocol, given by a finite state automaton, and they communicate thanks to pairwise synchronized rendez-vous. The authors study the parameterized coverability problem, which asks whether there exists an initial number of processes that allow an execution leading to a configuration in which (at least) one process is in an error state (here the parameter is the number of processes). They show that it is decidable in polynomial time. 
 Later on, different variations of this model have been considered, by modifying the communication means: token-passing mechanism\cite{clarke-verification-concur04,aminof-param-vmcai14}, communication through shared register \cite{esparza-parameterized-jacm16,durand-model-fmsd17}, non-blocking rendez-vous mechanism \cite{guillou-safety-concur23}, or adding a broadcast mechanism to send a message to all the entities \cite{esparza-verification-lics99}. The model of population protocol proposed in \cite{angluin-computation-podc04}  and for which verification methods have been developed recently in \cite{esparza-verification-acta17,esparza-complexity-discomp21} belongs also to this family of systems. In this latter model, the properties studied are different, and more complex than safety
 conditions.

\smallskip
\noindent\emph{Broadcast networks working over graphs.} In \cite{DelzannoSZ10}, Delzanno et. al propose a new model of parameterized network in which each process communicates \emph{with its neighbors} by broadcasting messages. The neighbors of an entity are given thanks to a graph: the \emph{communication topology}. This model was inspired by ad hoc networks, where nodes communicate with each other thanks to radio communication. The difficulty in proving safety properties for this new model lies in the fact that one has to show that the network is safe for all possible numbers of processes and all possible communication topologies. So the verification procedure not only looks for the number of entities, but also for a graph representing the relationship of the neighbours to show unsafe execution. As mentioned earlier, it is not the first work to propose a parameterized network with broadcast communication; indeed the parameterized coverability problem in networks with broadcast is decidable \cite{esparza-verification-lics99} and non-primitive recursive \cite{schmitz-power-concur13} when the communication topology is complete (each entity is a neighbor of all the others). However, when there is no restriction on the allowed communication topologies the problem becomes undecidable \cite{DelzannoSZ10} but decidability can be regained by providing a bound on the length of all simple paths in allowed topologies \cite{DelzannoSZ10}. This restriction has then been extended in \cite{DelzannoSZ11} to allow also cliques in the model. However, with this restriction, the complexity of parameterized coverability is non-primitive recursive \cite{DelzannoSZ11}.

\smallskip
\noindent\emph{Bounding the number of phases.} When dealing with infinite-state systems with an undecidable safety verification problem, one option consists in looking at under-approximations of the global behavior, restricting the attention to a subset of executions. If proving whether the considered subset of executions is safe is a decidable problem, this technique leads to a sound but incomplete method for safety verification. Good under-approximation candidates are the ones that can be extended automatically to increase the allowed behavior. For instance, it is known that safety verification of finite systems equipped with integer variables that can be incremented, decremented, or tested to zero is undecidable \cite{Minsky67}, but if one considers only executions in which, for each counter, the number of times the execution alternates between an increasing mode and a decreasing mode is bounded by a given value, then safety verification becomes decidable \cite{ibarra-reversal-acm78}. Similarly, verifying concurrent programs manipulating stacks is undecidable \cite{ramalingam-context-acm00} but decidability can be regained by bounding the number of allowed context switches (a context being a consecutive sequence of transitions performed by the same thread) \cite{qadeer-context-tacas05}. Context-bounded analysis has also been applied to concurrent programs with stacks and dynamic creation of threads~\cite{ABQ2011}. Another type of underapproximation analysis has been conducted by \cite{TorreMP} (and by \cite{BLS18} in another context), by considering bounded round-robin schedules of processes. Inspired by this work, we propose here to look at executions of broadcast networks over communication topologies where, for each process, the number of alternations between phases where it broadcasts messages and phases where it receives messages is bounded. We call such protocols $k$-phase-bounded protocols where $k$ is the allowed number of alternations.

\smallskip
\noindent \emph{Our contributions.} We study the parameterized coverability problem for broadcast networks working over communication topologies. We first show in Section 2 that it is enough to consider only tree topologies. This allows us to ease our presentation in the sequel and is also an interesting result by itself. In Section 3, we prove that the coverability problem is still undecidable when considering $k$-phase-bounded broadcast protocols with $k$ greater than 6.  The undecidability proof relies on a technical reduction from the halting problem for two counter Minsky machines. We then show in Sections 4 and 5 that if the number of alternations is smaller or equal to $2$, then decidability can be regained. More precisely, we show that for $1$-phase-bounded protocols, we can restrict our attention to tree topologies of height 1, which 
provides an \textsc{ExpSpace}-algorithm for the coverability problem.  To solve this problem in the case of $2$-phase-bounded protocols, we prove that we can bound the height of the considered tree and rely on the result of \cite{DelzannoSZ10} which states that the coverability problem for broadcast networks is decidable when considering topologies where the length of all simple paths is bounded. We furthermore show that if we consider line topologies then the coverability problem restricted to $1$- and $2$-phase-bounded protocols can be solved in polynomial time.\\
\noindent Due to lack of space, omitted proofs and reasonings can be found in Appendix.
\section{Preliminaries}
Let $A$ be a countable set, we denote $A^\ast$ as the set of finite sequences of elements taken in $A$. Let $w \in A^\ast$, the length of $w$ is defined as the number of elements in the sequence $w$ and is denoted $|w|$. For a sequence $w =a_1\cdot a_2 \cdots a_k \in A^+$, we denote by $w[-1]$ the sequence $a_1 \cdot a_2 \cdots a_{k-1}$. Let $\ell, n\in \nat$ with $\ell \leq n$, we denote by $[\ell, n]$ the set of integers $\set{\ell, \ell+1, \dots, n}$.

\subsection{Networks of processes}

We study networks of processes where each process executes the same protocol given as a finite-state automaton. Given a finite set of messages $\Sigma$, a transition of the protocol can be labelled by three types of actions: (1) the broadcast of a message $m \in \Sigma$ with label $!!m$, (2) the reception of a message $m \in \Sigma$ with label $?m$ or (3) an internal action with a special label $\tau \nin \Sigma$. Processes are organised according to a topology which gives for each one of them its set of neighbors. When a process broadcasts a message $m \in \Sigma$, the only processes that can receive $m$ are its neighbors, and the ones having an output action $?m$ have to receive it. Furthermore, the topology remains fixed during an execution.

Let $\Sigma$ be a finite alphabet. In order to refer to the different types of actions, we write $!!\Sigma$ for the set $\set{!!m \mid m \in \Sigma}$ and $?\Sigma$ for $\set{?m \mid m \in \Sigma}$.

\begin{definition}
	A \emph{Broadcast Protocol} is a tuple $\PP = (Q, \Sigma, \qinit, \Delta)$ such that $Q$ is a finite set of states, $\Sigma$ is a finite alphabet of messages, $\qinit$ is an initial state and $\Delta \subseteq Q \times (!!\Sigma \times ?\Sigma \cup \set{\tau}) \times Q$ is a finite set of transitions.
\end{definition}
We depict an example of a broadcast protocol in \cref{fig:bp:examples}. 
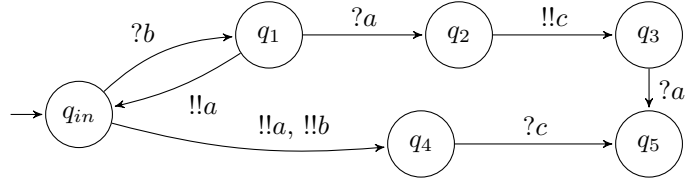
\begin{figure}
	\begin{center}
		\tikzset{box/.style={draw, minimum width=4em, text width=4.5em, text centered, minimum height=17em}}

%
%
%
%
%

\begin{tikzpicture}[-, >=stealth', shorten >=1pt,node distance=2.5cm,on grid,auto, initial text = {}] 
	\node[] (q0) {};
	\node[state,initial] (q0b) [above  of = q0, yshift = -60, xshift =0] {$\qinit$};
	\node[state] (q1) [right of = q0, yshift = 0, xshift =2cm] {$q_4$};
	\node[state] (q2) [right  of = q1, yshift = 0, xshift = 0.5cm] {$q_5$};
	\node[] (q3) [above  of = q0, yshift = -30, xshift =0] {};
	\node[state] (q4) [right of = q3, yshift = -0, xshift = 0] {$q_1$};
	\node[state] (q5) [right  of = q4, xshift = -0]{$q_2$};
	\node[state] (q6) [right  of = q5, xshift = -0]{$q_3$};
	
	\node [] (p1) [left of = q4, yshift = -18] {};
	\node[] (p2) [left of = q1, yshift = 0, xshift =-2cm] {};

	\path[->] 
	(q0b) edge [bend right =10] node {$!!a$, $!!b$} (q1) 
	(q1) edge node {$?c$} (q2) 
	(q0b) edge [bend left =20] node {$?b$} (q4)
	(q4) edge [bend left = 10] node {$!!a$} (q0b)
	(q4) edge node {$?a$} (q5)
	(q5) edge [] node {$!!c$} (q6)
	(q6) edge [] node {$?a$} (q2)
	;
\end{tikzpicture}
	\end{center}
	\caption{Example of a broadcast protocol denoted $\PP$}
	\label{fig:bp:examples}
\end{figure}
Processes are organised according to a topology, defined formally as follows.
\begin{definition}
	A \emph{topology} is an undirected graph, i.e. a tuple $\Gamma = (V, E)$ such that $V$ is a finite set of vertices, and $E \subseteq V\times V$ is a finite set of edges such that $(u,v) \in E$ implies $(v,u) \in E$ for all $(u, v) \in V^2$, and for all $u \in V$, $(u, u) \nin E$ (there is no self-loop).
\end{definition}
We will use $\Vert{\Gamma}$ and $\Edges{\Gamma}$ to denote the set of vertices and edges of $\Gamma$ respectively, namely $V$ and $E$. For $v \in V$, we will denote $\NeighG{\Gamma}{v}$ the set $\set{u \mid (v, u) \in E}$. When the context is clear, we will write $\Neigh{v}$. For $u,v \in \Vert{\Gamma}$, we denote $\langle v, u \rangle$ for the two pairs $(v, u), (u,v)$. We name \Topo\ the set of topologies.
In this work, we will also be interested in some families of topologies: line and tree topologies.
A topology $\Gamma = (V,E)$ is a \emph{tree topology} if $V$ is a set of words of $\nat^\ast$ which is prefix closed with $\epsilon \in V$, and if $E = \set{\langle w[-1], w\rangle  \mid w \in V \cap \nat^+}$. This way, the \emph{root} of the tree is the unique vertex $\epsilon \in V$ and a node $w \in V \cap \nat^+$ has a unique parent $w[-1]$. The \emph{height} of the tree is $\max\set{n\in\nat\mid |w| = n}$. 
We denote by $\Trees$ the set of tree topologies. 
A topology $\Gamma = (V, E)$ is a \emph{line topology} if $V$ is such that $V = \set{v_1, \dots, v_n}$ for some $n \in \nat$ and $E = \set{\langle v_i, v_{i+1} \rangle \mid 1 \leq i < n}$. We denote by $\Lines$ the set of line topologies. 


\emph{Semantics.}
A \emph{configuration} $C$ of a broadcast protocol $\PP = (Q,\Sigma,\qinit, \Delta)$\ is a tuple $(\Gamma, L)$ where $\Gamma$ is a topology, and $L : \Vert{\Gamma} \rightarrow Q$ is a labelling function associating to each vertex $v$ of the topology its current state of the protocol. 
In the sequel, we will sometimes call processes or nodes the vertices of $\Gamma$.
A configuration $C$ is \emph{initial} if $L(v) = \qinit$ for all $v\in \Vert{\Gamma}$. We let $\CC_\PP$ be the set of all configurations of \PP, and $\II_\PP$ the set of all initial configurations. When $\PP$ is clear from the context, we may drop the subscript and simply use $\CC$ and $\II$. 
Given a protocol $\PP = (Q,\Sigma,\qinit, \Delta)$, and a state $q \in Q$, we let $R(q)=\set{m \in \Sigma \mid \exists q' \in Q, (q, ?m, q') \in \Delta}$ be the
set of messages that can be received when in the state $q$.

Consider $\delta = (q, \alpha, q') \in \Delta$ a transition of $\PP$, and $C = (\Gamma, L)$ and $C' = (\Gamma', L')$ two configurations of $\PP$, and let $v \in \Vert{\Gamma}$ be a vertex.
The transition relation ${\transup{v,\delta}} \in \CC \times \CC$ is defined as follows: we have $C \transup{v,\delta} C'$ if and only if $\Gamma = \Gamma'$, and one of the following conditions holds:
\begin{itemize}
	\item $\alpha = \tau$ and $L(v) = q$, $L'(v) = q'$ and $L'(u) = L(u)$  for all $u \in \Vert{\Gamma} \setminus \set{v}$: vertex $v$ performs an internal action;
	\item $\alpha = !!m$ and $L(v) = q$, $L'(v) = q'$ (vertex $v$ performs a broadcast), and for each process $u \in \Neigh{v}$ neighbor of $v$, either $(L(u), ?m, L'(u)) \in \Delta$ (vertex $u$ receives message $m$ from $v$), or $m \nin R(L(u))$ and $L(u) = L'(u)$ (vertex $u$ is not in a state in which it can receive $m$ and stays in the same state). Furthermore, $L'(w) = L(w)$ for all other vertices $w \in \Vert{\Gamma} \setminus (\set{v} \cup \Neigh{v})$ (vertex $w$ does not change state).
\end{itemize}
We write $C \transup{} C'$ whenever there exists $v \in \Vert{\Gamma}$ and $\delta \in \Delta$ such that $C \transup{v,\delta} C'$. We denote by $\trans^\ast$ [resp. $\trans^+$] for the reflexive and transitive closure [resp. transitive] of $\trans$.
An \emph{execution} of $\PP$ is a sequence of configurations $C_0,\dots, C_n\in\CC_\PP$ such that for all $0\leq i <n$, $C_i\trans C_{i+1}$. 

\begin{example}\label{ex:execution}
	We depict in~\cref{fig:execution-example} an execution of protocol $\PP$ (from \cref{fig:bp:examples}): it starts with an initial configuration with three processes $v_1, v_2, v_3$, organised as a clique (each vertex is a neighbour of the two others), each on the initial state $\qinit$.
	More formally, $\Gamma = (V, E)$ with $V = \set{v_1, v_2, v_3}$ and $E = \set{\langle v_1, v_2\rangle, \langle v_2, v_3\rangle, \langle v_1, v_3\rangle }$. 
	From the initial configuration, the following chain of events happens: $C_0\transup{v_1,(\qinit, !!b, q_4)} C_1\transup{v_2, (q_1, !!a, \qinit)}C_2\transup{v_3,
	(q_2, !!c, q_3)}C_3$.
\end{example}


\begin{figure}
	\begin{center}
		\begin{tikzpicture}[->, >=stealth', shorten >=1pt,node distance=1.2cm,on grid,auto, initial text = {}] 
	\AConfiguration{\qinit}{\qinit}{\qinit}{1}{0}{0}
	\draw [-stealth](1.6,-0.5) -- node {$v_1$} (2,-0.5) ;
	\AConfiguration{q_4}{q_1}{q_1}{1}{3.6cm}{0}
	\draw [-stealth](5.2,-0.5) -- node {$v_2$} (5.6,-0.5) ;
	\AConfiguration{q_4}{\qinit}{q_2}{1}{7.2cm}{0}
	\draw [-stealth](8.8,-0.5) -- node {$v_3$} (9.2,-0.5) ;
	\AConfiguration{q_5}{\qinit}{q_3}{1}{10.8cm}{0cm}
\end{tikzpicture}
	\end{center}
	\caption{Example of an execution of protocol $\PP$ (\cref{fig:bp:examples}).}
	\label{fig:execution-example}
\end{figure}

\subsection{Verification problem}\label{subsec:verification-problems}
In this work, we focus on the \emph{coverability problem} which consists in ensuring a safety property: we want to check that, no matter the number of processes in the network, nor the topology in which the processes are organised, a specific error state can never be reached. 


The coverability problem over a family of topologies $\mathcal{S} \in \set{\Topo, \Trees, \Lines}$ is stated as follows:

\begin{decproblem}
	\problemtitle{$\Cover[\mathcal{S}]$~}
	\probleminput{A broadcast protocol $\PP$ and
		a  state $q_f \in Q$;} 
	\problemquestion{Is there $\Gamma \in \mathcal{S}$, $C = (\Gamma, L)\in \II_\PP$ and $C' =(\Gamma, L')\in \CC_\PP$ and $v\in \Vert{\Gamma}$ such that }\problemquestionline{$C \trans^\ast C'$ and $L'(v) = q_f$?}
\end{decproblem}
For a family $\mathcal{S}$, if indeed there exist $C = (\Gamma, L)$ and $C'=(\Gamma, L')$ such that $C\trans^\ast C'$ and $L'(v)=q_f$ for some $v\in\Vert{\Gamma}$, 
we say that $q_f$ is \emph{coverable} (in $\PP$) with $\Gamma$. We also say that the execution $C\trans^\ast C'$ \emph{covers} $q_f$. For short, we write $\Cover$ instead of $\Cover[\Topo]$.
Observe that $\Cover$ is a generalisation of $\CoverTree$ which is itself a generalisation of $\CoverLine$.
\nas{on le dit deja dans l'intro tout ça. On enlève?}In \cite{DelzannoSZ10}, the authors proved that the three problems are undecidable, and they later showed in \cite{DelzannoSZ11}~that the undecidability of \Cover\ still holds when restricting the problem to families of topologies with bounded diameter.

However, in \cite{DelzannoSZ10}, the authors show that $\Cover$ becomes decidable when searching for an execution covering $q_f$ with a $K$-bounded path topology for some $K \in \nat$, i.e. for a topology in which all simple paths between any pair of vertices $v_1, v_2 \in V$ have a length bounded by $K$.
In \cite{DelzannoSZ11}, it is also shown that \Cover~is Ackermann-hard when searching for an execution covering $q_f$ with a topology where all maximal cliques are connected by paths of bounded length.
\Iflong{\begin{example}
	In the broadcast protocol $\PP$ displayed in \cref{fig:bp:examples}, $q_5$ is coverable as shown in \cref{ex:execution}.\lug{en iflong?}
\end{example}}
\Ifshort{
We establish the first result.
\begin{theorem}\label{thm:Cover-CoverTree-equivalent}
$\Cover[\Topo]$ and $\CoverTree$ are equivalent. 
\end{theorem}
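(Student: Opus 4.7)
The direction $\CoverTree \Rightarrow \Cover[\Topo]$ is immediate: since $\Trees \subseteq \Topo$, any covering execution on a tree topology is already a covering execution on a (general) topology. For the converse, suppose that $\pi = C_0 \trans C_1 \trans \cdots \trans C_n$ is an execution on some $\Gamma = (V, E) \in \Topo$ that covers $q_f$ at some vertex $v^* \in V$. The plan is to construct a tree topology $T$ together with an execution $\pi'$ on $T$ that also covers $q_f$.

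The construction of $T$ proceeds by unfolding the causal structure of $\pi$. The root of $T$ is a fresh copy $r$ of $v^*$, whose role in $\pi'$ will be to replay the entire trajectory of $v^*$ in $\pi$. For each step of $\pi$ at which $v^*$ receives a message $m$ broadcast by some graph-neighbor $u$, we attach to $r$ a new child which is a fresh copy of $u$; this copy is dedicated to producing exactly that broadcast to $r$ at the appropriate moment in $\pi'$. In order to reach the state from which it emits this broadcast, the new copy of $u$ must first simulate the prefix of $u$'s trajectory in $\pi$ up to the broadcast event, and each reception of $u$ within that prefix is in turn handled by attaching another fresh child (a copy of the corresponding sender), and so on. Since each recursive call simulates a strictly shorter prefix of some vertex's trajectory in $\pi$, the unfolding terminates and yields a finite tree $T$.

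The execution $\pi'$ on $T$ replays the events of $\pi$ in their original order: every action of $v^*$ is performed at $r$, while every action of another vertex $v$ is performed by each tree copy of $v$ whose assigned prefix contains that action. The main obstacle is that in $\pi'$ a broadcast may generate side-effects that are absent in $\pi$: when a tree copy $c$ broadcasts, all its tree-neighbors in $T$ may be forced to receive, potentially altering their state in unintended ways. Two observations resolve this. First, each tree copy is \emph{single-use}: once it has emitted the broadcast it was created for, any subsequent forced state change is irrelevant for the coverage of $q_f$ at $r$. Second, as long as a tree copy has not yet finished its assigned prefix, its state in $\pi'$ coincides (by a straightforward induction on the step count) with the state of the corresponding $\Gamma$-vertex at the analogous moment of $\pi$, so it reacts to a neighbor's broadcast exactly as in $\pi$. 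Combining these two facts shows that $r$'s trajectory in $\pi'$ agrees with $v^*$'s in $\pi$, and in particular $r$ reaches $q_f$, establishing that $q_f$ is coverable in $\CoverTree$.
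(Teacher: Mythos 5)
There is a genuine gap in your tree construction: you attach to $r$ \emph{one fresh child per reception event} of $v^*$. Suppose $v^*$ receives two messages $m_1$ (at time $t_1$) and $m_2$ (at time $t_2 > t_1$) from the \emph{same} graph-neighbor $u$. You then create two copies $c_1, c_2$ of $u$, with $c_2$ assigned the prefix of $u$'s trajectory up to $t_2$ --- a prefix that necessarily contains the broadcast of $m_1$, since $c_2$ must take that broadcast transition to reach the state from which it emits $m_2$. In $\pi'$ both $c_1$ and $c_2$ therefore broadcast $m_1$, and since receptions are not optional in this model, $r$ is forced to receive $m_1$ twice (and more generally once per sibling copy whose prefix contains that event). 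If the state $r$ reaches after the first reception still has an outgoing reception on $m_1$, the second copy derails $r$'s trajectory, and the induction you invoke (``$r$'s trajectory in $\pi'$ agrees with $v^*$'s in $\pi$'') breaks. The same duplication occurs one level down: a copy $c$ of $u$ handles its receptions by spawning fresh children copies of the senders, but when the sender is $v^*$ itself, $c$ hears that message both from its dedicated child and from its parent $r$ (which replays all of $v^*$'s broadcasts). Your two observations (single-use copies, state agreement within prefixes) do not address either duplication, because the offending broadcasts are emitted by copies that are still \emph{inside} their assigned prefixes.

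The paper's proof avoids this by unfolding the graph itself rather than the event structure: the tree $\Gamma'$ rooted at (a copy of) $v_f$ gives each node $w\cdot x$ exactly \emph{one} child per neighbor of $\lambda(w\cdot x)$ in $\Gamma$, excluding the parent's image $\lambda(w)$, truncated at depth $n$. Every copy replays the \emph{full} trajectory of its image, so each node hears each broadcast exactly once and from exactly one tree-neighbor per graph-neighbor. Correctness is then tracked by depth (a configuration is $h$-correct if all nodes at depth at most $h$ agree with their images), losing one level per simulated step; after $n$ steps the root is still correct and reaches $q_f$, while only nodes near the leaves may deviate. If you replace ``one child per reception event'' by ``one child per graph-neighbor other than the parent's image, each simulating the whole run,'' your argument essentially becomes the paper's.
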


Indeed, if it is obvious that when a state is coverable with a tree topology, it is coverable with a topology from $\Topo$, we can show that whenever a state is 
coverable, it is coverable with a tree topology. If a set $q_f$ of a protocol $\PP$ is coverable with a topology $\Gamma\in\Topo$, let $\rho=C_0 \trans \cdots \trans C_n=(\Gamma, L_n)$ be an execution covering $q_f$, and a vertex $v_f \in \Vert{\Gamma}$ such that $L_n(v_f) = q_f$.
	We can build an execution covering $q_f$ with a tree topology $\Gamma'$ where the root reaches $q_f$. Actually, $\Gamma'$ is the unfolding
	of $\Gamma$ in a tree of height $n$. 
}


\section{Phase-Bounded Protocols}
As $\Cover[\Topo]$, $\CoverTree$ and $\CoverLine$ are undecidable in the general case, we investigate a restriction on broadcast protocols: phase-bounded protocols.

For $k \in \nat$, a $k$-phase-bounded protocol is a protocol that ensures that each process alternates at most $k$ times between phases of broadcasts and phases of receptions.
Before giving our formal definition of a phase-bounded protocol, we motivate this restriction. 

Phase-bounded protocols can be seen as a semantic restriction of general protocols in which each process can only switch a bounded number of times between phases where it receives messages and phases where it broadcasts messages. When, usually, restricting the behavior of processes immediately yields an
underapproximation of the reachable states, we highlight in \Ifshort{\cref{sec:cover-cover-phase-bounded}}\Iflong{here}\ the fact that preventing messages from being received can in fact lead to new reachable states. Actually, the reception of a message is something that is not under the control of a process. If another process broadcasts a message, a faithful behavior of the system is that all the processes that can receive it indeed do so, no matter in which phase they are in their own execution. Hence, in a restriction that attempts to limit the number of switches 
	between broadcasting and receiving phases, one should not prevent a reception to happen. 
This motivates our definition of phase-bounded protocols, in which a process in its last broadcasting phase, can still receive messages. A $k$-unfolding of a protocol $\PP$ is then a protocol in which we duplicate the vertices by annotating them with the type and the number of phase ($b$ or $r$ for broadcast or 
reception and an integer between $0$ and $k$ for the number). 
\begin{figure}
	\begin{center}
		\tikzset{box/.style={draw, minimum width=4em, text width=4.5em, text centered, minimum height=17em}}

\begin{tikzpicture}[-, >=stealth', shorten >=1pt,node distance=2.5cm,on grid,auto, initial text = {}] 
	\node[] (q0) {};
	\node[state,initial] (q0b) [above  of = q0, yshift = -60, xshift =0] {$\qinit^0$};
	\node[state] (q1) [right of = q0, yshift = -0, xshift =0.4cm] {$q_4^{b,1}$};
	\node[state] (q2) [right  of = q1, yshift = 0, xshift = 4.5cm] {$q_5^{r,2}$};
	\node[] (q3) [above  of = q0, yshift = -30, xshift =0] {};
	\node[state] (q4) [right of = q3, yshift = 5, xshift = -20] {$q_1^{r,1}$};
	
	\node[state] (q5) [right  of = q4, xshift = -20, yshift = -10]{$q_2^{r,1}$};
	\node[state] (q6) [right  of = q5, xshift = 20, yshift = -15]{$q_3^{b,2}$};
	
	\node[state] (qin2) [above of = q5, yshift = -60, xshift = 50] {$\qinit^{b,2}$};
	\node[state] (q42) [right of = qin2, yshift = -0, xshift = 0] {$q_4^{b,2}$};
	\node[state] (q12) [right of = qin2, yshift = 15, xshift = 60] {$q_1^{r,2}$};
	\node[state] (q22) [right of = q12] {$q_2^{r,2}$};
	
	\node [] (p1) [left of = q4, yshift = -18] {};
	\node[] (p2) [left of = q1, yshift = 0, xshift =-2cm] {};

	\node[dashed, draw, fill = orange, fill opacity = 0, text opacity = 1, fit=(q0b), text height=0.06 \columnwidth, label ={below:Phase 0}] () {};
	\node[dashed, draw, fill = green, fill opacity = 0, text opacity = 1, fit=(q1) (q4) (q5), text height=0.06 \columnwidth, label ={below:Phase 1}, inner xsep = 8, inner ysep=10] () [yshift = 5]{};
	\node[dashed, draw, fill = cyan, fill opacity = 0, text opacity = 1, fit=(q2) (q6) (qin2) (q42) (q12) (q22), text height=0.06 \columnwidth, label ={below:Phase 2}, inner ysep=8] () {};

	\node[fill=violet, fill opacity = 0.2, fit=(q1), inner sep =1] () {};
	\node[fill=green, fill opacity = 0.2, fit=(q4) (q5), inner sep =1] () {};
	
	\node[fill=violet, fill opacity = 0.2, fit=(q42) (qin2) (q6), inner sep =1] () {};
	\node[fill=green, fill opacity = 0.2, fit=(q12) (q22) (q2), inner sep =1] () {};

	\path[->] 
	(q0b) edge [bend right =10] node [xshift =-5] {$!!a$, $!!b$} (q1) 
	(q1) edge node [below] {$?c$} (q2) 
	(q0b) edge [bend left =20] node {$?b$} (q4)
	(q4) edge [bend right = 10] node [xshift = -5]{$?a$} (q5)
	(q5) edge [bend right = 10] node [below, xshift = -5, yshift = 2] {$!!c$} (q6)
	(q6) edge [bend left = 0, thick]  node {$?a$} (q2)
	
	(q4) edge [bend left = 20] node {$!!a$} (qin2)
	(qin2) edge [bend left = 20, thick]  node {$?b$} (q12)
	(qin2) edge [] node [yshift =-5]{$!!a, !!b$} (q42)
	(q12) edge [] node {$?a$} (q22)
	(q42) edge [bend left = 20, thick]  node {$?a$} (q2)
	
	;
\end{tikzpicture}
	\end{center}
	\caption{$\PP_2$: the 2-unfolding of protocol \PP\ (\cref{fig:bp:examples}).}
	\label{fig:unfolding:examples}
\end{figure}
\begin{example}
	\cref{fig:unfolding:examples} pictures the 2-unfolding of protocol \PP\ (\cref{fig:bp:examples}). Observe that from state $q_4^{b,2}$, which is a broadcast state,
	it is still possible to receive message $a$ and go to state $q_5^{r,2}$. However, it is not possible to send a message from $q_5^{r,2}$ (nor from any reception state of phase 2). 
\end{example}
We show in Appendix that this definition of unfolding can be used as an underapproximation for \Cover.
\Iflong{\section{Phase-bounded protocols as an under-approximation}\label{sec:cover-cover-phase-bounded}
	
	\begin{figure}
		\begin{minipage}[c]{0.5\columnwidth}
			\resizebox*{1.0\columnwidth}{!}{
				\tikzset{box/.style={draw, minimum width=4em, text width=4.5em, text centered, minimum height=17em}}

\begin{tikzpicture}[-, >=stealth', shorten >=1pt,node distance=2cm,on grid,auto, initial text = {}] 
	\node[state,initial] (q0) [] {$\qinit$};
	\node[state] (q1) [right of = q0, yshift = 15, xshift =0] {$q_1$};
	\node[state] (q2) [right  of = q1, xshift =0] {$q_2$};
	\node[state] (q3) [right of = q2, xshift =0] {$q_3$};
	\node[state] (q4) [right of = q0, yshift = -15, xshift = 0] {$q_4$};
	\node[state] (q5) [right  of = q4, xshift =0] {$q_5$};
	\node[state] (q6) [right of = q5, xshift =0] {$q_6$};
	
	\node[state] (p) [below of = q4, yshift= 20] {$p$};

	\path[->] 
	(q0) edge node {$!!c$} (q1)
	(q0) edge node {$?c$} (q4)
	(q0) edge [bend right] node {$?m$} (p)
	
	(q1) edge node {$!!m$} (q2)
	
	(q4) edge node {$!!b$} (q5)
	(q4) edge node {$?m$} (p)
	(q2) edge node {$?a$} (q3)
	(q5) edge [bend left] node {$?m$} (p)
	(q5) edge node {$!!a$} (q6)
	;
\end{tikzpicture}
			}
			
			\subcaption{An example of a broadcast protocol $\overline{\PP}$\label{fig:bp:example-2}}
		\end{minipage}
		\hfill
		\begin{minipage}[c]{0.5\columnwidth}
			\resizebox*{1.0\columnwidth}{!}
			{\tikzset{box/.style={draw, minimum width=4em, text width=4.5em, text centered, minimum height=17em}}

\begin{tikzpicture}[-, >=stealth', shorten >=1pt,node distance=2cm,on grid,auto, initial text = {}] 
	\node[state,initial] (q0) [] {$\qinit$};
	\node[state, ellipse, inner sep= 0.01] (q1) [right of = q0, yshift = 15, xshift =10] {\footnotesize$(q_1,b,1)$};
	\node[state, ellipse, inner sep= 0.01] (q2) [right  of = q1, xshift =10] {\footnotesize$(q_2,b,1)$};
	\node[state, ellipse,inner sep= 0.01] (q3) [right of = q2, xshift =10] {\footnotesize$(q_3,r,2)$};
	\node[state, ellipse,inner sep= 0.01] (q4) [right of = q0, yshift = -15, xshift = 10] {\footnotesize$(q_4,r,1)$};
	\node[state, ellipse,inner sep= 0.01] (q5) [right  of = q4, xshift =10] {\footnotesize$(q_5,b,2)$};
	\node[state,ellipse,inner sep= 0.01] (q6) [right of = q5, xshift =10] {\footnotesize$(q_6,b,2)$};
	
	\node[state, ellipse,inner sep= 0.01] (p) [below of = q4, yshift = 20] {$(p,r,1)$};

	\path[->] 
	(q0) edge node {$!!c$} (q1)
	(q0) edge node {$?c$} (q4)
	(q0) edge [bend right] node {$?m$} (p)
	
	(q1) edge node {$!!m$} (q2)
	
	(q4) edge node {$!!b$} (q5)
	(q4) edge node {$?m$} (p)
	(q2) edge node {$?a$} (q3)
	(q5) edge node {$!!a$} (q6)
	;
\end{tikzpicture}}
			\subcaption{Unfolding of $\overline{\PP}$ limited to 2 phases: $(q,b,i)$ means that the state $q$ is reached in a broadcast phase when $(q,r,i)$ means that it is reached in a reception phase. The number of the phase is given by $i$.}\label{fig:bp:example-2-unfold}
		\end{minipage}
		
		\caption{A protocol and its bounded unfolding showing that it may not be an underapproximation.}
	\end{figure}

	Phase-bounded protocols can be seen as a semantic restriction of general protocols in which each process can only switch a bounded number of times between phases where it receives messages and phases where it can send messages. When, usually, restricting the behavior of processes immediately yields an
	underapproximation of the reachable states, we highlight here the fact that preventing messages from being received can in fact lead to new reachable states. 
	This motivates our definition of phase-bounded protocols, in which a process always ends in a reception phase. 
	
	Indeed, consider the protocol pictured on~\cref{fig:bp:example-2}. The state $q_3$ is not coverable: to cover $q_3$, a node $v_1$ needs to receive message $a$ when it is on state $q_2$ from one of its neighbor $v_2$. By construction, $v_1$ broadcasts message $m$ before reaching $q_2$.
	Vertex $v_2$ can only broadcast message $a$ when it is on state $q_5$. To reach $q_5$, vertex $v_2$ visited exactly states $\qinit, q_4$ and $q_5$. From each of those states, there is an outgoing reception transition labelled with $m$ going to $p$. Hence, at any moment of the execution, the broadcast of message $m$ by vertex $v_1$ would have brought $v_2$ in $p$, preventing the broadcast of $a$.
	However, a naive 2-phase bounded unfolding of this protocol, in which we limit the number of phases of sending and reception to 2, is illustrated in~\cref{fig:bp:example-2-unfold}. In 
	this protocol, $(q_3,r,2)$ (hence $q_3$) is coverable:
	consider $\Gamma = (\set{v_1, v_2}, \set{\langle v_1, v_2\rangle})$ and the following execution:
	\begin{align*}
		&(\Gamma, \set{v_1 \mapsto \qinit, v_2\mapsto \qinit}) \trans (\Gamma, \set{v_1 \mapsto (q_1,b,1), v_2\mapsto (q_4,r,1)}) \trans (\Gamma, \set{v_1 \mapsto (q_1,b,1),\\
			& v_2\mapsto (q_5,b,2)}) \trans 
		(\Gamma, \set{v_1 \mapsto (q_2,b,1), v_2\mapsto (q_5,b,2)}) \trans(\Gamma, \set{v_1 \mapsto (q_3,r,2), v_2\mapsto (q_6,b,2)}).
	\end{align*}
	
	In fact, in state $(q_5,b,2)$ a process is not allowed to switch anymore, hence the transition allowing to receive message $m$ has been removed. Doing so, 
	we have made state $q_3$ coverable. This shows that this type of bounded semantics does not give an underapproximation of the coverable states, in spite of what was expected. 
	
	Actually, the reception of a message is something that is not under the control of a process. If another process broadcasts a message, a faithful behavior of the system is that all the processes that can receive it indeed do so, no matter in which phase they are in their own execution. Hence, in a restriction that attempts to limit the number of switches 
	between sending and receiving phases, one should not prevent a reception to happen. This motivates our definition of a phase-bounded protocol.

	Let $\PP = (Q, \Sigma, \qinit, \Delta)$ be a broadcast protocol, and $k \in \nat$. 
	
	We define the $k$-unfolding of $\PP$ denoted by $\PP_k$ as the following protocol: 
	$\PP_k =(Q_k, \Sigma, \qinit, \Delta_k)$ with 
	$Q_k= \{q^0\mid q\in Q\}\cup \{q^{b,j}, q^{r,j}\mid q\in Q, 1\leq j\leq k\}$. To ease the notations we let $q^{r,0}=q^0$, $q^{b,0}=q^0 = $ for all $q^0 \in Q_0$.
	\begin{align*}
		\Delta_k & =  \set{(q^0, \tau, p^0) \mid (q, \tau, p) \in \Delta}\\
		&\cup		
		\set{(q^{r,j}, \alpha, p^{r,j}) \mid 1 \leq j \leq k \text{ and }(q, \alpha, p) \in \Delta \text{ and } \alpha \in \set{\tau} \cup ?\Sigma}\\
		& \cup 
		\set{(q^{b,j}, \alpha, p^{b,j}) \mid 1 \leq j \leq k \text{ and }(q, \alpha, p) \in \Delta \text{ and } \alpha \in \set{\tau} \cup !!\Sigma}\\
		& \cup 
		\set{(q^{r,j}, !!m, p^{b,j+1}) \mid 0 \leq j < k \text{ and }(q, !!m, p) \in \Delta}\\
		& \cup 
		\set{(q^{b,j}, ?m, p^{r,j+1}) \mid 0 \leq j < k \text{ and }(q, ?m, p) \in \Delta}\\
		& \cup 
		\set{(q^{b,k}, ?m, p^{r,k}) \mid (q, ?m, p) \in \Delta}.
	\end{align*}
	
	The last case of the definition of $\Delta_k$ implements the fact that we never prevent a reception from occurring, even in the last phase. 
	\begin{figure}
		\begin{center}
			\tikzset{box/.style={draw, minimum width=4em, text width=4.5em, text centered, minimum height=17em}}

\begin{tikzpicture}[-, >=stealth', shorten >=1pt,node distance=2.5cm,on grid,auto, initial text = {}] 
	\node[] (q0) {};
	\node[state,initial] (q0b) [above  of = q0, yshift = -60, xshift =0] {$\qinit^0$};
	\node[state] (q1) [right of = q0, yshift = -0, xshift =0.4cm] {$q_4^{b,1}$};
	\node[state] (q2) [right  of = q1, yshift = 0, xshift = 4.5cm] {$q_5^{r,2}$};
	\node[] (q3) [above  of = q0, yshift = -30, xshift =0] {};
	\node[state] (q4) [right of = q3, yshift = 5, xshift = -20] {$q_1^{r,1}$};
	
	\node[state] (q5) [right  of = q4, xshift = -20, yshift = -10]{$q_2^{r,1}$};
	\node[state] (q6) [right  of = q5, xshift = 20, yshift = -15]{$q_3^{b,2}$};
	
	\node[state] (qin2) [above of = q5, yshift = -60, xshift = 50] {$\qinit^{b,2}$};
	\node[state] (q42) [right of = qin2, yshift = -0, xshift = 0] {$q_4^{b,2}$};
	\node[state] (q12) [right of = qin2, yshift = 15, xshift = 60] {$q_1^{r,2}$};
	\node[state] (q22) [right of = q12] {$q_2^{r,2}$};
	
	\node [] (p1) [left of = q4, yshift = -18] {};
	\node[] (p2) [left of = q1, yshift = 0, xshift =-2cm] {};

	\node[dashed, draw, fill = orange, fill opacity = 0, text opacity = 1, fit=(q0b), text height=0.06 \columnwidth, label ={below:Phase 0}] () {};
	\node[dashed, draw, fill = green, fill opacity = 0, text opacity = 1, fit=(q1) (q4) (q5), text height=0.06 \columnwidth, label ={below:Phase 1}, inner xsep = 8, inner ysep=10] () [yshift = 5]{};
	\node[dashed, draw, fill = cyan, fill opacity = 0, text opacity = 1, fit=(q2) (q6) (qin2) (q42) (q12) (q22), text height=0.06 \columnwidth, label ={below:Phase 2}, inner ysep=8] () {};

	\node[fill=violet, fill opacity = 0.2, fit=(q1), inner sep =1] () {};
	\node[fill=green, fill opacity = 0.2, fit=(q4) (q5), inner sep =1] () {};
	
	\node[fill=violet, fill opacity = 0.2, fit=(q42) (qin2) (q6), inner sep =1] () {};
	\node[fill=green, fill opacity = 0.2, fit=(q12) (q22) (q2), inner sep =1] () {};

	\path[->] 
	(q0b) edge [bend right =10] node [xshift =-5] {$!!a$, $!!b$} (q1) 
	(q1) edge node [below] {$?c$} (q2) 
	(q0b) edge [bend left =20] node {$?b$} (q4)
	(q4) edge [bend right = 10] node [xshift = -5]{$?a$} (q5)
	(q5) edge [bend right = 10] node [below, xshift = -5, yshift = 2] {$!!c$} (q6)
	(q6) edge [bend left = 0, thick]  node {$?a$} (q2)
	
	(q4) edge [bend left = 20] node {$!!a$} (qin2)
	(qin2) edge [bend left = 20, thick]  node {$?b$} (q12)
	(qin2) edge [] node [yshift =-5]{$!!a, !!b$} (q42)
	(q12) edge [] node {$?a$} (q22)
	(q42) edge [bend left = 20, thick]  node {$?a$} (q2)
	
	;
\end{tikzpicture}
		\end{center}
		\caption{$\PP_2$: the 2-unfolding of protocol \PP\ (\cref{fig:bp:examples}).}
		\label{fig:unfolding:examples}
	\end{figure}

	\begin{example}
		\cref{fig:unfolding:examples} pictures the 2-unfolding of protocol \PP\ (\cref{fig:bp:examples}). Observe that from state $q_4^{b,2}$, which is a broadcast state,
		it is still possible to receive message $a$ and go to state $q_5^{r,2}$. However, it is not possible to send a message from $q_5^{r,2}$ (nor from any reception state of phase 2). 
	\end{example}
	
	The following lemma establishes that this definition of unfolding can be used as an underapproximation for $\Cover$. 
	Formal proof can be found on \cref{app:cover-cover-phase-bounded}.
	\begin{lemma}\label{lemma:cover-general-protocol-phase-bounded-protocol}
		$q_f$ can be covered in $\PP$ if and only if there exist $k\in\nat$, $y\in\set{r,b}$ and $0\leq j\leq k$ such that $(q_f,y,j)$ can be covered in $\PP_k$.
\end{lemma}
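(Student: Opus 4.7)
The plan is to prove each direction by a direct simulation argument, projecting execution traces between $\PP$ and $\PP_k$ while tracking (or erasing) phase annotations on states.

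For the implication $(\Rightarrow)$, suppose $\rho = C_0\trans C_1\trans\cdots\trans C_n$ is an execution of $\PP$ covering $q_f$ on some topology $\Gamma$. For each vertex $v\in\Vert{\Gamma}$ I will define its phase sequence along $\rho$: $v$ starts in phase $0$ (label type $r^{0}=b^{0}$), and its phase index is incremented each time it performs (or is forced to perform) an action of a different type than its current one, namely a broadcast in a reception phase or a reception in a broadcast phase (with a special treatment for phase $0$, which is left on the first non-$\tau$ action). Let $k$ be the maximum phase index reached by any vertex in $\rho$; since $\rho$ is finite, $k$ is finite. I then lift $\rho$ to an execution $\rho'$ of $\PP_k$ on the same topology $\Gamma$ by annotating, at each configuration, each vertex's state with its current (phase-type, phase-number). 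The definition of $\Delta_k$ is tailored precisely so that every transition used along $\rho$ has a matching transition in $\Delta_k$: internal and receiving transitions stay within the same reception phase, broadcasting transitions stay within the same broadcast phase, a broadcast from a reception phase moves to the next (broadcasting) phase, a reception from a broadcast phase $j<k$ moves to the next (reception) phase, and a reception from a broadcast phase equal to $k$ stays in phase $k$ thanks to the last clause of the definition of $\Delta_k$. Thus $\rho'$ covers a state of the form $(q_f, y, j)$ in $\PP_k$ for some $y\in\set{r,b}$ and $0\leq j\leq k$.

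For the converse $(\Leftarrow)$, I erase annotations: given an execution $\rho'$ of $\PP_k$ over some topology $\Gamma$ covering $(q_f,y,j)$, I define the projection $\pi(q^0)=\pi(q^{r,i})=\pi(q^{b,i})=q$ and apply it pointwise to labels in every configuration. Every transition rule in $\Delta_k$ was generated from a rule in $\Delta$, so each elementary step of $\rho'$ becomes an elementary step of $\PP$; the routine check is that broadcasts and the corresponding receptions at neighbors all stem from matching rules in $\Delta$, and that vertices that were idle (either because they could not receive $m$, or because of the reception-in-last-broadcast-phase clause) stay faithfully projected to a valid idling situation in $\PP$ (recall that $m\notin R(L(u))$ in $\PP_k$ translates to $m\notin R(\pi(L(u)))$ in $\PP$, since the set of receivable messages depends only on the underlying state of $\PP$). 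The projection of $\rho'$ is therefore a valid execution of $\PP$ that covers $q_f$.

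The only mildly delicate step is the bookkeeping in the forward direction, where one must be careful that the ``phase'' is well-defined and correctly updated for each vertex independently (in particular when the vertex only ever performs $\tau$-transitions, or when a reception is forced from the current broadcast phase at index $k$); the design of $\Delta_k$, and especially the last clause $(q^{b,k}, ?m, p^{r,k})\in \Delta_k$, is exactly what makes the simulation go through without blocking. I would present the construction formally by induction on the length of the execution in both directions.
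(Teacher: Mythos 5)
Your proof is correct and follows essentially the same route as the paper: the forward direction annotates each vertex with a per-vertex phase counter (the paper's $\kappa_v$), takes $k$ to be the maximum phase reached, and lifts the execution transition by transition; the backward direction erases annotations and relies on the key observation that receivability of $m$ from $q^{\kappa}$ in $\PP_k$ coincides with receivability from $q$ in $\PP$ (which is exactly what the last clause of $\Delta_k$ guarantees). The only cosmetic difference is that with $k$ chosen as the maximum phase, the clause $(q^{b,k},?m,p^{r,k})$ is never actually invoked in the forward simulation — it is needed only for the converse — but this does not affect correctness.
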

}
In the remaining of the paper, we study the verification problems introduced in~\cref{subsec:verification-problems} when considering phase-bounded behaviors. 
We turn this restriction into a syntactic one over the protocol, defined as follows. 

\begin{definition}
	Let $k \in \nat$. A broadcast protocol $\PP = (Q, \Sigma, \qinit, \Delta)$ is \emph{$k$-phase-bounded} if $Q$ can be partitioned into $2k+1$ sets $\mathcal{Q} = \set{Q_0, Q_1^b, Q_1^r, \dots Q_{k}^b, Q_{k}^r}$, such that $\qinit \in Q_0$ and for all $(q, \alpha, q') \in \Delta$ one of the following conditions holds:
	\begin{enumerate}
		\item there exist $0\leq i \leq k$ and $\beta \in \set{r,b}$ such that $q,q' \in Q_i^\beta$ and $\alpha = \tau$ (for ease of notation, we take $Q_0= Q_0^b= Q_0^r$);
		\item there exists $1\leq i \leq k$ such that $q, q' \in Q_i^{b}$ and $\alpha \in !!\Sigma$;
		\item there exists $1\leq i \leq k$ such that $q, q' \in Q_i^{r}$ and $\alpha \in ?\Sigma$;
		\item there exists $0\leq i < k$ such that $q \in Q_i^{b}$, $q' \in Q_{i+1}^r$ and $\alpha \in ?\Sigma$;
		\item there exists $0\leq i < k$ such that $q \in Q_i^{r}$, $q' \in Q_{i+1}^b$ and $\alpha \in !!\Sigma$;
		\item $q \in Q_k^b$, $q'\in Q_k^r$ and $\alpha \in ?\Sigma$
	\end{enumerate}
	
\end{definition}
A protocol \PP\ is phase-bounded if there exists $k \in \nat$ such that \PP\ is $k$-phase-bounded. 
\begin{example}
	Observe that the protocol $\PP$ displayed in \cref{fig:bp:examples}\ is not phase-bounded: by definition, it holds that $Q_0 = \set{\qinit}$, and $q_1 \in Q_1^r$ (because of the transition $(\qinit, ?b, q_1)$). As a consequence $\qinit \in Q_2^b$, because of the transition $(q_1, !!a, \qinit)$. This contradicts the fact that $Q_2^b \cap Q_0 = \emptyset$. Intuitively, $\PP$ does not ensure that every vertex alternates at most a bounded number of times between receptions and broadcasts, in particular, for any integer $k \in \nat$, it might be that there exists an execution where a process alternates $k+1$ times between reception of a message $b$ from 
	state $\qinit$, and broadcast of a message $a$ from state $q_1$. 
	Removing the transition $(q_1, !!a, \qinit)$ from $\PP$ would give a 2-phase-bounded protocol $\PP'$: $Q_0 = \set{\qinit}$, $Q_1^r = \set{q_1, q_2}$, $Q_1^b = \set{q_4}$, $Q_2^b= \set{q_3}$ and $Q_2^r = \set{q_5}$. 
	\Iflong{
	Note that the execution presented in \cref{fig:execution-example}\ is not an execution of $\PP'$ as process $v_2$ takes transition $(q_1, !!a, \qinit)$. However, $q_5$ is still coverable, but with a different topology. An execution covering $q_5$ is displayed in \cref{fig:execution-example-2}. Note that this time $v_3$ and $v_2$ are not neighbors, this is necessary as otherwise the first broadcast of $v_2$ would bring $v_3$ in $q_1$. \nas{j'ai pas compris ce dernier paragraphe. Je propose de supprimer}\lug{c'était pour donner une execution de $\PP'$ qui couvre $q_f$ (ce ne peut pas etre celle de fig 1 car un noeud prend la transition hachée), si vous voulez l'enlever, on a plus besoin de la figure 4 et ça fait gagner un peu de place}
}
\end{example}
\Iflong{
\begin{figure}
	\begin{center}
		\begin{tikzpicture}[->, >=stealth', shorten >=1pt,node distance=1.2cm,on grid,auto, initial text = {}] 
	\AConfiguration{\qinit}{\qinit}{\qinit}{0}{0}{0}
	\draw [-stealth](1.6,-0.5) -- node {$v_2$} (2,-0.5) ;
	\AConfiguration{q_1}{q_4}{\qinit}{0}{3.6cm}{0}
	\draw [-stealth](5.2,-0.5) -- node {$v_3$} (5.6,-0.5) ;
	\AConfiguration{q_2}{q_4}{q_4}{0}{7.2cm}{0}
	\draw [-stealth](8.8,-0.5) -- node {$v_1$} (9.2,-0.5) ;
	\AConfiguration{q_3}{q_5}{q_5}{0}{10.8cm}{0cm}
\end{tikzpicture}
	\end{center}
	\caption{Example of an execution of protocol $\PP'$ (\cref{fig:bp:examples}).}
	\label{fig:execution-example-2}
\end{figure}
}
The following table summarizes our results (PB stands for phase-bounded).
\begin{center}
	\begin{tabular}{|l|c|c|c|}
		\hline
		& \textbf{1-PB Protocols} & \textbf{2-PB Protocols} & \textbf{PB Protocols} \\
		\hline
		$\CoverLine$  & \multicolumn{2}{c|}{$\in P$ (\cref{subsec:coverLine-inP})}&  Undecidable  ($k \geq 4$) (Sec \ref{sec:undecidable})\\
		\hline
		$\Cover[\Topo]$ & $\textsc{ExpSpace}$-complete&  Decidable & Undecidable  ($k \geq 6$) \\
		$\CoverTree$ &(Section \ref{sec:about-1pb})&(Section \ref{subsec:proof-dec-2phases-covertree})&(Section \ref{sec:undecidable})\\
		\hline
	\end{tabular}
\end{center}

\Iflong{
  \section{\Cover\ and \CoverTree\ are equivalent}\label{subsec:Cover-CoverTree-equivalent}

We start by proving that \Cover\ and \CoverTree\ are actually equivalent. If it is obvious that when a state is coverable with a tree topology, it is coverable by a topology, we show in this section that whenever a state is coverable, it is coverable with a tree topology. Observe that this
result holds for any broadcast protocol, and not only phase-bounded ones. 

Let $\PP = (Q, \Sigma, \qinit, \Delta)$ be a broadcast protocol, and $q_f\in Q$. Let $\rho=C_0 \trans \cdots \trans C_n$ with $C_i = (\Gamma, L_i)\in\CC_\PP$ for all $0 \leq i \leq n$, and a vertex $v_f \in \Vert{\Gamma}$ such that $L_n(v_f) = q_f$.
We will build an execution covering $q_f$ with a tree topology $\Gamma'$, rooted in $v_f$, the node that reaches $q_f$. 
$\Gamma'$ is actually an \emph{unfolding} of the topology $\Gamma$.

%
%

We first define inductively the set of nodes $V'\subseteq \nat^\ast$, along with a labelling function $\lambda$ which associates to each node $v' \in V'$ a node $v \in \Vert{\Gamma}$. 
\begin{enumerate}
	\item $\epsilon\in V'$ and $\lambda(\epsilon) = v_f$;\label{enumerate:def-tree:epsilon}
	\item Let $\NeighG{\Gamma}{v_f} = \set{v_1, \dots , v_k}$. For all $1 \leq i \leq k$, $i \in V'$ and $\lambda(i) = v_i$;\label{enumerate:def-tree:depth-1}
	\item Let $w \cdot x \in V'$, with $w \in \nat^\ast$ such that $|w| < n-1$, and $x \in \nat$. Let $\NeighG{\Gamma}{\lambda(w\cdot x)} \setminus \set{\lambda(w)} = \set{v_1, \dots, v_k}$. Then, for all $1 \leq i \leq k$, $w\cdot x \cdot i \in V'$ and $\lambda'(w\cdot x \cdot i) = v_i$.\label{enumerate:def-tree:induction} 
\end{enumerate}
Finally, define $E' = \set{\langle w, w\cdot x\rangle  \mid w \in V', w\cdot x \in V'}$.
Note that $\epsilon \in V'$ and $V'\subseteq \nat^\ast$ and  is prefix closed.
Furthermore, by construction, for all $w \in V'$, $|w| \leq n$, and for all $w \in V'$, in fact $w\in \set{1,\dots, d}^\ast$ 
where $d$ is the maximal degree of $\Gamma$. 
Hence, $V'$ is a finite set and  $\Gamma'=(V',E')$ is a tree topology. 

The way we built $\Gamma'$ ensures that each node $v\in V'$ (except the leaves) enjoys the same set of neighbours than $\lambda(v)\in V$. This is formalised in the
following lemma whose proof can be found in~\cref{appendix:covertree-cover-equivalent}.
\begin{lemma}\label{lemma:Covertree-Cover:well-formed-tree}
For all $u\in V'$, for all $u'\in \NeighG{\Gamma'}{u}$, $\lambda(u')\in \NeighG{\Gamma}{\lambda(u)}$. Then, we let $f_u:\NeighG{\Gamma'}{u} \rightarrow \NeighG{\Gamma}{\lambda(u)}$ defined by $f_u(u')=\lambda(u')$. If $|u|< n$, $f_u$
is a bijection.
\end{lemma}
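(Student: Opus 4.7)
The plan is to perform a case analysis on $u \in V'$ and use the inductive construction of $V'$ to characterize neighbors in $\Gamma'$ explicitly. The key observation is that, by construction, $E' = \set{\langle w, w\cdot x\rangle \mid w, w\cdot x \in V'}$, so the neighbors of $u$ in $\Gamma'$ are exactly its parent (if any) and its children, and these children were added by step~\ref{enumerate:def-tree:depth-1} or step~\ref{enumerate:def-tree:induction} of the construction.

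First I would treat $u = \epsilon$: its neighbors in $\Gamma'$ are the children $1, \dots, k$ from step~\ref{enumerate:def-tree:depth-1}, and by construction $\lambda(i) = v_i$ where $\set{v_1,\dots,v_k} = \NeighG{\Gamma}{v_f} = \NeighG{\Gamma}{\lambda(\epsilon)}$. This immediately gives $\lambda(u') \in \NeighG{\Gamma}{\lambda(\epsilon)}$ for every neighbor $u'$ and, since $|\epsilon| = 0 < n$, the children are added and the enumeration is a bijection onto $\NeighG{\Gamma}{\lambda(\epsilon)}$. Next I would treat $u = w \cdot x \neq \epsilon$: its neighbors in $\Gamma'$ are the parent $w$ and (when $|u| < n$) the children $w \cdot x \cdot i$ produced by step~\ref{enumerate:def-tree:induction}. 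By step~\ref{enumerate:def-tree:induction} applied to produce $u$ itself, $\lambda(w \cdot x) \in \NeighG{\Gamma}{\lambda(w)}$, whence by symmetry $\lambda(w) \in \NeighG{\Gamma}{\lambda(u)}$; and the children map by construction onto $\NeighG{\Gamma}{\lambda(u)} \setminus \set{\lambda(w)}$.

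For the bijection claim when $|u| < n$, I would verify injectivity and surjectivity of $f_u$. Injectivity follows because the enumeration $v_1, \dots, v_k$ used in step~\ref{enumerate:def-tree:induction} (or step~\ref{enumerate:def-tree:depth-1}) lists \emph{distinct} neighbors, so distinct children are sent to distinct vertices of $\Gamma$; moreover the parent $w$ is mapped to $\lambda(w)$, which is explicitly excluded from the set $\NeighG{\Gamma}{\lambda(w\cdot x)} \setminus \set{\lambda(w)}$ of images of children, so no collision occurs between the parent and the children. Surjectivity uses precisely the hypothesis $|u| < n$: it guarantees that step~\ref{enumerate:def-tree:induction} was applied at $u$ (its condition being $|w| < n - 1$, i.e.\ $|w\cdot x| < n$), so \emph{all} vertices of $\NeighG{\Gamma}{\lambda(u)} \setminus \set{\lambda(w)}$ are reached by some child, and $\lambda(w)$ itself is reached by the parent. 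The disjoint union then equals $\NeighG{\Gamma}{\lambda(u)}$.

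The only mildly delicate step is the surjectivity argument, where one must be careful to notice that step~\ref{enumerate:def-tree:induction} of the construction is applied exactly when $|u| < n$, which is why the bijection conclusion fails at the leaves of depth $n$: there the parent is still a neighbor, but no children exist, so $f_u$ is only an injection into $\NeighG{\Gamma}{\lambda(u)}$. Everything else reduces to rereading the inductive definition of $V'$ and $\lambda$.
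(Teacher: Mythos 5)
Your proof is correct and follows essentially the same route as the paper's: a case analysis on $u=\epsilon$ versus $u=w\cdot x$, reading the neighborhood of $u$ in $\Gamma'$ off the construction of $V'$ and $E'$, and matching the parent to $\lambda(w)$ and the children bijectively onto $\NeighG{\Gamma}{\lambda(u)}\setminus\set{\lambda(w)}$. The only (harmless) difference is that the paper organizes this as a structural induction, using the bijectivity of $f_w$ to get $\lambda(w\cdot x)\in\NeighG{\Gamma}{\lambda(w)}$, whereas you obtain that fact directly from the rule that created $u$; both are valid.
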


From the execution $\rho = C_0\rightarrow \dots C_n$ we will build a similar execution on $\Gamma'$. The idea is that for each step of the execution $\rho$, 
for each node $v\in V$, all the nodes in $\Gamma'$ that are labelled by $v$ will behave in the same way. This is possible thanks to~\cref{lemma:Covertree-Cover:well-formed-tree}.
Observe though that the leaves might no be able to behave as expected because they might not have the same set of neighbours than the node they are labelled by, so they might not be able to receive some
broadcast message. However, we can ensure some weaker version of correctness, defined as follows.
Let $0 \leq h \leq n$ and $C= (\Gamma, L)$ be a configuration. We say that a configuration $C' = (\Gamma', L')$ is \emph{$h$-correct} for $C$ if for all $u\in\Vert{\Gamma'}$, if $|u|\leq h$ then $L'(u)=L(\lambda(u))$.

%
%
The following lemma gives the main ingredient that allows to mimick the execution $\rho$ on $\Gamma'$ (formal proof can be found on \cref{appendix:covertree-cover-equivalent}).

\begin{lemma}\label{lemma:Covertree-Cover:induction-step}
	Let $C_1 = (\Gamma, L_1)$, $C_2 = (\Gamma, L_2)\in \CC_\PP$ such that $C_1 \trans C_2$ and let $C'_1 = (\Gamma', L'_1)\in\CC_\PP$ and $0< h \leq n$ such that $C'_1$ is $h$-correct for $C_1$.
	There exists $C'_{2}\in \CC_\PP$ such that $C'_1 \trans^\ast C'_{2}$, and $C'_2$ is $(h-1)$-correct for $C_2$.	
\end{lemma}

We build now an execution covering $q_f$ with $\Gamma'$: let $C'_0=(\Gamma', L'_0)$ defined by $L'_0(v)=\qinit$ for all $v\in\Vert{\Gamma'}$. Obviously,
$C'_0$ is $n$-correct. By~\cref{lemma:Covertree-Cover:induction-step}, there exists a sequence of configurations $(C'_i)_{1\leq i \leq n}$ such that
for all $1\leq i\leq n$, $C'_{i-1}\trans^\ast C'_{i}$ and $C'_i$ is $(n-i)$-correct for $C_i$. Hence, $C'_0\trans^\ast C'_n$ and $C'_n$ is $0$-correct for $C_n$ and
$L'_n(\epsilon)=L_n(\lambda(\epsilon))=L_n(v_f)=q_f$. 

This allows to prove the following result. 
\begin{theorem}\label{thm:Cover-CoverTree-equivalent}
$\Cover$ and $\CoverTree$ are equivalent. 
\end{theorem}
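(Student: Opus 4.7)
The direction $\CoverTree \Rightarrow \Cover[\Topo]$ is immediate since $\Trees \subseteq \Topo$. For the converse my plan is to \emph{unfold} any witnessing topology into a tree rooted at the vertex that reaches $q_f$. Suppose $\rho = C_0 \trans C_1 \trans \cdots \trans C_n$ is an execution on $\Gamma$ with $C_i = (\Gamma, L_i)$ and $L_n(v_f) = q_f$. I would inductively build a tree topology $\Gamma' = (V', E')$ with $V' \subseteq \nat^\ast$, together with a labeling $\lambda : V' \to \Vert{\Gamma}$, such that $\epsilon \in V'$ with $\lambda(\epsilon) = v_f$ and, for every node $u$ of depth strictly less than $n$, the children of $u$ in $V'$ are labeled bijectively by $\NeighG{\Gamma}{\lambda(u)} \setminus \{\lambda(\text{parent}(u))\}$. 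Since the tree has height at most $n$ and the branching is bounded by the maximum degree of $\Gamma$, the resulting $\Gamma'$ is finite.

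The construction enjoys two useful invariants I would verify first. First, for every $u \in V'$ of depth strictly less than $n$, the restriction of $\lambda$ to $\NeighG{\Gamma'}{u}$ is a bijection onto $\NeighG{\Gamma}{\lambda(u)}$, so $u$ sees exactly the same neighborhood labels as $\lambda(u)$ does in $\Gamma$. Second, because the parent's label is deleted from the children's labels and topologies have no self-loops, any tree-node has at most one neighbor carrying any given $\Gamma$-label. This last fact is crucial to prevent the same simulated broadcast from hitting the same vertex twice.

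I would then mimic $\rho$ through a graded faithfulness predicate: call $(\Gamma', L')$ \emph{$h$-correct} for $(\Gamma, L)$ if $L'(u) = L(\lambda(u))$ for every $u \in V'$ with $|u| \leq h$. The all-$\qinit$ configuration on $\Gamma'$ is trivially $n$-correct for $C_0$. I would prove by induction on the step index $i$ that, from an $h$-correct configuration for $C_i$, one can reach in finitely many tree-steps an $(h{-}1)$-correct configuration for $C_{i+1}$. For a single step $C_i \transup{v,\delta} C_{i+1}$: if $\delta = \tau$ then every tree-node $u$ with $\lambda(u) = v$ and $|u| \leq h-1$ fires $\delta$ independently; if $\delta$ broadcasts some $!!m$, the same set of tree-nodes broadcast sequentially. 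By the first invariant each such broadcast's reception pattern, restricted to depth $\leq h$, matches exactly the effect of $v$'s broadcast on $\NeighG{\Gamma}{v}$, and by the second invariant no tree-node receives $m$ more than once during the simulation of this step.

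The main obstacle I anticipate is exactly the boundary behavior: tree-nodes at depth $h$ (in particular the leaves) have an incomplete neighborhood and may diverge from the state of their $\Gamma$-label. The graded invariant sidesteps this on purpose, because after simulating one step we only claim correctness up to depth $h-1$, which excludes the problematic layer; the fact that the tree has height $n$ guarantees we can afford to ``lose'' one level per step. Iterating the induction $n$ times produces a $0$-correct configuration for $C_n$, i.e.\ $L'(\epsilon) = L_n(v_f) = q_f$, so the root of $\Gamma'$ covers $q_f$, establishing the missing direction.
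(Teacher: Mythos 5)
Your proposal is correct and follows essentially the same route as the paper: the same unfolding of $\Gamma$ into a tree of height $n$ rooted at $v_f$ with the labeling $\lambda$, the same neighborhood-bijection lemma, the same notion of $h$-correct configuration, and the same induction that trades one level of depth per simulated step. The two invariants you isolate (bijective neighborhoods for non-leaves, and no two tree-neighbors sharing a $\Gamma$-label) are exactly the facts the paper establishes and uses inside its induction lemma.
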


}
\section{Undecidability Results}\label{sec:undecidable}



We prove that $\Cover$ restricted to $k$-phase-bounded protocols (with $k \geq 6)$ is undecidable  by a reduction from the halting problem of a Minksy machine~\cite{Minsky67}: a Minsky machine is a finite-state 
machine (whose states are called locations) with two counters, $\textsf{x}_1$ and $\textsf{x}_2$ (two variables that take their values in $\nat$). Each transition of the machine is associated with an instruction: increment one of the counters, decrement one of the counters or test if one of the counters is equal to 0. The halting problem asks whether there is an execution that
ends in the halting location. 
In a first step, the protocol will enforce the selection of a line of nodes from the topology. All other nodes will be inactive.  
In a second step, the first node of the line (that we call the head) visits the different states of the machine during an execution, while all other nodes 
(except the last one) simulate counters' values: they are either in a state representing value 0, or a state representing $\textsf{x}_1$ (respectively $\textsf{x}_2$). 
The number of processes on states representing $\textsf{x}_1$ gives the actual value of $\textsf{x}_1$ in the execution.  The last node (called the tail) checks that everything happens as expected. When the head has reached the halting location of the machine, it broadcasts a message which is received and forwarded by each node of the line until the tail receives it and reaches the final state to cover.

When the head of the line simulates a transition of the machine, it broadcasts a message (the instruction for one of the counters), which is transmitted by each node of the line until the tail receives it. A classical way of forwarding the message through receptions and broadcasts would not give a phase-bounded protocol. 
Hence, during the transmission, the tail only receives messages and all other nodes only broadcast and do not receive any message. 
The main idea is that we do not use the reception of messages to move into the next state of the execution but to detect errors (and in that case, go to a bad sink state from which the process can not do anything). 
The processes will have to guess the correct message to send, and the correct instant to send it, otherwise some of them will go to the sink state upon the reception of this "wrong" message. Hence, when everyone makes the correct guesses, the only reception that occurs in the transmission is done by the tail process, whereas when someone makes an incorrect guess, a process goes to a bad state with a reception. In the reduction, if the halting state of the Minsky Machine is not reachable, there will be no way to make a correct guess that allows to cover the final state.
In the next subsection, we explain how this is achieved. To do so, we explain the mechanism by abstracting away the actual instruction, and just show how
to transmit a message.


\subsection{Propagating a message using only broadcasts in a line}\label{subsec:undec:transmission}
%


\begin{figure}
	\begin{minipage}[c]{0.5\linewidth}
		\resizebox*{0.8\linewidth}{!}{
			\tikzset{box/.style={draw, minimum width=4em, text width=4.5em, text centered, minimum height=17em}}

\begin{tikzpicture}[-, >=stealth', shorten >=1pt,node distance=2cm,on grid,auto, initial text = {}] 
	\node[state, initial] (q0) {$\textsf{s}_0$};
	\node[state] (q1) [right of = q0] {$\textsf{s}_1$};
	\node[state] (q2) [right of = q1] {$\textsf{s}_2$};
	
	\node[inner sep = 0] (f0) [below of =q0, yshift =10] {$\Huge\frownie$};
	\node[inner sep = 0] (f1) [below of =q1, yshift =10] {$\Huge\frownie$};
	\node[inner sep = 0] (f2) [below of = q2, yshift =10] {$\Huge\frownie$};
	
	\path[->, thick] 
	(q0) edge node {$!!{\text{td}}_0$} (q1)
	(q1) edge node {$!!\overline{\text{td}}_0$} (q2)
	(q0) edge node [xshift = -27]{
		\begin{tabular}{c c}
				$?m, \ m\in \Sigma$
		\end{tabular}}(f0)
	
	(q1) edge node [xshift = -30]{
		\begin{tabular}{c c}
			$?m, \ m\nin $\\
			$\set{{\text{td}}_1, \ {\text{d}}_1}$
	\end{tabular}}(f1)

(q2) edge node [xshift = -30]{
	\begin{tabular}{c c}
		$?m, \ m\nin $\\
		$\set{\overline{\text{td}}_1, \ \overline{\text{d}}_1}$
	\end{tabular}}(f2)

%
	;
\end{tikzpicture}
		}
		\caption{Protocol $P_h$ executed by $v_0$.}\label{fig:undec:phead}
	\end{minipage}
	\hfill
	\begin{minipage}[c]{0.5\linewidth}
		\resizebox*{1.0\linewidth}{!}{
			\tikzset{box/.style={draw, minimum width=4em, text width=4.5em, text centered, minimum height=17em}}

\begin{tikzpicture}[-, >=stealth', shorten >=1pt,node distance=2cm,on grid,auto, initial text = {}] 
	\node[state, inner sep = 1, initial above] (p) {$\idleState$};
	\node[state, inner sep = 1] (p1) [right of = p, xshift = 0] {$\checkState$};
	
	\node[inner sep = -3] (f0) [left of =p, xshift = -30] {$\Huge\frownie$};
	\node[inner sep = -3] (f1) [right of =p1, xshift = 15] {$\Huge\frownie$};

	\path[->, thick] 
	(p) edge [bend left =15] node {$?{\text{d}}_1$} (p1)
	(p1) edge [bend left =15] node {$?\overline{\text{d}}_1$} (p)
	(p) edge node [above] {
		$?m, m \neq {\text{d}}_1$} (f0)
	(p1) edge node {$?m, m\neq \overline{\text{d}}_1$} (f1)
	;
\end{tikzpicture}
		}
		\caption{Protocol $P_t$ executed by $v_n$.}\label{fig:undec:ptail}
	\end{minipage}
\end{figure} 

\begin{figure}
	\begin{minipage}[c]{0.32\linewidth}
	\resizebox*{1.2\linewidth}{!}{
		\tikzset{box/.style={draw, minimum width=4em, text width=4.5em, text centered, minimum height=17em}}

\begin{tikzpicture}[-, >=stealth', shorten >=1pt,node distance=2cm,on grid,auto, initial text = {}] 
	\node[state, inner sep = 1, initial left] (q0) {$\idleState^0$};
	\node[state, inner sep = 1] (q1) [below of = q0, yshift = -15] {$\execState^0$};
	\node[state, inner sep = 1] (q2) [below of = q1, yshift = 10] {$\restState^0$};
	
	\node[inner sep =-3] (f0) [right of = q0, xshift = 35] {$\Huge\frownie$};
	\node[inner sep = -3] (f1) [right of = q1, xshift = 35] {$\Huge\frownie$};
	\node[inner sep = -3] (f2) [right of = q2, xshift = 35] {$\Huge\frownie$};
	
	\node[state, inner sep = 1] (qa) [right = of q0, yshift = 35, xshift = -10] {$\repState_{\text{td}}^0$};
	\node[state, inner sep = 1] (qb) [right = of q0, yshift = -35, xshift = -10] {$\repState_{\text{d}}^0$};
	
	

	
	\path[->, thick] 
	(q0) 
	edge [in = 170, out = 90] node [yshift = -1, xshift = 5]{$!!{\text{td}}_0$} (qa)
	edge [in = 190, out = -70] node [below] {$!!{\text{d}}_0$} (qb)
	edge node [left, yshift = -5, xshift = 2]{$!!{\text{td}}_0$} (q1)
	
	(qa) edge [out = 190, in =70] node [yshift =4]{$!!\overline{\text{td}}_0$} (q0)
	(qb) edge [out = 170, in = -50] node [above, xshift = 7, yshift = -4]{$!!\overline{\text{d}}_0$} (q0)
	
	(q1) edge node  [left, yshift = 0, xshift = 2] {$!!\overline{\text{d}}_0$} (q2)
	
	%
	
	(q0) edge [bend left =20] node [above, yshift = -25, xshift =-0] {\footnotesize\begin{tabular}{c c}
			$?m,$\\
			$m \nin \set{ {\text{td}}_2,$\\
				$ {\text{d}}_2, \overline{\text{td}}_1, \overline{\text{d}}_1}$
	\end{tabular}} (f0)
	
	(q1) edge node [above, yshift = -25,xshift = -5] {\footnotesize\begin{tabular}{c c c}
			$?m,$\\
			$m\nin \set{\overline{\text{td}}_2, \overline{\text{d}}_2,$\\
				${\text{td}}_1, {\text{d}}_1}$
	\end{tabular}} (f1)
	
	(q2) edge [bend right = 0] node [above, yshift = -26, xshift = -5] {\footnotesize\begin{tabular}{c  c}
			$?m,$\\
			$m \nin \set{{\text{td}}_2, {\text{d}}_2,$\\
				$\overline{\text{td}}_1, \overline{\text{d}}_1}$
	\end{tabular}} (f2)
	
	(qa) edge [out =0, in = 140] node [right, yshift = 8, xshift = -10]{\begin{tabular}{l c}
			$?m,m\nin$\\
			$ \set{\overline{\text{td}}_2, {\text{td}}_1}$
	\end{tabular}}(f0)
	
	(qb) edge [out =0, in = 220]node [right, yshift = -10, xshift =-10]  {\begin{tabular}{l c}
			$?m,m\nin$\\
			$ \set{\overline{\text{d}}_2, {\text{d}}_1}$
	\end{tabular}}(f0)
	%
	
	
	;
	
\end{tikzpicture}
	}
	\caption{$\PP_0$.}\label{fig:undec:p0}
\end{minipage}
	\hfill
	\begin{minipage}[c]{0.32\linewidth}
		\resizebox*{1.2\linewidth}{!}{
			\tikzset{box/.style={draw, minimum width=4em, text width=4.5em, text centered, minimum height=17em}}

\begin{tikzpicture}[-, >=stealth', shorten >=1pt,node distance=2cm,on grid,auto, initial text = {}] 
	\node[state, inner sep = 1, initial left] (q0) {$\idleState^1$};
	\node[state, inner sep = 1] (q1) [below of = q0, yshift = -15] {$\execState^1$};
	\node[state, inner sep = 1] (q2) [below of = q1, yshift = 10] {$\restState^1$};

	\node[inner sep =-3] (f0) [right of = q0, xshift = 35] {$\Huge\frownie$};
	\node[inner sep = -3] (f1) [right of = q1, xshift = 35] {$\Huge\frownie$};
	\node[inner sep = -3] (f2) [right of = q2, xshift = 35] {$\Huge\frownie$};
	
	\node[state, inner sep = 1] (qa) [right = of q0, yshift = 35, xshift = -10] {$\repState_{\text{td}}^1$};
	\node[state, inner sep = 1] (qb) [right = of q0, yshift = -35, xshift = -10] {$\repState_{\text{d}}^1$};
	
	

	
	\path[->, thick] 
	(q0) 
	edge [in = 170, out = 90] node [yshift = -1, xshift = 5]{$!!{\text{td}}_1$} (qa)
	edge [in = 190, out = -70] node [below] {$!!{\text{d}}_1$} (qb)
	edge node [left, yshift = -5, xshift = 2]{$!!{\text{td}}_1$} (q1)
	
	(qa) edge [out = 190, in =70] node [yshift =4]{$!!\overline{\text{td}}_1$} (q0)
	(qb) edge [out = 170, in = -50] node [above, xshift = 7, yshift = -4]{$!!\overline{\text{d}}_1$} (q0)
	
	(q1) edge node  [left, yshift = 0, xshift = 2] {$!!\overline{\text{d}}_1$} (q2)
	
%
	
	(q0) edge [bend left =20] node [above, yshift = -25, xshift =-0] {\footnotesize\begin{tabular}{c c}
			$?m,$\\
			$m \nin \set{ {\text{td}}_0,$\\
				$ {\text{d}}_0, \overline{\text{td}}_2, \overline{\text{d}}_2}$
		\end{tabular}} (f0)
	
	(q1) edge node [above, yshift = -25,xshift = -5] {\footnotesize\begin{tabular}{c c c}
			$?m,$\\
			$m\nin \set{\overline{\text{td}}_0, \overline{\text{d}}_0,$\\
				${\text{td}}_2, {\text{d}}_2}$
	\end{tabular}} (f1)

	(q2) edge [bend right = 0] node [above, yshift = -26, xshift = -5] {\footnotesize\begin{tabular}{c  c}
			$?m,$\\
			$m \nin \set{{\text{td}}_0, {\text{d}}_0,$\\
				$\overline{\text{td}}_2, \overline{\text{d}}_2}$
	\end{tabular}} (f2)

	(qa) edge [out =0, in = 140] node [right, yshift = 8, xshift = -10]{\begin{tabular}{l c}
			$?m,m\nin$\\
			$ \set{\overline{\text{td}}_0, {\text{td}}_2}$
		\end{tabular}}(f0)
	
	(qb) edge [out =0, in = 220]node [right, yshift = -10, xshift =-10]  {\begin{tabular}{l c}
			$?m,m\nin$\\
			$ \set{\overline{\text{d}}_0, {\text{d}}_2}$
	\end{tabular}}(f0)
%

	
	;
	
\end{tikzpicture}
		}
		\caption{$\PP_1$.}\label{fig:undec:p1}
	\end{minipage}
	\hfill
\begin{minipage}[c]{0.32\linewidth}
	\resizebox*{1.2\linewidth}{!}{
		\tikzset{box/.style={draw, minimum width=4em, text width=4.5em, text centered, minimum height=17em}}

\begin{tikzpicture}[-, >=stealth', shorten >=1pt,node distance=2cm,on grid,auto, initial text = {}] 
	\node[state, inner sep = 1, initial left] (q0) {$\idleState^2$};
	\node[state, inner sep = 1] (q1) [below of = q0, yshift = -15] {$\execState^2$};
	\node[state, inner sep = 1] (q2) [below of = q1, yshift = 10] {$\restState^2$};
	
	\node[inner sep =-3] (f0) [right of = q0, xshift = 35] {$\Huge\frownie$};
	\node[inner sep = -3] (f1) [right of = q1, xshift = 35] {$\Huge\frownie$};
	\node[inner sep = -3] (f2) [right of = q2, xshift = 35] {$\Huge\frownie$};
	
	\node[state, inner sep = 1] (qa) [right = of q0, yshift = 35, xshift = -10] {$\repState_{\text{td}}^2$};
	\node[state, inner sep = 1] (qb) [right = of q0, yshift = -35, xshift = -10] {$\repState_{\text{d}}^2$};
	
	

	
	\path[->, thick] 
	(q0) 
	edge [in = 170, out = 90] node [yshift = -1, xshift = 5]{$!!{\text{td}}_2$} (qa)
	edge [in = 190, out = -70] node [below] {$!!{\text{d}}_2$} (qb)
	edge node [left, yshift = -5, xshift = 2]{$!!{\text{td}}_2$} (q1)
	
	(qa) edge [out = 190, in =70] node [yshift =4]{$!!\overline{\text{td}}_2$} (q0)
	(qb) edge [out = 170, in = -50] node [above, xshift = 7, yshift = -4]{$!!\overline{\text{d}}_2$} (q0)
	
	(q1) edge node  [left, yshift = 0, xshift = 2] {$!!\overline{\text{d}}_2$} (q2)
	
	%
	
	(q0) edge [bend left =20] node [above, yshift = -25, xshift =-0] {\footnotesize\begin{tabular}{c c}
			$?m,$\\
			$m \nin \set{ {\text{td}}_1,$\\
				$ {\text{d}}_1, \overline{\text{td}}_0, \overline{\text{d}}_0}$
	\end{tabular}} (f0)
	
	(q1) edge node [above, yshift = -25,xshift = -5] {\footnotesize\begin{tabular}{c c c}
			$?m,$\\
			$m\nin \set{\overline{\text{td}}_1, \overline{\text{d}}_1,$\\
				${\text{td}}_0, {\text{d}}_0}$
	\end{tabular}} (f1)
	
	(q2) edge [bend right = 0] node [above, yshift = -26, xshift = -5] {\footnotesize\begin{tabular}{c  c}
			$?m,$\\
			$m \nin \set{{\text{td}}_1, {\text{d}}_1,$\\
				$\overline{\text{td}}_0, \overline{\text{d}}_0}$
	\end{tabular}} (f2)
	
	(qa) edge [out =0, in = 140] node [right, yshift = 8, xshift = -10]{\begin{tabular}{l c}
			$?m,m\nin$\\
			$ \set{\overline{\text{td}}_1, {\text{td}}_0}$
	\end{tabular}}(f0)
	
	(qb) edge [out =0, in = 220]node [right, yshift = -10, xshift =-10]  {\begin{tabular}{l c}
			$?m,m\nin$\\
			$ \set{\overline{\text{d}}_1, {\text{d}}_0}$
	\end{tabular}}(f0)
	%
	
	
	;
	
\end{tikzpicture}
	}
	\caption{$\PP_2$.}\label{fig:undec:p2}
\end{minipage}
\end{figure}

In a line, a node has at most two neighbors, but cannot necessarily distinguish between the two (its left and its right one). 
To do so, nodes broadcast messages with subscript 0, 1 or 2, and we ensure that:
if a node broadcasts with subscript 1, its right [resp. left] neighbor broadcasts with subscript 0 [resp. subscript 2]. Similarly, if a node broadcasts with subscript 0 [resp. 2], its right neighbor broadcasts with subscript 2 [resp. 1] and its left one with subscript 1 [resp. 0].

Consider the five protocols displayed in \cref{fig:undec:p1,fig:undec:p2,fig:undec:p0,fig:undec:phead,fig:undec:ptail}. The states marked as initial are the ones
from which a process enters the protocol. 
Protocol $\PP_h$ is executed by the head of the line, $\PP_t$ by the tail of the line and other nodes execute either $\PP_0$, $\PP_1$ or $\PP_2$. 
Observe that messages 
go by pairs: $\text{td}_i, \overline{\text{td}}_i$ and $\text{d}_i$, $\overline{\text{d}}_i$ for all $i \in \set{0, 1, 2}$. 

The head broadcasts a request to be done with the pair of messages $\text{td}_0$, $\overline{\text{td}}_0$.
Each process in one of the $\PP_i$ starts in $\idleState^i$ and has a choice: either it transmits a message without executing it, or it ``executes'' it and tells it to the others. When it transmits a message not yet executed, it broadcasts the messages $\text{td}_i$ and $\overline{\text{td}}_i$ and visits states $\repState_{\text{td}}^i$ and $\idleState^i$. When it executes the request, it broadcasts the messages $\text{td}_i$ and $\overline{\text{d}}_i$ and visits states $\execState^i$ and $\restState^i$. Finally, when it transmits a request already done, it broadcasts the messages $\text{d}_i$ and $\overline{\text{d}}_i$ and visits states 
$\repState_{\text{d}}^i$ and $\idleState^i$. Once a process has executed the request (i.e. broadcast a pair $\text{td}_j$, $\overline{\text{d}}_j$ for some $j \in \set{0,1,2}$), only pairs $\text{d}_j$, $\overline{\text{d}}_j$, with $j \in \set{0,1,2}$, are transmitted in the rest of the line.

\begin{figure}
	\begin{center}	
		\resizebox*{0.9\linewidth}{!}{
			\tikzset{box/.style={draw, minimum width=4em, text width=4.5em, text centered, minimum height=17em}}

\begin{tikzpicture}[-, >=stealth', shorten >=1pt,node distance=2cm,on grid,auto, initial text = {}] 
	\node[rounded rectangle, draw, inner sep =2] (v0) {$v_0: \textsf{s}_0$};
	\node[rounded rectangle, draw, inner sep =2] (v1) [right of =v0] {$v_1: \idleState^1$};
	\node[rounded rectangle, draw, inner sep =2] (v2) [right of =v1] {$v_2: \idleState^2$};
	\node[rounded rectangle, draw, inner sep =2] (v0b) [right of =v2] {$v_3: \idleState^0$};
	\node[rounded rectangle, draw, inner sep =2] (v1b) [right of = v0b] {$v_4:\idleState^1$};
	\node[rounded rectangle, draw, inner sep =2] (v1bb) [right of = v1b, xshift = 10] {$v_{n-1}: \idleState^1$};
	\node[rounded rectangle, draw, inner sep =2] (v2t) [right = of v1bb] {$v_n: \idleState$};
	
	\path (v1b) -- node[auto=false]{\ldots} (v1bb);
	
	\path[-] 
	(v0) edge node {} (v1)
	(v1) edge node {} (v2)
	(v2) edge node {} (v0b)
	(v0b) edge node {} (v1b) 
	(v1bb) edge node {} (v2t) 
	;
\end{tikzpicture}
		}
		\caption{A configuration from which the transmission can happen: a node in state $\idleState^i$ can only broadcast messages with subscript $i$.
		}\label{fig:undec:line-topo}
	\end{center}
\end{figure}
\begin{figure}
	\begin{minipage}[c]{0.45\linewidth}
		\begin{center}	
		\resizebox*{!}{0.15\paperheight}{
			\tikzset{box/.style={draw, minimum width=4em, text width=4.5em, text centered, minimum height=17em}}

\begin{tikzpicture}[-, >=stealth', shorten >=1pt,node distance=1.9cm,on grid,auto, initial text = {}] 
	\LineTikzZoomOne{0cm}{\textsf{s}_0}{\idleStateAbrev^1}{\idleStateAbrev^2}{\idleStateAbrev^0}{q_1^2}{p}
	\draw[->] (0,-0.12) -- (0, -0.7);
	\node at (0.5,-0.4) () {$!!{\text{td}}_0$};
	\LineTikzZoomOne{-0.85cm}{\textsf{s}_1}{\idleStateAbrev^1}{\idleStateAbrev^2}{q_0^0}{q_0^1}{p}
	\draw[->] (1.8,-1.10) -- (1.8, -1.65);
	\node at (2.3,-1.35) () {$!!{\text{td}}_1$};
	\LineTikzZoomOne{-1.85cm}{\textsf{s}_1}{\repStateAbrev_{\text{td}}^1}{\idleStateAbrev^2}{q_0^0}{q_0^1}{p}
	\draw[->] (0,-2.0) -- (0, -2.6);
	\node at (0.5,-2.3) () {$!!\overline{\text{td}}_0$};
	\LineTikzZoomOne{-2.7cm}{\textsf{s}_2}{\repStateAbrev_{\text{td}}^1}{\idleStateAbrev^2}{q_0^0}{q_0^1}{p}
	\draw[->] (3.5,-2.95) -- (3.5, -3.4);
	\node at (4.,-3.15) () {$!!{\text{td}}_2$};
	\LineTikzZoomOne{-3.6cm}{\textsf{s}_2}{\repStateAbrev_{\text{td}}^1}{\repStateAbrev_{\text{td}}^2}{q_0^0}{q_0^1}{p}
	\draw[->] (1.8,-3.82) -- (1.8, -4.4);
	\node at (2.3,-4.1) () {$!!\overline{\text{td}}_1$};
	\LineTikzZoomOne{-4.6cm}{\textsf{s}_2}{\idleStateAbrev^1}{\repStateAbrev_{\text{td}}^2}{q_0^0}{q_0^1}{p}
	
\end{tikzpicture}
		}
		\subcaption{$C_0 \trans C_1 \trans \cdots \trans C_5$.}\label{fig:undec:start-transmission}
	\end{center}
	\end{minipage}
	\hfill
	\begin{minipage}[c]{0.45\linewidth}
		\begin{center}	
		\resizebox*{!}{0.15\paperheight}{
			\tikzset{box/.style={draw, minimum width=4em, text width=4.5em, text centered, minimum height=17em}}

\begin{tikzpicture}[-, >=stealth', shorten >=1pt,node distance=1.9cm,on grid,auto, initial text = {}] 
	\LineTikzZoomTwo{0cm}{q_0}{\idleStateAbrev^1}{\repStateAbrev_{\text{td}}^2}{\idleStateAbrev^0}{\idleStateAbrev^1}{\idleStateAbrev}
	\draw[->] (2.9,-0.25) -- (2.9, -0.75);
	\node at (2.5,-0.45) () {$!!{\text{td}}_0$};
	\LineTikzZoomTwo{-0.95cm}{q_1}{\idleStateAbrev^1}{\repStateAbrev_{\text{td}}^2}{\execStateAbrev^0}{\idleStateAbrev^1}{\idleStateAbrev}
	\draw[->] (1.2,-1.2) -- (1.2, -1.8);
	\node at (0.8,-1.45) () {$!!\overline{\text{td}}_2$};
	\LineTikzZoomTwo{-2cm}{q_1}{q_{\text{td}}^1}{\idleStateAbrev^2}{\execStateAbrev^0}{\idleStateAbrev^1}{\idleStateAbrev}
	\draw[->] (4.7,-2.25) -- (4.7, -2.8);
	\node at (4.3,-2.6) () {$!!{\text{d}}_1$};
	\LineTikzZoomTwo{-3.05cm}{q_2}{q_{\text{td}}^1}{\idleStateAbrev^2}{\execState^0}{\repStateAbrev_{\text{d}}^1}{\checkStateAbrev}
	\draw[->] (2.9,-3.3) -- (2.9, -3.8);
	\node at (2.5,-3.55) () {$!!\overline{\text{d}}_0$};
	\LineTikzZoomTwo{-4.cm}{q_1}{q_{\text{td}}^1}{\idleStateAbrev^2}{\restStateAbrev^0}{\repStateAbrev_{\text{d}}^1}{\checkStateAbrev}
	\draw[->] (4.7,-4.25) -- (4.7, -4.85);
	\node at (4.3,-4.5) () {$!!\overline{\text{d}}_1$};
	\LineTikzZoomTwo{-5.05cm}{q_1}{\idleStateAbrev^1}{\idleStateAbrev^2}{\restStateAbrev^0}{\idleStateAbrev^1}{\idleStateAbrev}
	\fill [fill=violet, fill opacity = 0.15] (2.1,0.3) rectangle (3.7,-4.35);
\end{tikzpicture}
		}
		\subcaption{$C_6 \trans C_7 \trans \cdots \trans C_{11}$.}\label{fig:undec:end-transmission}
		\end{center}
	\end{minipage}
\caption{Example of correct transmission.}
\end{figure}
	

\emph{Correct transmission of a request.}
Take for instance the configuration $C_0$ depicted in \cref{fig:undec:line-topo}\ for $n =5$ (i.e. there are six vertices).
We say that a configuration is \emph{stable} if the head is in $\textsf{s}_0$ or $\textsf{s}_2$, the tail is in $\idleState$ and other nodes are in $\idleState^i$ or $\restState^i$ for $i \in \set{0,1,2}$. Note that $C_0$ is stable.
We depict a transmission in \cref{fig:undec:start-transmission,fig:undec:end-transmission}, starting from $C_0$. 
We denote the successive depicted configurations $C_0, C_1, \dots C_{11}$. 
Note that $C_{11}$ is stable.
Between $C_0$ and $C_{11}$, the following happens: 
\Ifshort{Between $C_0$ and $C_3$, $v_0$ broadcasts the request with messages $\text{td}_0$ and $\overline{\text{td}}_0$. Between $C_1$ and $C_8$, $v_1$ and $v_2$ successively repeat the request to be done with messages $\text{td}_1$ and $\overline{\text{td}}_1$ for $v_1$ and $\text{td}_2$ $ \overline{\text{td}}_2$ for $v_2$. Between $C_6$ and $C_{10}$, $v_3$ executes the request by broadcasting messages $\text{td}_0$ and $\overline{\text{d}}_0$. Between $C_7$ and $C_{11}$, $v_4$ transmits the done request with messages $\text{d}_1$ and $\overline{\text{d}}_1$.}
\Iflong{
\begin{itemize}
	\item between $C_0$ and $C_3$, $v_0$ broadcasts the request with messages $\text{td}_0$ and $\overline{\text{td}}_0$;
	\item between $C_1$ and $C_8$, $v_1$ and $v_2$ successively repeat the request to be done with messages $\text{td}_1$ and $\overline{\text{td}}_1$ for $v_1$ and $\text{td}_2$ $ \overline{\text{td}}_2$ for $v_2$;
	\item between $C_6$ and $C_{10}$, $v_3$ executes the request by broadcasting messages $\text{td}_0$ and $\overline{\text{d}}_0$;
	\item between $C_7$ and $C_{11}$, $v_4$ transmits the done request with messages $\text{d}_1$ and $\overline{\text{d}}_1$.
\end{itemize}
}
Hence, the request is executed by exactly one vertex 
(namely $v_3$), as highlighted in \cref{fig:undec:end-transmission}. Observe that the processes sort of spontaneously emit broadcast to avoid to receive a
message. A correct guess of when to broadcast yields the interleaving of broadcasts that we have presented in this example. 

\emph{How to prevent wrong behaviors?}
Observe that, when a node is in state $\idleState^1$, if one of its neighbor broadcasts a message which is not $\text{td}_0, \text{d}_0$ or $\overline{\text{td}}_2, \overline{\text{d}}_2$, then the node in $\idleState^1$ reaches $\frownie$. We say that a process \emph{fails} whenever it reaches $\frownie$.
We have the following lemma:
\begin{lemma}
	Let $C \in \CC$ be a \emph{stable} configuration such that $C_0 \trans^+ C$. Then in $C$, it holds that
	 $v_0$ is in $\textsf{s}_2$, and 
	there is exactly one vertex $v \in \set{v_1, v_2, v_3, v_4}$ on a state $\restState^j$ for some $j \in \set{0,1,2}$.
\end{lemma}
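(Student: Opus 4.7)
My approach is to trace the execution from $C_0$ to $C$ one broadcast at a time, exploiting the very tight reception constraints encoded in the fail transitions of each protocol.

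First I would prove $v_0\in\textsf{s}_2$. Stability rules out $\textsf{s}_1$ and $\frownie$, so $v_0\in\set{\textsf{s}_0,\textsf{s}_2}$. Suppose for contradiction that $v_0$ remains in $\textsf{s}_0$ along the whole execution. Then $R(\textsf{s}_0)=\Sigma$ and every reception at $\textsf{s}_0$ leads to $\frownie$, so $v_1$ can never broadcast; since $v_1$ has no non-fail reception transition out of $\idleState^1$, it is stuck there. Iterating this reasoning along the chain, every $v_i$ sitting in its initial $\idleState^{j}$ may only broadcast messages of subscript $j$, and any such message lies outside the allowed reception set of at least one of its still-idle neighbours. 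Hence no broadcast is admissible, contradicting $C_0\trans^+C$. Therefore $v_0$ left $\textsf{s}_0$; by the acyclic chain $\textsf{s}_0\rightarrow\textsf{s}_1\rightarrow\textsf{s}_2$ of $P_h$ together with stability, $v_0\in\textsf{s}_2$.

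For the second claim, I would track the unique wave of broadcasts triggered by $v_0$. After $v_0$'s broadcast of $!!\text{td}_0$ (reaching $\textsf{s}_1$), $v_1$ stays in $\idleState^1$. For $v_0$ to ever reach $\textsf{s}_2$ it must later broadcast $!!\overline{\text{td}}_0$, which requires $v_1$ to be in a state whose fail set excludes $\overline{\text{td}}_0$; inspecting $\PP_1$, only $\execState^1$ and $\repState_{\text{td}}^1$ qualify. In either case, $v_1$'s first broadcast was $!!\text{td}_1$ (not $!!\text{d}_1$, since $!!\text{d}_1$ would require a subsequent $!!\overline{\text{d}}_1$ that would fail $v_0$ still in $\textsf{s}_1$). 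Propagating the analogous argument along the chain, I would show there is a unique index $k\in\set{1,\dots,n-1}$ such that $v_1,\dots,v_{k-1}$ all forwarded the request (ending back in $\idleState^{j}$) and $v_k$ executed it (ending in $\restState^{k\bmod 3}$). Once $v_k$ has broadcast $!!\overline{\text{d}}_{k\bmod 3}$, $v_{k+1}$ is locked into the done-forward branch by its own reception constraints, and by cascade so are $v_{k+2},\dots,v_{n-1}$. Finally, stability requires the tail $v_n$ to be back in $\idleState$, hence to have received $?\text{d}_1$ followed by $?\overline{\text{d}}_1$, witnessing that $v_{n-1}$ indeed performed a done-forward and therefore that an execute occurred upstream.

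The main obstacle is handling the many admissible interleavings of broadcasts between neighbouring vertices. I would deal with this by maintaining throughout the execution an invariant that every reachable configuration decomposes the line into three monotone zones: a processed prefix containing at most one vertex in some $\restState^{j}$ (all others back in stable states), a small active front of at most two non-stable vertices performing the current broadcast cycle, and an untouched suffix of vertices still in their initial $\idleState^{j}$. A case analysis on each admissible next broadcast preserves this invariant, and the stability of $C$ forces the active front to be empty, yielding both parts of the lemma at once.
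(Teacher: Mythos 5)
Your overall strategy is the same as the paper's: establish that the first broadcast is $v_0$'s $!!\text{td}_0$, that every non-tail vertex emits exactly one pair of messages, that the type of pair can only degrade from ``todo'' to ``done'' along the line, and that the tail's guard ($\idleState$ accepts only $\text{d}_1$) forces the last pair to be a done-pair, so that at least one execute occurs. Your first claim ($v_0\in\textsf{s}_2$) and the existence half of the second claim are sound and match the paper's observations (1)--(3): the only loose point there is that ``no broadcast is admissible'' should read ``any broadcast would send some node to $\frownie$, contradicting stability of $C$'' (broadcasts are always enabled; it is the absence of $\frownie$ in $C$ that is violated), and that at least one broadcast must occur because the protocols have no $\tau$-transitions.

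The genuine gap is the uniqueness argument, precisely at the sentence ``Once $v_k$ has broadcast $!!\overline{\text{d}}_{k\bmod 3}$, $v_{k+1}$ is locked into the done-forward branch by its own reception constraints.'' This is the paper's observation (4), and it is the one step that does not follow by the same inspection as observation (3). Note first that $v_{k+1}$ must emit its \emph{opening} broadcast strictly before $v_k$ closes: writing $j=k\bmod 3$, the states $\idleState^j$ and $\restState^j$ of $v_k$ reject both $\text{td}_{j+1}$ and $\text{d}_{j+1}$, so $v_{k+1}$ can only open while $v_k$ sits in $\execState^{j}$. But the guard of $\execState^{j}$ tolerates \emph{both} $\text{td}_{j+1}$ and $\text{d}_{j+1}$, so at that moment nothing in $v_k$'s reception constraints forbids $v_{k+1}$ from opening with $\text{td}_{j+1}$ and moving to its own $\execState^{j+1}$; and $\execState^{j+1}$ in turn tolerates $v_k$'s closing message $\overline{\text{d}}_{j}$. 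Concretely, the interleaving $v_1:\text{td}_1$ (to $\execState^1$), $v_2:\text{td}_2$ (to $\execState^2$), $v_1:\overline{\text{d}}_1$, $v_2:\overline{\text{d}}_2$ violates no guard of $\PP_1$ or $\PP_2$ as drawn, and would leave two adjacent vertices in $\restState$. The analogous argument for observation (3) works because $\repState_{\text{td}}^{j}$ accepts only $\set{\overline{\text{td}}_{j-1},\text{td}_{j+1}}$; the accept set of $\execState^{j}$ is not comparably restrictive, so ``with similar arguments'' does not transfer. You must either exhibit the reception guard that kills this interleaving (which requires $\execState^{j}$ to reject $\text{td}_{j+1}$) or give a different argument for why a second execute cannot occur; as written, the ``exactly one'' part of the lemma is not established.
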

Indeed, let $C$ be a \emph{stable} configuration such that $C_0 \trans^+ C$. It holds that:
\Iflong{\begin{enumerate}
	\item From $C_0$, the first broadcast is from $v_0$ and is message $\text{td}_0$.\label{item:undec:1}
\end{enumerate}
Indeed, if another vertex than $v_0$ broadcasts a message $m$ with subscript $i$ from $C_0$, its left neighbor would fail with transition $(\idleState^j, ?m, \frownie)$ as $j = (i-1) \mod3$ and $m = \text{td}_i$ or $ m= \text{d}_i$.
\begin{enumerate}\addtocounter{enumi}{1}
	\item Each vertex (except the tail) broadcasts one pair of messages between $C_0$ and $C$.\label{item:undec:2}
\end{enumerate}
Assume for instance that $v_1$ does not broadcast anything, recall that from \cref{item:undec:1}, $v_0$ broadcasts $\text{td}_0$. Then at some point
it will also broadcasts $\overline{\text{td}}_0$ otherwise it would not be in  $\textsf{s}_0$ or $\textsf{s}_2$ in $C$.
Hence $v_1$ fails as depicted in \cref{fig:undec:transmission-wrong-1}. Actually, each vertex (except the tail) broadcasts exactly one pair, as if it broadcasts more, its left neighbor would fail as well.
\begin{enumerate}\addtocounter{enumi}{2}
	\item When a node
	broadcasts a pair ($\text{td}_j$, $\overline{\text{td}}_j$), its right neighbor broadcasts either a pair ($\text{td}_i$, $\overline{\text{td}}_i$) or 
	($\text{td}_i$, $\overline{\text{d}}_i$), for $j, i \in \set{0,1,2}$.\label{item:undec:3}
\end{enumerate}
Assume its right neighbor broadcasts some $\text{d}_i$, it must be that $i = (j+1) \mod 3$. Such an example is depicted in \cref{fig:undec:transmission-wrong-2}: $v_1$ fails with $(\repState_{\text{td}}^1, ?\text{d}_2, \frownie)$.
With similar arguments, we claim:
\begin{enumerate}\addtocounter{enumi}{3}
	\item When a node
	broadcasts a pair ($\text{td}_j$, $\overline{\text{d}}_j$) or a pair ($\text{d}_j$, $\overline{\text{d}}_j$), its right neighbor broadcasts a pair ($\text{d}_i$,
	 $\overline{\text{d}}_i$), for $j, i \in \set{0,1,2}$;\label{item:undec:4}
\end{enumerate}}
\Ifshort{\begin{enumerate}
		\item \emph{From $C_0$, the first broadcast is from $v_0$ and it broadcasts $\text{td}_0$.}\\
		Indeed, if another vertex than $v_0$ broadcasts a message $m$ with subscript $i$ from $C_0$, its left neighbor would fail with transition $(\idleState^j, ?m, \frownie)$ as $j = (i-1) \mod3$ and $m \in \set{\text{td}_i,\text{d}_i}$.
		Let us consider an example depicted in \cref{fig:undec:transmission-wrong-2}: Assume $v_1$ is in state $\idleState^1$ and $v_2$ broadcasts $\text{td}_2$ or $\text{d}_2$ (it issues a request whereas $v_1$ is not broadcasting any request), then $v_1$ receives the message with transition that goes from $\idleState^1$ to \frownie, as depicted in Figure 7.  Hence, we can not reach a stable configuration from there. 
		\label{item:undec:1}
		\item \emph{Each vertex (except the tail) broadcasts one pair of messages between $C_0$ and $C$.}\\
		Assume for instance that $v_1$ does not broadcast anything. From \cref{item:undec:1}, $v_0$ broadcasts $\text{td}_0$, and so at some point
		it will also broadcasts $\overline{\text{td}}_0$ otherwise it would not be in  $\textsf{s}_0$ or $\textsf{s}_2$ in $C$.
		Hence $v_1$ fails as depicted in \cref{fig:undec:transmission-wrong-1}. Actually, each vertex (except the tail) broadcasts exactly one pair: if it broadcasts more, its left neighbor would fail as well.\label{item:undec:2}
		\item \emph{When a node
			broadcasts a pair ($\text{td}_j$, $\overline{\text{td}}_j$), its right neighbor broadcasts either a pair ($\text{td}_i$, $\overline{\text{td}}_i$) or 
			($\text{td}_i$, $\overline{\text{d}}_i$), for $j, i \in \set{0,1,2}$.}\\
		Assume its right neighbor broadcasts $\text{d}_i$, it must be that $i = (j+1) \mod 3$. Such an example is depicted in \cref{fig:undec:transmission-wrong-2}: $v_1$ fails with $(\repState_{\text{td}}^1, ?\text{d}_2, \frownie)$.
		Similarly, we have:\label{item:undec:3}
		\item \emph{When a node
		broadcasts a pair ($\text{td}_j$, $\overline{\text{d}}_j$) or a pair ($\text{d}_j$, $\overline{\text{d}}_j$), its right neighbor broadcasts a pair ($\text{d}_i$,
		$\overline{\text{d}}_i$), for $j, i \in \set{0,1,2}$.}\label{item:undec:4}
\end{enumerate}}


\begin{figure}
	\begin{minipage}[c]{0.45\linewidth}
		\begin{center}
		\resizebox*{!}{0.065\paperheight}{
			\tikzset{box/.style={draw, minimum width=4em, text width=4.5em, text centered, minimum height=17em}}

\begin{tikzpicture}[-, >=stealth', shorten >=1pt,node distance=1.8cm,on grid,auto, initial text = {}] 
	\LineTikzZoomFour{0cm}{\textsf{s}_0}{\idleStateAbrev^1}{\idleStateAbrev^2}{\textsf{id}^0}{q_0^1}{p}
	\draw[->] (0,-0.14) -- (0, -0.7);
	\node at (0.5,-0.4) () {$!!{\text{td}}_0$};
	\LineTikzZoomFour{-0.8cm}{\textsf{s}_1}{\idleStateAbrev^1}{\idleStateAbrev^2}{\idleStateAbrev^0}{q_0^1}{p}
	\draw[->] (0,-0.95) -- (0, -1.55);
	\node at (0.5,-1.2) () {$!!\overline{\text{td}}_0$};
	\LineTikzZoomFourFrOne{-1.65cm}{\textsf{s}_2}{q_1^1}{\idleStateAbrev^2}{\idleStateAbrev^0}{\idleStateAbrev^1}{p}
\end{tikzpicture}
		}
		\subcaption{$v_1$ does not transmit the request.}\label{fig:undec:transmission-wrong-1}
	\end{center}
		\end{minipage}
		\hfill
	\begin{minipage}[c]{0.45\linewidth}
\begin{center}
	\resizebox*{!}{0.065\paperheight}{
		\tikzset{box/.style={draw, minimum width=4em, text width=4.5em, text centered, minimum height=17em}}

\begin{tikzpicture}[-, >=stealth', shorten >=1pt,node distance=1.8cm,on grid,auto, initial text = {}] 
	\LineTikzZoomFive{0cm}{\textsf{s}_1}{\idleStateAbrev^1}{\idleStateAbrev^2}{q_0^0}{q_0^1}{p}
	\draw[->] (0.9,-0.22) -- (0.9, -0.65);
	\node at (1.5,-0.45) () {$!!{\text{td}}_1$};
	\LineTikzZoomFive{-0.88cm}{\textsf{s}_1}{\repStateAbrev_{\text{td}}^1}{\idleStateAbrev^2}{q_{\text{td}}^0}{q_0^1}{p}
	\draw[->] (2.7,-1.1) -- (2.7, -1.6);
	\node at (3.1,-1.35) () {$!!{\text{d}}_2$};
	\LineTikzZoomFiveFrOne{-1.8cm}{\textsf{s}_1}{\idleStateAbrev^1}{\repStateAbrev_{\text{d}}^2}{q_{\text{td}}^0}{q_{\text{td}}^1}{p}
\end{tikzpicture}

	}
	\subcaption{$v_2$ broadcasts the wrong pair of messages.}\label{fig:undec:transmission-wrong-2}
\end{center}
\end{minipage}
\caption{Example of wrong behaviors during the transmission.}
\end{figure}

\subsection{Putting everything together}

We adapt the construction of \cref{subsec:undec:transmission}\ to propagate operations on counters of the machine issued by the head of the line. Counters processes will evolve in three different protocols as in \cref{subsec:undec:transmission}.
They can be either in a $\textsf{zero}$ state, from which all the types of instructions can be transmitted, or in a state $1_{\counter}$ for $\counter$ one of the 
two counters, from which all the types of operations can be transmitted, except 0-tests of $\counter$.
Increments and decrements of a counter \counter\  are done in a similar fashion as in \cref{subsec:undec:transmission}\ (exactly one node changes its state).
0-tests are somewhat easier: no node changes state nor executes anything, and the tail accepts the same pair as the one broadcast by the head. However, if a node is in a $1_\counter$ when $\counter$ is the counter compared to 0, it fails when its left neighbor broadcasts the request.

We ensure that we can select a line with a similar structure as the one depicted in \cref{fig:undec:line-topo}\ thanks to a first part of the protocol where each node: $(i)$ receives an announcement message from its predecessor with a subscript $j$ (except the head which broadcasts first), $(ii)$ broadcasts an announcement message with the subscript $(j+1)\mod 3$ (head broadcasts with subscript 0) and $(iii)$ waits for the announcement of its successor with subscript $(j+2) \mod 3$ (except for the tail). If it receives any new announcement at any point of its execution, it fails. When considering only line topologies, as each node has at most two neighbors, this part can be achieved with fewer alternations.
We get the two following theorems.
\begin{theorem}
	\Cover\ and \CoverTree\ are undecidable for $k$-phase-bounded protocols with $k \geq 6$.
\end{theorem}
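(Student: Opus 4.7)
The plan is to reduce from the halting problem for two-counter Minsky machines, which is undecidable \cite{Minsky67}. Given a Minsky machine $M$ with counters $\counter_1,\counter_2$ and a halting location $\ell_{\text{halt}}$, I would construct a phase-bounded broadcast protocol $\PP_M$ together with a state $q_f$ so that $q_f$ is coverable in $\PP_M$ if and only if $M$ halts. Since tree topologies suffice by \cref{thm:Cover-CoverTree-equivalent}, and since lines are a special case of trees, both $\Cover$ and $\CoverTree$ will inherit undecidability from the construction.

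The construction has two layers. First, a line-selection phase forces the relevant processes to organize themselves as a simulated line with a distinguished head and tail. Each process sends and receives an ``announcement'' message modulo~3 with its neighbors: it receives an announcement with subscript $j$ from its predecessor, broadcasts subscript $(j+1)\bmod 3$, and then waits for its successor's announcement with subscript $(j+2)\bmod 3$. Any unexpected announcement leads to $\frownie$. Because the topology can be an arbitrary tree rather than a line, every internal node may have several would-be ``successors'', and the subscript trick ensures that at most one neighbor on each side can coherently continue; all others must remain silent or fail. Second, the head enters the Minsky simulation: it visits the control locations of $M$; for each instruction it emits a request whose type is carried along the line by the broadcast-only propagation mechanism of \cref{subsec:undec:transmission}. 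Counter values are encoded by the count of intermediate nodes in state $1_{\counter_i}$ (versus $\textsf{zero}$). Increments/decrements are simulated exactly as in \cref{subsec:undec:transmission}, requiring one node to ``execute'' the request (flip between $\textsf{zero}$ and $1_{\counter_i}$), while 0-tests are simpler: every node merely forwards the test, and any node still in state $1_{\counter_i}$ for the tested counter fails on receipt. Reaching $\ell_{\text{halt}}$ at the head triggers a final pair of messages that propagate to the tail, which then moves to $q_f$.

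The key phase accounting is that every process uses at most six phase alternations, independently of the length of $M$'s computation. Intermediate nodes use one reception phase (predecessor announcement), one broadcast phase (own announcement), one reception phase (successor announcement), one broadcast phase (entire simulation of all transmissions of requests by the mechanism of \cref{subsec:undec:transmission}, all happening as repeated broadcast bursts out of an idle broadcast state), and they may transit to $\frownie$ in the final reception phase via the clause allowing reception out of $Q_k^b$. The head and tail, which interact only on one side, consume fewer phases. This is where $k=6$ comes in: five alternations just barely suffice once one factors in the additional housekeeping messages needed so that the three copies ($\PP_0,\PP_1,\PP_2$) of the transmission mechanism of \cref{subsec:undec:transmission} can distinguish left from right on an arbitrary tree rather than on a pre-existing line.

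The correctness argument has two directions. Soundness is routine: from a halting run of $M$ of length $n$ using maximum counter value $N$, take a line of $\Theta(n+N)$ processes and replay each instruction using the transmission mechanism, which faithfully propagates requests and permits exactly one counter node to change state per increment/decrement. Completeness is the delicate direction: one must show that any covering execution yields a valid run of $M$. This follows from the analysis in \cref{subsec:undec:transmission} that shows every stable-to-stable block of the line execution propagates exactly one request and causes exactly one node to change counter state, together with the observation that a 0-test passes only if no node in state $1_{\counter_i}$ for the tested counter sits in the line. I expect the main obstacle to be not the machinery of \cref{subsec:undec:transmission}, which is already in place, but rather the careful bookkeeping needed to verify that the line-selection and simulation fit jointly within six phase alternations on arbitrary tree topologies while still ruling out spurious executions using extra branches of the tree.
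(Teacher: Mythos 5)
Your proposal follows essentially the same route as the paper: a reduction from Minsky-machine halting, mod-3 announcement-based line selection on arbitrary topologies, the head/counter-nodes/tail division of labour, the broadcast-only request-propagation mechanism of \cref{subsec:undec:transmission} for increments, decrements and 0-tests, and a terminal message that travels to the tail, which then covers $q_f$. The one slip is in the phase accounting: your five-alternation count for intermediate nodes omits that each of them must \emph{re-broadcast} the final ``done'' message after receiving it (the tail is not adjacent to the head, so it can only learn that the machine halted through this forwarding), and it is exactly this last reception-then-broadcast pair --- not extra left/right housekeeping --- that pushes the count from four working alternations up to six, matching the explicit partition the paper gives in its proof that the protocol is 6-phase-bounded.
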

%
\begin{theorem}
	\CoverLine\ is undecidable for $k$-phase-bounded protocols with $k \geq 4$.
\end{theorem}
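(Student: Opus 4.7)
The plan is to reduce from the halting problem of two-counter Minsky machines to $\CoverLine$ restricted to $4$-phase-bounded protocols, recycling the overall architecture of the $k \geq 6$ construction above and exploiting the fact that when the topology is already a line, the protocol does not need to spend alternations electing a line inside an arbitrary topology. In the general-topology case, each vertex must receive an announcement from its predecessor, broadcast its own announcement, and then wait for its successor's confirmation, which by itself costs a reception phase, a broadcast phase, and a further reception phase; this preamble disappears on line topologies since every non-extremal vertex already has exactly two neighbors and the head and tail are simply the two endpoints. The savings bring the per-process phase budget from $6$ down to $4$.

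First I would encode the control of the Minsky machine in the head, which remains in a single broadcast phase throughout the simulation: for every transition of the machine, it broadcasts the appropriate pair of messages with subscripts in $\{0,1,2\}$ encoding its position modulo three, as in Subsection~\ref{subsec:undec:transmission}, and when it reaches the halting location it broadcasts a final signal meant to propagate down the line. Intermediate vertices run the $\PP_i$-style transmission protocol of Figures~\ref{fig:undec:p0}--\ref{fig:undec:p2}, enriched to carry counter values: they start in $\textsf{zero}$, can move to $1_{\counter}$ or $1_{\othercounter}$ by executing an increment and back to $\textsf{zero}$ by executing a decrement; zero-tests are propagated unchanged, but a $1_{\counter}$-vertex fails if it receives a zero-test for $\counter$. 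The tail stays in a single reception phase and reaches its final state upon receiving the last signal from the head, so coverability of that state is equivalent to the Minsky machine halting.

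The hard part will be to argue that every process fits in at most $4$ alternations while the simulation remains faithful for an arbitrary number of Minsky steps. The head and the tail each naturally use only one phase, so the entire budget goes to the intermediate vertices: each of them must be able to change its counter state (e.g.\ $\textsf{zero} \to 1_{\counter}$ on execution of an increment) within at most $4$ alternations, in spite of relaying unboundedly many subsequent requests. This is where the line topology is crucial: since every vertex has at most two neighbors, the colour discipline on subscripts allows a vertex to detect in which direction a message is coming, so the $\frownie$-guards of Figures~\ref{fig:undec:p0}--\ref{fig:undec:p2} can be tightened to catch any malformed broadcast within the current phase, without needing a dedicated reception phase to validate each incoming request as in the general construction. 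Once this guard lemma is in place and the correctness of the Minsky simulation is re-argued as in the $k \geq 6$ case, undecidability of $\CoverLine$ for $4$-phase-bounded protocols follows.
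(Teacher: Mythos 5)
Your overall route is the paper's: keep the Minsky-machine simulation of the general construction and observe that, on a line, the announcement preamble used to elect the line can be shortened because every vertex already has at most two neighbours. But your phase accounting does not add up, and the justification in your last paragraph points at the wrong place. You claim the preamble ``disappears'' on line topologies; if it really did, an intermediate vertex would need only a broadcast phase for the whole simulation, a reception phase for the final signal, and a broadcast phase to forward it --- three alternations, not four. The preamble cannot disappear: even on a line, a vertex must learn which of $\PP_0,\PP_1,\PP_2$ it is supposed to run (its position modulo $3$) and which of its two neighbours lies on the head side, and both pieces of information can only come from receiving its predecessor's announcement. That reception is an unavoidable first phase, and it is exactly why the bound is $4$ rather than $3$. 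What the line hypothesis actually buys is the ability to drop step (iii) of the preamble (waiting for the successor's confirmation, which in the general construction costs the intermediate vertices a third phase and forces the head to receive and the tail to broadcast) and to merge the announcement broadcast with the simulation broadcast into a single broadcast phase; the intermediate pattern reception--broadcast--reception--broadcast--reception--broadcast thus collapses to reception--broadcast--reception--broadcast.

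Your closing argument --- that the line structure lets you tighten the guards so as to avoid ``a dedicated reception phase to validate each incoming request'' --- mischaracterises the general construction and is a red herring. Already in \cref{subsec:undec:transmission} requests are relayed using broadcasts only: during the simulation an intermediate vertex never performs a meaningful reception, every reception transition out of a simulation state goes to a dead failure state (which costs nothing, since those states can be duplicated one per reception phase), and the entire simulation, including the counter updates between $\textsf{zero}$ and $1_{\counter}$, sits inside one broadcast phase. No phases are saved there, and no new guard lemma is needed; the faithfulness argument is unchanged from the $k\geq 6$ case. The only genuinely new point to check is that dropping the successor-confirmation step is harmless, which it is: if some vertex of the line never joins the simulation, the final signal can never be relayed to the tail, so $q_f$ remains uncoverable.
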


%
%
%
%
%

\section{\Cover\ in 1-Phase-Bounded Protocols}\label{sec:about-1pb}

We show that $\Cover[\Topo]$ restricted to 1-phase-bounded
protocols is $\textsc{ExpSpace}$-complete.

We begin by proving that
for such protocols $\Cover[\Topo]$ and $\CoverStar$ are equivalent (where
$\Stars$ correspond to the tree topologies of height one)
. To get
this property, we first rely on Theorem
\ref{thm:Cover-CoverTree-equivalent}\ 
(stating that $\Cover$ and $\CoverTree$ are equivalent)
and without loss of generality we can assume that if a
control state can be covered with a tree topology, it can be
covered by the root of the tree. We then observe that when dealing
with  1-phase-bounded protocols, the behaviour of the processes of a tree
which  are located at a height 
strictly greater than  1 have no incidence on the
root node. Indeed if a process at depth 2 performs a broadcast received by a node at
depth 1, then this latter node will not be able to influence  the state of the
root because in 1-phase-bounded protocols, once a process has performed
a reception, it cannot broadcast anymore.
In the sequel we fix a 1-phase-bounded protocol $\PP = (Q, \Sigma,
\qinit, \Delta)$ and a state $q_f \in Q$. We then have:

\begin{lemma}\label{lemma:1-bounded-cover-star}
There exist $\Gamma \in \Topo$, $C = (\Gamma, L)\in \II_\PP$ and $D
=(\Gamma, L')\in \CC_\PP$ and $v\in \Vert{\Gamma}$ such that $C
\trans^\ast D$ and $L'(v) = q_f$ iff  there exists  $\Gamma' \in
\Stars$, $C'= (\Gamma', L'')\in \II$ and $D'
=(\Gamma', L''') \in \CC_\PP$ such that $C'
\trans^\ast_\PP D' $ and $L'''(\epsilon) = q_f$.
\end{lemma}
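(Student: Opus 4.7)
The right-to-left direction is immediate since $\Stars \subseteq \Topo$. For the converse, the plan is to apply Theorem~\ref{thm:Cover-CoverTree-equivalent} to obtain a tree topology $\Gamma$ and an execution $\rho = C_0 \trans \cdots \trans C_n$ where the root $\epsilon$ of $\Gamma$ reaches $q_f$ (the construction underlying that theorem can be arranged so that the node covering $q_f$ is the root). Letting $v_1,\dots,v_m$ be the children of $\epsilon$ in $\Gamma$, I would take $\Gamma' \in \Stars$ to be the star with root $\epsilon$ and leaves $v_1,\dots,v_m$, and build an execution $\rho'$ on $\Gamma'$ that faithfully reproduces the trajectory of $\epsilon$.

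The key structural fact I will exploit is that in a $1$-phase-bounded protocol, every process performs all of its broadcasts before its first reception: starting in $Q_0$, a process can only reach broadcasting states in $Q_1^b$ (via rules~5 and then~2), while any reception (rule~4 from $Q_0$ or rule~6 from $Q_1^b$) moves it into $Q_1^r$, from which neither broadcasts nor returns to $Q_1^b$ are permitted. Consequently, the sequence of broadcasts issued by each $v_i$ in $\rho$ depends only on $v_i$'s own internal non-determinism, and is independent of whatever $v_i$ may receive from its subtree in $\Gamma$.

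With this in hand, $\rho'$ will be obtained by scanning $\rho$ step by step and replaying only those transitions whose acting vertex belongs to $\set{\epsilon,v_1,\dots,v_m}$, skipping the rest. The invariant I will maintain is that $\epsilon$'s state in $\rho'$ always matches its state in $\rho$, and that each $v_i$'s state in $\rho'$ matches its state in $\rho$ as long as $v_i$ has not yet received any message in $\rho$; these matchings guarantee that every replayed transition is indeed firable in $\rho'$ and that each reception of $\epsilon$ triggered in $\rho$ can also be triggered in $\rho'$. The main obstacle will be to justify that the simulation survives the first moment at which some $v_i$ receives a broadcast emitted by one of its children in $\Gamma$: this step is absent in $\rho'$, so $v_i$'s state may lag, remaining in $Q_0 \cup Q_1^b$ while in $\rho$ it has already moved to $Q_1^r$. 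However, past that moment, $v_i$ performs no further broadcasts in $\rho$ (only internal actions in $Q_1^r$ or passive receptions), so $\epsilon$ never again receives anything from $v_i$ in $\rho$; and in $\rho'$, whatever spurious receptions $v_i$ may undergo from $\epsilon$'s own broadcasts only alter $v_i$'s state, never $\epsilon$'s. The resulting $\rho'$ therefore ends with $\epsilon$ in state $q_f$, as required.
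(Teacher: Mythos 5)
Your proposal is correct and follows essentially the same route as the paper's proof: reduce to a tree via Theorem~\ref{thm:Cover-CoverTree-equivalent}, restrict to the root and its children, and replay the projected execution while maintaining exactly the invariant you state (the paper phrases ``has not yet received a message'' as ``still labelled in $Q_0\cup Q_1^b$'', which is equivalent for $1$-phase-bounded protocols). The one refinement the paper makes that you should adopt is in the definition of the replayed subsequence: a leaf's actions taken \emph{after} its first reception in $\rho$ must also be skipped, not replayed (they need not be firable once that leaf's state has lagged), which is harmless since, as you yourself observe, such a leaf can no longer broadcast and hence cannot influence $\epsilon$.
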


To solve $\CoverStar$ in \textsc{ExpSpace}, we
proceed as follows (1) we first propose an abstract representation
for the configurations reachable by executions where
the root node does not perform any reception, and that only keeps track of
states in $Q_0$ and $Q^b_1$ (2) we show that we can decide in
polynomial space whether  a configuration corresponding to a given abstract representation can be reached from an initial
configuration (3) relying on reduction to the control state reachability
problem in VASS (Vector Addition System with States), we show how to
decide whether there exists a configuration corresponding to a given abstract
representation from which $q_f$ can be covered in an execution where
the root node does not perform any broadcast. This reasoning relies on
the fact that  a process executing a 1-phase-bounded protocol first
performs only broadcast (or internal actions) and then performs only
receptions (or internal actions).

We use $Q^b$ to represent the set $Q_0\cup Q^b_1$ and we say  that a
configuration $C=(\Gamma,L)$ in $\CC_\PP$ is a \emph{star-configuration}
whenever $\Gamma \in \Stars$. For a star-configuration  $C=(\Gamma,L)$ in $\CC_\PP$ such that 
$L(\epsilon) \in Q^b$, the \emph{broadcast-print} of $C$, denoted 
by $\bprint{C}$, is the pair $(L(\epsilon),\set{L(v) \in
  Q^b \mid v \in \Vert{\Gamma} \setminus \set{\epsilon}})$ in $ Q^b\times
2^{Q^b}$. We call such a configuration $C$ a b-configuration. Note
that any initial star-configuration $C_{in}=(\Gamma_{in},L_{in}) \in
\II$ is a b-configuration verifying $\bprint{C_{in}} \in
\set{(\qinit,\emptyset),(\qinit,\set{\qinit})}$ (the first case
corresponding to $\Vert{\Gamma}=\set{\epsilon}$). We now
define a transition relation $\Rightarrow$ between
broadcast-prints. Given $(q,\Lambda)$ and $(q',\Lambda')$ in $ Q^b\times
2^{Q^b}$, we write $(q,\Lambda) \Rightarrow (q',\Lambda')$ if there
exists two $b$-configurations $C$ and $C'$ such that
$\bprint{C}=(q,\Lambda)$ and $\bprint{C'}=(q',\Lambda')$ and  $C
\trans C'$. We denote by $\Rightarrow^\ast$ the reflexive and
transitive closure of $\Rightarrow$.

One interesting point of this abstract representation
is that we can compute in polynomial time the $\Rightarrow$-successor
of a given broadcast-print. The intuition is simple:
either the root performs a broadcast of $m \in \Sigma$, and in that case we have to
remove from the set $\Lambda$ all the states from which a reception of
$m$ can be done (as the associated processes in $C'$ will not be in a
state in $Q^b$ anymore) or one process in a state of $\Lambda$
performs a broadcast and in that case it should not be received by the
root node (otherwise the reached configuration will not be a
b-configuration anymore).
\begin{lemma}\label{lemma:successor-bprint-ptime}
 Given $(q,\Lambda) \in Q^b\times
2^{Q^b}$, we can compute in polynomial time the set $\set{(q',\Lambda')
\mid (q,\Lambda) \Rightarrow (q',\Lambda') }$.
\end{lemma}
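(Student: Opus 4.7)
The plan is to enumerate successors by cases on which process fires the transition, exploiting two key features of the setting: in a star topology only the root is a neighbor of every leaf and no two leaves are neighbors, and in a $1$-phase-bounded protocol every reception from a state in $Q^b$ lands in $Q^r_1$ and thus leaves $Q^b$. Since $C \trans C'$ corresponds to a single vertex firing, I first observe that a leaf already in $Q^r_1$ can only perform internal actions or receptions, both of which keep it in $Q^r_1$; such transitions contribute only the trivial successor $(q,\Lambda)$. So we only need to enumerate transitions fired by the root or by a leaf in a state $p \in \Lambda$.

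\emph{Root fires.} I enumerate each $(q,\alpha,q') \in \Delta$ with $\alpha \in \set{\tau} \cup !!\Sigma$. Internal transitions produce $(q',\Lambda)$. A broadcast $(q,!!m,q')$ forces every leaf in a state $p \in \Lambda$ with $m \in R(p)$ to receive $m$ and hence leave $Q^b$; all other leaves stay in their state. The resulting successor is $(q', \set{p \in \Lambda \mid m \notin R(p)})$. In both cases $q' \in Q^b$ is automatic since only receptions exit $Q^b$.

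\emph{A leaf fires.} I enumerate each $p \in \Lambda$ and each transition $(p,\alpha,p') \in \Delta$ with $\alpha \in \set{\tau} \cup !!\Sigma$; in the broadcast case I additionally require $m \notin R(q)$, since otherwise the root would be forced to receive and drop out of $Q^b$. In every accepted case, the firing leaf moves to $p' \in Q^b$ while the other leaves---not being neighbors of it in a star---are unaffected. Whether $p$ itself remains in $\Lambda'$ depends on whether additional leaves in the witness $C$ were in state $p$; since we are free to fix the multiplicities, both $\Lambda \cup \set{p'}$ and $(\Lambda \setminus \set{p}) \cup \set{p'}$ are realizable, yielding at most two successors per rule.

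This enumeration produces $O(|\Delta| \cdot |Q|)$ candidate pairs, each computed by a few set operations, so the algorithm runs in polynomial time. The one step that requires care---rather than a real obstacle---is realizability: each candidate above should correspond to some actual pair $C \trans C'$ of b-configurations, which is handled by explicitly choosing appropriate multiplicities of leaves in states of $\Lambda$ (duplicating a state when $p$ is to remain in $\Lambda'$) and freely adding leaves in $Q^r_1$ which do not affect the broadcast-print.
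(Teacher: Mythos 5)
Your case analysis coincides with the paper's own construction of the successor set: transitions fired by the root (internal, or a broadcast that removes from $\Lambda$ every state that can receive $m$), and transitions fired by a leaf in a state of $\Lambda$ (internal, or a broadcast with $m\nin R(q)$), each leaf move contributing both $\Lambda\cup\set{p'}$ and $(\Lambda\setminus\set{p})\cup\set{p'}$ to account for the multiplicity of the firing state. Your explicit treatment of realizability and of leaves outside $Q^b$ is slightly more detailed than the paper's one-line ``case analysis'' remark, but the argument is essentially the same.
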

In order to show that our abstract representation can be used to solve
$\CoverStar$, we need to rely on some further formal definitions. Given
  two star-configurations  $C=(\Gamma,L)$ and $C'=(\Gamma',L')$,
   we write $C \preceq C'$ iff the two  following conditions hold $(i)$ $L(\epsilon)=L'(\epsilon)$, and, $(ii)$
  $|\set{v \in \Vert{\Gamma} \setminus\set{\epsilon} \mid L(v)=q}| \leq |\set{v \in \Vert{\Gamma'}\setminus\set{\epsilon} \mid L'(v)=q}|$ for
  all $q \in Q^b$. We then have the following lemma where the two
  first points show that when dealing with star-configurations, the
  network generated by $1$-phase-bounded protocol enjoys some
  monotonicity properties. Indeed, if
  the root node performs a broadcast received by other nodes, then
  if we put more nodes in the same state, they will also
  receive the message. On the other hand if it is another node that
  performs a broadcast, only the root node is able to receive it. The last point of the lemma shows that we
  can have as many processes as we want in reachable states in $Q^b$
  (as soon as the root node does not perform any reception) by
  duplicating nodes and  mimicking behaviors.

 \begin{lemma}\label{lemma:monotonicity}
  The  following properties hold:
   \begin{enumerate}
   \item[(i)] If $C_1$, $C'_1$ and $C_2$ are star-configurations such that $C_1 \trans C'_1$ and $C_1 \preceq C_2$ then there
     exists a star-configuration $C'_2$ such that $C_1' \preceq C'_2$ and $C_2
     \trans^\ast C'_2$.
 \item[(ii)] If $C_1$, $C'_1$ and $C_2$ are
     b-configurations such that $C_1 \trans  C'_1$ and
     $\bprint{C_1}=\bprint{C_2}$ and $C_1 \preceq C_2$ then there
     exists a b-configuration $C'_2$ such that $C_1' \preceq C'_2$ and
     $\bprint{C'_1}=\bprint{C'_2}$ and $C_2
     \trans^\ast C'_2$ .
   \item[(iii)] If $C$ is a b-configuration such that $C_{in}
     \trans^\ast C$ for some initial configuration $C_{in}$ then for all $N \in \nat$,
     there exists an initial configuration $C'_{in}$ and a b-configuration $C'=(\Gamma',L')$ such that $C'_{in} \trans^\ast
     C'$ and  $\bprint{C}=\bprint{C'}=(q,\Lambda)$ and
     $|\set{v \in \Vert{\Gamma'}\setminus\set{\epsilon} \mid L'(v)=q'}| \geq N$ for all $q' \in \Lambda$.
   \end{enumerate}
 \end{lemma}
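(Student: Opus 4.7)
The plan for (i) is a case analysis on the single step $C_1 \trans C'_1$, exploiting two features: in a star a leaf's only neighbour is the root, and under $1$-phase-boundedness the only way to leave $Q^b$ is a reception (going into the sink $Q_1^r$). If the root acts (internal or broadcast), the root of $C_2$ executes the same step; for a root broadcast of $m$, the leaves of $C_2$ in each $Q^b$-state receptive to $m$ are forced out of $Q^b$ exactly when the matching leaves of $C_1$ are, so $Q^b$-counts either both drop to zero at that state or inherit $C_1 \preceq C_2$. If a leaf of $C_1$ acts from a $Q^b$-state, the preorder provides a matching leaf in $C_2$ in the same state that replays the step; the root reacts identically (its state agrees), and the absence of leaf-to-leaf edges prevents any other leaf from being perturbed. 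A leaf action from $Q_1^r$ leaves both the root and the $Q^b$-counts untouched, so $C'_2 = C_2$ works.

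For (ii), I strengthen (i) so that $\bprint{C'_1} = \bprint{C'_2}$ also holds. The only obstruction is that a state $q \in \Lambda$ may disappear from $C'_1$ (its last occupant moved) while several copies remain in $C_2$. The remedy is to repeat the simulated action in $C_2$ once per leaf in state $q$; this is safe because leaves are not neighbours of leaves, and because $C'_1$ is still a b-configuration the root cannot receive the broadcast message, so the extra leaf-broadcasts cause no side effect on root or other leaves. For root-driven steps $\Lambda$ evolves identically in $C_1$ and $C_2$ since the change only depends on which $Q^b$-states are receptive to the broadcast, not on their multiplicities.

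For (iii), I build $C'_{in}$ by placing $N$ duplicates of each leaf of $C_{in}$ around the root and replay $C_{in} \trans^\ast C$: root actions are mirrored once, and a step performed by leaf $v_i$ is replayed by each of its $N$ duplicates in turn. The crucial observation is that $C$ being a b-configuration forces the root to have never received anything along the execution, hence at every leaf broadcast the root is in a state non-receptive to the corresponding message; combined with the absence of leaf-to-leaf neighbours, the duplicated leaf broadcasts in the new execution cause no side effect on the root or on other leaves. Consequently each duplicate of $v_i$ sees exactly the same sequence of root broadcasts as $v_i$ and, by making matching non-deterministic choices, ends in the same state as $v_i$. This produces at least $N$ occupants in every state of $\Lambda$, and since $Q_1^r$ is a sink no duplicate of a leaf ending in $Q_1^r$ can ever enter $Q^b$, so the set $\Lambda$ is preserved exactly and $\bprint{C'} = \bprint{C}$. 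The main subtlety I expect to need careful bookkeeping is coordinating the non-deterministic choices of receptions across duplicates, resolved by letting all duplicates of a given leaf follow the same choice as the original.
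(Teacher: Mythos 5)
Your proof is correct and follows essentially the same route as the paper: a case analysis on which node acts for (i), repeating the leaf's transition once per identically-labelled leaf of $C_2$ for (ii) (safe because the root, being in $Q^b$ in a b-configuration, cannot receive), and duplicating leaves $N$ times and replaying the execution for (iii), using that a root ending in $Q^b$ can never have performed a reception in a $1$-phase-bounded protocol. The only cosmetic difference is that in (iii) the paper duplicates a single witness node per state of $\Lambda$ whereas you duplicate every leaf; both are fine.
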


 We can now prove that we
 can reason  in a sound and complete way with broadcast prints to
 characterise the b-configurations reachable from initial star-configurations. To prove this next lemma, we rely on the two last
 points of the previous lemma and reason by induction on the length
 of the $\Rightarrow$-path leading from $(\qinit,\Lambda_{in})$ to $(q,\Lambda)$.

\begin{lemma}\label{lemma:completeness-bprint}
Given  $(q,\Lambda) \in  Q^b\times
2^{Q^b}$, we have $(\qinit,\Lambda_{in}) \Rightarrow^\ast (q,\Lambda)$
with $\Lambda_{in} \in \set{\emptyset,\set{\qinit}}$ iff there exist
  two $b$-configurations $C_{in} \in \II$ and $C\in \CC$  such
  that $C_{in}\trans^\ast C$ and $\bprint{C}=(q,\Lambda)$. 
\end{lemma}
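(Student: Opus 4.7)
The plan is to prove both directions separately, the reverse being nearly immediate from the semantics and the forward direction relying on the monotonicity properties of Lemma~\ref{lemma:monotonicity}. For the $(\Leftarrow)$ direction, suppose $C_{in} \trans^\ast C$ between two b-configurations. I would first observe that since $\PP$ is $1$-phase-bounded, any intermediate configuration along this execution must also be a b-configuration: the root starts in $Q^b$ and ends in $Q^b$, while the only way for the root to leave $Q^b$ is by a reception moving it to $Q_1^r$, from which it can never return. Hence the root stays in $Q^b$ throughout, every elementary step $D \trans D'$ of the execution connects two b-configurations, which by definition of $\Rightarrow$ immediately witnesses $\bprint{D} \Rightarrow \bprint{D'}$. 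Composition gives $\bprint{C_{in}} \Rightarrow^\ast \bprint{C} = (q,\Lambda)$, and $\bprint{C_{in}} \in \set{(\qinit,\emptyset),(\qinit,\set{\qinit})}$ because every node of an initial configuration is labelled by $\qinit \in Q_0 \subseteq Q^b$.

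For the $(\Rightarrow)$ direction, I would proceed by induction on the length $n$ of the $\Rightarrow$-chain from $(\qinit,\Lambda_{in})$ to $(q,\Lambda)$. The base case $n=0$ is immediate: the star topology with only the root (if $\Lambda_{in}=\emptyset$) or the root together with a single neighbour (if $\Lambda_{in}=\set{\qinit}$), both labelled $\qinit$, is an initial b-configuration whose broadcast-print is exactly $(\qinit,\Lambda_{in})$. For the inductive step, factor the chain as $(\qinit,\Lambda_{in}) \Rightarrow^\ast (r,\Theta) \Rightarrow (q,\Lambda)$. By the induction hypothesis we obtain b-configurations $C_{in}^{0} \in \II$ and $C$ with $C_{in}^{0} \trans^\ast C$ and $\bprint{C} = (r,\Theta)$, and by the definition of $\Rightarrow$ we obtain b-configurations $D,D'$ with $\bprint{D} = (r,\Theta)$, $\bprint{D'} = (q,\Lambda)$ and $D \trans D'$.

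The central difficulty, and what I expect to be the main obstacle, is that $C$ and $D$ share the same broadcast-print but may differ arbitrarily in the number of neighbours carrying each state in $\Theta$, so the step $D \trans D'$ cannot directly be concatenated onto the execution reaching $C$. To overcome this I would invoke Lemma~\ref{lemma:monotonicity}(iii): choosing $N$ strictly larger than every count $|\set{v \in \Vert{\Gamma_D}\setminus\set{\epsilon} \mid L_D(v)=q'}|$ for $q' \in \Theta$, the lemma supplies an initial configuration $C_{in}^{\ast}$ and a b-configuration $C^{\ast}$ with $C_{in}^{\ast} \trans^\ast C^{\ast}$, $\bprint{C^{\ast}} = (r,\Theta)$ and at least $N$ neighbours in each state of $\Theta$. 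Since $\preceq$ only counts nodes in $Q^b$-states, this guarantees $D \preceq C^{\ast}$. Lemma~\ref{lemma:monotonicity}(ii) then applies to $D \trans D'$ with $\bprint{D} = \bprint{C^{\ast}}$ and $D \preceq C^{\ast}$, producing a b-configuration $C^{\ast\ast}$ with $C^{\ast} \trans^\ast C^{\ast\ast}$ and $\bprint{C^{\ast\ast}} = \bprint{D'} = (q,\Lambda)$. Concatenating $C_{in}^{\ast} \trans^\ast C^{\ast} \trans^\ast C^{\ast\ast}$ yields the desired witness execution and closes the induction.
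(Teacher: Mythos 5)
Your proof is correct and follows essentially the same route as the paper's: the easy direction by reading off broadcast-prints along the execution (you are in fact slightly more careful than the paper in justifying that every intermediate configuration is a b-configuration), and the hard direction by induction on the $\Rightarrow$-chain, padding the reached configuration via Lemma~\ref{lemma:monotonicity}(iii) so that it dominates the witness of the last $\Rightarrow$-step, and then applying Lemma~\ref{lemma:monotonicity}(ii). No gaps.
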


Finally, we show that we can verify in exponential space
 whether there exists a configuration with a given broadcast-print $(q,\Lambda)$
 from which we can reach  a configuration covering $q_f$ thanks to an
 execution where the root node does not perform any broadcast. This result is obtained
 by a reduction to the control state
 reachability problem in (unary) VASS 
 which is known to be \textsc{ExpSpace}-complete \cite{lipton76reachability,rackoff78covering}. VASS are
 finite state machines equipped with  variables (called counters) taking their values in $\nat$, 
 and where each  transition of the machine
 can either change the value of a counter, by incrementing or
 decrementing it, or do nothing. In our reduction, we encode the state
 of the root in the control state of the VASS and we associate a
 counter to each state of $Q^b$ to represent the number of processes in
 this state. In a first phase,
 the VASS generates a configuration with $(q,\Lambda)$ as
 broadcast-print and in a second phase it simulates the network. For
 instance, if a process performs a broadcast received by the root node,
 then we decrement the counter associated to the source state of the
 broadcast, we increment the one associated to the target state
 and we change the control state of the VASS representing the state of the
 root node accordingly. We need a last definition to characterise  executions
 where the root node does not perform any broadcast: given two star-configurations
 $C=(\Gamma,L)$ and $C'=(\Gamma,L')$, we write $C \transup{}_r C'$
 whenever there exist $v \in \Vert{\Gamma}$ and $\delta \in \Delta$
 such that $C \transup{v,\delta} C'$ and either $v \neq \epsilon$ or
 $\delta=(q,\tau,q')$ for some $q,q' \in Q$. We denote by
 $\trans_r^\ast$ the reflexive and transitive closure of
 $\trans_r$.

\begin{lemma}\label{lemma:expspace-cover-bprint}
Given $(q,\Lambda) \in  Q^b\times
2^{Q^b}$, we can decide in \textsc{ExpSpace} whether
there exist a b-configuration $C=(\Gamma_f,L)$ and a star-configuration
$C_f=(\Gamma_f,L_f)$  such that $\bprint{C} =(q,\Lambda)$ and  $L_f(\epsilon)=q_f$
and  $C \trans_r^\ast C_f$.
 \end{lemma}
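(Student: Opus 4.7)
The plan is to reduce this problem to the control state reachability problem in a VASS (Vector Addition System with States) whose size is polynomial in $|\PP|$. Since control state reachability in a VASS is \textsc{ExpSpace}-complete~\cite{rackoff78covering}, this yields an \textsc{ExpSpace} algorithm for the problem. The key observations that make this reduction possible are: (i) in a star topology, the non-root processes share no edges, so a broadcast of a non-root is received only by the root; and (ii) since we restrict to $\trans_r$, the root never broadcasts, hence non-root processes never receive any message. As a consequence, a non-root process that enters $Q_1^r$ (after its unique reception is impossible here, or via an internal transition) cannot broadcast, cannot influence any other process, and hence becomes entirely inert. So it suffices to track non-root processes that remain in $Q^b=Q_0\cup Q_1^b$.

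The VASS $\mathcal{V}$ I would build has one counter $X_{q'}$ for each $q' \in Q^b$, intended to represent the number of non-root processes currently in state $q'$. Its control states are $Q$ (encoding the current state of the root), augmented by a small ``setup'' gadget used to initialise the counters so that $\bprint{C} = (q,\Lambda)$. The setup gadget visits a sequence of states, one per element of $\Lambda$, performing one mandatory increment of each $X_{q'}$ with $q' \in \Lambda$, and then offers free self-loops that further increment each such $X_{q'}$; no other counter is ever incremented there. At the end of setup, control moves to state $q$ and simulation begins. Simulation transitions are: (a) for every root internal transition $(q',\tau,q'')\in\Delta$, a VASS transition $q'\to q''$ with no counter change; (b) for every non-root internal transition $(q_1,\tau,q_2)\in\Delta$ with $q_1\in Q^b$, a self-loop on any root state that decrements $X_{q_1}$ and, if $q_2\in Q^b$, increments $X_{q_2}$ (if $q_2\in Q_1^r$ the process is simply ``dropped''); (c) for every non-root broadcast $(q_1,!!m,q_2)\in\Delta$ with $q_1\in Q^b$, and every root state $q'$, a transition that decrements $X_{q_1}$, increments $X_{q_2}$ if $q_2\in Q^b$, and either moves the root from $q'$ to some $q''$ with $(q',?m,q'')\in\Delta$, or, if $m\notin R(q')$, leaves the root state unchanged. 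The target is the control state $q_f$.

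For correctness, one checks that any $\trans_r$-execution from a b-configuration $C$ with $\bprint{C}=(q,\Lambda)$ to some $C_f$ with $L_f(\epsilon)=q_f$ can be simulated step-by-step by $\mathcal{V}$, and conversely any run of $\mathcal{V}$ reaching control state $q_f$ can be turned into such an execution by choosing as the initial b-configuration one whose non-root multiplicities match the values of the counters at the end of the setup phase. Crucially, the fact that non-root processes are pairwise non-adjacent in a star and that the root never broadcasts implies that item (c) already captures \emph{all} possible receptions in the system: no extra side-effect on the other counters is needed, which is what makes this reduction to a pure VASS (no zero-tests, no resets) work. The number of VASS states and transitions is polynomial in $|\PP|$ and the number of counters is $|Q^b|\leq|Q|$, so the resulting instance of control state reachability is of polynomial size, yielding the \textsc{ExpSpace} bound. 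The main subtlety, and the point I would be most careful about, is the forced nature of reception in the semantics: one must ensure that whenever the root is in a state $q'$ with $m\in R(q')$, no VASS transition of type (c) allows the root to stay in $q'$; otherwise the abstraction would become unsound.
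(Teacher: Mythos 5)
Your proposal is correct and follows essentially the same route as the paper: a reduction to control-state reachability in a VASS with one counter per state of $Q^b$, the control state tracking the root, an initialisation phase that generates a configuration with broadcast-print $(q,\Lambda)$, and simulation transitions for root-internal moves and for non-root internal moves and broadcasts (received or not by the root), with the soundness point about forced reception handled exactly as in the paper by only allowing the root to stay put when $m\notin R(q')$. The only differences are inessential encoding details (the paper's unary VASS splits each decrement-increment pair across an intermediate control state, and your case distinctions for a non-root process leaving $Q^b$ are vacuous since such processes never receive under $\trans_r$ in a star).
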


Combining the results of the previous lemmas leads to an
\textsc{ExpSpace}-algorithm to solve $\CoverStar$. We first guess a
broadcast-print $(q,\Lambda)$ and check in polynomial space whether it is
$\Rightarrow$-reachable from an initial broadcast-print in
$\set{(\qinit,\emptyset),(\qinit,\set{\qinit})}$ thanks to Lemma
\ref{lemma:successor-bprint-ptime}  (relying on a non-deterministic
polynomial space algorithm for reachability). Then we use
\cref{lemma:expspace-cover-bprint} to check the existence of a b-configuration $C$ with $\bprint{C}=(q,\Lambda)$ from which we can cover
$q_f$. By Savitch's theorem~\cite{DBLP:journals/jcss/Savitch70}, we conclude that the problem is in \textsc{ExpSpace}. 
The completeness of this method is direct. For the soundess, we  reason as follows: using Lemma
\ref{lemma:completeness-bprint}, there exists a configuration $C$
reachable from an initial star-configuration such that
$\bprint{C}=(q,\Lambda)$, and by~\cref{lemma:expspace-cover-bprint}, there is a configuration $C'$ such that $\bprint{C'}=(q,\Lambda)$
from which we cover $q_f$.  Thanks to
\cref{lemma:monotonicity}.$(iii)$, there is a configuration $C''$ reachable
from an initial configuration such that $C\preceq C''$ and $C' \preceq C''$  and
$\bprint{C''}=(q,\Lambda)$. Thanks to
\cref{lemma:monotonicity}.$(i)$ applied to each transition,  we can build an execution from $C''$ that covers
$q_f$. The lower bound is obtained by a reduction  from the
control state reachability in VASS.
 
 \begin{theorem}\label{theorem:1pb-expspace}
	$\Cover[\Topo]$ and $\CoverTree$ are $\textsc{ExpSpace}$-complete for 1-phase-bounded protocols.
\end{theorem}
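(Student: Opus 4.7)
The plan is to establish both the upper and lower bounds by stitching together the preceding lemmas. By Theorem~\ref{thm:Cover-CoverTree-equivalent}, $\Cover[\Topo]$ and $\CoverTree$ coincide, and by Lemma~\ref{lemma:1-bounded-cover-star} both reduce to $\CoverStar$, so it suffices to show that $\CoverStar$ is \textsc{ExpSpace}-complete for 1-phase-bounded protocols.

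For the upper bound, the algorithm proceeds in two phases. First, guess a broadcast-print $(q,\Lambda) \in Q^b \times 2^{Q^b}$, which has size polynomial in $|Q|$, and check that it is $\Rightarrow^\ast$-reachable from some initial broadcast-print in $\set{(\qinit,\emptyset),(\qinit,\set{\qinit})}$. Using Lemma~\ref{lemma:successor-bprint-ptime}, each successor step in the $\Rightarrow$-relation is computable in polynomial time, so non-deterministic reachability between broadcast-prints is in \textsc{NPSpace}, and hence in \textsc{PSpace} by Savitch's theorem. Second, apply Lemma~\ref{lemma:expspace-cover-bprint} to check, in \textsc{ExpSpace}, that there exists a b-configuration $C$ with $\bprint{C}=(q,\Lambda)$ from which $q_f$ can be covered via $\trans_r^\ast$ (i.e., without the root broadcasting). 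The overall procedure therefore runs in \textsc{ExpSpace}.

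The correctness argument is the most delicate part. Completeness is immediate: any execution covering $q_f$ from an initial star-configuration can, since the protocol is $1$-phase-bounded, be split at the moment the root first performs a reception (or ends its broadcast phase); up to that point the execution stays within b-configurations, whose prints are related by $\Rightarrow^\ast$ by Lemma~\ref{lemma:completeness-bprint}, and afterwards the root only receives, which is an execution of the $\trans_r$ form required by Lemma~\ref{lemma:expspace-cover-bprint}. Soundness is where one must combine the lemmas carefully: the algorithm returns two independent witnesses, namely a b-configuration $C$ reachable from some initial star-configuration with $\bprint{C}=(q,\Lambda)$ (by Lemma~\ref{lemma:completeness-bprint}) and a b-configuration $C'$ with the same print from which $q_f$ is covered along $\trans_r^\ast$ (by Lemma~\ref{lemma:expspace-cover-bprint}). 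These need to be unified into a single execution. Lemma~\ref{lemma:monotonicity}.(iii) provides, for any $N$, a reachable b-configuration $C''$ with $\bprint{C''}=(q,\Lambda)$ that has at least $N$ processes in each state of $\Lambda$, so we may take $C''$ large enough to ensure $C \preceq C''$ and $C' \preceq C''$. Then iterating Lemma~\ref{lemma:monotonicity}.(i) along the covering execution from $C'$ yields an execution from $C''$ that also covers $q_f$, which, prepended with the execution from the initial configuration to $C''$, gives a complete witness.

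For the lower bound, we reduce from the control state reachability problem in VASS, which is \textsc{ExpSpace}-hard~\cite{lipton76reachability}. The VASS counters are encoded by the number of leaves in specific states of $Q^b_1$ surrounding the root: the root, playing the role of the VASS control state, increments a counter by broadcasting a message that makes one neighbor move from $\qinit$ to the target state, and decrements it by broadcasting a message received only by processes in that state. Zero-tests cannot be performed directly, but the standard construction for VASS control state reachability only requires increments and decrements, so the reduction carries through in polynomial time. The hard part of the whole argument is the soundness step above, where the two separate witnesses produced by the algorithm must be merged via the three monotonicity properties of Lemma~\ref{lemma:monotonicity}; the rest follows by assembling the earlier lemmas.
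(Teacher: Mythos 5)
Your upper bound and the completeness/soundness argument follow the paper's proof essentially step for step: reduce to $\CoverStar$ via Theorem~\ref{thm:Cover-CoverTree-equivalent} and Lemma~\ref{lemma:1-bounded-cover-star}, guess a broadcast-print and check its $\Rightarrow^\ast$-reachability in \textsc{PSpace} using Lemma~\ref{lemma:successor-bprint-ptime} and Savitch, apply Lemma~\ref{lemma:expspace-cover-bprint} for the second phase, and merge the two witnesses in the soundness direction with Lemma~\ref{lemma:monotonicity}.(iii) followed by repeated applications of Lemma~\ref{lemma:monotonicity}.(i). That part is correct and is the paper's argument.

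The lower bound, however, has a genuine gap: you have the communication going in the wrong direction. In your reduction the root \emph{broadcasts} the counter operations and the leaves receive them. This fails for two reasons. First, in this model a broadcast is received by \emph{every} neighbor that has an enabled reception on that message, so a single ``decrement $x$'' broadcast from the root would simultaneously move all processes currently in the state encoding one unit of $x$ back to the zero state, i.e.\ it resets the counter rather than decrementing it by one (and symmetrically an ``increment'' broadcast moves every $\qinit$-neighbor at once). Second, nothing blocks a decrement when the counter is zero: the root's broadcast simply goes unreceived and the root proceeds to the next control state, so the simulation admits runs the VASS does not have, and the reduction is unsound. The paper's construction inverts the roles: each leaf alternates between states $x_0$ and $x_1$ by \emph{spontaneously broadcasting} $x{+}{+}$ and $x{-}{-}$, and the root only \emph{receives} these messages (moving to an error sink if it receives an operation not permitted by its current VASS control state). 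This guarantees that exactly one unit changes per simulated step, that a decrement can only be simulated when some leaf actually holds a unit, and that the protocol is $1$-phase-bounded (leaves have a single broadcast phase, the root a single reception phase). You would need to restructure your reduction along these lines for the \textsc{ExpSpace}-hardness to go through.
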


\section{Decidability Results for 2-Phase-Bounded Protocols}\label{sec:about-2pb}


\subsection{\Cover\ and \CoverTree\ are Decidable on 2-PB Protocols}\label{subsec:proof-dec-2phases-covertree}
%
A \emph{simple path between $u$ and $u'$} in a topology $\Gamma = (V,E)$ is a sequence of distinct vertices $v_0, \dots, v_k$ such that $u = v_0$, $u' = v_k$, and for all $0 \leq i <k$, $(v_i, v_{i+1}) \in E$. Its length is denoted $d(v_0,\dots, v_k)$ and is equal to $k$. \iflong{In a tree topology $\Gamma'$, for two vertices $u, u'$, there exists a unique simple path between $u$ and $u'$, hence we denote $d(u, u')$ to denote the length of the unique path between $u$ and $u'$. Furthermore, for all vertex $u$, $d(\epsilon, u) = |u|$.}\fi
%
%
%
%
Given an integer $K$, we say that a topology $\Gamma$ is $K$-bounded path (and we write $\Gamma\in K-\textsf{BP}$) if there is no simple path $v_0, \dots, v_k$ such that $d(v_0, \dots, v_k) > K$
%
The result of this subsection relies on the following theorem.

\begin{theorem}[\cite{DelzannoSZ10},Theorem 5]\label{th:concur-2010}
	For $K\geq 1$, \textsc{Cover[$K$-\textsf{BP}]}\ is decidable.
\end{theorem}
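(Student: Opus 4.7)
The plan is to treat the transition system over $K$-bounded path topologies as a well-structured transition system (WSTS), so that decidability of coverability follows from the generic backward saturation algorithm. The three ingredients I need to provide are: a well-quasi-order $\preceq$ on the set of configurations whose topology lies in $K-\textsf{BP}$, monotonicity of the transition relation with respect to $\preceq$, and effectiveness of the one-step predecessor computation.

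I would define $\preceq$ by setting $(\Gamma,L) \preceq (\Gamma',L')$ iff there exists an injection $h:\Vert{\Gamma}\to\Vert{\Gamma'}$ that is label-preserving, meaning $L(v)=L'(h(v))$ for every $v\in\Vert{\Gamma}$, and edge-preserving, meaning that $(u,v)\in\Edges{\Gamma}$ implies $(h(u),h(v))\in\Edges{\Gamma'}$. Monotonicity is then routine: given $C_1 \preceq C_2$ via $h$ and a step $C_1 \transup{v,\delta} C_1'$, I fire the very same $\delta$ at $h(v)$ in $C_2$. If $\delta$ is a broadcast, then each neighbor $h(u)$ of $h(v)$ that comes from $C_1$ reacts exactly as its pre-image does in $C_1'$, while the additional neighbors of $h(v)$ in $C_2$ react according to their own reception capabilities, producing a configuration $C_2'$ that embeds $C_1'$ via the same $h$. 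Effective computation of a finite pre-basis of an upward-closed set then follows by a standard case analysis on the rule fired and on the bounded number of vertices that participate in the step.

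The crucial step, and the principal obstacle, is proving that $\preceq$ is a well-quasi-order on the configurations whose topology lies in $K-\textsf{BP}$. Arbitrary labeled finite graphs are not WQO under such embeddings (cycles of unbounded length form an antichain), so the bound on simple paths has to be used essentially. Observing that $K-\textsf{BP}$ coincides with the class of graphs that exclude the path on $K+1$ edges as a subgraph, I would invoke (or reprove by induction on $K$) the classical graph-theoretic result of Ding stating that such classes are well-quasi-ordered under induced subgraph. Induced-subgraph WQO implies subgraph WQO since every subgraph-antichain is an induced-subgraph-antichain, and finite vertex labels are incorporated via Higman's lemma. A direct induction on $K$ proceeds by observing that removing any vertex of a graph in $K-\textsf{BP}$ breaks it into connected components each in $(K-1)-\textsf{BP}$; these components, rearranged around the cut vertex and annotated with the information of the adjacency pattern to it, can be compared using Higman's lemma on multisets of smaller-$K$ configurations.

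Combining the three ingredients, coverability for $K-\textsf{BP}$ reduces to backward saturation: start from the upward closure of the set of configurations with some vertex labelled $q_f$ and iteratively compute predecessors under $\preceq$; termination is guaranteed by the WQO property, and the coverability instance is positive iff an initial configuration eventually appears in the computed set.
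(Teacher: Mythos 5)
Your overall strategy --- casting the system as a WSTS, running backward saturation, and getting the wqo from Ding's theorem on graphs excluding a fixed path --- is exactly the route taken in the cited reference \cite{DelzannoSZ10} (the present paper imports this theorem rather than reproving it). However, there is a genuine gap in your choice of ordering. You define $\preceq$ via an injection $h$ that is only required to \emph{preserve} edges ($(u,v)\in\Edges{\Gamma}$ implies $(h(u),h(v))\in\Edges{\Gamma'}$), i.e.\ the plain subgraph relation. With this ordering, monotonicity fails for the broadcast semantics, because reception is compulsory for every neighbor of the broadcaster that has an enabled reception. Take $u,v$ with $(u,v)\notin\Edges{\Gamma}$ but $(h(u),h(v))\in\Edges{\Gamma'}$ (which your definition permits), with $L(u)=L'(h(u))$ a state from which $m$ can be received, and let $v$ broadcast $m$. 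In $C_1$ the node $u$ keeps its state, while in $C_2$ the node $h(u)$ is forced to move; the successor configurations are then no longer related by $h$ (and in general by no injection), so your claim that ``each neighbor $h(u)$ of $h(v)$ that comes from $C_1$ reacts exactly as its pre-image does'' is false precisely for such pairs.

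The repair is to use the \emph{induced} subgraph ordering: additionally require that $(h(u),h(v))\in\Edges{\Gamma'}$ implies $(u,v)\in\Edges{\Gamma}$. Then the neighbors of $h(v)$ inside the image of $h$ are exactly the images of the neighbors of $v$, the extra neighbors of $h(v)$ lie outside the image and cannot disturb the embedding, and monotonicity goes through. Ding's theorem is stated for the induced subgraph relation on classes excluding a fixed path as a subgraph, so nothing is lost on the wqo side (and your detour ``induced-subgraph wqo implies subgraph wqo'' becomes unnecessary). With that correction your argument coincides with the proof in \cite{DelzannoSZ10}. Two secondary points that a complete write-up should spell out, though neither is problematic: the effective computability of a finite pred-basis of an upward-closed set, and the final test of whether the computed fixpoint contains an initial configuration, which amounts to checking whether some basis element carries only the label $\qinit$.
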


Hence, we show that if a state $q_f$ of a protocol $\PP$ is coverable with a tree topology, then $q_f$ is actually coverable with a tree topology that is
also $2(|Q|+1)-\textsf{BP}$. 
\Iflong{
Given $(\PP, q_f)$ a positive instance of $\CoverTree$, we let $f(\PP, q_f)$ the minimal number of processes needed to cover $q_f$ with a tree topology and we fix 
$\Gamma = (V,E)$ a tree topology such that $|V| = f(\PP,q_f)$ and that covers $q_f$. Let $v \in \Vert{\Gamma}$ and $(C_i = (\Gamma, L_i))_{0\leq i\leq n}$ configurations such that $C_0 \in \II$, $C_0 \trans C_1\trans\dots \trans C_n$ and $L_n(v) = q_f$. We assume wlog that $v$ is the root of the tree, i.e. $v = \epsilon$.

For all $u \in \Vert{\Gamma}$, we define $b(u)$ as the first index $0 \leq i<n$ from which $u$ takes a broadcast transition, and $\infty$ if it never broadcasts anything.
%
%
%
The next lemma relies on the fact that a node (different from the root that will reach $q_f$)
that never broadcasts anything is useless for covering $q_f$ and could then be removed. Since $\Gamma$ is the smallest topology allowing to cover $q_f$, it
cannot happen in $\Gamma$. Also, since the protocol is 2-phase bounded, a node that would first broadcast after the first broadcast of its father would also be useless
for the covering of $q_f$: this broadcast will only be received by its father in the last phase of reception, hence it will have no influence on the behavior of $\epsilon$.
The formal proof can be found in \cref{appendix:dec-2pb-covertree}.
\begin{lemma}\label{obs:decidability-2phase:everybody-broadcasts}
	For all $u \in \Vert{\Gamma} \setminus \set{\epsilon}$, $b(u) \neq \infty$. Moreover, for all $u_1, u_2 \in \Vert{\Gamma} \setminus \set{\epsilon}$ such that $u_2 = u_1 \cdot x$ for some $x \in \nat$,
	 it holds that $b(u_1) > b(u_2)$.

\end{lemma}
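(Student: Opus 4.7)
The plan is to prove both parts by contradiction, exploiting the minimality of $|V| = f(\PP, q_f)$ in each case. For Part~1, suppose some $u \in \Vert{\Gamma}\setminus\set{\epsilon}$ never broadcasts, and let $T_u \subseteq \Vert{\Gamma}$ denote the set consisting of $u$ together with all its descendants in $\Gamma$. Since $\Gamma$ is a tree, the only edge connecting $T_u$ to $\Vert{\Gamma}\setminus T_u$ is the edge between $u$ and its parent, so the only way a broadcast originating inside $T_u$ could reach a vertex outside is for $u$ itself to broadcast---which never happens by assumption. Hence the projection of $\rho$ to $\Vert{\Gamma}\setminus T_u$, obtained by deleting every transition performed by a vertex of $T_u$, is a valid execution on the strictly smaller tree $\Gamma\setminus T_u$ along which $\epsilon$ still reaches $q_f$, contradicting minimality.

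For Part~2, we may invoke Part~1 to know that $b(u_2)<\infty$. Assume $u_2 = u_1 \cdot x$ and $b(u_1) \leq b(u_2)$; since at most one broadcast occurs per transition, this forces $b(u_1) < b(u_2)$. The key structural consequence of 2-phase boundedness is that once a process has fired its first broadcast (landing in $Q_1^b$ or $Q_2^b$), any subsequent matched reception sends it irrevocably to $Q_2^r$, from which no broadcast is ever possible. Now, every message broadcast by $u_2$ occurs at a time $\geq b(u_2) > b(u_1)$, i.e., strictly after $u_1$'s first broadcast. If $u_1$ were to match one of these receptions at some time $t$ that precedes some later broadcast of $u_1$ at time $t'$, then at time $t'$ the process $u_1$ would already lie in $Q_2^r$, an impossibility. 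Letting $\ell(u_1)$ denote the time of $u_1$'s last broadcast, we conclude that throughout the interval $[b(u_1),\ell(u_1)]$ the state of $u_1$ evolves purely through $u_1$'s own transitions, since every broadcast emanating from $T_{u_2}$ in this interval is simply ignored by $u_1$.

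From this, the deletion argument goes through. Let $\rho'$ be the projection of $\rho$ to $\Vert{\Gamma}\setminus T_{u_2}$. In the tree $\Gamma$, every vertex $v \neq u_1$ outside $T_{u_2}$ is not adjacent to any vertex of $T_{u_2}$, so its state trajectory is unchanged between $\rho$ and $\rho'$. For $u_1$ itself, its state coincides in both executions up to time $\ell(u_1)$ by the previous paragraph, so all of $u_1$'s broadcasts are faithfully replayed in $\rho'$; after $\ell(u_1)$ the state of $u_1$ may diverge, but since $u_1$ no longer broadcasts, this divergence is invisible to the other vertices---and in particular to $\epsilon$. Therefore $\rho'$ is a valid execution on $\Gamma\setminus T_{u_2}$ that still covers $q_f$ at $\epsilon$, contradicting the minimality of $|V|$. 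The main delicate point is verifying formally that $u_1$'s post-$\ell(u_1)$ transitions (internal actions and possibly-ignored receptions) do not block the replay of the remaining steps of the execution; I would handle this by an induction on time steps, using the fact that such transitions have no outgoing effect on $u_1$'s neighbours, so they can be either simulated or harmlessly skipped without breaking the inductive invariant on the other vertices.
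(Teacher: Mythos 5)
Your proof is correct and follows essentially the same route as the paper's: part one deletes the subtree of a never-broadcasting node, and part two deletes the subtree rooted at $u_2$, using $2$-phase-boundedness to argue that any reception $u_1$ matches from $u_2$ would trap it in $Q_2^r$, so $u_1$'s trajectory can only diverge after its last broadcast, where the divergence is invisible to the rest of the tree. The "delicate point" you flag at the end is precisely what the paper handles with its inductive invariant $P(k)$ (agree everywhere except possibly at $u_1$ once it is in $Q_2^r$, skipping non-replayable transitions of $u_1$), so your sketch of the induction is the right completion.
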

\cref{obs:decidability-2phase:everybody-broadcasts} allows to establish the following lemma. 
}
\Ifshort{To establish this result, consider a coverable state $q_f$ of a protocol $\PP$ with a tree topology $\Gamma$, such that $\Gamma$ is minimal in the number of nodes needed to cover $q_f$. We can suppose wlog that $q_f$ is covered by the root of the tree. We argue that all nodes (except maybe the root) in the execution covering $q_f$ broadcast something, as otherwise they are useless and could then be removed. We also argue that, since $\PP$ is 2-phase-bounded, a node that would first broadcast after the first broadcast of its father would also be useless for the covering of $q_f$: this broadcast will only be received by its father in its \emph{last phase of reception}, hence it will have no influence on the behavior of the root.
These two properties are the key elements needed to establish the following lemma.
}
%
%

\begin{lemma}\label{obs:decidability-2phase:bounded-path-to-v}
	Let $\PP = (Q, \Sigma, \qinit, \Delta)$ be a 2-phase-bounded protocol and $q_f \in Q$. If $q_f$ can be covered with a tree topology, then it can be covered with a topology $\Gamma\in\Trees$ such that, for all $u \in \Vert{\Gamma}$, $|u| \leq |Q|+1$.
\end{lemma}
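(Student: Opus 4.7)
The plan is to start from a tree topology $\Gamma$ of minimum size that covers $q_f$; by Theorem~\ref{thm:Cover-CoverTree-equivalent} we may assume $q_f$ is reached at its root $\epsilon$. Denote by $b(u)$ the time-index at which node $u$ performs its first broadcast in the underlying covering execution. The two properties highlighted in the excerpt still apply to this minimum tree: every non-root $u$ satisfies $b(u) < \infty$, and $b(u \cdot x) < b(u)$ for every child $u\cdot x$ of $u$. Fix any root-to-leaf path $u_0 = \epsilon, u_1, \ldots, u_h$ and, for each $i \geq 1$, let $s(u_i)$ be the source state of $u_i$'s very first broadcast transition. Because $u_i$ has never broadcast before, $s(u_i)$ cannot lie in $Q_1^b \cup Q_2^b$; hence $s(u_i) \in Q_0 \cup Q_1^r$, a set of cardinality at most $|Q|$.

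The argument then proceeds by pigeonhole and minimality. Suppose, for contradiction, that $h \geq |Q|+2$. Among the $h$ states $s(u_1), \ldots, s(u_h)$, two must coincide: pick $1 \leq i < j \leq h$ with $s(u_i) = s(u_j)$. The plan is to build a strictly smaller tree $\Gamma'$ by deleting $u_i, u_{i+1}, \ldots, u_{j-1}$ together with all their non-path subtrees, and re-attaching the subtree originally rooted at $u_j$ as a fresh child of $u_{i-1}$. It then suffices to exhibit an execution on $\Gamma'$ covering $q_f$, contradicting the minimality of $\Gamma$.

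I would construct the witnessing execution on $\Gamma'$ by splicing the original one in three pieces. First, the strict monotonicity $b(u_j) < b(u_{j-1}) < \cdots < b(u_0)$ implies that before time $b(u_j)$ no node outside the subtree rooted at $u_j$ has yet broadcast; hence the events localised inside this subtree up to time $b(u_j)$ can be replayed verbatim in $\Gamma'$, bringing $u_j$ into state $s(u_j) = s(u_i)$. Second, $u_j$ plays in $\Gamma'$ the role that $u_i$ had after time $b(u_i)$: after its first broadcast $u_i$ lies in $Q_1^b \cup Q_2^b$ and, in a 2-phase-bounded protocol, its remaining behaviour consists of broadcasts in that phase followed possibly by receptions inside $Q_2^r$, all of which $u_j$ can reproduce from the identical starting state. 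Finally, $u_0, \ldots, u_{i-1}$ and their preserved non-path subtrees continue as in the original execution, now receiving from $u_j$ exactly the same messages that $u_i$ originally emitted, so the root still ends in state $q_f$.

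The main obstacle is the second step. The children of $u_j$ in $\Gamma'$ are not the children $u_i$ had in $\Gamma$, so broadcasts emitted by $u_j$'s descendants might prematurely push $u_j$ out of its broadcast phase into $Q_2^r$ whenever the message matches an input action available at $u_j$'s current state. The technical heart of the proof is thus to schedule $u_j$'s descendants so that their broadcasts either occur before $b(u_j)$ or are harmless during $u_j$'s broadcast phase, guaranteeing that $u_j$ emits toward $u_{i-1}$ exactly the sequence of messages that $u_i$ did in the original execution.
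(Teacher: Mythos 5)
Your overall strategy is the one the paper uses: take a node-minimal covering tree rooted at the node reaching $q_f$, use the facts that every non-root node broadcasts and that children broadcast strictly before their parents, apply the pigeonhole principle to the source states of first broadcasts along a long branch, and replace the subtree rooted at $u_i$ by the subtree rooted at $u_j$. Two of your justifications are off but repairable: monotonicity of $b$ along the path does \emph{not} imply that no node outside the subtree of $u_j$ broadcasts before $b(u_j)$ (nodes in other branches may broadcast arbitrarily early); what you actually need, and what does hold, is that the only neighbour of the transplanted subtree outside itself in $\Gamma'$ is $u_{i-1}$, whose first broadcast occurs at $b(u_{i-1}) > b(u_i) > b(u_j)$, so the verbatim replay of the subtree is undisturbed. (Also, to invoke the parent-child monotonicity on $u_{i-1}$ and $u_i$ you need $u_{i-1}\neq\epsilon$, i.e.\ $i\geq 2$, which your pigeonhole budget allows but which you should say.)

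The genuine gap is the step you yourself flag as ``the technical heart'' and leave open, and the two ways out you propose do not work. You cannot in general reschedule a descendant's broadcast to ``occur before $b(u_j)$'' (the descendant may not yet have reached the source state of that broadcast, and the broadcast may causally depend on later events), and you cannot make it ``harmless'': whether $u_j$ has a reception enabled for the incoming message is dictated by the protocol, not by the scheduler. The missing idea is a third option: once $u_j$ has reached $s(u_j)=s(u_i)$, simply \emph{suppress every subsequent action of $u_j$'s descendants}. This is always legal (the semantics never forces an enabled transition to fire) and costs nothing, because the only purpose of that subtree was to drive $u_j$ into $s(u_j)$; after that point only the states of $u_0,\dots,u_{i-1}$ and their preserved subtrees matter for covering $q_f$ at the root. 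With the descendants frozen, $u_j$ receives only what $u_{i-1}$ sends, exactly as $u_i$ did; the remaining discrepancy is that the original $u_i$ may additionally receive from \emph{its} children, but any such reception after $b(u_i)$ sends $u_i$ into $Q_2^r$, from which, by 2-phase-boundedness, it never broadcasts again. This yields the invariant the paper maintains --- the transplanted node is either in the same state as the original $u_i$, or the original $u_i$ is already in $Q_2^r$ and emits nothing further --- which is precisely what guarantees that $u_{i-1}$ hears the same messages as before. Without this freezing argument (or an equivalent), your construction does not go through.
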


Indeed, a counting argument implies that if this is not the case, there exist two nodes $u_1$ and $u_2$ on the same branch, different from the root, with $u_1$ 
a prefix of $u_2$, that both execute their first broadcast from the same state $q$.
In this case, we could replace the subtree rooted in $u_1$ by the subtree rooted in $u_2$, and still obtain an execution covering $q_f$. Once $u_1$ has 
reached $q$ (possibly by receiving broadcasts from the children of $u_2$), it will behave as in the initial execution. Behaviors of the children of $u_1$ might differ in this second part, but it can only influence $u_1$ in its reception phase, which will be the last phase, and hence will not influence the behavior of the root. 
\Iflong{We are able to build this execution thanks to \cref{obs:decidability-2phase:everybody-broadcasts}.}
Thanks to~\cref{{thm:Cover-CoverTree-equivalent},th:concur-2010}, we can then conclude. 
\begin{theorem}\label{thm:Cover-decidable-2pb}
	\Cover\ and \CoverTree\ are decidable for 2-phase-bounded protocols.
\end{theorem}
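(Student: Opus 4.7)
The plan is to combine the three results already cited in the excerpt. By \cref{thm:Cover-CoverTree-equivalent}, it suffices to prove that \CoverTree\ is decidable when the input protocol is $2$-phase-bounded, since every instance of \Cover\ reduces to the corresponding instance of \CoverTree. So I would fix a $2$-phase-bounded protocol $\PP = (Q,\Sigma,\qinit,\Delta)$ together with a target state $q_f \in Q$, and design a decision procedure for this restricted version.

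The heart of the argument is \cref{obs:decidability-2phase:bounded-path-to-v}: whenever $q_f$ is coverable with a tree topology, it is already coverable with a tree topology $\Gamma \in \Trees$ of height at most $|Q|+1$. I would then observe the elementary geometric fact that in a tree of height at most $h$, any simple path has length at most $2h$. Indeed, between two distinct vertices $u, u'$ the unique simple path goes through their lowest common ancestor $w$, so $d(u,u') = d(u,w) + d(w,u') \leq 2h$. Setting $K := 2(|Q|+1)$, this gives $\Gamma \in K$-$\textsf{BP}$.

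The algorithm is now immediate: on input $(\PP, q_f)$, compute $K = 2(|Q|+1)$ and invoke the decision procedure for $\Cover[K\text{-}\textsf{BP}]$ provided by \cref{th:concur-2010}. Soundness is trivial: any $K$-bounded path topology witnessing coverability is in particular a topology, so $q_f$ is coverable in $\PP$. Completeness is exactly the chain above: if $q_f$ is coverable in $\PP$, then by \cref{thm:Cover-CoverTree-equivalent} it is coverable with a tree, by \cref{obs:decidability-2phase:bounded-path-to-v} with a tree of height at most $|Q|+1$, and hence with a $K$-bounded path topology, so the $\Cover[K\text{-}\textsf{BP}]$ procedure accepts. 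The main obstacle of the whole result is packaged inside \cref{obs:decidability-2phase:bounded-path-to-v}, whose proof exploits the $2$-phase-boundedness through the structural observation that a node whose first broadcast occurs after its parent's first broadcast can only influence its parent during the parent's final reception phase; once the height bound is established, the reduction to \cref{th:concur-2010} is a straightforward one-line combination.
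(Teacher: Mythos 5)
Your proposal is correct and follows essentially the same route as the paper: reduce \Cover\ to \CoverTree\ via \cref{thm:Cover-CoverTree-equivalent}, use \cref{obs:decidability-2phase:bounded-path-to-v} to bound the tree height by $|Q|+1$, observe that simple paths in such a tree have length at most $2(|Q|+1)$, and invoke \cref{th:concur-2010} for $K=2(|Q|+1)$. The lowest-common-ancestor argument you give for the path-length bound is exactly the one the paper uses (phrased there via common prefixes of node names).
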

\Iflong{
\begin{proof}
We prove the decidability by reducing both \Cover\ and \CoverTree\ to $\textsc{Cover}[K-\textsf{BP}]$. Let $\PP$ a protocol and $q_f$ a state of $\PP$.
Assume that there is a topology $\Gamma$ with which 
$q_f$ is coverable. Then, by~\cref{{thm:Cover-CoverTree-equivalent}}, there exist a tree topology $\Gamma'$ with wich $q_f$ is coverable. Assume that 
$|V|= f(P,q_f)$. 
%
	Then, each simple path in $\Gamma'$ has a length bounded by $2(|Q| +1)$.
	Indeed, let $u_1, u_2 \in \Vert{\Gamma'} $, as $\Gamma'$ is a tree topology, the unique simple path between $u_1$ and $u_2$ either contains $\epsilon$ and $d(u_1, u_2) = |u_1| + |u_2|$, or there exists $u \in \nat^+$ such that $u_1 = u\cdot u_1'$ and $u_2 = u \cdot u'_2$, and $u_1'$ and $u_2'$ have no common prefix. In the latter case, the length of the simple path between $u_1$ and $u_2$ is $|u_1'| + |u_2'| < |u_1|+ |u_2|$. In both cases, $d(u_1, u_2) \leq |u_1| + |u_2|$, and hence from  \cref{obs:decidability-2phase:bounded-path-to-v}, $d(u_1, u_2) \leq 2(|Q| +1)$. Then $q_f$ is coverable in a $2(|Q|+1)$-bounded path topology. 
	
	Conversely, assume that there exists a $2(|Q|+1)$-bounded path topology $\Gamma$ with which $q_f$ is coverable. Then, immediately, there is a topology  with which 
$q_f$ is coverable. By~\cref{thm:Cover-CoverTree-equivalent}, there also exists a tree topology with which $q_f$ is coverable. \cref{th:concur-2010}
allows to conclude to decidability of both \Cover\ and \CoverTree. 
\end{proof}
}
%

\subsection{Polynomial Time Algorithm for \CoverLine\ on 2-PB Protocols} \label{subsec:coverLine-inP}

%
In the rest of this section, we fix a 2-phase-bounded protocol $\PP = (Q, \Sigma, \qinit, \Delta)$ and a state $q_f\in Q$ to cover.
For an execution $\rho = C_0 \transup{} C_1 \transup{} \cdots \transup{}C_n$ with $C_n = (\Gamma ,L_n)$, for all $v \in \Vert{\Gamma}$, we denote by \firstBroadcast{v}{\rho} the \emph{smallest} index $0 \leq i < n$ such that $C_i\transup{v, t} C_{i+1}$ with $t=(q,!!m,q')\in \Delta$.
If $v$ never broadcasts anything, $\firstBroadcast{v}{\rho} = -1$. We also denote by \lastBroadcast{v}{\rho} the \emph{largest} index $0\leq i < n$, 
such that $C_i \transup{v,t} C_{i+1}$ for some transition $t \in \Delta_{}$. If $v$ never issues any transition, we let  $\lastBroadcast{v}{\rho} = -1$.

The polynomial time algorithm relies on the fact that to cover a state, one can consider only executions that have a specific shape, described in the following lemma.
\begin{lemma}\label{lemma:CoverLine-2pb-inP:execution-shape-2}
	If $q_f$ is coverable with a line topology $\Gamma$ such that $\Vert{\Gamma} = \set{v_1, \dots, v_\ell}$ 
	then there exists an execution $\rho = C_0 \transup{} C_1 \transup{} \cdots \transup{}  C_n$ such that $C_n = (\Gamma, L_n)$,
	and $3\leq N\leq \ell-2$  with $L_n(v_N)= q_f$, 
	and
	\begin{enumerate}
		\item there exist $0 \leq j_1 < j_2 <n$ such that for all $0 \leq j <n$, if we let $C_j \transup{v^j, t^j} C_{j+1}$:		 \label{item:lemma:exec-shape-steps} \Ifshort{\\
			\emph{(a)} if $0\leq  j < j_1$, then $v^j \in \set{v_1, \dots, v_{N-2}}$ and if $v^j = v_{N-2}$, then $t^j = (q, \tau, q')$ for some $q,q'\in Q$; and \label{item:lemma:exec-shape-step-1} \\
			\emph{(b)}  if $j_1\leq j < j_2$, then $v^j \in \set{v_{N+2}, \dots, v_{\ell}}$ and if $v^j = v_{N+2}$, then $t^j = (q, \tau, q')$ for some $q,q'\in Q$;\label{item:lemma:exec-shape-step-2} and \\
			\emph{(c)} if $j_2\leq j < n$, then $v^j \in \set{v_{N-2}, \dots, v_{N+2}}$.\label{item:lemma:exec-shape-step-3}}
	\Iflong{
		\begin{enumerate}
		\item if $ j < j_1$, then $v^j \in \set{v_1, \dots, v_{N-2}}$ and if $v^j = v_{N-2}$, then $t^j$ is an internal transition;\label{item:lemma:exec-shape-step-1}
		\item if $j_1\leq j < j_2$, then $v^j \in \set{v_{N+2}, \dots, v_{n}}$ and if $v^j = v_{N+2}$, then $t^j$ is an internal transition;\label{item:lemma:exec-shape-step-2}
		\item if $j_2\leq j < n$, then $v^j \in \set{v_{N-2}, \dots, v_{N+2}}$.\label{item:lemma:exec-shape-step-3}
		\end{enumerate}}
		\item\label{item:lemma:exec-shape-order}
		\Ifshort{\emph{(a)} for all $1 \leq i \leq N-2$, $\lastBroadcast{v_i}{\rho} \leq \firstBroadcast{v_{i+1}}{\rho}$,\label{item:lemma:exec-shape-order-left} and \\
			\emph{(b)} for all $N+2 \leq i \leq \ell$, $\lastBroadcast{v_i}{\rho} \leq \firstBroadcast{v_{i-1}}{\rho}$\label{item:lemma:exec-shape-order-right}.}
		\Iflong{
		\begin{enumerate}
			\item for all $1 \leq i \leq N-2$, $\lastBroadcast{v_i}{\rho} \leq \firstBroadcast{v_{i+1}}{\rho}$;\label{item:lemma:exec-shape-order-left}
			\item for all $N+2 \leq i \leq \ell$, $\lastBroadcast{v_i}{\rho} \leq \firstBroadcast{v_{i-1}}{\rho}$\label{item:lemma:exec-shape-order-right}.
		\end{enumerate} }
	\end{enumerate}
\end{lemma}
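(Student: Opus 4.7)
The plan is to start from any execution $\rho_0 = C_0 \trans \cdots \trans C_n$ on $\Gamma$ covering $q_f$ at some vertex $v_N$, where we may assume $3 \le N \le \ell - 2$ (for $\ell < 5$ the statement is vacuous, and when $q_f$ is coverable only near an endpoint we relabel the line so that the covering vertex sits far enough from both ends, possibly using a larger witness line, which is always available since extending with idle copies does not prevent a covering). I would then rewrite $\rho_0$ into an execution of the stated shape using locality of broadcasts in line topologies together with the 2-phase-bounded structure of $\PP$.

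The main technical tool is the commutativity of distant transitions: a transition performed at $v_i$ modifies only the states of $v_{i-1}, v_i, v_{i+1}$, so two consecutive transitions of $\rho_0$ performed at vertices $v_i, v_j$ with $|i - j| \ge 2$ can always be swapped without changing the resulting configuration. Applying this rule repeatedly, I would partition $\rho_0$ into three successive blocks. First, I push to the front every transition performed at a vertex of $\{v_1, \dots, v_{N-3}\}$, together with every internal transition of $v_{N-2}$ that occurs before $v_{N-2}$'s first non-internal action; all of these commute with any transition at a vertex of index $\ge N-1$. Next, I push every transition at a vertex of $\{v_{N+3}, \dots, v_\ell\}$ and the symmetric internal transitions of $v_{N+2}$. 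The remaining transitions, all at vertices in $\{v_{N-2}, v_{N-1}, v_N, v_{N+1}, v_{N+2}\}$, form the third block. This produces the indices $0 \le j_1 \le j_2 \le n$ satisfying conditions (1a)–(1c).

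For the ordering conditions (2a) and (2b), I would perform a further within-block rearrangement based on the 2-phase-bounded hypothesis. In the first block, distant-commutativity already separates transitions at $v_i$ from those at $v_{i+2}, v_{i+3}, \dots$, so the only remaining interaction is between consecutive pairs $v_i, v_{i+1}$. I would argue by induction on $i$ from $1$ to $N-3$ that all broadcasts of $v_i$ can be placed before the first broadcast of $v_{i+1}$: since $\PP$ is 2-phase-bounded, each vertex has a unique contiguous broadcasting window, and one can slide $v_i$'s window past $v_{i+1}$'s. Any reception of a $v_i$-broadcast by $v_{i+1}$ thus reordered can be absorbed either by the receiving phase preceding $v_{i+1}$'s broadcast phase, or by its second receiving phase, since a 2-phase-bounded process only alternates in one canonical way between broadcasting and receiving. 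The symmetric argument on the second block yields (2b).

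The main obstacle will be the rigorous justification of this within-block rearrangement: the coarse commutativity used for (1a)–(1c) only deals with vertices at distance $\ge 2$, whereas this finer step must handle adjacent vertices $v_i, v_{i+1}$ that genuinely interact via broadcasts and forced receptions. The 2-phase-boundedness of $\PP$ is precisely what makes the argument go through, since it guarantees that each vertex has a single well-identified broadcasting window that may be slid past its neighbour's without invalidating any forced reception.
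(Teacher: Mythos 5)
There is a genuine gap, and it starts with your ``main technical tool''. Two consecutive transitions at vertices $v_i,v_j$ with $|i-j|=2$ do \emph{not} commute in general: if both are broadcasts, they share the common neighbour $v_{i+1}$ (resp.\ $v_{j-1}$), and the order in which that vertex receives the two messages can change its resulting state. The paper's commutation observations are accordingly restricted: unconditional swapping requires distance $\geq 3$, distance $\geq 2$ suffices only when one of the two transitions is internal, and adjacent/equal vertices only when both are internal. This matters exactly where you need it: to build your first block you must push a broadcast of $v_{N-3}$ past a broadcast of $v_{N-1}$, which is a distance-$2$ broadcast/broadcast pair and is not exchangeable. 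The paper escapes this by doing the two steps in the \emph{opposite} order: it first establishes the ordering property (your conditions (2a)--(2b)), and only then performs the block decomposition, because the ordering guarantees that every transition of $v_{N-1}$ occurring before $j_1=\lastBroadcast{v_{N-3}}{\rho}+1$ is internal (its first broadcast comes after $v_{N-2}$'s last, which comes after $v_{N-3}$'s last), so the only problematic swaps reduce to the ``one transition is internal'' case. With your ordering of the steps, the block construction cannot be justified by the available commutations.

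The second gap is in how you propose to obtain the ordering property itself. A $2$-phase-bounded process does not have ``a unique contiguous broadcasting window'': it may broadcast in phase $1$, receive, and broadcast again in phase $2$, so there is no single window to slide; and moving $v_i$'s broadcasts earlier changes which messages $v_{i+1}$ receives and when, hence potentially what $v_{i+1}$ later broadcasts, which cascades rightwards towards $v_N$ --- ``absorbing'' the receptions into one phase or another is not a construction of a valid execution. The paper instead proves the ordering by \emph{deletion}, not reordering: starting from the leftmost vertex, once $v_{i+1}$ has performed its first broadcast, every subsequent action of $v_1,\dots,v_i$ is simply dropped; the point (and the only place $2$-phase-boundedness is used here) is that any such late broadcast of $v_i$ could only move $v_{i+1}$ into its final reception component $Q_2^r$, from which $v_{i+1}$ never broadcasts again and therefore never influences $v_{i+2},\dots,v_N$. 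The proof then tracks the invariant ``$v_{i+1}$'s state is either as in the original execution or lies in $Q_2^r$''. You would need to replace your sliding argument with something of this kind (or prove a genuinely new exchange lemma for adjacent broadcasting vertices) for the proposal to go through.
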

\begin{figure}
	\begin{center}
		\tikzset{box/.style={draw, minimum width=4em, text width=4.5em, text centered, minimum height=17em}}
\usetikzlibrary{fit,calc,positioning,decorations.pathreplacing,matrix}

\begin{tikzpicture}[-, >=stealth', shorten >=1pt,node distance=1.8cm,on grid,auto, initial text = {},decoration = {brace}] 
	\LineTikzLong{0cm}{\qinit}{\qinit}{\qinit}{\qinit}{\qinit}{\qinit}{\qinit}{\qinit}
	\draw[->,thick] (1.2,-0.3) -- (1.2, -0.7);
	\node at (1.35,-0.7) () {$\ast$};
	\node at (-1.1,0) () {$C_0$};
	\LineTikzLong{-1.0cm}{\_}{q_1}{\qinit}{\qinit}{\qinit}{\qinit}{\qinit}{\qinit}
	\draw[->,thick] (11.3,-1.3) -- (11.3, -1.7);
	\node at (11.45,-1.7) () {$\ast$};
	\node at (-1.1,-1) () {$C_{j_1}$};
	\LineTikzLong{-2.0cm}{\_}{q_1}{\qinit}{\qinit}{\qinit}{q_2}{\_}{}
	\draw[->,thick] (6.2,-2.35) -- (6.2, -2.67);
	\node at (6.35,-2.65) () {$\ast$};
	\node at (-1.1,-2) () {$C_{j_2}$};
	\LineTikzLong{-3.0cm}{\_}{\_}{\_}{q_f}{\_}{\_}{\_}{}
	\node at (-1.1,-3) () {$C_n$};
	\fill [fill=violet, fill opacity = 0.2] (-0.6,0.3) rectangle (3.5,-1.3);
	\fill [fill=green, fill opacity = 0.2] (9,-0.7) rectangle (13.2,-2.3);
	\fill [fill=orange, fill opacity = 0.2] (1.9,-1.7) rectangle (10.6,-3.3);
	
	\draw[->, color=blue] (1.5,0.5) -- (2.5, -0.6);
	\node[rectangle,draw,color=blue, inner sep =2] at (1.4,0.6) {\scriptsize no broadcast from $v_{N-2}$};
	
	\draw[->, color=red] (10.9,-0.7) -- (9.9, -1.5);
	\node[rectangle,draw,color=red, inner sep=2] at (11,-0.5) {\scriptsize no broadcast from $v_{N+2}$};
	
	
\end{tikzpicture}
	\end{center}
	\caption{Illustration of execution $\rho$ obtained from \cref{lemma:CoverLine-2pb-inP:execution-shape-2}.}\label{fig:inP-proof-illustration}
\end{figure}
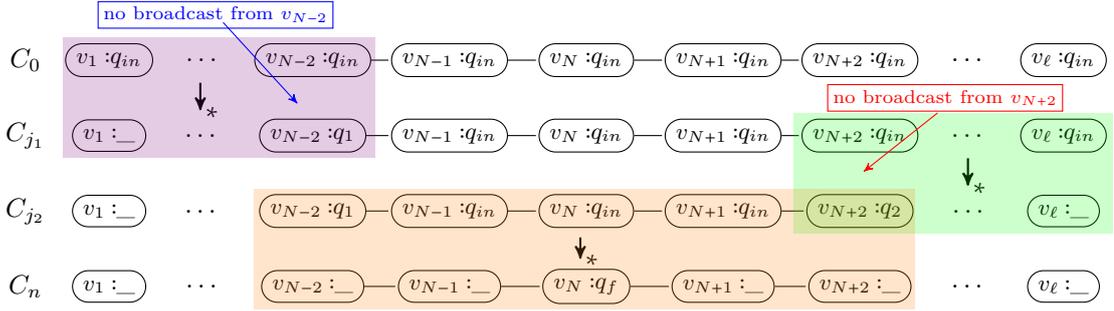
\cref{fig:inP-proof-illustration} illustrates the specific form of the execution described in \cref{item:lemma:exec-shape-steps} of \cref{lemma:CoverLine-2pb-inP:execution-shape-2}: the first nodes to take actions are the ones in the purple part (on the left), then, only nodes in the green part (on the right) issue transitions), and finally the nodes in the orange central part take actions in order to reach $q_f$. The fact that $\PP$ is 2-phase bounded allows us to 
establish \cref{item:lemma:exec-shape-order} of \cref{lemma:CoverLine-2pb-inP:execution-shape-2}: when $v_{i+1}$ starts broadcasting, no further broadcasts from $v_i$ will influence $v_{i+1}$'s broadcasts (it can only receive them in its last reception phase). 

\cref{fig:inP-proof-illustration} highlights why we get a polynomial time algorithm: when we reach the orange part of the execution, the nodes $v_{N-1}$, $v_N$
and $v_{N+1}$ are still in the initial state of the protocol. Moreover, in the orange part (which is the one that witnesses the covering of $q_f$), only five nodes take 
actions. Once one has computed in which set of states the nodes $v_{N-2}$ and $v_{N+2}$ can be at the beginning of the orange part, it only remains to compute
the set of reachable configurations from a finite set of configurations. Let $H$ be the set of possible states in which $v_{N-2}$ and $v_{N+2}$ can be at the
beginning of the last part of the execution, and for $q_1,q_2\in H$, let $C_{q_1,q_2}=(\Gamma_5, L_{q_1,q_2})$ where $\Gamma_5$ is the line topology 
with five vertices $\set{v_1, v_2, v_3, v_4, v_5}$ and $L_{q_1, q_2} (v_1) =  q_1$, $L_{q_1, q_2}(v_5)= q_2$ and for all other vertex $v$, $L_{q_1, q_2}(v) = \qinit$. 

Our algorithm is then: \Ifshort{(1) Compute $H$; (2) For all $q_1, q_2 \in H$, explore reachable configurations from $C_{q_1, q_2}$; (3) Answer yes if we reach a configuration covering $q_f$, answer no otherwise.}\Iflong{
\begin{enumerate}
\item Compute $H$.
\item For all $q_1, q_2 \in H$, explore reachable configurations from $C_{q_1, q_2}$. 
\item Answer yes if we reach a configuration covering $q_f$, answer no otherwise.
\end{enumerate}
}
It remains to explain how to compute $H$. This computation relies on \cref{item:lemma:exec-shape-order} of \cref{lemma:CoverLine-2pb-inP:execution-shape-2}:
locally, each node $v_i$ at the left of $v_{N-1}$ (resp. at the right of $v_{N+1}$) stops issuing transitions once its right neighbor $v_{i+1}$ (resp. its left neighbor $v_{i-1}$) starts broadcasting.

Hence we compute iteratively set of coverable pairs of states $S\subseteq Q\times Q$ by relying on a family $(S_i)_{i\in \nat}$ of subsets of $Q \times Q$ formally defined as follows:
\begin{align*}
	&S_0 =\set{(\qinit,\qinit)} \\
	&S_{i+1} = S_i \ \cup 
	\set{(q_1,q_2) \mid  \textrm{there exist } (p_1, p_2) \in S_i, j\in \set{1,2}  \textrm{ s.t. } (p_j, \tau ,q_j) \in \Delta\text{ and } p_{3-j} = q_{3-j}} \\
	&\cup \set{(q_1,q_2) \mid  \textrm{there exists } (p_1, p_2) \in S_i,  \textrm{ s.t. } (p_2, !!m ,q_2) \in \Delta, (p_1, ?m ,q_1) \in \Delta,
		m \in \Sigma} \\
	&\cup \set{(q_1,q_2) \mid  \textrm{there exists } p_2 \in Q \textrm{ s.t. }(q_1, p_2) \in S_i,  \textrm{ and } (p_2, !!m ,q_2) \in \Delta \textrm{ and }m \nin R(q_1)} \\
	 &\cup 
	\set{(\qinit,q) \mid  \textrm{there exists } (q, q') \in S_i \textrm{ for some }q' \in Q}.
\end{align*}
We then define $S = \bigcup_{n \in \nat} S_n$, and $H=\set{q\in Q\mid \textrm{ there exists }q'\textrm{ and } (q,q')\in S}$. 
Observe that $(S_i)_{i \in \nat}$ is an increasing sequence bounded by $|Q|^2$. The computation reaches then a fixpoint and $S$ can be computed in polynomial time.
We define $H = \set{q\mid \exists q' \in Q, (q,q') \in S}$. Note that $H \subseteq Q_0 \cup Q_1^r$, as expected by~\cref{item:lemma:exec-shape-order}\ of \cref{lemma:CoverLine-2pb-inP:execution-shape-2}.
\Iflong{Going back to the execution $\rho$ of \cref{lemma:CoverLine-2pb-inP:execution-shape-2}\ and \cref{fig:inP-proof-illustration}, we prove in \cref{appendix:subsec:CoverLine-inP:lemma:complete}, that $q_1 \in H$ and $q_2 \in H$. Finally, we get the following lemma.

\lug{oublier ce qu'il y a au dessus et dire en dessous le schéma de raisonnement: le lemme repose sur l'execution du 7.6 et etats atteint par rho sont dans H}
Correctness and completeness of the algorithm rely on the following lemmas, proved in \cref{appendix:subsec:CoverLine-inP:lemma:complete} and
 \cref{appendix:subsec:CoverLine-inP:lemma:correct}.

\begin{lemma}\label{lemma:CoverLine-2pb-inP:complete}
	If $q_f$ is coverable with a line topology, then there exist $q_1, q_2 \in H$ such that $C_{q_1, q_2} \trans^\ast C$ and $C = (\Gamma_5, L)$ with $L(v_3)  = q_f$.
\end{lemma}
}
We also state that our construction is complete and correct, leading to the following theorem. 
\Iflong{
Conversely, any state $q\in H$ is coverable with a line topology in an execution where $q$ is reached by the node at the extremity of the line, which never broadcasts. 
Hence,
\begin{lemma}\label{lemma:CoverLine-2pb-inP:correct}
	If there exist $q_1, q_2 \in H$, $v \in \Vert{\Gamma_5}$, such that $C_{q_1, q_2} \trans^\ast C$ and $C = (\Gamma_5, L)$ with $L(v) = q_f$, then there exist $\Gamma \in \Lines$, $C_0 \in \II$, $C' = (\Gamma, L')\in \CC$ such that $C_0 \trans^\ast C'$ and $L'(v) = q_f$.
\end{lemma}

We obtain then the following result. 
}
\begin{theorem}
	\CoverLine\ is in P for $k$-phase-bounded protocols with $k \in \set{1,2}$.
\end{theorem}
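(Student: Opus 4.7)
The plan is to establish polynomial time membership for both $k=1$ and $k=2$ by means of the algorithm already sketched in Section \ref{subsec:coverLine-inP}. Since a $1$-phase-bounded protocol is a special case of a $2$-phase-bounded one (take $Q_2^b = Q_2^r = \emptyset$), it suffices to treat $k=2$. The overall procedure is: compute the fixpoint $S \subseteq Q \times Q$ and extract $H = \{q \mid \exists q', (q,q') \in S\}$; then, for every pair $(q_1, q_2) \in H \times H$, perform a reachability analysis in the finite configuration graph of the five-vertex line $\Gamma_5$ starting from $C_{q_1,q_2}$, and answer yes iff some reachable configuration places $q_f$ on $v_3$.

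For the complexity bound, I would first observe that the sequence $(S_i)_{i \in \nat}$ is monotonically increasing inside $Q \times Q$, so it stabilises after at most $|Q|^2$ iterations, each of which only scans the finite transition set $\Delta$ against the current $S_i$ and can be carried out in polynomial time. Hence $S$, and therefore $H$, are computable in time polynomial in $|\PP|$. The final phase enumerates at most $|H|^2 \leq |Q|^2$ initial configurations $C_{q_1,q_2}$, and for each one performs BFS in the configuration graph of $\Gamma_5$, whose size is bounded by $|Q|^5$ with out-degree polynomial in $|\Delta|$. The total running time is therefore polynomial.

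For correctness I would appeal to two matching lemmas (stated in the excerpt as \cref{lemma:CoverLine-2pb-inP:complete} and \cref{lemma:CoverLine-2pb-inP:correct}). Completeness goes through \cref{lemma:CoverLine-2pb-inP:execution-shape-2}: any covering execution can be reshaped into the three-block form pictured in \cref{fig:inP-proof-illustration}, after which the states $L_{j_2}(v_{N-2})$ and $L_{j_2}(v_{N+2})$ reached at the start of the central block can be shown to lie in $H$. The reason is the ordering property (item \ref{item:lemma:exec-shape-order} of the lemma): to the left of $v_{N-1}$, whenever $v_{i+1}$ begins broadcasting, $v_i$ no longer issues any transition, so the pair $(L(v_i), L(v_{i+1}))$ evolves locally exactly according to the inductive clauses defining $S$; the last clause $(\qinit, q) \in S$ encodes the act of extending by a fresh leaf still in $\qinit$. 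Once $L_{j_2}(v_{N-2}), L_{j_2}(v_{N+2}) \in H$, the third block of the execution takes place entirely inside $\{v_{N-2}, \ldots, v_{N+2}\}$, which is isomorphic to $\Gamma_5$, and so witnesses reachability in the finite graph explored by the algorithm. Conversely, given $q_1 \in H$, one unrolls the derivation of $(q_1, q') \in S$ to construct a line segment (with $v_{N-2}$ at one end) that reaches $q_1$ on $v_{N-2}$ while the central vertices remain in $\qinit$; a symmetric construction on the right, followed by the reachability witness inside $\Gamma_5$, produces an actual covering execution over a line.

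The main obstacle is the completeness direction, specifically verifying that the inductive definition of $S$ is rich enough to capture every pair of states that can simultaneously appear at two consecutive vertices near the endpoints of the line in an execution of the shape given by \cref{lemma:CoverLine-2pb-inP:execution-shape-2}. This amounts to a careful induction on the prefix length of the reshaped execution, tracking how each local transition on the left (resp.\ right) fringe corresponds to exactly one of the inference rules defining $S_{i+1}$ from $S_i$, including the $\qinit$-extension rule used whenever the boundary shifts outward by one vertex. Once this inductive invariant is in place, both inclusions linking $H$ with the states actually appearing at $v_{N-2}$ and $v_{N+2}$ in a minimal covering execution follow, and the polynomial-time decision procedure is sound and complete.
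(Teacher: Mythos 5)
Your proposal matches the paper's own argument: the same polynomial-time bound for computing the fixpoint $S$ and hence $H$, the same enumeration of at most $|Q|^2$ pairs followed by reachability in the configuration graph of $\Gamma_5$ of size at most $|Q|^5$, and the same reliance on \cref{lemma:CoverLine-2pb-inP:execution-shape-2} together with the completeness and correctness lemmas, with the $k=1$ case handled by noting it reduces to $k=2$. This is essentially identical to the paper's proof.
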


\begin{proof}
We explain why the algorithm takes a polynomial time: step 1 (computing $H$) is done in polynomial time as explained above. For step 2, there are at most $|H| \times |H| \leq |Q|^2$ pairs, and for each pair, we explore a graph of at most $|Q|^5$ nodes in which each vertex represents a configuration $C = (\Gamma_5, L)$. Accessibility in a graph can be done non-deterministically in logarithmic space, and so in polynomial time.
Observe that all the lemmas of this section hold true when considering 1-phase-bounded protocols, hence the theorem.
\end{proof}

\bibliographystyle{plainurl}
\bibliography{pb-main}

\clearpage
\appendix
\Iflong{\section{Proofs of \cref{subsec:Cover-CoverTree-equivalent}}\label{appendix:covertree-cover-equivalent}}
\Ifshort{\section{\Cover\ and \CoverTree\ are equivalent}\label{subsec:Cover-CoverTree-equivalent}}

\Ifshort{

Let $\PP = (Q, \Sigma, \qinit, \Delta)$ be a broadcast protocol, and $q_f\in Q$. Let $\rho=C_0 \trans \cdots \trans C_n$ with $C_i = (\Gamma, L_i)\in\CC_\PP$ for all $0 \leq i \leq n$, and a vertex $v_f \in \Vert{\Gamma}$ such that $L_n(v_f) = q_f$.
We will build an execution covering $q_f$ with a tree topology $\Gamma'$, rooted in $v_f$, the node that reaches $q_f$. 
$\Gamma'$ is actually an \emph{unfolding} of the topology $\Gamma$.

%
%

We first define inductively the set of nodes $V'\subseteq \nat^\ast$, along with a labelling function $\lambda$ which associates to each node $v' \in V'$ a node $v \in \Vert{\Gamma}$. 
\begin{enumerate}
	\item $\epsilon\in V'$ and $\lambda(\epsilon) = v_f$;\label{enumerate:def-tree:epsilon}
	\item Let $\NeighG{\Gamma}{v_f} = \set{v_1, \dots , v_k}$. For all $1 \leq i \leq k$, $i \in V'$ and $\lambda(i) = v_i$;\label{enumerate:def-tree:depth-1}
	\item Let $w \cdot x \in V'$, with $w \in \nat^\ast$ such that $|w| < n-1$, and $x \in \nat$. Let $\NeighG{\Gamma}{\lambda(w\cdot x)} \setminus \set{\lambda(w)} = \set{v_1, \dots, v_k}$. Then, for all $1 \leq i \leq k$, $w\cdot x \cdot i \in V'$ and $\lambda'(w\cdot x \cdot i) = v_i$.\label{enumerate:def-tree:induction} 
\end{enumerate}
Finally, define $E' = \set{\langle w, w\cdot x\rangle  \mid w \in V', w\cdot x \in V'}$.
Note that $\epsilon \in V'$ and $V'\subseteq \nat^\ast$ and  is prefix closed.
Furthermore, by construction, for all $w \in V'$, $|w| \leq n$, and for all $w \in V'$, in fact $w\in \set{1,\dots, d}^\ast$ 
where $d$ is the maximal degree of $\Gamma$. 
Hence, $V'$ is a finite set and  $\Gamma'=(V',E')$ is a tree topology. 

The way we built $\Gamma'$ ensures that each node $v\in V'$ (except the leaves) enjoys the same set of neighbors than $\lambda(v)\in V$. This is formalised in the
following lemma.
\begin{lemma}\label{lemma:Covertree-Cover:well-formed-tree}
	For all $u\in V'$, for all $u'\in \NeighG{\Gamma'}{u}$, $\lambda(u')\in \NeighG{\Gamma}{\lambda(u)}$. Then, we let $f_u:\NeighG{\Gamma'}{u} \rightarrow \NeighG{\Gamma}{\lambda(u)}$ defined by $f_u(u')=\lambda(u')$. If $|u|< n$, $f_u$
	is a bijection.
\end{lemma}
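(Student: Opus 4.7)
The plan is to verify the two claims by analysing the construction of $\Gamma'$ locally, splitting neighbours of a node $u \in V'$ into at most two kinds: the parent of $u$ in the tree (if $u \neq \epsilon$) and the children of $u$ added in items 2 or 3 of the inductive definition.

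First I would establish the containment $\lambda(u') \in \NeighG{\Gamma}{\lambda(u)}$ for every $u' \in \NeighG{\Gamma'}{u}$. If $u'$ is a child of $u$ in $\Gamma'$, then $u' = u \cdot i$ for some $i$, and inspection of item 2 (when $u = \epsilon$) or item 3 (when $u = w \cdot x$) shows that $\lambda(u')$ was chosen from $\NeighG{\Gamma}{\lambda(u)}$ (possibly with $\lambda(w)$ removed), so the containment is immediate. If $u'$ is the parent of $u$, then $u = u' \cdot i$, and the same argument applied to the pair $(u', u)$ yields $\lambda(u) \in \NeighG{\Gamma}{\lambda(u')}$; symmetry of edges in the undirected topology $\Gamma$ then gives $\lambda(u') \in \NeighG{\Gamma}{\lambda(u)}$. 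This lets us define $f_u$.

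For the bijectivity statement under $|u| < n$, the key observation is that the condition $|u| < n$ guarantees that the children of $u$ have actually been added to $V'$: for $u = \epsilon$ this is exactly item 2 (which applies unconditionally), and for $u = w \cdot x$ we have $|w| = |u|-1 < n-1$, so item 3 applies. The construction then enumerates the children $u \cdot 1, \dots, u \cdot k$ of $u$ so that $\lambda$ restricts to a bijection from them to $\NeighG{\Gamma}{\lambda(u)} \setminus \{\lambda(w)\}$ (or to the full neighbourhood $\NeighG{\Gamma}{v_f}$ if $u = \epsilon$). Injectivity of $f_u$ on the children is immediate, and since the image of the children does not contain $\lambda(w)$, appending the parent edge (which maps to $\lambda(w)$) preserves injectivity. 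Surjectivity is equally direct: any $v \in \NeighG{\Gamma}{\lambda(u)}$ either equals $\lambda(w)$, and is then hit by the parent, or lies in the complement and is therefore hit by a child.

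The main subtlety—and really the only one—is handling the boundary cases uniformly: the root $\epsilon$ has no parent (so we must not try to exclude $\lambda(w)$), while nodes at depth exactly $n$ have no children added by item 3, which is precisely why surjectivity (and hence bijectivity) fails without the hypothesis $|u| < n$. Once these edge cases are dispatched by a short case split, no induction is needed and the proof reduces to reading off the construction.
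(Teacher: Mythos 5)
Your proof is correct and matches the paper's argument in substance: both decompose $\NeighG{\Gamma'}{u}$ into the children of $u$ (which by items 2--3 of the construction enumerate $\NeighG{\Gamma}{\lambda(u)}$, minus the parent's label when $u\neq\epsilon$, without repetition) and the parent of $u$ (which supplies the missing label $\lambda(w)$, using symmetry of the undirected edge relation), and both note that the children exist precisely because $|u|<n$. The only difference is presentational: the paper phrases this as an induction on the tree structure, whereas you read everything off the construction directly, which is equally valid since the inductive hypothesis is only ever used for the containment direction that follows immediately from the definition.
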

}
\Iflong{\begin{proofof}{\cref{lemma:Covertree-Cover:well-formed-tree}}}
\Ifshort{\begin{proof}}
	We prove the lemma inductively on the structure of the nodes of $V'$. 
	\begin{itemize}
		\item Let $u=\epsilon$, then $\lambda(u)=v_f$ and let $\NeighG{\Gamma}{v_f}=\set{v_1,\dots, v_k}$. Then by definition, for all $1\leq i\leq k$, $i\in V'$
		and $\lambda(i)=v_i$. By definition of $E'$, $\set{1,\dots, k}\subseteq \NeighG{\Gamma'}{\epsilon}$. Now let $u\in\NeighG{\Gamma'}{\epsilon}$, again by definition of 
		$E'$, $|u|=1$ and $u\in \set{1,\dots,k}$ because all other nodes added in $V'$ are of length greater of equal than 2. So $\NeighG{\Gamma'}{\epsilon}=\set{1,\dots,k}$, and the function $f_\epsilon$ such that $f_\epsilon(i)=\lambda(i)=v_i$ is obviously a bijection.
		
		\item Let $w\cdot x\in V'$ such that $|w\cdot x|< n$ and $w\in\nat^\ast$. By induction hypothesis, $f_w:\NeighG{\Gamma'}{w}\rightarrow \NeighG{\Gamma}{\lambda(w)}$ is a bijection.
		Hence, since by definition of $E'$, $w\cdot x\in\NeighG{\Gamma'}{w}$, $f_w(w\cdot x) = \lambda(w\cdot x)\in\NeighG{\Gamma}{\lambda(w)}$ and $(\lambda(w), \lambda(w\cdot x))\in E$. 
		Let then $\NeighG{\Gamma}{\lambda(w\cdot x)}=\set{v_1,\dots, v_\ell,\lambda(w)}$. By definition of $V'$ and $E'$, $\NeighG{\Gamma'}{w\cdot x}=\set{w, w\cdot x\cdot 1,\dots, w\cdot x\cdot \ell}$ and $\lambda(w\cdot x\cdot i) = v_i$ for all $1\leq i \leq \ell$. Hence, $f_{w\cdot x}$ is a bijection. 
	\end{itemize}
	
	Moreover, if $w\cdot x\in V'$ with $|w\cdot x| = n$, then $w\cdot x$ is a leaf and $\NeighG{\Gamma'}{w\cdot x} = \set{w}$. By construction, $\lambda(w\cdot x)=v\in\NeighG{\Gamma}{\lambda(w)}$, which implies that $\lambda(w)\in\NeighG{\Gamma}{\lambda(w\cdot x)}$. 
	\Ifshort{\end{proof}}
\Iflong{\end{proofof}}

\Ifshort{

From the execution $\rho = C_0\rightarrow \dots C_n$ we will build a similar execution on $\Gamma'$. The idea is that for each step of the execution $\rho$, 
for each node $v\in V$, all the nodes in $\Gamma'$ that are labelled by $v$ will behave in the same way. This is possible thanks to~\cref{lemma:Covertree-Cover:well-formed-tree}.
Observe though that the leaves might no be able to behave as expected because they might not have the same set of neighbors than the node they are labelled by, so they might not be able to receive some
broadcast message. However, we can ensure some weaker version of correctness, defined as follows.
Let $0 \leq h \leq n$ and $C= (\Gamma, L)$ be a configuration. We say that a configuration $C' = (\Gamma', L')$ is \emph{$h$-correct} for $C$ if for all $u\in\Vert{\Gamma'}$, if $|u|\leq h$ then $L'(u)=L(\lambda(u))$.

The following lemma gives the main ingredient that allows to mimick the execution $\rho$ on $\Gamma'$.

\begin{lemma}\label{lemma:Covertree-Cover:induction-step}
	Let $C_1 = (\Gamma, L_1)$, $C_2 = (\Gamma, L_2)\in \CC_\PP$ such that $C_1 \trans C_2$ and let $C'_1 = (\Gamma', L'_1)\in\CC_\PP$ and $0< h \leq n$ such that $C'_1$ is $h$-correct for $C_1$.
	There exists $C'_{2}\in \CC_\PP$ such that $C'_1 \trans^\ast C'_{2}$, and $C'_2$ is $(h-1)$-correct for $C_2$.	
\end{lemma}
}
\Iflong{\begin{proofof}{\cref{lemma:Covertree-Cover:induction-step}}}
	\Ifshort{\begin{proof}}
	Denote $C_1 \transup{v, t} C_2$ with $t = (q, \alpha, q')$ and $\alpha \in !! \Sigma \cup \set{\tau}$. 
	Let $\set{u_1, \dots, u_K} $ be the set of vertices in $V'$ such that for all $1 \leq k \leq K$, $|u_k| \leq h$ and $\lambda(u_k) = v$. We will build inductively an execution $C'_{1,0}\transup{u_1, t} C'_{1,1}\transup{u_2, t}\dots\transup{u_K,t}C'_{1,K}$ of
	configurations over $\Gamma'$
	such that for all $0\leq k\leq K$, $C'_{1,k} = (\Gamma', L'_{1, k})$ and for all node $u\in V'$ such that $|u|\leq h$, 
	\begin{itemize}
		\item if $u\in \bigcup_{1\leq j\leq k}(\set{u_j}\cup\NeighG{\Gamma'}{u_j})$, then $L'_{1,k}(u)=L_2(\lambda(u))$, 
		\item otherwise, $L'_{1,k}(u)=L'_1(u)$.
	\end{itemize}
	
	We let $C'_{1,0}=C'_1$, since $\bigcup_{1\leq j\leq k}(\set{u_j}\cup\NeighG{\Gamma'}{u_j})=\emptyset$, $C'_1$ trivially meets the requirements. 
	
	Let $0\leq k< K$ and assume now that we have built $C'_{1,0}, \dots, C'_{1,k}$ as required. First, observe that 
	\begin{equation}\label{eq:u-k+1-u-j}
		(u_{k+1}\cup\NeighG{\Gamma'}{u_{k+1}})\cap (\bigcup_{1\leq j\leq k} (\set{u_j}\cup\NeighG{\Gamma'}{u_j}))=\emptyset. 
	\end{equation}
	Indeed assume otherwise and let $u$ an element of this intersection. If $u=u_{k+1}$, then $u\notin \bigcup_{1\leq j\leq k}\set{u_j}$ by definition. Then it must
	be in $\bigcup_{1\leq j\leq k}\NeighG{\Gamma'}{u_j}$ and let $j$ such that $u_{k+1}\in \NeighG{\Gamma'}{u_j}$. By~\cref{lemma:Covertree-Cover:well-formed-tree},
	$f_{u_j}(u_{k+1})=\lambda(u_{k+1})=v\in \NeighG{\Gamma}{\lambda(u_j)}=\NeighG{\Gamma}{v}$. Hence it implies that $(v,v)\in E$ which is impossible. 
	If $u\in\NeighG{\Gamma'}{u_{k+1}}$ then by a similar argument, $u\notin\bigcup_{1\leq j\leq k} \set{u_j}$. Then let $1\leq j\leq k$ such that $u\in\NeighG{\Gamma'}{u_j}$. Then $u_j\in\NeighG{\Gamma'}{u}$ and $u_{k+1}\in\NeighG{\Gamma'}{u}$. If $|u| <n$ ($u$ is not a leaf), then $f_u(u_j)=\lambda(u_j)=v$ and 
	$f_u(u_{k+1})=\lambda(u_{k+1})=v$ which contradicts the fact that $f_u$ is bijective from~\cref{lemma:Covertree-Cover:well-formed-tree}. If $|u|=n$, then 
	it is a leaf and it has only one neighbor: its father in the tree. So it is not possible that $\set{u_j, u_{k+1}}\in\NeighG{\Gamma'}{u}$. 
	
	To define $C'_{1,k+1}$ we need to differentiate between 
	the possible values of $\alpha$ (recall that $t = (q, \alpha, q')$).
	
	\begin{itemize}
		\item if $\alpha=\tau$, let $L'_{1,k+1}(u_{k+1})=q'$ and $L'_{1,k+1}(u)=L'_{1,k}(u)$ for all $u\neq u_{k+1}$. First, by~\cref{eq:u-k+1-u-j}\ and induction hypothesis,
		$u_{k+1}\notin \bigcup(\set{u_j}\cup\NeighG{\Gamma'}{u_j})$, then $L'_{1,k}(u_{k+1})=L'_1(u_{k+1})$. Moreover, since $C'_1$ is $h$-correct for
		$C_1$, $L'_1(u_{k+1})=L_1(v)=q$. Then $C'_{1,k}\transup{u_{k+1}, t} C'_{1,k+1}$. Let $u\in V'$ such that $|u|\leq h$ and $u\in\bigcup_{1\leq j\leq k+1}
		(\set{u_j}\cup\NeighG{\Gamma'}{u_j})$. If $u\in\bigcup_{1\leq j\leq k}
		(\set{u_j}\cup\NeighG{\Gamma'}{u_j})$, by induction hypothesis, $L'_{1,k}(u)=L_2(\lambda(u))$. Moreover, by~\cref{eq:u-k+1-u-j}, $u\neq u_{k+1}$. Then,
		$L'_{1,k+1}(u)=L'_{1,k}(u)=L_2(\lambda(u))$. By construction, $L'_{1,k+1}(u_{k+1})=q'=L_2(\lambda(u_{k+1}))$. Now, if $u\in\NeighG{\Gamma'}{u_{k+1}}$,
		or $u\notin \bigcup_{1\leq j\leq k+1}
		(\set{u_j}\cup\NeighG{\Gamma'}{u_j})$. 
		Then, $L'_{1,k+1}(u)= L'_{1,k}(u)$. By induction hypothesis, $L'_{1,k}(u)=L'_1(u)=L_1(\lambda(u))$ because $C'_1$ is $h$-correct for $C_1$. Moreover, 
		$L_1(\lambda(u))=L_2(\lambda(u))$. Hence, $C'_{1,k+1}$ meets the requirements. 
		
		\item If $\alpha = !!m$ for some $m\in\Sigma$, we define $L'_{1,k+1}$ as follows.
		\begin{itemize}
			\item $L'_{1,k+1}(u_{k+1})=q'$ (the node $u_{k+1}$ performs the broadcast).
			\item for all $u\in\NeighG{\Gamma'}{u_{k+1}}$ such that $|u|\leq h$, $L'_{1,k+1}(u)=L_2(\lambda(u))$. 
			\item for all $u\in\NeighG{\Gamma'}{u_{k+1}}$ such that $|u| > h$, if there exists $p\in Q$ and $(L'_{1,k}(u), ?m, p)\in \Delta$, $L'_{1,k+1}(u)=p$, otherwise, $L'_{1,k+1}(u)=L'_{1,k}(u)$ (its neighbors receive the broadcast).
			\item For all other $u\in V'$, $L'_{1,k+1}(u)=L_{1,k}(u)$. 
		\end{itemize}
		
		We first show that $C'_{1,k}\transup{u_{k+1}, t} C'_{1,k+1}$. By~\cref{eq:u-k+1-u-j}, $u_{k+1}\notin \bigcup_{1\leq j\leq k}(\set{u_j}\cup\NeighG{\Gamma'}{u_j})$. Hence, the induction hypothesis ensures that $L'_{1,k}(u_{k+1})=L_1(\lambda(u_{k+1})=L_1(v)$. Let $u\in \NeighG{\Gamma'}{u_{k+1}}$ such that $|u|\leq h$. By~\cref{eq:u-k+1-u-j},
		$u\notin\bigcup_{1\leq j\leq k}(\set{u_j}\cup\NeighG{\Gamma'}{u_j})$. Then, by induction hypothesis, $L'_{1,k}(u)=L'_1(u)=L_1(\lambda(u))$ since
		$C'_1$ is $h$-correct for $C_1$. Moreover, since $u\in\NeighG{\Gamma'}{u_{k+1}}$, \cref{lemma:Covertree-Cover:well-formed-tree} implies that 
		$\lambda(u)\in\NeighG{\Gamma}{v}$. Since $C_1\transup{v,t} C_2$, we know that either $(L_1(\lambda(u)), ?m, L_2(\lambda(u)))\in \Delta$, hence
		$(L'_{1,k}(u), ?m, L'_{1,k+1}(u))\in \Delta$,
		or $L'_{1,k+1}(u)=L_2(\lambda(u))=L_1(\lambda(u))=L'_{1,k}(u)$. In both cases, we can conclude that $C'_{1,k}\transup{u_{k+1}, t} C'_{1,k+1}$. 	
		
		Let now $u\in V'$ such that $|u|\leq h$ and $u\in \bigcup_{1\leq j\leq k+1}(\set{u_j}\cup\NeighG{\Gamma'}{u_j})$. If $u\in \bigcup_{1\leq j\leq k}(\set{u_j}\cup\NeighG{\Gamma'}{u_j})$, again by~\cref{eq:u-k+1-u-j}, $u\notin \set{u_{k+1}}\cup\NeighG{\Gamma'}{u_{k+1}}$, and $L'_{1,k+1}(u)=L'_{1,k}(u)=L_2(\lambda(u))$ by induction hypothesis. If $u=u_{k+1}$, $L'_{1,k+1}(u)=q'=L_2(v)=L_2(\lambda(u_{k+1})$. If $u\in\NeighG{\Gamma'}{u_{k+1}}$, by definition we have that
		$L'_{1,k+1}(u)=L_2(\lambda(u))$. If now $|u|\leq h$ and $u\nin \bigcup_{1\leq j\leq k+1}(\set{u_j}\cup\NeighG{\Gamma'}{u_j})$, $L'_{1,k+1}(u)=L'_{1,k}(u)=L'_1(u)$ by induction hypothesis. 
	\end{itemize}

	We let $C'_2=C'_{1,K}$ and we prove that $C'_{2}$ is $(h-1)$-correct for $C_2$. 
	Let $u \in V'$ such that $|u| \leq h-1 < n$. As $C'_1$ is $h$-correct for $C_1$, $L'_1(u) = L_1(\lambda(u))$.
	
	If $u$ is such that $\lambda(u) \in \NeighG{\Gamma}{v}$, then by \cref{lemma:Covertree-Cover:well-formed-tree}, there exists $u' \in \NeighG{\Gamma'}{u}$ such that $\lambda(u') = v$. Furthermore, since $\Gamma'$ is a tree topology, and $|u|\leq h-1$, either $|u'| \leq h-2$ or $|u'| \leq h-1+1=h$. In both cases, $|u'| \leq h$. As a consequence, $u' \in \set{u_1, \dots, u_K}$, and so, $L'_{2}(u) = L'_{1,K}(u)=L_{2}(\lambda(u))$.
	If $u$ is such that $\lambda(u) = v$, then $u \in \set{u_1, \dots, u_K}$, and again, $L'_{2}(u) =L'_{1,K}(u)= L_{2}(\lambda(u))$.
	In other cases, $\lambda(u) \nin \NeighG{\Gamma}{v} \cup \set{v}$, and $L_2(\lambda(u))=L_1(\lambda(u))$. 
	By definition of $\set{u_1,\dots, u_K}$ and by~\cref{lemma:Covertree-Cover:well-formed-tree}, $u\notin \bigcup_{1\leq j\leq k}(\set{u_j}\cup\NeighG{\Gamma'}{u_j})$. As a consequence, $L'_{2}(u) = L'_{1,K}(u)=L'_1(u)=L_1(\lambda(u))=L_2(\lambda(u))$. 
	
	Hence, $C'_{2}$ is $(h-1)$-correct for $C_2$.
	\Ifshort{\end{proof}}
\Iflong{\end{proofof}}

We build now an execution covering $q_f$ with $\Gamma'$: let $C'_0=(\Gamma', L'_0)$ defined by $L'_0(v)=\qinit$ for all $v\in\Vert{\Gamma'}$. Obviously,
$C'_0$ is $n$-correct. By~\cref{lemma:Covertree-Cover:induction-step}, there exists a sequence of configurations $(C'_i)_{1\leq i \leq n}$ such that
for all $1\leq i\leq n$, $C'_{i-1}\trans^\ast C'_{i}$ and $C'_i$ is $(n-i)$-correct for $C_i$. Hence, $C'_0\trans^\ast C'_n$ and $C'_n$ is $0$-correct for $C_n$ and
$L'_n(\epsilon)=L_n(\lambda(\epsilon))=L_n(v_f)=q_f$. 

This allows to prove the following result. 
\begin{theorem}
	$\Cover$ and $\CoverTree$ are equivalent. 
\end{theorem}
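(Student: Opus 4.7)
The direction $\CoverTree \Rightarrow \Cover[\Topo]$ is immediate since every tree topology is a (general) topology, so any execution witnessing $\CoverTree$ also witnesses $\Cover[\Topo]$. The substantive direction is to show that whenever $q_f$ is coverable with some $\Gamma \in \Topo$, it is also coverable with some tree $\Gamma' \in \Trees$. I would fix an execution $\rho = C_0 \trans \cdots \trans C_n$ with $C_n=(\Gamma, L_n)$ and a vertex $v_f\in\Vert{\Gamma}$ with $L_n(v_f)=q_f$, and build $\Gamma'$ as an unfolding of $\Gamma$ from $v_f$ up to depth $n$.

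The plan is to define $\Gamma'=(V',E')$ inductively: $\epsilon\in V'$ with $\lambda(\epsilon)=v_f$, and for each already-added node $w$ with $|w|<n$, introduce one child in $V'$ for every neighbor of $\lambda(w)$ in $\Gamma$ (except the image of $w$'s parent), with the label map $\lambda$ sending the child to that neighbor. This yields a finite prefix-closed subset of $\nat^\ast$, hence a tree topology of height at most $n$. The key structural fact (Lemma~\ref{lemma:Covertree-Cover:well-formed-tree}) is that for every $u\in V'$ with $|u|<n$, the map $\lambda$ gives a bijection $f_u:\NeighG{\Gamma'}{u}\to\NeighG{\Gamma}{\lambda(u)}$: this is proved by induction on $|u|$ using that the root contributes all of $v_f$'s neighbors, and each internal step contributes exactly the neighbors distinct from the parent's image.

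Next I would define $h$\emph{-correctness}: a configuration $(\Gamma', L')$ is $h$-correct for $(\Gamma, L)$ if $L'(u)=L(\lambda(u))$ for every $u\in V'$ with $|u|\leq h$. The main technical step (Lemma~\ref{lemma:Covertree-Cover:induction-step}) is that if $C_1\trans C_2$ in $\Gamma$ via some $(v,\delta)$, and $C'_1$ is $h$-correct for $C_1$ with $h\geq 1$, then one can reach from $C'_1$ (in finitely many steps) a configuration $C'_2$ that is $(h-1)$-correct for $C_2$. The construction is to enumerate all copies $u_1,\dots,u_K\in V'$ with $\lambda(u_k)=v$ and $|u_k|\leq h$, and fire transition $\delta$ at each $u_k$ in sequence. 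For internal copies, bijectivity of $f_{u_k}$ ensures that neighbors in $\Gamma'$ react exactly as the neighbors of $v$ in $\Gamma$ did; leaves of the tree at depth $h$ may have incomplete neighborhoods, but this is tolerated because we only require $(h-1)$-correctness afterwards.

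The main obstacle I anticipate is verifying that the successive firings at different copies $u_1,\dots,u_K$ do not interfere: one must show that $(\{u_{k+1}\}\cup \NeighG{\Gamma'}{u_{k+1}}) \cap \bigcup_{j\leq k}(\{u_j\}\cup\NeighG{\Gamma'}{u_j})=\emptyset$, which follows from the absence of self-loops in $\Gamma$ together with the bijectivity of $f_u$ (two different $u_j$'s cannot be neighbors of the same tree node, since $\lambda$ would send them both to $v$). Once this noninterference is established, the transition at $u_{k+1}$ relies only on labels unchanged by previous firings. Finally I would conclude: starting from the initial $C'_0$ (all vertices in $\qinit$), which is trivially $n$-correct for $C_0$, iterate the lemma $n$ times to obtain $C'_0 \trans^\ast C'_n$ with $C'_n$ being $0$-correct for $C_n$. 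Then $L'_n(\epsilon)=L_n(\lambda(\epsilon))=L_n(v_f)=q_f$, so $q_f$ is covered with the tree topology $\Gamma'$.
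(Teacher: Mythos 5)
Your proposal is correct and follows essentially the same route as the paper: the unfolding of $\Gamma$ rooted at $v_f$ with the labelling $\lambda$, the neighborhood-bijection lemma, the notion of $h$-correctness, and the induction step that fires the transition at every copy of $v$ while using the disjointness of the affected neighborhoods (derived from the absence of self-loops and the bijectivity of $f_u$) are exactly the paper's Lemmas~\ref{lemma:Covertree-Cover:well-formed-tree} and~\ref{lemma:Covertree-Cover:induction-step}. No gaps.
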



\Iflong{\section{Proof of \cref{sec:cover-cover-phase-bounded}} \label{app:cover-cover-phase-bounded}}
\Ifshort{
\section{Phase-bounded protocols as an under-approximation}\label{sec:cover-cover-phase-bounded}

Phase-bounded protocols can be seen as a semantic restriction of general protocols in which each process can only switch a bounded number of times between phases where it receives messages and phases where it can send messages. When, usually, restricting the behavior of processes immediately yields an
underapproximation of the reachable states, we highlight here the fact that preventing messages from being received can in fact lead to new reachable states. 
This motivates our definition of phase-bounded protocols, in which a process always ends in a reception phase. 

\begin{figure}
	\begin{minipage}[c]{0.5\columnwidth}
		\resizebox*{1.0\columnwidth}{!}{
			\tikzset{box/.style={draw, minimum width=4em, text width=4.5em, text centered, minimum height=17em}}

\begin{tikzpicture}[-, >=stealth', shorten >=1pt,node distance=2cm,on grid,auto, initial text = {}] 
	\node[state,initial] (q0) [] {$\qinit$};
	\node[state] (q1) [right of = q0, yshift = 15, xshift =0] {$q_1$};
	\node[state] (q2) [right  of = q1, xshift =0] {$q_2$};
	\node[state] (q3) [right of = q2, xshift =0] {$q_3$};
	\node[state] (q4) [right of = q0, yshift = -15, xshift = 0] {$q_4$};
	\node[state] (q5) [right  of = q4, xshift =0] {$q_5$};
	\node[state] (q6) [right of = q5, xshift =0] {$q_6$};
	
	\node[state] (p) [below of = q4, yshift= 20] {$p$};

	\path[->] 
	(q0) edge node {$!!c$} (q1)
	(q0) edge node {$?c$} (q4)
	(q0) edge [bend right] node {$?m$} (p)
	
	(q1) edge node {$!!m$} (q2)
	
	(q4) edge node {$!!b$} (q5)
	(q4) edge node {$?m$} (p)
	(q2) edge node {$?a$} (q3)
	(q5) edge [bend left] node {$?m$} (p)
	(q5) edge node {$!!a$} (q6)
	;
\end{tikzpicture}
		}
		
		\subcaption{An example of a broadcast protocol $\overline{\PP}$\label{fig:bp:example-2}}
	\end{minipage}
	\hfill
	\begin{minipage}[c]{0.5\columnwidth}
		\resizebox*{1.0\columnwidth}{!}
		{\tikzset{box/.style={draw, minimum width=4em, text width=4.5em, text centered, minimum height=17em}}

\begin{tikzpicture}[-, >=stealth', shorten >=1pt,node distance=2cm,on grid,auto, initial text = {}] 
	\node[state,initial] (q0) [] {$\qinit$};
	\node[state, ellipse, inner sep= 0.01] (q1) [right of = q0, yshift = 15, xshift =10] {\footnotesize$(q_1,b,1)$};
	\node[state, ellipse, inner sep= 0.01] (q2) [right  of = q1, xshift =10] {\footnotesize$(q_2,b,1)$};
	\node[state, ellipse,inner sep= 0.01] (q3) [right of = q2, xshift =10] {\footnotesize$(q_3,r,2)$};
	\node[state, ellipse,inner sep= 0.01] (q4) [right of = q0, yshift = -15, xshift = 10] {\footnotesize$(q_4,r,1)$};
	\node[state, ellipse,inner sep= 0.01] (q5) [right  of = q4, xshift =10] {\footnotesize$(q_5,b,2)$};
	\node[state,ellipse,inner sep= 0.01] (q6) [right of = q5, xshift =10] {\footnotesize$(q_6,b,2)$};
	
	\node[state, ellipse,inner sep= 0.01] (p) [below of = q4, yshift = 20] {$(p,r,1)$};

	\path[->] 
	(q0) edge node {$!!c$} (q1)
	(q0) edge node {$?c$} (q4)
	(q0) edge [bend right] node {$?m$} (p)
	
	(q1) edge node {$!!m$} (q2)
	
	(q4) edge node {$!!b$} (q5)
	(q4) edge node {$?m$} (p)
	(q2) edge node {$?a$} (q3)
	(q5) edge node {$!!a$} (q6)
	;
\end{tikzpicture}}
		\subcaption{Unfolding of $\overline{\PP}$ limited to 2 phases: $(q,b,i)$ means that the state $q$ is reached in a broadcast phase when $(q,r,i)$ means that it is reached in a reception phase. The number of the phase is given by $i$.}\label{fig:bp:example-2-unfold}
	\end{minipage}
	
	\caption{A protocol and its bounded unfolding showing that it may not be an underapproximation.}
\end{figure}
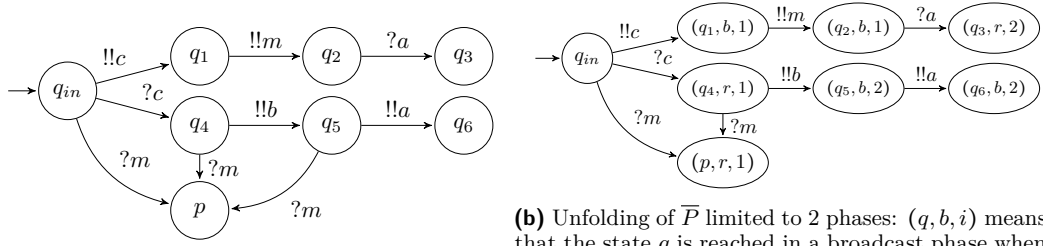

Indeed, consider the protocol pictured on~\cref{fig:bp:example-2}. The state $q_3$ is not coverable: to cover $q_3$, a node $v_1$ needs to receive message $a$ when it is on state $q_2$ from one of its neighbor $v_2$. By construction, $v_1$ broadcasts message $m$ before reaching $q_2$.
Vertex $v_2$ can only broadcast message $a$ when it is on state $q_5$. To reach $q_5$, vertex $v_2$ visited exactly states $\qinit, q_4$ and $q_5$. From each of those states, there is an outgoing reception transition labelled with $m$ going to $p$. Hence, at any moment of the execution, the broadcast of message $m$ by vertex $v_1$ would have brought $v_2$ in $p$, preventing the broadcast of $a$.
However, a naive 2-phase bounded unfolding of this protocol, in which we limit the number of phases of sending and reception to 2, is illustrated in~\cref{fig:bp:example-2-unfold}. In 
this protocol, $(q_3,r,2)$ (hence $q_3$) is coverable:
consider $\Gamma = (\set{v_1, v_2}, \set{\langle v_1, v_2\rangle})$ and the following execution:
\begin{align*}
	&(\Gamma, \set{v_1 \mapsto \qinit, v_2\mapsto \qinit}) \trans (\Gamma, \set{v_1 \mapsto (q_1,b,1), v_2\mapsto (q_4,r,1)}) \trans (\Gamma, \set{v_1 \mapsto (q_1,b,1),\\
		& v_2\mapsto (q_5,b,2)}) \trans 
	(\Gamma, \set{v_1 \mapsto (q_2,b,1), v_2\mapsto (q_5,b,2)}) \trans(\Gamma, \set{v_1 \mapsto (q_3,r,2), v_2\mapsto (q_6,b,2)}).
\end{align*}

In fact, in state $(q_5,b,2)$ a process is not allowed to switch anymore, hence the transition allowing to receive message $m$ has been removed. Doing so, 
we have made state $q_3$ coverable. This shows that this type of bounded semantics does not give an underapproximation of the coverable states, in spite of what was expected. 

Actually, the reception of a message is something that is not under the control of a process. If another process broadcasts a message, a faithful behavior of the system is that all the processes that can receive it indeed do so, no matter in which phase they are in their own execution. Hence, in a restriction that attempts to limit the number of switches 
between sending and receiving phases, one should not prevent a reception to happen. This motivates our definition of a phase-bounded protocol.

Let $\PP = (Q, \Sigma, \qinit, \Delta)$ be a broadcast protocol, and $k \in \nat$. 

We define the $k$-unfolding of $\PP$ denoted by $\PP_k$ as the following protocol: 
$\PP_k =(Q_k, \Sigma, \qinit, \Delta_k)$ with 
$Q_k= \{q^0\mid q\in Q\}\cup \{q^{b,j}, q^{r,j}\mid q\in Q, 1\leq j\leq k\}$. To ease the notations we let $q^{r,0}=q^0$, $q^{b,0}=q^0 = $ for all $q^0 \in Q_0$.
\begin{align*}
	\Delta_k & =  \set{(q^0, \tau, p^0) \mid (q, \tau, p) \in \Delta}\\
	&\cup		
	\set{(q^{r,j}, \alpha, p^{r,j}) \mid 1 \leq j \leq k \text{ and }(q, \alpha, p) \in \Delta \text{ and } \alpha \in \set{\tau} \cup ?\Sigma}\\
	& \cup 
	\set{(q^{b,j}, \alpha, p^{b,j}) \mid 1 \leq j \leq k \text{ and }(q, \alpha, p) \in \Delta \text{ and } \alpha \in \set{\tau} \cup !!\Sigma}\\
	& \cup 
	\set{(q^{r,j}, !!m, p^{b,j+1}) \mid 0 \leq j < k \text{ and }(q, !!m, p) \in \Delta}\\
	& \cup 
	\set{(q^{b,j}, ?m, p^{r,j+1}) \mid 0 \leq j < k \text{ and }(q, ?m, p) \in \Delta}\\
	& \cup 
	\set{(q^{b,k}, ?m, p^{r,k}) \mid (q, ?m, p) \in \Delta}.
\end{align*}

The last case of the definition of $\Delta_k$ implements the fact that we never prevent a reception from occurring, even in the last phase.

The following lemma establishes that this definition of unfolding can be used as an underapproximation for $\Cover$. 
\begin{lemma}\label{lemma:cover-general-protocol-phase-bounded-protocol}
	$q_f$ can be covered in $\PP$ if and only if there exist $k\in\nat$, $y\in\set{r,b}$ and $0\leq j\leq k$ such that $(q_f,y,j)$ can be covered in $\PP_k$.
\end{lemma}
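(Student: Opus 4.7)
The plan is to prove the two implications separately. For the easy direction ($\Leftarrow$), I would define the projection $\pi : Q_k \to Q$ that forgets the annotation, i.e., $\pi(q^0) = q$ and $\pi(q^{b,j}) = \pi(q^{r,j}) = q$. By inspection of every case in the definition of $\Delta_k$, any transition $(p_1, \alpha, p_2) \in \Delta_k$ projects to a transition $(\pi(p_1), \alpha, \pi(p_2)) \in \Delta$. Extending $\pi$ pointwise to configurations, the crucial observation for lifting to executions is that the set of receivable messages is preserved by projection: for every annotated state $s \in Q_k$, $R_{\PP_k}(s) = R(\pi(s))$. Hence $C \trans_{\PP_k} C'$ implies $\pi(C) \trans_\PP \pi(C')$ (broadcasters and receivers are handled correctly, and neighbors that could not receive in $\PP_k$ still cannot in $\PP$), and any execution covering $(q_f, y, j)$ projects to an execution covering $q_f$.

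For the harder direction ($\Rightarrow$), start from an execution $\rho = C_0 \trans \cdots \trans C_n$ of $\PP$ with $C_i = (\Gamma, L_i)$ and $L_n(v_f) = q_f$. I would inductively assign to each vertex $v \in \Vert{\Gamma}$ and each index $0 \leq i \leq n$ an annotation $\phi_i(v) \in \set{0} \cup (\set{b,r}\times \nat^+)$: set $\phi_0(v) = 0$; and update $\phi_{i+1}(v)$ from $\phi_i(v)$ based on what $v$ does at step $i$. If $v$ performs an internal action, is not involved, or is a neighbor that stays in the same state, $\phi_{i+1}(v) = \phi_i(v)$. If $v$ broadcasts: its annotation stays at $(b,j)$ if it was in a broadcast phase $(b,j)$, and otherwise (from $0$ or $(r,j)$) it switches to $(b,j+1)$ (with $j=0$ if it was in $0$). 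Symmetrically for a genuine reception. Let $k = \max_{v,i} \text{(phase number of } \phi_i(v)\text{)}$, which is finite since $\rho$ is finite.

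Then I would verify that the annotated execution $\rho' = C_0' \trans \cdots \trans C_n'$ in $\PP_k$, defined by lifting each $L_i(v)$ to its $\phi_i(v)$-annotated version, is valid in $\PP_k$. Each step is checked by case analysis: internal actions use rules $(q^0,\tau,p^0)$ or $(q^{y,j}, \tau, p^{y,j})$; same-phase broadcasts/receptions use $(q^{y,j}, \alpha, p^{y,j})$; phase-switching actions use the incrementing rules $(q^{r,j}, !!m, p^{b,j+1})$ and $(q^{b,j}, ?m, p^{r,j+1})$; and neighbors unable to receive stay in their annotated state. Since $k$ was chosen as the maximum phase number, no process ever needs to increment beyond $k$, so the special rule $(q^{b,k}, ?m, p^{r,k})$ is never invoked (it only matters when one wants to fix $k$ a priori, and lies unused here). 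The covering condition transfers directly: $L_n'(v_f) = q_f^{\phi_n(v_f)}$, so $(q_f, y, j)$ with $(y,j) = \phi_n(v_f)$ is covered in $\PP_k$.

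The main obstacle is the case analysis on the broadcast transitions, since one broadcast step in $\rho$ simultaneously updates the annotations of the broadcaster \emph{and} each of its neighbors, and we must show that each neighbor's annotated transition is in $\Delta_k$. The key invariant maintained through the induction on $i$ is that every vertex's annotation faithfully records whether its last non-internal action was a broadcast or a reception, together with the count of type switches; this is precisely what the partition into phase-indexed copies in $Q_k$ is designed to track, so the verification reduces to matching each possible $\phi_i(v) \to \phi_{i+1}(v)$ transition with a case in the definition of $\Delta_k$.
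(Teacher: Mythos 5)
Your proposal is correct and follows essentially the same route as the paper: the forward direction annotates each vertex at each step with a phase counter (the paper's $\kappa_v$, your $\phi_i$), takes $k$ to be the maximum attained, and lifts the execution by case analysis on $\Delta_k$; the backward direction projects annotations away, with the key observation that receivability is preserved (the paper phrases this as a case analysis showing that the absence of a $?m$-transition in $\Delta_k$ forces its absence in $\Delta$, which is exactly your $R_{\PP_k}(s)=R(\pi(s))$).
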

}
\Iflong{\begin{proofof}{\cref{lemma:cover-general-protocol-phase-bounded-protocol}}}
	\Ifshort{\begin{proof}}
		\textbf{Left-to-right direction:}
	Assume that $q_f$ can be covered in $\PP$, and let $C_0 \trans C_1 \trans \cdots \trans C_n$ be an execution of $\PP$ such that $C_i = (\Gamma, L_i)$ for all $0 \leq i \leq n$ and there exists $v_f \in \Vert{\Gamma}$ such that $L_n(v_f) = q_f$. 
	For all $v \in V$, we define $\kappa_v$ a function associating to each $0 \leq i \leq n$ the number and the type of phase in which vertex $v$ is. More formally, $\kappa_v(0) =0$ and for $0 \leq i < n$, 
	$$
	\kappa_v(i+1) = \left\{
	\begin{array}{ll}
		(b,1) & \mbox{if }\kappa_v(i) = 0 \text{ and } C_i \transup{v, t} C_{i+1} \text{ for some } t  =(q, !!m, q') \in \Delta,\\ &  q,q' \in Q, m \in \Sigma\\
		(r,1) & \mbox{if } \kappa_v(i) =0 \text{ and }  C_i \transup{v', t} C_{i+1} \text{ with } v \in \Neigh{v'} \text{ and } \\ &  t  =(q, !!m, q') \in \Delta, q,q' \in Q, m \in \Sigma \text{ and } (L_i(v), ?m, L_{i+1}(v)) \in \Delta\\
		(b,j) & \mbox{if }\kappa_v(i) = (r,j-1) \text{ and }  C_i \transup{v, t} C_{i+1} \text{ for some } t  =(q, !!m, q') \in \Delta, \\& q,q' \in Q, m \in \Sigma\\
		(r,j) & \mbox{if } \kappa_v(i) = (b,j-1) \text{ and }  C_i \transup{v', t} C_{i+1} \text{ with } v \in \Neigh{v'} \text{ and } \\ & t  =(q, !!m, q') \in \Delta, q,q' \in Q, m \in \Sigma \text{ and } (L_i(v), ?m, L_{i+1}(v)) \in \Delta\\
		\kappa_v(i) &\text{otherwise.}
	\end{array}
	\right.
	$$
	We define $k = \max_{v\in \Vert{\Gamma}} \set{x_v \mid \kappa_v(n) = (x_v, y), y \in \set{r,b}}$. Since, for all $0\leq i<n$, for all $v\in\Vert{\Gamma}$, 
	$\kappa_v(i)\leq \kappa_v(i+1)$, it holds that $k\geq  \set{x_v \mid \kappa_v(i) = (x_v, y), y \in \set{r,b}, 0\leq i\leq n}$. We now consider $\PP_k = (Q_k, \Sigma, \qinit^0, \Delta_k)$ the $k$-unfolding of $\PP$.
	For each $0\leq i\leq n$, for all $v\in\Vert{\Gamma}$, if $L_i(v)=q$ we let $L'_i(v)=q^{\kappa_v(i)}$. We now show $C'_0 \trans C'_1 \trans \cdots \trans C'_n$ is an execution of $\PP_k$ where for all $0 \leq i \leq n$, $C'_i = (\Gamma, L'_i)$.
	Consider $C'_0 = (\Gamma, L'_0)$ with $L'_0(v)= \qinit^0$ for all $v \in \Vert{\Gamma}$. As $\kappa_v(0) = 0$ for all $v\in \Vert{\Gamma}$, the induction property holds and $C'_0$ is initial.
	Assume we proved that $C'_0 \trans^\ast C'_i$.  
	We have to prove that $C'_i \trans C'_{i+1}$.
	Denote $v_!$ the vertex such that $C_i \transup{v_!, t} C_{i+1}$.
	
	If $t = (q_1, \tau, q_2)$ for some $q_1, q_2 \in Q$, for all $v \in \Vert{\Gamma}$, $\kappa_v(i+1) = \kappa_v(i)$ and if $v \neq v_!$, $L_{i+1}(v) = L_i(v)$ hence, $L'_{i+1}(v) = L'_i(v)$. By definition of $\PP_k$, 
	$(q_1^{\kappa_{v_!}(i)}, \tau, q_2^{\kappa_{v_!}(i+1)}) \in \Delta_k$. Hence, if $\alpha = \tau$, $C'_i \trans C'_{i+1}$.

	Now let $t = (q_1, !!m, q_2)$ for some $m \in \Sigma$ and $q_1, q_2 \in Q$. We start by observing that for all $v \in \Vert{\Gamma} \setminus  (\set{v_!} \cup \Neigh{v_!})$, $\kappa_v(i+1) = \kappa_v(i)$ and $L_{i+1}(v) = L_i(v)$ hence, $L'_{i+1}(v) = L'_i(v)$.
	
	Observe that, either (a) $\kappa_{v_!}(i+1) = (b,j)$ and $\kappa_{v_!}(i) = (r, j-1)$ (or $\kappa_v(i) = 0$) for some $j \leq k$, or (b) $\kappa_{v_!}(i+1) = \kappa_{v_!}(i)$.
	In case (a), $L'_i(v_!)=q_1^{(r,j-1)}$ or $L'_i(v_!)=q_1^{0}$, and $L'_{i+1}(v_!)= L_{i+1}(v_!)^{\kappa_{v_!}(i+1)}=L_{i+1}(v_!)^{(b,j)} = q_2^{(b,j)}$. By definition, $(q_1^{(r,j-1)}, !!m, q_2^{(b,j)})\in \Delta_k$ (or $(q_1^0, !!m, q_2^{b,1}) \in \Delta_k$).

	In case (b), $\kappa_v(i)=\kappa_v(i+1)= (b,j)$ for some $1\leq j\leq k$. Then, $L'_i(v_!)=q_1^{(b,j)}$ and $L'_{i+1}(v_!)= L_{i+1}(v_!)^{\kappa_{v_!}(i+1)}=L_{i+1}(v_!)^{b,j} = q_2^{(b,j)}$. By definition, $(q_1^{(b,j)}, !!m, q_2^{(b,j)})\in \Delta_k$. 
	
	Let now $v \in \Neigh{v_!}$. 	
	If there is no $q' \in Q$ such that $(L_i(v), ?m, L_{i+1}(v))\in \Delta$, then $L_i(v) = L_{i+1}(v)$ and $\kappa_v(i+1) = \kappa_v(i)$. Hence $L'_{i+1}(v) = L_{i+1}(v)^{\kappa_v(i+1)} =  L_{i}(v)^{\kappa_v(i)} = L'_i(v)$. Furthermore, by construction of $\PP_k$, there is no transition $(L'_{i}(v), ?m, p^\kappa) \in \Delta_k$ for some $p^\kappa \in Q_k$ as otherwise, there would be a transition $(L_{i}(v), ?m ,p) \in \Delta$, which contradicts the fact that $L_i(v) = L_{i+1}(v)$.
	
	Let now $v \in \Neigh{v_!}$ such that  $(L_i(v), ?m, L_{i+1}(v))\in \Delta$.
	By definition of $\kappa_v$, again, either (a) $\kappa_v(i+1) = (r,j)$ and $\kappa_v(i) = (b, j-1)$ (or $\kappa_v(i) = 0$) for some $1\leq j$, or (b) $\kappa_v(i+1) = \kappa_v(i)$. In that case, $\kappa_v(i)=(r,j)$ for some $j\geq 1$. In both cases, by construction of $k$, it holds that $j \leq k$. 
	Hence, by definition of $\Delta_k$,
	in case (a), $L'_i(v)=L_i(v)^{(b,j-1)}$ and $L'_{i+1}(v)=L_{i+1}(v)^{(r,j)}$ and $(L'_i(v), ?m, L'_{i+1}(v))\in \Delta_k$, and in case (b), 
	$L'_i(v)=L_i(v)^{(r,j)}$ and $L_{i+1}(v)=L_{i+1}(v)^{(r,j)}$. Hence, $(L'_i(v), ?m, L'_{i+1}(v))\in \Delta_k$. 
	
	
		We conclude that $C'_i \trans C'_{i+1}$.
		
	Hence, $L'_n(v_f) = q_f^{\kappa_{v_f}(n)}$ and so there exists $0 \leq j \leq k$ and $y \in \set{r,b}$ such that $q_f^{y,j}$ is coverable in $\PP_k$.

	\textbf{Right-to-left direction:} Conversely, assume that there exist $k \in \nat$, $0 \leq j \leq k$ and $y \in \set{r,b}$ such that $q_f^{y,j}$ is coverable in $\PP_k$.
	Let $C_0 \trans C_1 \trans \cdots \trans C_n$ an execution such that $C_n = (\Gamma, L_n)$ and there exists $v_f \in \Vert{\Gamma}$ with $L_n(v_f) = q_f^{y,j}$. We show that there exists an execution $C'_0 \trans C'_1 \trans \cdots \trans C'_n$ such that for all $1 \leq i\leq n$, for $C'_i = (\Gamma, L'_i)$ and any vertex $v \in \Vert{\Gamma}$, $L'_i(v) = q$ if and only if 
	there exists $\kappa_{v,i} \in \set{0}\cup (\set{r,b} \times [0,n])$ such that $L_i(v) = q^{\kappa_{v,i}}$. 
	
	Let $C'_0 = (\Gamma, L'_0)$ with $L_0(v) = \qinit$ for all $v\in \Vert{\Gamma}$. As $L'_0(v) = \qinit^0$ for all $v \in \Vert{\Gamma}$, the induction hypothesis holds. 
	Assume we proved that $C'_0 \trans^\ast C'_i$ where for all $v \in \Vert{\Gamma}$, $L'_i(v) = q$ if and only if 
	there exists $\kappa_{v,i} \in \set{0}\cup (\set{r,b} \times [0,n])$ such that $L_i(v) = q^{\kappa_{v,i}}$. 
	
	Let $C'_{i+1} = (\Gamma, L'_{i+1})$ be the configuration such that for all $v \in \Vert{\Gamma}$, $L'_{i+1}(v) = q$ if and only if there exists $\kappa_{v,i+1} \in \set{0}\cup (\set{r,b} \times [0,n])$ such that $L_{i+1}(v) = q^{\kappa_{v,i+1}}$. We prove now that $C'_i \trans C'_{i+1}$.
	
	Denote $C_i \transup{v_!, t} C_{i+1}$ and $t = (q_1^{\kappa_{v_!,i}}, \alpha, q_2^{\kappa_{v_!,i+1}}) \in \Delta_{k}$. From the definition of $\Delta_k$, it holds that $(q_1, \alpha, q_2) \in \Delta$. 
	Hence, if $\alpha = \tau$, $C'_i \trans C'_{i+1}$, as for all other nodes, $L'_{i+1}(v) = q$ if and only if there exists $\kappa_{v,i+1} \in \set{0}\cup (\set{r,b} \times [0,n])$ such that $L_{i+1}(v) = q^{\kappa_{v,i+1}} $. Furthermore, $L_{i+1}(v) = L_i(v) = q^{\kappa_{v,i+1}}$ hence $L'_i(v) = q = L'_{i+1}(v)$. 
	
	Assume now that $\alpha = !!m$ for some $m \in \Sigma$ and let $v\in \Neigh{v_!}$. Either 
	(a) $(L_i(v), ?m, L_{i+1}(v)) \in \Delta_{k}$, or (b) there is no $p\in Q_k$ such that $(L_i(v), ?m, p) \in \Delta_k$. In case (a), by construction of $\Delta_k$, there exists $(p_1, ?m, p_2) \in \Delta$ such that $L_i(v) = p_1^{\kappa_1}$ and $L_{i+1}(v) = p_2^{\kappa_2}$ for some $\kappa_1, \kappa_2 \in \set{0} \cup (\set{r,b} \times [1,n])$ Hence, $(L'_i(v), ?m, L'_{i+1}(v)) \in \Delta$.
	
	In case (b), we show that there is no $p\in Q$ such that $(L'_i(v), ?m, p) \in \Delta$. Assume by contradiction that this is the case, and denote $L_i(v) = p_1^\kappa$ and $(p_1, ?m, p_2) \in \Delta$. 
	If $\kappa = 0$, then by definition of $\Delta_k$, $(p_1^\kappa, ?m, p_2^{r, 1}) \in \Delta_k$ and we reach a contradiction. 
	If $\kappa = (r, j)$ for some $1 \leq j \leq k$, by definition of $\Delta_k$, $(p_1^{(r,j)}, ?m, p_2^{(r,j)})\in\Delta_k$ and again, we reach a contradiction. 
	If $\kappa = (b,j)$, 
	by definition of $\Delta_k$: $(p_1^{(b,j)}, ?m, p_2^{(r, j+1)}) \in \Delta_k$ and we reach a contradiction.
	Finally, if $\kappa = (b,k)$ then, 
	by definition of $\Delta_k$: $(p_1^{(b,k)}, ?m, p_2^{(r, k)}) \in \Delta_k$ and we reached a contradiction.
	
	Hence, there is no transition $(p_1, ?m, p_2) \in \Delta$ for some $p_2 \in Q$. 
	
	We conclude that $C'_i \trans C'_{i+1}$. Hence, there is an execution $C'_0 \trans^\ast C'_n$ of $\PP$ with $L_n(v_f) = q_f$ and $q_f$ is coverable in $\PP$.
	\Ifshort{\end{proof}}
\Iflong{\end{proofof}}

\section{Undecidability proof of \Cover\ (\cref{sec:undecidable})}\label{app:undecidability}

We reduce the coverability problem of a Minsky machine to the \Cover\ problem in broadcast networks with 6-phase-bounded protocols.

\paragraph*{Minsky Machines.}
A Minsky Machine $M$ is a tuple $M=(\Loc, \TransM, \ell_0, \ell_f, \counter_1, \counter_2)$ such that $\Loc$ is a finite state of locations, $\ellinit \in \Loc$, $\ell_f \in \Loc$, $\counter_1, \counter_2$ are two counters 
and $\TransM$ is a finite set of transitions such that $\TransM \subseteq \Loc \times
\set{\inc{\counter_i}, \dec{\counter_i}, \test{\counter_i} \mid i = 1, 2} \times \Loc$.
We denote a configuration of $M$ by a tuple $(\ell, x_1, x_2)$ where $\ell \in \Loc$ and $x_1, x_2 \in \nat$ are values of the two counters (respectively $\counter_1$ and $\counter_2$).
Let $(\ell, x_1, x_2)$ and $(\ell', x_1', x_2')$ be two configurations, and $t \in \TransM$. We note $(\ell, x_1, x_2) \transRelM{t} (\ell', x_1, x_2 )$ with $t = (\ell, \text{op}, \ell')$ and one of the following conditions holds:
\begin{itemize}
	\item $\text{op} = \inc{\counter_i}$ for some $i \in \set{1, 2}$ and $x'_i = x_i +1$ and $x'_{3-i} = x_{3-i}$;
	\item $\text{op} = \dec{\counter_i}$ for some $i \in \set{1, 2}$ and $x'_i = x_i -1$ and $x'_{3-i} = x_{3-i}$;
	\item $\text{op} = \test{\counter_i}$ for some $i \in \set{1, 2}$ and $x'_i = x_i = 0$ and $x'_{3-i} = x_{3-i}$.
\end{itemize}
The halting problem asks if there is a sequence of transitions $t_1, \dots, t_n$ such that $(\ellinit, 0,0) \transRelM{t_1} (\ell_1, x_1^1, x_2^1) \transRelM{t_2} \cdots \transRelM{t_n} (\ell_f, x_1^n, x_2^n)$ for $x_1^i, x_2^i \in \nat$ for all $1 \leq i \leq n$ and $x_1^n = x_2^n = 0$. It is known to be undecidable. We denote $X$ for the set of counters $\set{\counter_1, \counter_2}$.

\subsection{Definition of $\PP$.}

Let $M = (\text{Loc}, \text{Trans}, \ellinit, \ell_f, \counter_1, \counter_2)$ be a Minsky Machine. 
The protocol $\PP$ of the reduction is described in \cref{fig:undec:prot-line} and is defined on a finite alphabet $\Sigma$ that we define now. We start by defining $\text{OP} = \set{\testmess{\counter}{},  \todoinc{\counter}{}, \tododec{\counter}{},\ovtodoinc{\counter}{}, \ovtododec{\counter}{} \mid \counter \in X}$. Next, for $0 \leq i \leq 2$, we define $\text{OP}^i = \set{\AOp^i \mid \AOp \in \text{OP}}$ and 
$\text{OK}^i = \set{\doneop{\AOp}{i} \mid \AOp \in \text{OP}}$.
We can now define the alphabet $\Sigma = \set{\text{done, 0, 1, 2}} \cup \bigcup_{0 \leq i \leq 2} \text{OP}^i  \cup \text{OK}^i$. We also define for $0 \leq i\leq 2$ two operations: $i\ominus 1 = (i-1) \mod 3$ and $i\oplus 1 = (i+1) \mod3$.

A transition from a sub protocol (an orange box) to a state means that from any states of the sub protocol, there is an outgoing transition with the same label and to the same state. Hence, any state $q$ of $\PP_M$ has outgoing transitions $(q, ?1, \frownie)$, 
$(q, ?2, \frownie)$ and $(q, ?0, \frownie)$.

The main idea is to build a protocol that will ensure, during an initialization phase, a specific organization among processes depicted in \cref{fig:undec:line}. 
In particular, after the initialization phase, processes will be organized as line $v_0, v_1, \dots, v_n$ with the first process $v_0$ in a state $\ellinit$ corresponding to the initial location of the machine, the last process $v_n$ in a state $q^{\textsf{tail}}$, and for $1 \leq i <n$, process $v_i$ is on state $\zerostate^{b_i}$ where $b_i = i \mod 3$. This way, a process $v_i$ (for some $0\leq i <n$) can \emph{distinguish} between its two neighbors (intuitively between its right and left). For instance, if $n <5$, process $v_4$ is on a state $\zerostate^1$ and has its left neighbor $v_3$ on $\zerostate^0$ and its right neighbor $v_5$ on a state $\zerostate^2$.

\provisoire{
We make the following observations.
\begin{observation}\label{obs:undecidability:init-phase}
	Let $i \in \set{0,1,2}$. Assume that there are no broadcasts transitions of messages 0,1 or 2 in $\PP_0,\PP_1, \PP_2, \PP_M$ or $\PP_{\textsf{tail}}$. In an execution covering $q_f$, the following conditions hold:
	\begin{enumerate}
		\item An active process broadcasts exactly one message $i \in \set{0,1,2}$ and it is the first message it broadcasts;\label{obs:undecidability:init-phase:item-1}
		
		\item A process $v$ in $\PP_M$ has exactly one active neighbor which broadcasts 1;\label{obs:undecidability:init-phase:item-2}
		
		\item A process $v$ in $\PP_i$ has a exactly two active neighbors: one which broadcasts $i\ominus1$ and one which broadcasts $i \oplus 1$;\label{obs:undecidability:init-phase:item-3}
		
		\item A process $v$ in $\PP_{\textsf{tail}}$ has exactly one active neighbor which broadcasts 1.\label{obs:undecidability:init-phase:item-4}
	\end{enumerate}
	
\end{observation}
\paragraph{Explanations on Observations \ref{obs:undecidability:init-phase}.} As there is no broadcasts of any message $0,1$ or 2 in $\PP_0,\PP_1, \PP_2, \PP_M$ or $\PP_{\textsf{tail}}$, the \cref{obs:undecidability:init-phase:item-1}\ comes in a straightforward manner from the construction of \PP.
A process $v$ in $\PP_M$ received message 1 after having broadcast message 0 from one of its neighbor. Any other message received before message 1 brings $v$ in $\frownie$ or in a state from which one can not reach $\PP_M$ ($q_1^1$, $q_1^2$, $q_1^2$, or $q_1^{\textsf{tail}}$), and any message $0,1,2$ received after the reception of 1 leads to $v$ in $\frownie$. Hence, $v$ has a unique neighbor broadcasting some messages, in particular message 1, which leads to \cref{obs:undecidability:init-phase:item-2}.
The justification of \cref{obs:undecidability:init-phase:item-4}\ for a process $v$ in $\PP_{\textsf{tail}}$ is analogous: except this time $v$ starts by receiving 1, and then broadcasts 2. Any message received before the reception of 1 brings $v$ in a state from which one can not reach $\PP_{\textsf{tail}}$ ($q_1^2$ or $q_1^2$). Any message 0,1 or 2 received after the reception of 1 brings $v$ in $\frownie$.
A process $v$ in $\PP_i$ for some $i \in \set{0,1,2}$ received at least two messages: $i\ominus1$ and $i\oplus1$. First, $i\ominus1$ is the first message received (as otherwise $v$ can not reach $q_1^i$), furthermore, if $v$ receives a message $j \in \set{0,1,2}$, $v$ reaches $\frownie$, hence no message is sent before $v$ broadcasts 1. Afterwards, if $v$ receives another message than $i\oplus1$, it reaches $\frownie$. Once it received message 2, any reception of some $j \in \set{0,1,2}$ brings $v$ in $\frownie$. Hence \cref{obs:undecidability:init-phase:item-3}\ holds.

Hence, in an execution covering $q_f$, there is a set of processes organised as depicted in \cref{fig:undec:line}. Denote $V= \set{v_0, \dots, v_n}$ this set of neigbours.
Note that there might be some other neighbors of the drawn processes, but they are not broadcasting anything while their neighbors are in $\PP_M$, $\PP_{\textsf{tail}}, \PP_0, \PP_1, \PP_2$. 
}

$\PP_M$, $\PP_\textsf{tail}$ and $\PP_i$ for $0 \leq i \leq 2$ will be defined on the same alphabet of messages $\Sigma$.


\begin{itemize}
	\item We describe $\PP_M= (Q^M, \Delta^M)$ in \cref{fig:undec:prot-M}, this is where the machine is simulated: the process $v_0$ evolves in the sets $\Loc$, we will see that if it takes a transition of the machine which should not be taken (a false test to 0, a decrement not possible, or an increment not feasible because of lack of processes), then other processes will fail in their execution, and $q_f$ will not be covered.
	\item We describe $\PP_1= (Q^1, \Delta^1)$ in \cref{fig:undec:prot}. The processes evolving on $\PP_1$ will simulate counters' values, each of them representing one unit of one of the counter: when the process is on $\counter_1^1$ [resp. $\counter_2^1$] it represents a unit of counter $\counter_1$ [resp. $\counter_2$]. 

	\item $\PP_2$ and $\PP_3$ are obtained from $\PP_1$ by replacing each $j \in \set{0,1,2}$ appearing in the protocol (states and transitions) by $j\oplus 1$ for $\PP_2$ and $j\ominus 1$ for $\PP_0$. As in $\PP_1$, processes evolving on $\PP_2$ or $\PP_0$ will simulate counters' values: for $\counter \in X$, a process on $\counter^2$ or $\counter^0$ represents one unit of $\counter$.
	
	\item We describe $\PP_{\textsf{tail}} = (Q^t, \Delta^t)$ in \cref{fig:undec:prot-tail}. The process $v_n$ will check that all the operations have been correctly executed. When it will receive message "$\text{done}$" (first sent by the process $\PP_M$), it will reach state $q_f$.
	
\end{itemize}
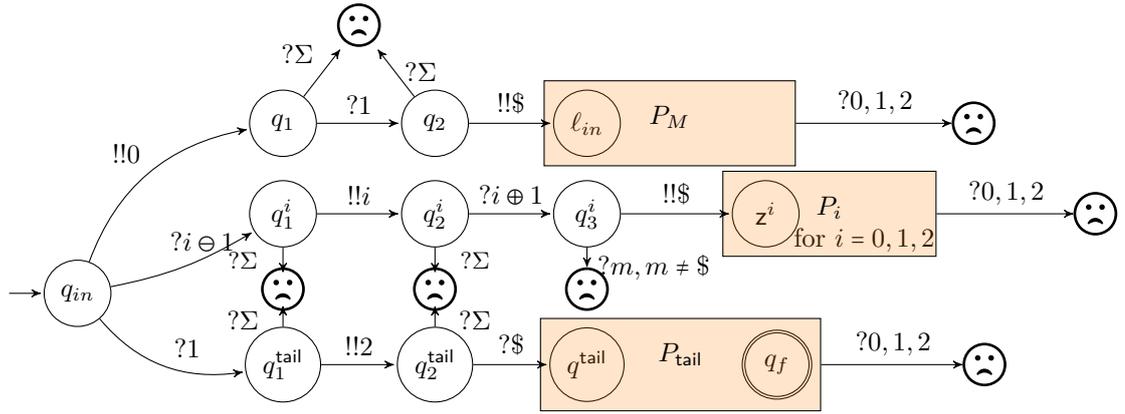
\begin{figure}
	\begin{center}
		\tikzset{box/.style={draw, minimum width=4em, text width=4.5em, text centered, minimum height=17em}}

\begin{tikzpicture}[-, >=stealth', shorten >=1pt,node distance=2cm,on grid,auto, initial text = {}] 
	\node[state,initial] (q0) {$\qinit$};
	
	\node[state] (q2) [right  of = q0, xshift = 20, yshift = 30] {$q_1^i$};
	\node[state] (q3) [right  of = q2] {$q_2^i$};
	\node[state] (q3b) [right  of = q3] {$q_3^i$};
	\node[state] (q4) [right  of = q3b, xshift = 10] {$\textsf{z}^i$};
	
	\node[state] (q1) [above = 1.2 of  q2] {$q_1$};
	\node[state] (q1b) [ right = of  q1] {$q_2$};
	\node[state] (l0) [right  of = q1b] {$\ellinit$};
	
	\node[state] (q5) [below = 2 of q2] {$q_1^{\textsf{tail}}$};
	\node[state] (q5b) [right = of q5] {$q_2^{\textsf{tail}}$};
	\node[state] (q6) [right = of q5b] {$q^{\textsf{tail}}$};
	
	\node [] (l0b) [right = 2.5 of l0] {};
	\node [] (q4b) [right = 2 of q4] {};
	\node [state,accepting] (q6b) [right = 2.5 of q6] {$q_f$};
	\node [] () [right =  0 of q4b, yshift = -10, xshift = -20] {for $i = 0,1, 2$};
	
	\node[draw, fill = orange, fill opacity = 0.2, text opacity = 1, fit=(l0) (l0b), text height=0.04 \columnwidth, label ={[shift={(0ex,-5ex)}]:$\PP_M$}] (AM) {};
	\node[draw, fill = orange, fill opacity = 0.2, text opacity = 1, fit=(q4) (q4b), text height=0.06 \columnwidth, label ={[shift={(0ex,-5ex)}]:$\PP_i$}] (A1) {};
	\node[draw, fill = orange, fill opacity = 0.2, text opacity = 1, fit=(q6) (q6b), text height=0.06 \columnwidth, label ={[shift={(0ex,-5ex)}]:$\PP_{\textsf{tail}}$}] (AT) {};
	
	\node[inner sep = -0] (b1) [above = 1.3 of q1, xshift = 1cm] {$\Huge\frownie$};
	\node[inner sep = -3] (b2) [below = 1of q2, xshift = 0] {$\Huge\frownie$};
	\node[inner sep = -1] (b2b) [below = 1 of q3b, xshift = 0] {$\Huge\frownie$};
	\node[inner sep = -3] (b2bb) [below = 1 of q3, xshift = 0] {$\Huge\frownie$};
	\node[inner sep = -3] (b3) [right = 4 of AM] {$\Huge\frownie$};
	\node[inner sep = -3] (b4) [right =3.5 of A1] {$\Huge\frownie$};
	\node[inner sep = -3] (b5) [right = 4 of AT] {$\Huge\frownie$};
	
	\path[->] 
	(q0) edge [bend left]node {$!!0$} (q1) 
	(q1) edge  node {$?1$} (q1b)
	(q1b) edge node {$!!\$$} (l0)
	
	(q0) edge [bend right = 10]node [xshift = 22, yshift =2]{$?i\ominus1$} (q2)
	(q2) edge node {$!!i$} (q3)
	(q3) edge node {$?i\oplus 1$} (q3b)
	(q3b) edge node {$!!\$$} (q4)
	
	(q0) edge [bend right]node {$?1$} (q5)
	(q5) edge node {$!!2$} (q5b)
	(q5b) edge node {$?\$$} (q6)
	
	(q2) edge node [below, yshift = 10, xshift = -15]{
		\begin{tabular}{c}
			$?\Sigma$
		\end{tabular}
		} (b2)
	(q3) edge node [below, yshift = 10, xshift = 15]{\begin{tabular}{c c}
			$?\Sigma$
	\end{tabular}} (b2bb)
	(q5) edge node [below, yshift = 5, xshift = -15]{$?\Sigma$} (b2)
	(q5b) edge node [below, yshift = 5, xshift =15]{$?\Sigma$} (b2bb)
	(q3b) edge node [below, yshift = 5, xshift =25]{$?m, m\neq \$$} (b2b)
	(q1) edge node {$?\Sigma$} (b1)
	(q1b) edge node [right] {$?\Sigma$} (b1)
	(AM) edge node {$?0,1,2$} (b3)
	(A1) edge node {$?0,1,2$} (b4)
	(AT) edge node {$?0,1,2$} (b5)
	;
\end{tikzpicture}
	\end{center}
	\caption{Description of \PP, the protocol of the reduction. }\label{fig:undec:prot-line}
\end{figure}
\begin{figure}
	\begin{center}
		\tikzset{box/.style={draw, minimum width=4em, text width=4.5em, text centered, minimum height=17em}}

\begin{tikzpicture}[-, >=stealth', shorten >=1pt,node distance=2cm,on grid,auto, initial text = {}] 
	\node[rounded rectangle, draw, inner sep =2] (v0) {$v_0: \ellinit$};
	\node[rounded rectangle, draw, inner sep =2] (v1) [right of =v0] {$v_1: \zerostate^1$};
	\node[rounded rectangle, draw, inner sep =2] (v2) [right of =v1] {$v_2: \zerostate^2$};
	\node[rounded rectangle, draw, inner sep =2] (v0b) [right of =v2] {$v_3: \zerostate^0$};
	\node[rounded rectangle, draw, inner sep =2] (v1b) [right of = v0b] {$v_4:\zerostate^1$};
	\node[rounded rectangle, draw, inner sep =2] (v1bb) [right of = v1b, xshift = 10] {$v_{n-1}: \zerostate^1$};
	\node[rounded rectangle, draw, inner sep =2] (v2t) [right = of v1bb] {$v_n: q^{\textsf{tail}}$};
	
	\path (v1b) -- node[auto=false]{\ldots} (v1bb);
	
	\path[-] 
	(v0) edge node {} (v1)
	(v1) edge node {} (v2)
	(v2) edge node {} (v0b)
	(v0b) edge node {} (v1b) 
	(v1bb) edge node {} (v2t) 
	;
\end{tikzpicture}
	\end{center}
	\caption{Particular shape of configurations from which we can simulate the Minsky machine.}\label{fig:undec:line}
\end{figure}
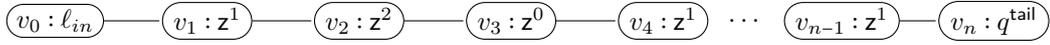 
\begin{figure}
\begin{minipage}[c]{0.45\linewidth}
	\tikzset{box/.style={draw, minimum width=4em, text width=4.5em, text centered, minimum height=17em}}

\begin{tikzpicture}[-, >=stealth', shorten >=1pt,node distance=2cm,on grid,auto, initial text = {}] 
	\node[state] (l) {$\ell$};
	\node[state] (q) [right = of l, yshift = 0] {$q_t$};
	\node[state] (l2) [right = 4.5 of l] {$\ell'$};
	\node[inner sep = 0] (d1) [below = of l,yshift = 10] {$\Huge\frownie$};
	\node[inner sep = 0] (d2) [below = of q,yshift = 10] {$\Huge\frownie$};
	\node[inner sep = 0] (d3) [below = of l2,yshift = 10] {$\Huge\frownie$};
	
	\path[->] 
	(l) edge node [xshift = 0]{$!!\incmess{\counter}{0}$} (q)
	(q) edge node [xshift = -0, yshift = 0]{$!!\okincmess{\counter}{0}$} (l2)
	(q) edge node [right] {\begin{tabular}{l l l}
			$?m, $\\
			$ m \nin \set{\incmess{\counter}{1}, $\\
			$\overincmess{\counter}{1}
			}$
		\end{tabular}
	} (d2)
	(l) edge node [right] {\begin{tabular}{l l }
			$?m, $\\
			$m\nin \text{OK}^1$
		\end{tabular}
	} (d1)
	(l2) edge node [right] {\begin{tabular}{l l }
			$?m,  $\\
			$m \nin  \text{OK}^1$
		\end{tabular}
	} (d3)
	;
\end{tikzpicture}
	\subcaption{Translation of a decrement transition $t = (\ell, \inc{\counter}, \ell')$.}
\end{minipage}
\hfill
\begin{minipage}[c]{0.45\linewidth}
	\tikzset{box/.style={draw, minimum width=4em, text width=4.5em, text centered, minimum height=17em}}

\begin{tikzpicture}[-, >=stealth', shorten >=1pt,node distance=2cm,on grid,auto, initial text = {}] 
	\node[state] (l) {$\ell$};
	\node[state] (q) [right = of l, yshift = 0] {$q_t$};
	\node[state] (l2) [right = 4.5 of l] {$\ell'$};
	\node[inner sep = 0] (d1) [below = of l,yshift = 10] {$\Huge\frownie$};
	\node[inner sep = 0] (d2) [below = of q,yshift = 10] {$\Huge\frownie$};
	\node[inner sep = 0] (d3) [below = of l2,yshift = 10] {$\Huge\frownie$};
	
	\path[->] 
	(l) edge node [xshift = 0]{$!!\decmess{\counter}{0}$} (q)
	(q) edge node [xshift = -0, yshift = 0]{$!!\okdecmess{\counter}{0}$} (l2)
	(q) edge node [right] {\begin{tabular}{l l l}
			$?m,$\\
			$ m \nin  \set{\decmess{\counter}{1},$\\
				$ \overdecmess{\counter}{1}}$
		\end{tabular}
	} (d2)
	(l) edge node [right] {\begin{tabular}{l l}
			$?m, $\\
			$m \nin	\text{OK}^1$
		\end{tabular}
	} (d1)
	(l2) edge node [right] {\begin{tabular}{l l}
			$?m,$\\
			$	m \nin  \text{OK}^1$
		\end{tabular}
	} (d3)
	;
\end{tikzpicture}

%
%
	\subcaption{Translation of a decrement transition $t = (\ell, \dec{\counter}, \ell')$.}
\end{minipage}
\\
\begin{minipage}[c]{0.45\linewidth}
	\vspace*{5mm}
	\tikzset{box/.style={draw, minimum width=4em, text width=4.5em, text centered, minimum height=17em}}

\begin{tikzpicture}[-, >=stealth', shorten >=1pt,node distance=1.9cm,on grid,auto, initial text = {}] 
	\node[state] (l) {$\ell$};
	\node[state] (q) [right = of l, yshift = 0] {$q_t$};
	\node[state] (l2) [right = of q] {$\ell'$};
	\node[inner sep =0] (d1) [below = of l,yshift = 10] {$\Huge\frownie$};
	\node[inner sep = 0] (d2) [below = of q,yshift = 10] {$\Huge\frownie$};
	\node[inner sep = 0] (d3) [below = of l2,yshift = 10] {$\Huge\frownie$};
	
	\path[->] 
	(l) edge node [xshift = 0]{$!!\text{test}_\counter^0$} (q)
	(q) edge node [xshift = -0, yshift = 0]{$!!\ovtest{\counter}{0}$} (l2)
	(q) edge node [right] {\begin{tabular}{l l }
			$?m,$\\
			$ m \neq {\text{test}^1_\counter}$
		\end{tabular}
	} (d2)
	(l) edge node [right] {\begin{tabular}{ll}
			$?m,$\\
			$m \nin \text{OK}^1$
		\end{tabular}
	} (d1)
	(l2) edge node [right] {\begin{tabular}{ll}
			$?m,$\\
			$m \nin \text{OK}^1$
		\end{tabular}
	} (d3)
	;
\end{tikzpicture}

%
%
%
	\subcaption{Translation of a test transition $t = (\ell, \test{\counter}, \ell')$.}
\end{minipage}
\hfill 
\begin{minipage}[c]{0.4\linewidth}
	\tikzset{box/.style={draw, minimum width=4em, text width=4.5em, text centered, minimum height=17em}}

\begin{tikzpicture}[-, >=stealth', shorten >=1pt,node distance=2cm,on grid,auto, initial text = {}] 
	\node[state] (l) {$\ell_f$};
	\node[state] (q) [right = of l] {$q_f^M$};

	\path[->] 
	(l) edge node [xshift = 0]{$!!\text{done}$} (q)
	;
\end{tikzpicture}
	\subcaption{End of the simulation.}
\end{minipage}
\caption{Description of $\PP_M$.}\label{fig:undec:prot-M}
\end{figure}
\begin{figure}
	\begin{center}
		\tikzset{box/.style={draw, minimum width=4em, text width=4.5em, text centered, minimum height=17em}}

\begin{tikzpicture}[-, >=stealth', shorten >=1pt,node distance=2cm,on grid,auto, initial text = {}] 
	\node[state] (qt) {$q^{\textsf{tail}}$};
	\node[state] (rep) [right = 4 of qt] {$q^t_{\textsf{op}}$};
	\node[inner sep = -3] (bad1) [left = 4 of qt] {$\Huge\frownie$};
	\node[inner sep = -3] (bad2) [right = 3 of rep] {$\Huge\frownie$};
	\node[state, accepting] (qf) [below = of qt, yshift = 10] {$q_f$};
	
	\node[] () [below = 1.2 of rep] {\begin{tabular}{c c}
			$\forall \textsf{op} \in \set{\overincmess{\counter}{}, \overdecmess{\counter}{},$\\
			$\testmess{\counter}{} \mid \counter \in X}$
		\end{tabular}};
	
	\path[->] 
	(qt) edge node [xshift = 0]{$?\text{done}$} (qf)
	(qt) edge node [] {\begin{tabular}{c c c}
			$?\AOp^1$\\
			$\forall \AOp \in \set{\incmess{\counter}{}, \decmess{\counter}{} \mid$\\
			$  \counter \in X}$
		\end{tabular}
		 } (bad1)
	(qt) edge [bend left = 10] node {$?\AOp^1$} (rep)
	(rep) edge [bend left = 10] node {$?\ovanOp{\AOp}{1}$} (qt)
	(rep) edge node {$?m, \forall m \in \Sigma$} (bad2)
	;
\end{tikzpicture}
	\end{center}
	\caption{Description of $\PP_{\textsf{tail}}$.}\label{fig:undec:prot-tail}
\end{figure}
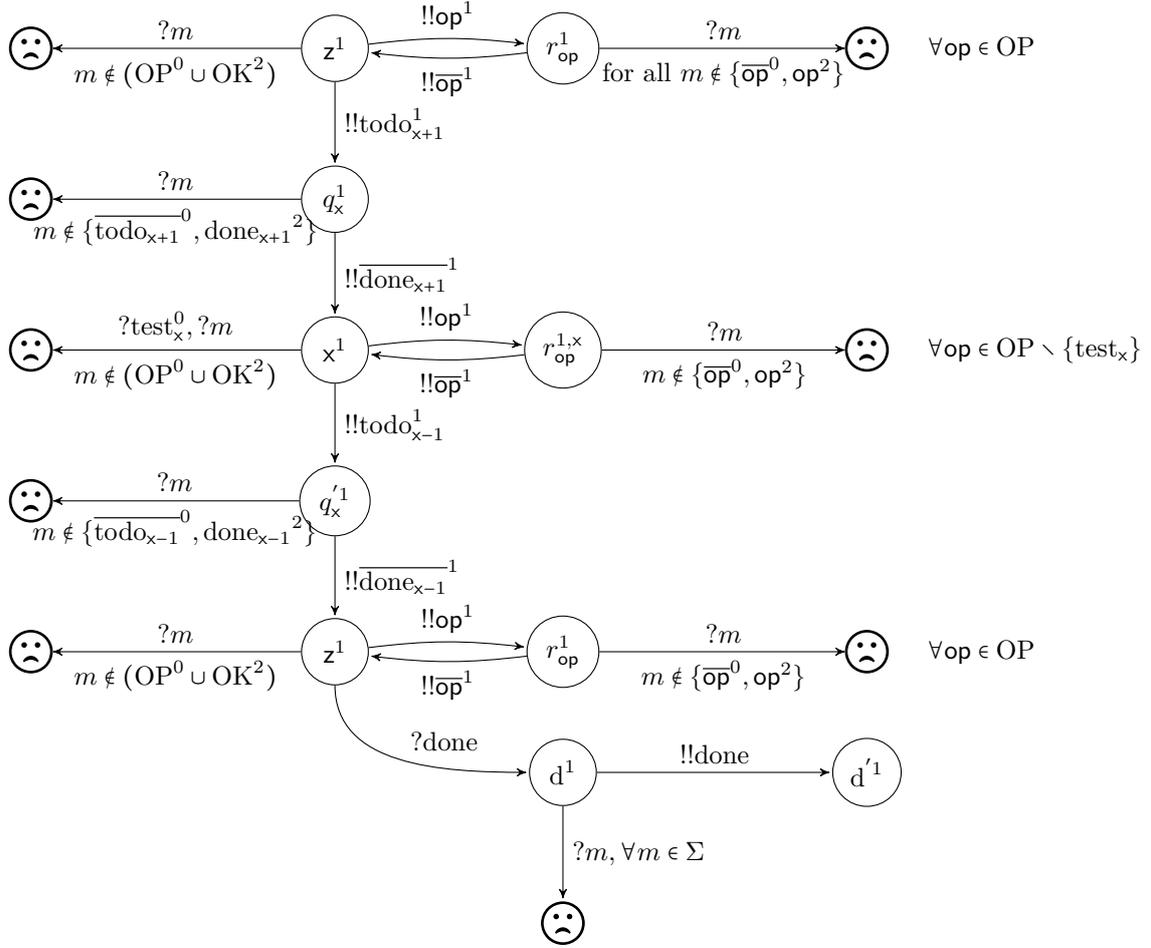
\begin{figure}
	\begin{center}
		\tikzset{box/.style={draw, minimum width=4em, text width=4.5em, text centered, minimum height=17em}}

\begin{tikzpicture}[-, >=stealth', shorten >=1pt,node distance=2cm,on grid,auto, initial text = {}] 
	\node[state] (0) {$\zerostate^1$};
	\node[state] (0x) [below =of 0] {$q_\counter^1$};
	\node[state] (x) [below = 2 of 0x] {$\counter^1$};
	\node[state] (x0) [below = 2 of x] {$q_\counter^{'1}$};
	\node[state] (0b) [below =  of x0] {$\zerostate^1$};
	
	\node[state] (rep0) [right = 3 of 0]{$r_{\textsf{op}}^{1}$};
	\node[state] (rep1) [right = 3 of x]{$r_{\textsf{op}}^{1,\counter}$};
	\node[state] (rep2) [right = 3 of 0b]{$r_{\textsf{op}}^{1}$};

	\node[inner sep= -3] (bad1) [left = 4 of 0] {$\Huge\frownie$};
	\node[inner sep= -3] (bad2) [left = 4 of 0x] {$\Huge\frownie$};
	\node[inner sep= -3] (bad3) [left = 4 of x] {$\Huge\frownie$};
	\node[inner sep= -3] (bad4) [left = 4 of x0] {$\Huge\frownie$};
	\node[inner sep= -3] (bad5) [left = 4 of 0b] {$\Huge\frownie$};
	
	\node[inner sep= -3] (bad6) [right = 4 of rep0] {$\Huge\frownie$};
	\node[inner sep= -3] (bad7) [right = 4 of rep1] {$\Huge\frownie$};
	\node[inner sep= -3] (bad8) [right = 4 of rep2] {$\Huge\frownie$};
	
	\node[] () [right  = 1.5 of bad6] {$\forall \textsf{op} \in \text{OP}$};
	\node[] () [right  = 1.5 of bad7, xshift = 20] {$\forall \textsf{op} \in \text{OP} \setminus \set{\text{test}_\counter}$};
	\node[] () [right = 1.5 of bad8] {$\forall \textsf{op} \in \text{OP}$};
	
	\node[state] (d1) [below = 1.6 of rep2] {$\text{d}^1$};
	\node[state] (d2) [right = 4 of d1] {$\text{d}^{'1}$};
	\node[inner sep = 0] (bad9) [below = of d1] {$\Huge\frownie$};

	\path[->]
	(0) edge node {$!!\incmess{\counter}{1}$} (0x)
	(0x) edge node {$!!\okoverincmess{\counter}{1}$} (x)
	(x) edge node {$!!\decmess{\counter}{1}$} (x0)
	(x0) edge node {$!!\okoverdecmess{\counter}{1}$} (0b)
	


	(0) edge [bend left = 7] node []{$!!\textsf{op}^1$} (rep0)
	(rep0) edge [bend left = 7] node []{$!!\overline{\textsf{op}}^1$} (0)
	
	(x) edge [bend left = 7] node []{$!!\textsf{op}^1$} (rep1)
	(rep1) edge [bend left = 7] node []{$!!\overline{\textsf{op}}^1$} (x)
	
	(0b) edge [bend left = 7] node []{$!!\textsf{op}^1$} (rep2)
	(rep2) edge [bend left = 7] node []{$!!\overline{\textsf{op}}^1$} (0b)
	
%
	(0) edge node [above, yshift = -0] {$?m$} node[below, yshift=-0]{
	$m \notin (\text{OP}^0 \cup \text{OK}^2)$}
		 (bad1)
	
	(0x) edge node [above, yshift = 0] {$?m$}
	node[below, yshift=0]{$m \notin 
			 \set{\okincmess{\counter}{0}, \overincmess{\counter}{2}
				}$
	} (bad2)

	(x) edge [bend left = 0]node [above] {$?\testmess{\counter}{0}, ?m$}
	node[below, yshift=0]{
		 $m\notin (\text{OP}^0 \cup \text{OK}^2)$}
	 (bad3)

	
	(x0) edge node [above, yshift = -0] {$?m$} node[below]{$m \notin\set{\okdecmess{\counter}{0} ,\overdecmess{\counter}{2}}$
	} (bad4)

	(0b) edge node [above, yshift = 0] {$?m$} node[below]{
		 $m\notin (\text{OP}^0\cup \text{OK}^2)$
	} (bad5)


	(rep0) edge node [above, yshift = 0] 
			{$?m$} node [below, yshift=0] 
			{for all $m\notin\set{\overline{\textsf{op}}^{0}, {\textsf{op}}^{2}}$}
	 (bad6)

	(rep1) edge node [above, yshift = 0] {
			$?m$} node[below]{
			$m\notin \set{\overline{\textsf{op}}^{0}, {\textsf{op}}^{2}}$
	} (bad7)

	(rep2) edge node [above, yshift = 0] {$?m$}  node[below]{$m
			\notin \set{\overline{\textsf{op}}^{0},{\textsf{op}}^{2} }$
		} (bad8)

	(0b) edge [out = -90, in = 180] node {$?\text{done}{}$} (d1)
	(d1) edge node {$!!\text{done}$} (d2)
	(d1) edge node {$?m, \forall m \in \Sigma$} (bad9)
	; 

\end{tikzpicture}
	\end{center}
	\caption{Description of $\PP_1$. We draw two states $\zerostate^1$ for readability's sake. $\PP_{2}$ is obtained by replacing each $j \in \set{0, 1, 2}$ by $j \oplus 1$ and $\PP_0$ by replacing each $j  \in \set{0, 1, 2}$ by $j \ominus 1$.
	}\label{fig:undec:prot}
\end{figure}

State $\frownie$ is a deadlock state (there is no outgoing transitions from $\frownie$). In order for $\PP$ to be phase-bounded, we actually duplicate $\frownie$ state as much as needed so there is no contradictions (one state per reception phase). For instance, if there are two receptions transitions $(q, ?m, \frownie)$, $(q', ?m', \frownie)$ with $q$ and $q'$ not in the same phase, we duplicate $\frownie$ into two states $\frownie, \frownie'$ such that $(q, ?m, \frownie)$, and $(q', ?m', \frownie')$.
 For ease of notation, when referring to any of those deadlock states, we use $\frownie$.

As a result, we get the following lemma.

\begin{lemma}\label{lemma:undecidability-bounded-phases-prot}
$\PP$ is 6-phase-bounded.
\end{lemma}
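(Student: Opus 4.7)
The plan is to exhibit an explicit partition of $Q$ into the $13 = 2\cdot 6 + 1$ sets $Q_0, Q_1^b, Q_1^r, Q_2^b, Q_2^r, \dots, Q_6^b, Q_6^r$ required by the definition, and then to check that every transition of $\Delta$ fits one of the six allowed shapes. The key observation is that, apart from the initial state $\qinit$, the states of $\PP$ are used by three disjoint ``roles''---the head process that runs the initialization followed by $\PP_M$, the tail process that runs its initialization followed by $\PP_{\textsf{tail}}$, and an intermediate process that runs its initialization followed by one of $\PP_0,\PP_1,\PP_2$. Since these roles share only $\qinit$, a phase assignment can be chosen consistently on each role.

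Concretely, I would set $Q_0 = \set{\qinit}$. Along the head's trajectory, I would put $q_1 \in Q_1^b$, $q_2 \in Q_2^r$, and place every state of $\PP_M$ (the machine locations, the intermediate states $q_t$, the halting location $\ell_f$, and $q_f^M$) in $Q_3^b$: after the broadcast of $\$$, the head only performs broadcasts between such states. Along the tail's trajectory, I would put $q_1^{\textsf{tail}} \in Q_1^r$, $q_2^{\textsf{tail}} \in Q_2^b$ and the whole of $\PP_{\textsf{tail}}$ (the states $q^{\textsf{tail}}$, $q^t_{\textsf{op}}$, and the accepting $q_f$) in $Q_3^r$, since from $q^{\textsf{tail}}$ onwards the tail only receives. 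For each intermediate role $i \in \set{0,1,2}$, the initialization states $q_1^i, q_2^i, q_3^i$ go respectively in $Q_1^r, Q_2^b, Q_3^r$; every state of the main loop of $\PP_i$, namely $\zerostate^i$ together with $q_\counter^i$, $\counter^i$, $q_\counter'^i$, $r_{\textsf{op}}^i$ and $r_{\textsf{op}}^{i,\counter}$, goes in $Q_4^b$; the state $\text{d}^i$ reached by the reception of $\text{done}$ goes in $Q_5^r$; and finally $\text{d}^{'i}$, reached by the broadcast of $\text{done}$ from $\text{d}^i$, goes in $Q_6^b$. The sink $\frownie$ is then duplicated as already indicated in the paper: one copy per reception-phase that receives failing messages, placed in the appropriate $Q_j^r$ according to the set containing the source state of each failure transition. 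Since $\frownie$ is a dead state, this duplication preserves the semantics.

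The verification that every transition fits one of the six allowed conditions is then a routine case analysis. The initialization transitions strictly alternate broadcasts and receptions, so they fit conditions~(4) and~(5). The internal moves inside $\PP_M$, inside $\PP_{\textsf{tail}}$, and inside the loop of $\PP_i$ all stay within a single broadcast (respectively reception) set, fitting condition~(2) (resp.~(3)). The two ``escape'' transitions in $\PP_i$---the reception of $\text{done}$ from $\zerostate^i \in Q_4^b$ to $\text{d}^i \in Q_5^r$, and the broadcast of $\text{done}$ from $\text{d}^i \in Q_5^r$ to $\text{d}^{'i} \in Q_6^b$---fit conditions~(4) and~(5) respectively. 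Each failure edge fits condition~(3), (4), or~(6) by the choice of the appropriate $\frownie$ copy. The only conceivable obstacle would be a state forced into two incompatible sets of the partition, but since every state other than $\qinit$ belongs to exactly one role, no such conflict arises and the partition is well-defined; hence $\PP$ is $6$-phase-bounded.
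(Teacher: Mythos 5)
Your partition coincides with the one the paper exhibits ($Q_0=\set{\qinit}$; $q_1\in Q_1^b$, $q_2\in Q_2^r$, $\PP_M\subseteq Q_3^b$; $q_1^{\textsf{tail}}\in Q_1^r$, $q_2^{\textsf{tail}}\in Q_2^b$, $Q^t\subseteq Q_3^r$; $q_1^i,q_2^i,q_3^i$ in $Q_1^r,Q_2^b,Q_3^r$; the loop of $\PP_i$ in $Q_4^b$; $\text{d}^i\in Q_5^r$, $\text{d}'^i\in Q_6^b$; $\frownie$ duplicated across the reception sets), and the case analysis is the same. The proof is correct and takes essentially the same approach as the paper.
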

\begin{proof}
We give the partition of states satisfying the 6-phase-bounded protocols condition.
\begin{itemize}
	\item $Q_0 = \set{\qinit}$;
	\item $Q_1^b= \set{q_1}$ and $Q_1^r= \set{q_1^i\mid i \in \set{0,1,2}}\cup \set{q_1^{\textsf{tail}}}\cup \set{\frownie}$;
	\item $Q_2^b = \set{q_2^i\mid i \in \set{0,1,2}} \cup \set{q_2^{\textsf{tail}}}$ and $Q_2^r = \set{q_2} \cup \set{\frownie}$;
	\item $Q_3^b = \set{\ell, q_t, q_f^M \mid \ell \in \Loc, t \in \TransM}$ and $Q_3^r = \set{q_3^i \mid i \in \set{0,1,2}} \cup Q^t \cup \set{\frownie}$;
	\item $Q_4^b = \set{q \in Q^i \setminus{\frownie, \text{d}^i, \text{d}'^i} i \in \set{0,1,2}}$ and $Q_4^r = \set{\frownie}$;
	\item $Q_5^b = \emptyset$ and $Q_5^r = \set{\text{d}^i \mid i \in \set{0,1,2}} \cup \set{\frownie}$;
	\item $Q_6^b = \set{\text{d}'^i \mid i \in \set{0,1,2}}$ and $Q_6^r= \set{\frownie}$.
\end{itemize}
%
%
%
%
%
%
%
\end{proof}

Note that $\PP$ is also $k$-phase-bounded for any $k > 6$ by taking $Q_7^r= \set{\frownie}$, $Q_7^b = \emptyset$ and $Q_j^y = \emptyset$ for any $j \geq 8$ and $y \in \set{r,b}$.

In the rest of this section, we denote $b_i = i \mod 3$ for all $i \in \nat$.

\subsection{Completeness of the Reduction.}

\begin{lemma}
	If there is a transition sequence $t_1, \dots t_n$ such that $(\ellinit, 0,0) \transRelM{t_1} (\ell_1, x_1^1, x_2^1) \transRelM{t_2} \cdots \transRelM{t_n} (\ell_f, 0, 0)$, then there exist $C \in \II, C' =(\Gamma' ,L') \in \CC$, such that $C \trans^\ast C'$ and there exists $v \in \Vert{\Gamma'}$ such that $L'(v) = q_f$.
\end{lemma}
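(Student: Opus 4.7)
The proof will be a direct construction of a covering execution. Let $N = \max_{0 \le j \le n}(x_1^j + x_2^j)$ be the maximal total counter value reached during the machine's execution, and pick an integer $K \ge N+1$ such that $K \equiv 2 \pmod 3$. I will work with the line topology $\Gamma = (\{v_0,v_1,\dots,v_K\}, E)$, where $E = \{\langle v_i, v_{i+1}\rangle \mid 0 \le i < K\}$, starting from the initial configuration $C_0$ in which every node is in $\qinit$. The plan is to exhibit an execution $C_0 \trans^* C_{\mathrm{init}} \trans^* C_{\mathrm{final}}$ in three successive stages: initialization, simulation, and final propagation of $\mathbf{done}$.

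For the \emph{initialization} stage, I will produce a cascade of broadcasts leading to the configuration $C_{\mathrm{init}}$ in which $v_0$ is in $\ellinit$, each intermediate node $v_i$ (for $1 \le i \le K-1$) is in $\zerostate^{b_i}$ with $b_i = i \bmod 3$, and $v_K$ is in $q^{\textsf{tail}}$ (note that $b_{K-1} = 1$ since $K \equiv 2 \pmod 3$, so the tail can indeed consume a message $1$ from its left neighbor). The cascade runs as follows: $v_0$ broadcasts $0$, $v_1$ receives it and broadcasts $1$, $v_2$ receives it and broadcasts $2$, and so on; each $v_i$ consumes $b_{i-1}$ from its left neighbor, broadcasts $b_i$, and consumes $b_{i+1}$ from its right neighbor before finally broadcasting $\$$ and reaching $\zerostate^{b_i}$ (or $q^{\textsf{tail}}$ for the last node). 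The $\$$ broadcasts may be interleaved after all subscripted $b_i$ messages have travelled, since they serve only to release each node into its stable state.

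For the \emph{simulation} stage, I will proceed by induction on $j \le n$, maintaining the invariant that after simulating $t_1,\dots,t_j$ we reach a configuration in which $v_0$ is in $\ell_j$, $v_K$ is in $q^{\textsf{tail}}$, and among $v_1,\dots,v_{K-1}$ exactly $x_1^j$ nodes are in some state $\counter_1^{b_i}$, exactly $x_2^j$ nodes in some state $\counter_2^{b_i}$, and the remaining ones in $\zerostate^{b_i}$. To simulate $t_{j+1} = (\ell_j,\mathrm{op},\ell_{j+1})$, the head broadcasts the corresponding request on subscript $0$, and I propagate the pair of messages along the line following exactly the ``spontaneous broadcast'' pattern described in Section~\ref{subsec:undec:transmission}: an arbitrarily chosen node $v_i$ (located appropriately, and in a $\counter^{b_i}$ state of the right counter for a decrement, in any $\zerostate^{b_i}$ state with $i \ge N$ for an increment, and none for a test) performs the ``execution'' pair $(\text{td}_{b_i},\overline{\text{d}}_{b_i})$, while every node to its left and to its right performs a transmission pair; the tail $v_K$ absorbs the final $\overline{\text{op}}$ and returns to $q^{\textsf{tail}}$. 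Because the counter is correctly valued by induction hypothesis (zero counters contain no $\counter$-node, decrementable counters contain at least one), the guess of which node performs the execution can always be realized, and a routine check against the $\frownie$-transitions shows that no neighboring node is derailed.

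For the \emph{final propagation}, once the head reaches $\ell_f$ with both counters at $0$, every $v_i$ with $1 \le i \le K-1$ is in $\zerostate^{b_i}$. The head broadcasts $\mathbf{done}$ and moves to $q_f^M$; then each $v_i$ successively fires its $?\mathbf{done}$ transition to $\text{d}^{b_i}$ followed by $!!\mathbf{done}$ to $\text{d}'^{b_i}$, until $v_{K-1}$ emits $\mathbf{done}$ and $v_K$ fires $q^{\textsf{tail}} \transup{?\mathbf{done}} q_f$, covering $q_f$. The main delicate point will be verifying step by step that the spontaneous-broadcast interleaving of Section~\ref{subsec:undec:transmission} composes correctly with the execution variants $(\text{td}_{b_i},\overline{\text{d}}_{b_i})$ and $(\text{d}_{b_i},\overline{\text{d}}_{b_i})$ used after the executing node, and that during a test no counter-node gets sent to $\frownie$ because none of them is in $\counter^{b_i}$ for the tested counter; this is exactly where the fact that $(\ell_j,\mathrm{op},\ell_{j+1})$ is a valid machine transition is used.
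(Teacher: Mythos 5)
Your construction follows the paper's proof essentially step for step: the same line topology, the same three stages (initialization cascade, inductive simulation maintaining the invariant on the head, the tail and the counter nodes, then the final propagation of $\mathbf{done}$), and the same mechanism where exactly one node ``executes'' each increment or decrement while all others transmit. There is, however, one concrete quantitative slip. The intermediate node adjacent to the tail (your $v_{K-1}$, which broadcasts with subscript $1$) can never be the node that executes an increment: it would first emit the message $\incmess{\counter}{1}$, and $q^{\textsf{tail}}$ has a transition to $\frownie$ on precisely that reception, so the tail would be derailed. Hence the executing node must always be found among $v_1,\dots,v_{K-2}$ (your condition ``$i\ge N$'' points at exactly the forbidden position and should be the opposite inequality), and in particular $v_{K-1}$ can never become a counter node, so the counter sum can never exceed $K-2$. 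With your choice $K\ge N+1$ you only guarantee $K-2\ge N-1$ usable positions, which fails exactly when the machine's counter sum reaches its maximum $N$. The paper avoids this by taking the number of intermediate counter nodes to be at least $N+1$ (it sets $m=m'+1$ where $m'$ is the maximal counter sum, before rounding up for the congruence condition); you should likewise require $K\ge N+2$. Apart from this sizing issue, your sketch is the paper's argument.
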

\begin{proof}
	We denote by $m'$ the maximal value reached by counters during the machine execution, i.e. $m' = \max_{1 \leq i \leq n}(x_1^i + x_2^i)$.
	We define $m$ as $m'+1$.
	We suppose wlog that $m \mod 3 =1$, if it is not the case, we redefine $m$ as the first integer greater than $m$ equal to 1 modulo 3.
	
	Define $\Gamma = (V, E)$ such that: $V = \set{v_0, v_1, \dots, v_m, v_{m+1}}$ and $E = \set{(v_i, v_{i+1}) \mid 0 \leq i < m+1}$, and take $C_0 = (\Gamma, L_0)$ where for all $v \in V$, $L_0(v) = \qinit$.
	
	We define the following sequence of configurations: 
	$C_0 \transup{v_0, t_0} C_1 \transup{v_1, t_1} \dots \transup{v_{m+1}, t_{m+1}} C_{m+2} \transup{v_0, t'_0} C_{m+2} \transup{v_1, t'_1} \cdots \transup{v_m, t'_m} C_{2m+3}$ where:
	$$
	t_i = \left\{
	\begin{array}{lll}
		(\qinit, !!0, q_1)& \mbox{for } i = 0 \\
		(q_1^{b_i}, !!b_i, q_2^{b_i}) &\mbox{for } 1 \leq i \leq m\\
		(q_1^{\textsf{tail}}, !!2, q^{\textsf{tail}}) & \mbox{for } i = m+1.
	\end{array}
	\right.
	$$
	and
	$$
	t'_i = \left\{
	\begin{array}{lll}
		(q_2, !!\$, \ellinit)& \mbox{for } i = 0 \\
		(q_3^{b_i}, !!\$, \zerostate^{b_i}) &\mbox{for } 1 \leq i \leq m.
	\end{array}
	\right.
	$$
	\Iflong{
	Denote $C_i = (\Gamma, L_i)$ for all $1 \leq i \leq m+2$. 
	It holds that, $L_1(v_0) = q_1$, 
	$L_1(v_1) = q_1^1$ 
	and for all $v \in V \setminus\set{v_0, v_1}$, $L_1(v) = \qinit$, and for all $1< i < m+2$:
	$$
	L_i(v_j) = \left\{
	\begin{array}{lllll}
		\ellinit& \mbox{for } j=0\\
		\textsf{z}^{b_j}& \mbox{and }0< j \leq i-2\\
		q_2^{b_j} & \mbox{and } 0<j, j = i -1, i\leq m+1\\
		q_1^{b_j} & \mbox{for } 0<j, j = i\mbox{ and } i < m+1\\
		\qinit & \mbox{for } j > i\\
		q_1^{\textsf{tail}} & \mbox{for } j = i = m+1\\
		q^{\textsf{tail}} & \mbox{for } j =m+1,  i = m+2\\
	\end{array}
	\right.
	$$
	and $L_{m+2}(v_0) = \ellinit$, $L_{m+2}(v_{m+1}) = q^{\textsf{tail}}$ and for all $1 \leq i \leq m$, $L_{m+2}(v_i) = \mathsf{z}^{b_i}$.}

	\Ifshort{
		Denote $C_{m+2} = (\Gamma, L_{m+2})$ and $C_{2m+3} = (\Gamma, L_{2m+3})$, it holds that: $L_{m+2}(v_0) = q_2$, and $L_{m+2}(v_i) = q_3^{b_i}$ for all $1\leq i \leq m$ and $L_{m+2}(v_{m+1}) = q_2^{\textsf{tail}}$.
		And:  $L_{2m+3}(v_0) = \ellinit$, $L_{2m+3}(v_{m+1}) = q^{\textsf{tail}}$ and for all $1\leq i \leq m$, $L_{2m+3}(v_i) = \textsf{z}^{b_i}$.
	}
	
	We will now build $C'_0 \trans^+ C'_1 \trans^+ \dots \trans^+ C'_n$ such that for all $0 \leq i \leq n$, $C'_i = (\Gamma, L'_i)$, and: $L'_i(v_0) = \ell_i$, $L'_i(v_{m+1}) = q^{\textsf{tail}}$, and $x_1^i = |\set{v \mid L'_i(v) = \counter_1^j, 0 \leq j \leq 2}|$, $x_2^i = |\set{v \mid L'_i(v) = \counter_2^j, 0 \leq j \leq 2}|$, and for all $1 \leq j \leq m$, $L'_i(v_j) \in \set{\mathsf{z}^{b_j}, \counter_1^{b_j}, \counter_2^{b_j}}$.
	
	For $i = 0$, we take $C'_0 = C_{2m+3}$. By construction, it respects the conditions.
	Assume we built $C'_i$ for some $0 \leq i < n$, and consider $(\ell_i, x_1^i, x_2^i) \transup{t_{i+1}} (\ell_{i+1}, x_1^{i+1}, x_2^{i+1})$. For readability's sake, we note $t = (\ell, \text{op}, \ell')$ and we rename $x_1^i, x_2^i, x_1^{i+1}, x_2^{i+1}$ by $x_1, x_2, x_1', x_2'$, finally we rename $C'_i$ by $C$.
	
	By induction hypothesis, $C = (\Gamma, L)$ is such that $L(v_0) = \ell$, $L(v_{m+1}) = q^{\textsf{tail}}$ and $x_1= |\set{v \mid L(v) = \counter_1^j, 0 \leq j \leq 2}|$, $x_2 = |\set{v \mid L(v) = \counter_2^j, 0 \leq j \leq 2}|$, and for all $1 \leq j \leq m$, $L(v_j) \in \set{\mathsf{z}^k, \counter_1^k, \counter_2^k \mid k = j \mod 3}$.
	We distinguish the following cases:
	\begin{itemize}
		\item $\text{op} = \inc{\counter}$ with $\counter\in X$, then by definition of $m$, $x_1 +x_2 < m-1$. 
		Hence, $|\set{v \mid L(v) = \counter_1^j, 0 \leq j \leq 2}|+ |\set{v \mid L(v) = \counter_2^j, 0 \leq j \leq 2}| < m-1$, and so there exists $v_{f} \in \set{v_1, \dots, v_{m-1}}$ such that $L(v_f) \nin \set{\counter_2^j, \counter_1^j \mid 0 \leq j \leq 2}$ with $1 \leq f \leq m$. Let us denote $f$ the first such index. By hypothesis, $L(v_f) \in \set{\mathsf{z}^k, \counter_1^k, \counter_2^k }$ for $k = f \mod 3$, hence $L(v_f) = \mathsf{z}^k$. 
		
		We denote $\othercounter$ the counter such that $\set{\counter, \othercounter} = \set{\counter_1, \counter_2}$. 
		We also denote $p_i:=L(v_i)$, and $p'_i$ the state such that:\\
		$$
		p'_i = \left\{
		\begin{array}{ll}
			r^{b_i}_{\text{inc}_\counter}& \mbox{if } p_i = \mathsf{z}^{b_i}, i \neq f\\
			r^{b_i, \counter}_{\text{inc}_\counter}& \mbox{if } p_i = \counter^{b_i}, i \neq f\\
			r^{b_i \counter}_{\text{inc}_\counter}& \mbox{if } p_i = \othercounter^{b_i}, i \neq f\\
			q_\counter^{b_i}& \mbox{if }i=f
		\end{array}
		\right.
		$$
		
		Consider the following sequence:
		\begin{align*}
		 &C^0 \transup{v_0, (\ell, !!\todoinc{\counter}{0}, q_t)}
		C^1 \transup{v_1, (p_1, !!\todoinc{\counter}{b_1}, p'_1)} 
		C^2 \transup{v_0, (q_t, !!\okincmess{\counter}{0}, \ell')}
		C^3 \transup{v_2, (p_2, !!\todoinc{\counter}{b_2}, p'_2)} \\
		&C^4 \transup{v_1, (p'_1, !!\okincmess{\counter}{b_1}, p_1)}
		C^5 \transup{v_3, (p_3, !!\todoinc{\counter}{b_3}, p'_3)} 
		C^6\transup{v_2, (p'_2, !!\okincmess{\counter}{b_2}, p_2)} \\
		&C^7 \transup{v_4, (p_4 , !!\todoinc{\counter}{b_4}, p'_4)}
		\cdots \transup{v_{f-1}, (p_{f-1}, !!\todoinc{\counter}{b_{f-1}}, p'_{f-1})} 
		C^{2(f-1)} \transup{v_{f-2}, (p'_{f-2}, !!\okincmess{\counter}{b_{f-2}}, p_{f-2})}\\
		&C^{2f-1} \transup{v_{f}, (p_{f}, !!\todoinc{\counter}{b_f}, p'_{f})} 
		C^{2f}  \transup{v_{f-1}, (p'_{f-1}, !!\okincmess{\counter}{b_{f-1}}, p_{f-1})}\\
	&	C^{2f+1}  \transup{v_{f+1}, (p_{f+1}, !!\doneinc{\counter}{b_{f+1}}, p'_{f+1})}
		 C^{2f + 2} \transup{v_{f}, (p'_{f}, !!\ovdoneinc{\counter}{b_{f}}, \counter^{b_f})} \\
	&	C^{2f + 3}\transup{v_{f+2}, (p_{f+2}, !!\doneinc{\counter}{b_{f+2}}, p'_{b_{f+2}})}
		 \cdots \transup{v_{m}, (p_{m}, !!\ovdoneinc{\counter}{b_{m}}, p'_{m})}\\
		 & C^{2m}  \transup{v_{m-1}, (p'_{m-1}, !!\ovdoneinc{\counter}{b_{m-1}}, p_{m-1})} C^{2m+1}  \transup{v_{m}, (p'_{m}, !!\ovdoneinc{\counter}{b_{m}}, p_{b_m})} C^{2m+2}
	\end{align*}
		Note that if $f=1$, $v_1$ broadcasts ${\text{inc}}_\counter^{b_1}$ and $\ovdoneinc{\counter}{b_{1}}$, and every vertices except $v_0$ broadcast overlined messages as well. Furthermore, as $ f\leq m-1$, $v_m$ always broadcasts $\ovtodoinc{\counter}{b_m} \cdot \ovdoneinc{\counter}{b_{m}}$, hence $L^{2m}(v_{m+1}) = q_{\ovtodoinc{\counter}{}}^t$ and $L^{2m+2}(v_{m+1}) = q^{\textsf{tail}}$.
		At the end of the sequence, $C^{2m+2} =(\Gamma, L^{2m+2})$ is such that $L^{2m+2}(v_0) = \ell'$, $L^{2m+2}(v_{m+1}) = q^{\textsf{tail}}$, and, $L^{2m+2}(v_f) = \counter^{b_f}$ and for all $1 \leq i \leq m$, $i\neq f$, $L^{2m+2}(v_i) = L(v_i)$.
		
		\item $\text{op} = \dec{\counter}$, the sequence is analogous, this time we justify the existence of one process on a state $\counter^{j}$ for some $0 \leq j \leq 2$ by induction hypothesis. 
		
		\item $\text{op} = \test{\counter}$, the only difference is that the propagated message does not change, all processes can propagate the message as no processes is on a state $\counter^i$ for some $0 \leq i \leq 2$ (this is true by induction hypothesis).
	\end{itemize} 
	Hence, we built a sequence of configurations $C'_0 \trans^+ C'_1 \trans^+ \cdots \trans^+ C'_n$ such that $C'_n =(\Gamma, L_n)$ with $L_n(v_0) = \ell_f$, $L_n(v_{m+1}) = q^{\textsf{tail}}$ and for all $0 \leq i \leq m$, $L_n(v_i) = \mathsf{z}^{b_i}$. Finally, we build the sequence leading to the final configuration, from $C'_n$: 
	$C'_n \transup{v_0, (\ell_f, !!\text{done}, q_f^M)} C'_{n+1} \transup{v_1, (d^{b_1}, !!\text{done}, d^{'b_1})} C'_{n+2} \transup{v_2, (d^{b_2}, !!\text{done}, d^{'b_2})} \cdots  \transup{v_m, (d^{b_m}, !!\text{done}, d^{'b_m})} C'_{n+m+1}$. Denote $C'_{n+m+1} = (\Gamma, L_{n+m+1})$, it holds that $L_{n+m+1}(v_{m+1}) = q_f$.
\end{proof}

\subsection{Correctness of the Reduction.}

This subsection is devoted to prove the following lemma.
\begin{lemma}\label{lemma:undecidability:correctness}
	If there exist $C \in \II$, $C' =(\Gamma', L') \in \CC$ such that $C \trans^\ast C'$ and there exists $v \in \Vert{\Gamma'}$ such that $L'(v) = q_f$, then there exists a transition sequence $t_1, \dots, t_n$ such that $(\ellinit, 0,0) \transup{t_1} (\ell_1, x_1^1, x_2^1) \transup{t_2} \cdots \transup{t_n} (\ell_f, 0 ,0)$.
\end{lemma}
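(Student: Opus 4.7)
The plan is to extract a canonical backbone line of processes from any execution covering $q_f$, then show that the actions of the leftmost process of that line form a valid halting run of $M$.

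I would start by locating the process $v_n$ with $L'(v_n) = q_f$. Inspecting $\PP$, the state $q_f$ is only reached via the transition $(q^{\textsf{tail}}, ?\text{done}, q_f)$ from $\PP_{\textsf{tail}}$, so $v_n$ went through the initialization branch broadcasting 2 after receiving 1. I would then reconstruct the backbone by backward induction: $v_n$ has a neighbor that broadcast 1, call it $v_{n-1}$; $v_{n-1}$ entered $\PP_{b_{n-1}}$ after receiving $b_{n-1}\ominus 1$ from some neighbor $v_{n-2}$, and so on, until reaching a process $v_0$ that first broadcast $0$ (so went through $\qinit \trans q_1$ and then into $\PP_M$). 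The key facts are (i) this chain is well-defined because each intermediate process has a unique active neighbor that broadcast the needed subscript-message (any other broadcast with a subscript $0,1,2$ during the initialization window would have driven the process to $\frownie$), and (ii) the chain has the periodic labelling $v_i$ in $\PP_{b_i}$ because the only receptions escaping $\frownie$ during initialization force subscripts to match modulo 3.

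Next I would study the simulation phase and establish the following invariant. Between two simulation rounds, $v_0$ is in some location $\ell \in \Loc$, $v_n$ is back in $q^{\textsf{tail}}$, and for $1 \le i < n$ the state of $v_i$ lies in $\set{\zerostate^{b_i}, \counter_1^{b_i}, \counter_2^{b_i}}$; writing $x_j = |\set{i : L(v_i) = \counter_j^{b_i}}|$, the configuration $(\ell, x_1, x_2)$ is reachable in $M$. The base case, the end of initialization, matches $(\ellinit, 0, 0)$. For the inductive step I analyse each type of round by tracking which pairs of messages $(\text{op}^j, \overline{\text{op}}^j)$ the tail accepts: since the tail must return from $q^t_{\text{op}}$ to $q^{\textsf{tail}}$ via $?\overline{\text{op}}^1$ and any other reception kills it, the chain has to transmit matched pairs from subscript 0 (emitted by $v_0$) to subscript 1 (received by $v_n$), alternating through subscripts along the line; any process that tries to broadcast a mismatched pair, or whose right/left neighbor broadcasts an unexpected subscript, immediately reaches $\frownie$, contradicting the assumption that the backbone survives to cover $q_f$.

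For an increment round $v_0$ takes $(\ell, !!\todoinc{\counter}{0}, q_t)$ then $(q_t, !!\okincmess{\counter}{0}, \ell')$, matching transition $t = (\ell, \inc{\counter}, \ell')$ of $M$. Examining $\PP_i$, the only way for the chain to carry a $\todoinc{\counter}{}$ message to $v_n$ and return a matching $\okincmess{\counter}{}$ pair is that exactly one $v_f$ switches branch at the $\ovdoneinc{\counter}{}$ step, moving from $\zerostate^{b_f}$ to $\counter^{b_f}$, while every other $v_i$ stays at the same state after the round; this uses that the reception guards in $\PP_i$ force the incoming pair and the outgoing pair on each process to be of the same type, except at one pivot. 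Thus $(x_1,x_2) \transRelM{} (x_1', x_2')$ with $x_\counter' = x_\counter + 1$, matching $t$. A symmetric analysis handles decrements: at least one $v_i$ in state $\counter^{b_i}$ must pivot from $\counter^{b_i}$ to $\zerostate^{b_i}$, so $x_\counter > 0$, and $x_\counter' = x_\counter - 1$. For a test round $v_0$ broadcasts $\testmess{\counter}{0}$; the guard at $\counter^{b_i}$ in $\PP_i$ kills any process in state $\counter^{b_i}$ on receipt of $\testmess{\counter}{0}$ (this is the disjunct $?\testmess{\counter}{0}$ into $\frownie$ from $\counter^1$), so survival of the chain forces $x_\counter = 0$, matching the semantics of $\test{\counter}$.

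Finally, the covering of $q_f$ by $v_n$ requires a preceding broadcast of $\text{done}$, which in $\PP_M$ originates from the transition $(\ell_f, !!\text{done}, q_f^M)$, so $v_0$ must have reached $\ell_f$. Concatenating the transitions witnessed by the successive rounds gives the required halting run $(\ellinit, 0, 0) \trans^\ast (\ell_f, x_1, x_2)$ of $M$. The main obstacle is the bookkeeping in the third paragraph: one has to verify carefully, from the reception guards of $\PP_i$ for $i \in \set{0,1,2}$ and $\PP_{\textsf{tail}}$, that the only surviving propagation schedules are the intended ones, and that the pivot position in an inc/dec round is necessarily unique; phase-boundedness does not interfere because each backbone process performs at most a constant number of phase switches per round, and the construction already respects the 6-phase bound of Lemma~\ref{lemma:undecidability-bounded-phases-prot}.
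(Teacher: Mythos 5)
Your proposal is correct and follows essentially the same strategy as the paper's proof: extract the line backbone ending at the node covering $q_f$, prove the initialization structure forces the periodic subscript labelling and isolates the backbone from other active neighbors, maintain the round invariant relating the backbone's states to a reachable machine configuration via a unique-pivot analysis for increments/decrements and a survival argument for zero-tests, and concatenate the rounds into a halting run. The bookkeeping you flag as the main obstacle (uniqueness of the pivot and of the propagation schedule) is exactly where the paper invests its effort, via interleaving lemmas showing that between a node's $\AOp$/$\AOk$ pair its right neighbor broadcasts exactly once with the matching operation.
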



We say that a process $v \in \Vert{\Gamma}$ is active if it broadcasts (at least) one message during the execution.
We denote $C_0 \trans C_1 \trans \cdots \trans C_n$ where for each $0 \leq i \leq n$, $C_i = ( \Gamma, L_i)$, $C_0 = C$ and $C_n = C'$. Furthermore, we denote $v_f$ the process covering $q_f$. 
Lastly, for sake of readability, for $C, C' \in \CC$, $m \in \Sigma$ and $v\in \Vert{\Gamma}$, we denote $C \transup{v, a} C'$ when there exists a transition $(q, !!a, q') \in \Delta$ for some $q, q' \in Q$ such that $C \transup{v, (q,!!a, q')} C'$.

\subsubsection{First step: Structure.}

In a first step we prove that from the execution between $C$ and $C'$, one can extract a set of vertices $V = \set{v_0, v_1, \dots, v_m, v_{m+1}}$ for some $m \in \nat$ forming a \emph{line} (i.e. for all $0 \leq i \leq m$, $\langle v_i, v_{i+1}\rangle \in \Edges{\Gamma}$). And such that $v_{m+1}$ covers $q_f$ and neighbors of vertices in $V$ which are not in $V$ are not active (i.e. they do not broadcast anything).
This subsubsection is devoted to prove the following lemma.

\begin{lemma}\label{lemma:undecidability:correctness:structure}
	There exists $ V = \set{v_0, v_1, \dots, v_m, v_{m+1}} \subseteq \Vert{\Gamma}$, such that for all $0 \leq i\leq m$, $\langle v_i, v_{i+1} \rangle \in \Edges{\Gamma}$ and $\exists j_0, j_1 , \dots, j_m $ such that $L_{j_0}(v_0) = \ellinit$, $L_{j_{m}}(v_{m+1}) = q^{\textsf{tail}}$ and for all $1 \leq i \leq m$, $L_{j_i}(v_i) = \zerostate^{b_i}$ and for all other vertex $v \in \Vert{\Gamma} \setminus V$ which is a neighbor of a vertex $v' \in V$, $v$ is not active. Furthermore, $m\mod 3 = 1$.
\end{lemma}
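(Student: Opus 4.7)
\textbf{Plan for the proof of Lemma \ref{lemma:undecidability:correctness:structure}.}

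The plan is to work backwards from $v_f$, using the linear structure of the initialization phase of $\PP$ to extract the chain $v_0, v_1, \dots, v_{m+1}$. First I would observe that in $\PP$ the only transition entering $q_f$ is $(q^{\textsf{tail}}, ?\text{done}, q_f)$, so there exists an index at which $v_f$ is in $q^{\textsf{tail}}$; set $v_{m+1} := v_f$. Inspecting $\PP$, the only path from $\qinit$ to $q^{\textsf{tail}}$ is $\qinit \xrightarrow{?1} q_1^{\textsf{tail}} \xrightarrow{!!2} q_2^{\textsf{tail}} \xrightarrow{?\$} q^{\textsf{tail}}$, so $v_{m+1}$ must have received $1$ during the initialization. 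A key preliminary remark is that, because all broadcasts of messages in $\{0,1,2\}$ occur only in the "entry" transitions from $\qinit$ (to $q_1$, $q_1^i$, or $q_1^{\textsf{tail}}$) or from $q_1^i$ to $q_2^i$ (inside a sub-protocol $\PP_i$), and because every state of $\PP_M$, $\PP_{\textsf{tail}}$ and $\PP_i$ sends any reception of $0,1,2$ to $\frownie$, each active process broadcasts \emph{exactly} one message in $\{0,1,2\}$ and that broadcast is its first action.

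Next I would build the chain inductively. The process $v_{m+1}$ received $1$ from some neighbor; among the $\qinit$-outgoing broadcasts, only $(q_1^1, !!1, q_2^1)$ in $\PP_1$ sends message $1$, so this neighbor, call it $v_m$, must be active, have $1$ as its unique $\{0,1,2\}$-broadcast, and visit the canonical path $\qinit \xrightarrow{?0} q_1^1 \xrightarrow{!!1} q_2^1 \xrightarrow{?2} q_3^1 \xrightarrow{!!\$} \zerostate^1$. In particular $v_m$ received $0$, and the same analysis shows its source is a neighbor $v_{m-1}$ whose unique $\{0,1,2\}$-broadcast is $0$. This broadcast comes either from $(q_1^0,!!0,q_2^0)$ inside $\PP_0$ (continuing the chain) or from $(\qinit,!!0,q_1)$ in $\PP_M$ (terminating it). Iterating, we obtain a sequence $v_{m+1}, v_m, v_{m-1}, \dots, v_0$ where $v_i$ is active, broadcasts exactly the message $b_i = i \bmod 3$, and traverses the canonical path of its sub-protocol; the chain must eventually terminate at some $v_0$ that took the $\PP_M$-branch. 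Because $v_0$ in $\PP_M$ also receives $1$ to move from $q_1$ to $q_2$, the source of that $1$ must be a neighbor broadcasting $1$, and by the same uniqueness argument this source is exactly $v_1$. From the fact that $v_m$ broadcasts $1$ we get $b_m=1$, hence $m\bmod 3 = 1$.

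Then I would show that the $v_i$'s are pairwise distinct and form a line. Distinctness follows since each $v_i$ has a different value of $b_i \bmod 3$ for consecutive indices, and, within the same $\PP_j$, two distinct vertices cannot play the role of both $v_i$ and $v_{i+3}$ because we chose $v_{i-1}$ as \emph{the} neighbor supplying the $b_i\ominus 1$ reception (and a single process in $\PP_i$ visits $q_1^i$ only once). The edges $\langle v_i,v_{i+1}\rangle$ are in $\Edges{\Gamma}$ by construction. The times $j_i$ are then just the step where $v_i$ first enters $\ellinit$, $\zerostate^{b_i}$, or $q^{\textsf{tail}}$ respectively.

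The main obstacle, as usual in these constructions, will be showing that no vertex $v' \in \NeighG{\Gamma}{v_i}\setminus V$ is active. I would argue by contradiction: suppose some such $v'$ is active and broadcasts its (unique) message $c \in \{0,1,2\}$ at step $t$. A case analysis on the state of $v_i$ at step $t$ (one of $\qinit$, some $q_k^{b_i}$, $\zerostate^{b_i}$, or a later state in $\PP_i$) together with the structure of the $?\Sigma \to \frownie$ transitions shows that either $v_i$ is driven to $\frownie$ (contradicting that $v_i$ reaches its canonical state $\zerostate^{b_i}$ and eventually participates in covering $q_f$), or $v_i$ receives $c$ in $\qinit$ and enters $q_1^{j}$ with $j \neq b_i$, contradicting the fact established above that $v_i$'s unique $\{0,1,2\}$-broadcast is $b_i$. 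The delicate point is handling all positions $i$ uniformly (including $i=0$ and $i=m+1$, which use $\PP_M$ and $\PP_{\textsf{tail}}$ rather than $\PP_{b_i}$); in each case the "forbidden" reception set written on the $\frownie$-transitions in \cref{fig:undec:prot-line,fig:undec:prot-M,fig:undec:prot-tail,fig:undec:prot} has been engineered exactly so that any extra message from an unexpected neighbor is lethal.
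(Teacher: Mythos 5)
You anchor the chain on the initialization messages $\{0,1,2\}$, working backwards from the $1$ received by $v_f$, whereas the paper anchors it on the \emph{done} message: $v_f$ received done from some neighbor, which received it from another, and so on back to the process that emitted done from $\ell_f$. This is not a cosmetic difference, and your version has a genuine gap. Following the initialization messages backwards only tells you that each $v_i$ reached $q_2^{b_i}$ (the state just after broadcasting $b_i$); it does not force $v_i$ to complete the canonical path and reach $\zerostate^{b_i}$, nor $v_0$ to reach $\ellinit$ — yet these are part of the lemma's conclusion ($L_{j_i}(v_i)=\zerostate^{b_i}$, $L_{j_0}(v_0)=\ellinit$). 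Concretely, a node sitting in $q_2^{b_i}$ may simply never receive $b_i\oplus 1$, or may take the nondeterministic transition $(q_2^{b_i},?\Sigma,\frownie)$ instead of $(q_2^{b_i},?b_i\oplus 1,q_3^{b_i})$; similarly $v_0$ may stall in $q_1$ or fall into $\frownie$, or reach $q_2$ and never broadcast $\$$. Nothing in your backward construction excludes this. You assert that $v_m$ "visits the canonical path \dots $\xrightarrow{!!\$}\zerostate^1$" and that "$v_0$ also receives $1$", but no justification is available from the init chain alone. The missing idea is precisely the done-relay: since done is only emitted from $\ell_f$ or from $d^{j}$ (reachable only via $\zerostate^{j}\xrightarrow{?\text{done}}d^{j}$), and since $q_f$ is reached only upon receiving done, tracing done backwards from $v_f$ certifies that every node of the chain did pass through $\zerostate^{b_i}$ (resp.\ $\ellinit$). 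One then shows the two chains coincide, using the uniqueness of active neighbors that you already argue.

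Two smaller points. First, your distinctness argument is off target: the risk is $v_i=v_{i+3k}$, and the clean way to exclude it (used implicitly by the paper for the done chain) is that each process broadcasts at most one message of $\{0,1,2\}$ and the broadcast timestamps strictly decrease along the backward chain. Second, when you claim the $1$ received by $v_{m+1}$ comes from "the" neighbor $v_m$, you should also note that the $\$$ needed to reach $q^{\textsf{tail}}$ must come from that same neighbor (any other active neighbor would kill $v_{m+1}$ with its own $\{0,1,2\}$ broadcast); this is recoverable but is exactly where the done/init chains need to be tied together.
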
	
By construction of the protocol, $v_f$ receives message "\text{done}" from a process $v_1$ before reaching $q_f$. This process again either sends "\text{done}" from state $\ell_f$ or receives it before broadcasting it from a new process $v_2$ (note that $v_2 \neq v_f$ as "\text{done}" can be sent only once and $v_f$ broadcasts it after the broadcast of $v_1$). In the second case, we can reiterate the reasoning. We do so until finding a process (let us denote it $v_{m+1}$) sending "\text{done}" from $\ell_f$ and we note $v_1, \dots, v_m$ the intermediate processes. The process $v_{m+1}$ exists as there is a finite number of processes. Furthermore, $\langle v_f, v_1\rangle \in \Edges{\Gamma}$ and for all $0 \leq i < m$, $\langle v_i, v_{i+1}\rangle  \in \Edges{\Gamma}$. We rename $v_i$ by $v_{m+1-i}$ for all $1 \leq i \leq m+1$, and $v_f$ by $v_{m+1}$. Denote $V = \set{v_0, \dots, v_{m+1}}$. 
Note that there exists $j_0 < j_{m+1}$ such that $L_{j_0}(v_0) = \ellinit$ as $v_0$ broadcasts "done" from $\ell_f$, and  $L_{j_{m+1}}(v_{m+1}) = q^{\textsf{tail}}$ as $v_{m+1}$ reaches state $q_f$ from $q^{\textsf{tail}}$ after $v_0$ reached $\ell_f$.
Furthermore, observe that for all $v \in V \setminus \set{v_0, v_{m+1}}$, there exists $i \in \set{0, 1, 2}$ such that there exists $j_1 < j$ with $L_{j}(v) = \zerostate^i$.

We get the two following lemmas.
\begin{lemma}\label{lemma:undecidability:correctness:structure:v0-neighbor}
	$v_0$ has only one active neighbor before broadcasting "done".
\end{lemma}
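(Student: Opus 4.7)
The plan exploits two features of $\PP$. First, by inspection, the very first message any process of the network can broadcast is necessarily an element of $\set{0,1,2}$: the only broadcasting states reachable from $\qinit$ without first emitting a message are $\qinit$ itself (firing $!!0$), the states $q_1^i$ for $i\in\set{0,1,2}$ (firing $!!i$), and $q_1^{\textsf{tail}}$ (firing $!!2$). Second, at every state that $v_0$ must visit on a successful run to $\ell_f$---namely $\qinit$, $q_1$, $q_2$, $\ellinit$, and the subsequent locations $\ell$ and intermediate states $q_t$ of $\PP_M$ up to $\ell_f$---receiving any message in $\set{0,1,2}$ sends $v_0$ off its trajectory: at $\qinit$ it redirects $v_0$ into the $\PP_i$- or $\PP_{\textsf{tail}}$-branch; at $q_1$ the messages $0$ and $2$ lead only to $\frownie$ while $1$ may continue to $q_2$; and at each of $q_2, \ellinit, \ell, q_t, \ell_f$ every such reception leads to $\frownie$.

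I first show that at the instant $v_0$ performs its initial $!!0$, no neighbor has broadcast anything. Any prior broadcast by a neighbor would carry a message from $\set{0,1,2}$ by the first feature, and $v_0$, still at $\qinit$, would then be forced by the reception semantics to fire one of its $?0$-, $?1$-, or $?2$-transitions and leave the $\PP_M$-branch---contradicting the eventual broadcast of ``done'' from $\ell_f$. Hence each neighbor $u$ of $v_0$ lies in a state reachable from $\qinit$ by receptions alone, i.e.\ $u\in\set{\qinit, q_1^0, q_1^1, q_1^2, q_1^{\textsf{tail}}}$. When $v_0$ actually fires $!!0$, a neighbor in $\qinit$ is forced by the unique $?0$-transition of $\qinit$ into $q_1^1$, while a neighbor in any $q_1^i$ or in $q_1^{\textsf{tail}}$ is forced by the $?\Sigma \to \frownie$ transition into $\frownie$ and therefore never broadcasts. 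Every potentially active neighbor of $v_0$ thus sits in $q_1^1$, and the only outgoing broadcast from $q_1^1$ is $!!1$.

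It remains to show that at most one such neighbor fires $!!1$ before $v_0$ broadcasts ``done''. Suppose two distinct neighbors $u_1, u_2$ both do so. The first of these broadcasts drives $v_0$ along $q_1 \xrightarrow{?1} q_2$; at every subsequent step of the prefix ending with $v_0$'s ``done''-broadcast, $v_0$ lies in $\set{q_2,\ellinit, \ell, q_t, \ell_f}$, and in each such state any reception of $1$ is forced to go to $\frownie$. The second broadcast therefore sends $v_0$ to $\frownie$, contradicting the fact that $v_0$ reaches $\ell_f$. Since $v_0$ must in any case receive $?1$ at $q_1$ in order to escape, at least one such neighbor is active, giving exactly one. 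The only mild obstacle is the case-analysis bookkeeping: verifying that the ``first broadcast in $\set{0,1,2}$'' property really covers every route a neighbor could take (which follows from the fact that the only states reachable from $\qinit$ without a prior broadcast are $\qinit$, $q_1^i$ and $q_1^{\textsf{tail}}$), and that every state of $v_0$'s post-$q_1$ trajectory genuinely has $?1 \to \frownie$ as its unique $?1$-transition---both of which follow directly from the description of $\PP$ and $\PP_M$.
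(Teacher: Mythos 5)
Your proof is correct and follows essentially the same route as the paper's: every active process's first broadcast is a message in $\set{0,1,2}$, such a message received by $v_0$ before its own $!!0$ would derail it from the $\PP_M$-branch, and after $!!0$ the node $v_0$ can absorb at most one message of $\set{0,1,2}$ (the $?1$ at $q_1$) without falling into $\frownie$, while it needs at least that one reception to reach $\ellinit$. The intermediate bookkeeping you add (pinning the surviving neighbors to $q_1^1$) is a harmless refinement of the same argument.
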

\Iflong{
\begin{proof}
	There is at least one as $v_0$ receives message "1" to go from $q_1$ to $\ellinit$.
	If there is more than one, then there exists two messages $i_1, i_2 \in \set{0, 1, 2}$ which are sent from neighbors of $v_0$. Indeed, any active vertex starts by broadcasting a message $i \in \set{0,1,2}$. If $i_1$ or $i_2$ is sent before $v_0$ broadcasts 0, then it can not reach $\ell_f$ as it reaches $q_1^{i_1}$ or $q_1^{i_2}$, hence $i_1, i_2$ are both sent after the first broadcasts of $v_0$. Note that $v_0$ can receive at most once a message $i \in \set{0,1,2}$ (in fact $i = 1$) as otherwise it reaches $\frownie$ and cannot broadcast "done".
\end{proof}
}
\begin{lemma}\label{lemma:undecidability:correctness:structure:vm-neighbor}
	$v_{m+1}$ has only one active neighbor before receiving "done".
\end{lemma}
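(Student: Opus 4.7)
The plan is to mirror the proof of Lemma~\ref{lemma:undecidability:correctness:structure:v0-neighbor}, exploiting the symmetry between $v_0$ and $v_{m+1}$: both processes start in $\qinit$ and must follow a rigid linear sub-path of $\PP$ (the machine path $q_1 \to q_2 \to \ellinit$ for $v_0$, the tail path $q_1^{\textsf{tail}} \to q_2^{\textsf{tail}} \to q^{\textsf{tail}}$ for $v_{m+1}$) in order to reach their respective goal states. In particular, the only transition leaving $\qinit$ that enters the tail sub-protocol is $(\qinit, ?1, q_1^{\textsf{tail}})$.

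First I would recall the observation already used in the proof of the previous lemma: the first broadcast of any active process of $\PP$ is a message in $\set{0,1,2}$. Indeed, from $\qinit$ the only outgoing broadcast transition is $!!0$, so a process that broadcasts directly from $\qinit$ has first broadcast ``0''; the only alternative is first to receive a message, landing in $q_1^{\textsf{tail}}$ or in one of the states $q_2^0, q_2^1, q_2^2$, from which the next broadcast is $!!2, !!0, !!1$ or $!!2$ respectively, again in $\set{0,1,2}$.

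Next, I would reason by contradiction. Suppose two distinct neighbors of $v_{m+1}$ are both active before the step $t_{\mathrm{done}}$ at which $v_{m+1}$ receives ``done'', and let $t_1 \leq t_2 < t_{\mathrm{done}}$ be the steps of their first broadcasts, carrying messages $i_1, i_2 \in \set{0,1,2}$. Before step $t_1$, $v_{m+1}$ has received nothing from its neighbors and cannot itself have fired $!!0$ (this would divert it to $q_1$ and prohibit any later entry to $q^{\textsf{tail}}$), so $v_{m+1}$ is still in $\qinit$ at $t_1$. Coverage of $q_f$ then forces $i_1 = 1$ together with the transition $(\qinit, ?1, q_1^{\textsf{tail}})$, since every other reception from $\qinit$ sends $v_{m+1}$ into some $q_2^i$ from which $q^{\textsf{tail}}$ is unreachable. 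Between $t_1$ and $t_2$ the state of $v_{m+1}$ therefore lies in $\set{q_1^{\textsf{tail}}, q_2^{\textsf{tail}}, q^{\textsf{tail}}, q^t_{\textsf{op}}}$ (it cannot yet be $q_f$ because $t_2 < t_{\mathrm{done}}$). From each of these states $\PP$ contains the global transitions $?0, ?1, ?2 \to \frownie$, so reception of $i_2$ at step $t_2$ drives $v_{m+1}$ into $\frownie$, contradicting the eventual coverage of $q_f$.

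I do not expect a genuine obstacle: the whole argument is a mechanical replay of the $v_0$-case, with the tail path playing the role of the machine path. The small bookkeeping subtlety worth stating explicitly is the exclusion of a self-broadcast $!!0$ by $v_{m+1}$ before $t_1$, handled exactly as in Lemma~\ref{lemma:undecidability:correctness:structure:v0-neighbor}.
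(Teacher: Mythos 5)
Your proof is correct and takes essentially the same route as the paper, which for this lemma simply writes ``same arguments as for \cref{lemma:undecidability:correctness:structure:v0-neighbor}'': every active process first broadcasts some $i\in\set{0,1,2}$, the first such message received by $v_{m+1}$ must be $1$ (taken via $(\qinit,?1,q_1^{\textsf{tail}})$, since any other choice makes $q^{\textsf{tail}}$ unreachable), and a second message in $\set{0,1,2}$ from a different neighbor then forces $v_{m+1}$ into $\frownie$. (The states you call $q_2^0,q_2^1,q_2^2$ are named $q_1^i$ in the paper's figure; this is a harmless labelling slip that does not affect the argument.)
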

\Iflong{
\begin{proof}
	Same arguments as for \cref{lemma:undecidability:correctness:structure:v0-neighbor}.
\end{proof}
}
Indeed, otherwise, $v_0$ (or $v_m$) has two neighbors sending two messages $i_1, i_2 \in \set{0,1,2}$, by construction of \PP, one of the message will bring $v_0$ in $\frownie$.\lug{preuve plus en détail en Iflong }
With a similar argument, we get the following lemma.

\begin{lemma}\label{lemma:undecidability:correctness:structure:vk-neighbor}
	Let $v_k \in V$ for some $1 \leq k \leq m$, $v_k$ has only two active neighbors before broadcasting "done".
\end{lemma}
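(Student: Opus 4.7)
The plan is to extend the counting argument used in Lemmas \ref{lemma:undecidability:correctness:structure:v0-neighbor} and \ref{lemma:undecidability:correctness:structure:vm-neighbor} to the intermediate nodes. The key invariant I will exploit is that every active process broadcasts exactly one message from $\set{0,1,2}$ during its entire execution, namely its very first broadcast: this is $!!0$ from $\qinit$ for the machine head, $!!i$ from $q_1^i$ for a counter-type process after a reception $?i \ominus 1$, and $!!2$ from $q_1^{\textsf{tail}}$ for a tail-type process; no other state of $\PP$, $\PP_M$, $\PP_{\textsf{tail}}$, or $\PP_i$ carries an outgoing broadcast of $0$, $1$ or $2$. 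Hence the number of active neighbors of $v_k$ that have broadcast before $v_k$ broadcasts $\text{done}$ equals the number of distinct $\set{0,1,2}$-broadcasts received by $v_k$ up to that moment, and the lemma reduces to showing that this number is exactly two.

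For the lower bound, I will invoke Lemma \ref{lemma:undecidability:correctness:structure} to assert that $v_k$ is in $\zerostate^{b_k}$ at some step $j_k$ and then inspect \cref{fig:undec:prot-line}: the unique path from $\qinit$ to $\zerostate^{b_k}$ in $\PP$ is $\qinit \transup{?b_k\ominus 1} q_1^{b_k} \transup{!!b_k} q_2^{b_k} \transup{?b_k\oplus 1} q_3^{b_k} \transup{!!\$} \zerostate^{b_k}$, so $v_k$ must receive the two distinct messages $b_k \ominus 1$ and $b_k \oplus 1$. Since every active process broadcasts at most one $\set{0,1,2}$-message, these two receptions come from two distinct active neighbors.

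For the upper bound, I will assume for contradiction that a third active neighbor $u_3$ broadcasts at least one message before $v_k$ broadcasts $\text{done}$. Let $m_1, m_2, m_3 \in \set{0,1,2}$ be the first broadcasts of the three neighbors, ordered by the time $v_k$ receives them, and track $v_k$'s state during these receptions. The first reception occurs at $\qinit$ and must equal $b_k \ominus 1$, moving $v_k$ to $q_1^{b_k}$ (otherwise $v_k$ enters a wrong sub-protocol, contradicting Lemma \ref{lemma:undecidability:correctness:structure}). At $q_1^{b_k}$, every reception triggers the $?\Sigma \to \frownie$ transition, so $v_k$ must broadcast $b_k$ before the second reception arrives, reaching $q_2^{b_k}$. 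At $q_2^{b_k}$, the only non-$\frownie$ reception is $?b_k \oplus 1$, so the second reception equals $b_k \oplus 1$ and moves $v_k$ to $q_3^{b_k}$. The third reception thus reaches $v_k$ while it is in $q_3^{b_k}$ or in some state of $\PP_{b_k}$ that follows; but $q_3^{b_k}$ has $?m \to \frownie$ for every $m \neq \$$, and inspection of \cref{fig:undec:prot} shows that every state of $\PP_{b_k}$ has a $\frownie$-transition catching all messages outside $\text{OP}^{b_k \ominus 1} \cup \text{OK}^{b_k \oplus 1}$, a set that excludes the entirety of $\set{0,1,2}$. In all cases $v_k$ reaches $\frownie$ and cannot subsequently broadcast $\text{done}$, a contradiction. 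I expect the main technical subtlety to lie precisely in this last step: carefully handling the interleaving of the three receptions with $v_k$'s own intermediate broadcasts, and verifying in each of the post-initialisation states of $\PP_{b_k}$ that no $\set{0,1,2}$-message is accepted without going to $\frownie$.
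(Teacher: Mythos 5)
Your proof is correct and follows essentially the same route as the paper's: at least two active neighbours because reaching $\zerostate^{b_k}$ forces the receptions of $b_k\ominus 1$ and $b_k\oplus 1$ (each active process emitting at most one message of $\set{0,1,2}$), and at most two because a third such message would hit $v_k$ in $q_1^{b_k}$, $q_2^{b_k}$, $q_3^{b_k}$ or a state of $\PP_{b_k}$, all of which send it to $\frownie$, contradicting the later broadcast of $\text{done}$. The only difference is presentational: you track the interleaving with $v_k$'s own broadcasts slightly more explicitly than the paper does.
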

\Iflong{
\begin{proof}
	Denote $i_k \in \set{0,1,2}$ such that $v_k$ receives "done" from $\zerostate^{i_k}$. $v_k$ has at least two active neighbors as it must receive $(i_k-1)\mod3$ and $(i_k+1) \mod 3$ and each process can send at most one message $i \in \set{0,1,2}$. 
	If there is more than two, three messages $i_1, i_2,i_3\in \set{0, 1, 2}$ are sent from neighbors of $v_k$. Remark that the first message received by $v_k$ is $(i_k-1)\mod3$, assume wlog that it is $i_1$. If the second message is received before $v_k$ broadcasts $i_k$, then $v_k$ reaches $\frownie$, hence, the second message is received after the broadcasts of $i_k$. If the second message received is different from $(i_k+1)\mod3$, then $v_k$ reaches $\frownie$, hence the second message received is $(i_k+1)\mod3$ (we suppose wlog that it is $i_2$). Let us denote $j$ the index just after the reception of $i_2$, for all $j' \geq j$, $L_{j'}(v_k) \in Q_i $ by construction of the protocol. For all $q \in Q_i$, the only outgoing reception transition is $(q, ?i_3, \frownie )$, hence, the reception of $i_3$ brings $v_3$ in $\frownie$, and $v_3$ can not broadcast "done" from $\frownie$.
\end{proof}
}\lug{preuve plus en détail en Iflong }
We can now prove the following lemma.
\begin{lemma}\label{lemma:undecidability:correctness:structure-indixes-mod}
	There exist $ j_0 < j_1 <\cdots < j_m $,
	such that for all $0 \leq k \leq m+1$, $C_{j_k} \transup{v_k, b_k} C_{j_k+1}$.
	Furthermore $m+1 = 2 \mod 3$.
\end{lemma}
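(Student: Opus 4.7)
The plan is to prove by induction on $k$ that for each $0\leq k\leq m+1$, the vertex $v_k$ performs exactly one broadcast of a message in $\{0,1,2\}$, that this message carries subscript $b_k=k\bmod 3$, and that these distinguished broadcasts occur in strictly increasing order along the execution $C_0\trans\cdots\trans C_n$. The congruence $m+1\equiv 2\pmod 3$ will then fall out of the base case for $v_{m+1}$, since the tail protocol forces its unique broadcast to use subscript $2$.

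\emph{Base case ($k=0$).} Inspecting $\PP_M$, the only path of $\PP$ from $\qinit$ to $\ellinit$ through the ``head'' gadget is $\qinit\transup{!!0}q_1\transup{?1}q_2\transup{!!\$}\ellinit$: any reception other than $?1$ from $q_1$ leads to $\frownie$, from which $\ellinit$ is unreachable. By Lemma~\ref{lemma:undecidability:correctness:structure}, $v_0$ reaches $\ellinit$ at step $j_0$, so we may take $j_0$ to be the unique earlier step at which it fires $!!0$; this is the sole broadcast of $v_0$ in $\{0,1,2\}$ since $q_1,q_2$ are purely receiving/internal on that gadget. Thus $v_0$ broadcasts $b_0=0$ at step $j_0$.

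\emph{Inductive step ($1\leq k\leq m$).} Assume the claim for $k-1$, with broadcast index $j_{k-1}$. By Lemma~\ref{lemma:undecidability:correctness:structure}, $v_k$ reaches $\zerostate^{b_k}$, so its trajectory must traverse the cycle $\qinit\transup{?(i\ominus 1)}q_1^{i}\transup{!!i}q_2^{i}\transup{?(i\oplus 1)}q_3^{i}\transup{!!\$}\zerostate^{i}$ of $\PP_{i}$ for some $i\in\{0,1,2\}$, since every alternative reception out of these states goes to $\frownie$. By Lemma~\ref{lemma:undecidability:correctness:structure:v0-neighbor},~\ref{lemma:undecidability:correctness:structure:vm-neighbor} or~\ref{lemma:undecidability:correctness:structure:vk-neighbor}, the only active neighbours of $v_k$ are $v_{k-1}$ and (when $k<m+1$) $v_{k+1}$, each broadcasting its unique message in $\{0,1,2\}$. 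The initial reception of $v_k$ must therefore come from $v_{k-1}$, so $i\ominus 1 = b_{k-1}=(k-1)\bmod 3$, forcing $i=b_k$. The ensuing broadcast $!!b_k$ happens at some step $j_k$, and because $v_{k-1}$'s broadcast at $j_{k-1}$ is what enables $v_k$ to leave $\qinit$, we have $j_{k-1}<j_k$. This broadcast is the unique $\{0,1,2\}$-broadcast of $v_k$, since no other reachable state of $v_k$ in the successful trajectory emits such a message.

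\emph{Tail case ($k=m+1$).} By Lemma~\ref{lemma:undecidability:correctness:structure} the vertex $v_{m+1}$ reaches $q^{\textsf{tail}}$, so its trajectory follows the only path $\qinit\transup{?1}q_1^{\textsf{tail}}\transup{!!2}q_2^{\textsf{tail}}\transup{?\$}q^{\textsf{tail}}$ of $\PP_{\textsf{tail}}$. Again by Lemma~\ref{lemma:undecidability:correctness:structure:vm-neighbor}, its unique active neighbour is $v_m$, whose only broadcast in $\{0,1,2\}$ is $b_m$; hence $b_m=1$, i.e.\ $m\equiv 1\pmod 3$, and consequently $m+1\equiv 2\pmod 3$. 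This reception occurs strictly after $v_m$'s broadcast at $j_m$, after which $v_{m+1}$ performs its unique $\{0,1,2\}$-broadcast $!!2$ at some step $j_{m+1}>j_m$, and $b_{m+1}=2$ matches the subscript $2$ of $\PP_{\textsf{tail}}$.

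Putting the three steps together yields the required chain $j_0<j_1<\cdots<j_{m+1}$ with $C_{j_k}\transup{v_k,b_k}C_{j_k+1}$ for every $0\leq k\leq m+1$, and $m+1\equiv 2\pmod 3$. The main obstacle is the inductive step: one has to rule out all ``bad'' interleavings in which $v_k$ receives a message before $v_{k-1}$'s canonical broadcast, or receives a wrong message in between the two cycle-receptions. Both obstructions are eliminated by the same observation, namely that every reception outside the scripted one leads to $\frownie$, from which the vertex can no longer contribute to propagating ``done'' nor reach $\zerostate^{b_k}$ or $q^{\textsf{tail}}$, contradicting Lemma~\ref{lemma:undecidability:correctness:structure}.
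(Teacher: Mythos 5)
Your induction follows the same skeleton as the paper's (go up the line, pin down each $v_k$'s unique broadcast in $\{0,1,2\}$ and its subscript $b_k$), but the pivotal assertion of your inductive step --- ``the initial reception of $v_k$ must therefore come from $v_{k-1}$'' --- is not justified by anything you have established, and your proposed blanket justification in the last paragraph (``every reception outside the scripted one leads to $\frownie$'') fails exactly where it is needed. The state $\qinit$ has no transition to $\frownie$: it accepts each of the messages $0,1,2$ and merely branches into $\PP_0$, $\PP_1$, $\PP_2$ or $\PP_{\textsf{tail}}$ accordingly. So nothing local to $v_k$ prevents the scenario in which $v_{k+1}$ broadcasts first, say message $b_k$, sending $v_k$ from $\qinit$ into $q_1^{b_{k+1}}$; $v_k$ then broadcasts $b_{k+1}$ (not $b_k$), and can even complete its cycle by later receiving $b_{k-1}=b_{k+1}\oplus 1$ from $v_{k-1}$ and reaching $\zerostate^{b_{k+1}}$. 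In this interleaving $v_k$ never visits $\frownie$, never misses a ``scripted'' reception, and still propagates ``done'', yet it violates the statement being proved. Ruling it out requires a non-local argument. The paper gets it from the \emph{second} reception of the already-controlled node: by the induction hypothesis $v_k$ broadcasts $b_k$ at $j_k$ and must afterwards receive $(b_k+1)\bmod 3$ to move from $q_2^{b_k}$ to $q_3^{b_k}$; since by \cref{lemma:undecidability:correctness:structure:vk-neighbor} its only active neighbours are $v_{k-1}$ and $v_{k+1}$, and $v_{k-1}$'s unique subscripted broadcast is $b_{k-1}\neq (b_k+1)\bmod 3$ and already spent, that message can only come from $v_{k+1}$ --- which pins $v_{k+1}$'s unique broadcast to $b_{k+1}$ and places it after $j_k$. (The case $k=0$ is treated separately: a broadcast by $v_1$ before $j_0$ would push $v_0$ out of $\qinit$ into some $q_1^i$, from which $\ell_f$ is unreachable.) Alternatively you could close the gap by strengthening your induction hypothesis to control the whole trajectory of $v_{k-1}$ up to $j_{k-1}$, so that $v_k$'s premature broadcast of $b_{k+1}$ would be received by $v_{k-1}$ in $\qinit$ or $q_1^{b_{k-1}}$ and derail it; but some such argument must be supplied.

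A smaller point: you invoke \cref{lemma:undecidability:correctness:structure} for ``$v_k$ reaches $\zerostate^{b_k}$'', but that lemma (with the specific subscript $b_k$) is proved \emph{from} the present one, so citing it here is circular. What is available at this point is only that each $v_k$ with $1\leq k\leq m$ reaches $\zerostate^{i}$ for \emph{some} $i\in\{0,1,2\}$, which is established when $V$ is constructed; your argument in substance only uses this weaker fact, so the fix is a matter of citation, but it should be made.
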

\begin{proof}
	We prove the lemma by induction on $0 \leq k \leq m+1$.
	For $k = 0$, it follows directly from the construction of $\PP$ and the fact that there exists $j$ such that $L_j(v_0) = \ell_f$ ($v_0$ is the first process to broadcast "done"). Denote $j_0$ such that $C_{j_0}\transup{v_0, 0} C_{j_0 +1}$.
	
	Assume the property to hold for some $1 \leq k \leq m+1$ and denote  $j_0 < j_1 < \cdots < j_{k}$ such that for all $0 \leq k' \leq k$, $C_{j_{k'}} \transup{v_{k'}, b_{k'}} C_{j_{k'}+1}$. \\
	
	\textbf{Case $k + 1< m +1$.}
	As $k+1 <m+1$, by construction of $V$, there exists $i \in \set{0,1,2}$ such that there exists $j$ with $L_j(v_{k+1}) = \zerostate^i$.
	By \cref{lemma:undecidability:correctness:structure:vk-neighbor}, $v_{k+1}$ has exactly two active neighbors, which are by construction of $V$, $v_k$ and $v_{k+2}$. 
	Denote $i_{k+1} \in \set{0,1,2}$ the integer such that $v_{k+1}$ receives "done" from $\zerostate^{i_{k+1}}$.
	By induction hypothesis, $C_{j_{k}} \transup{v_{k}, b_{k}} C_{j_{k}+1}$. 
	We distinguish two cases: (a) $k = 0$ and (b) $k > 0$.
	
	\textbf{Subcase (a) $k = 0$.} 
	Assume $v_{k+1}$ broadcasts $i_{k+1}$ from an index $j < j_0$. Observe that $L_j(v_0) = \qinit$ and so $L_{j+1}(v_0) = q_1^{i_{k+1}}$ which contradicts the fact that $v_0$ broadcasts "done" from $\ell_f$, hence, $j_0 < j$, and so $L_{j_0}(v_{k+1}) = \qinit$, and $L_{j_0 +1}(v_1) = q_1^1$. Hence $i_{k+1}= 1$ and $k+1 = 1$.
	
	\textbf{Subcase $k >0$.} Consider $v_{k-1}$, by induction hypothesis $j_{k-1} < j_k$ with $C_{j_{k-1}} \transup{v_{k-1}, b_{k-1}} C_{j_{k-1}+1}$ and $C_{k} \transup{v_{k}, b_k} C_{k +1}$. Hence, as by \cref{lemma:undecidability:correctness:structure:vk-neighbor}, $v_k$ has only two active neighbors, only $v_{k+1}$ can broadcast $(b_k +1) \mod3$ once $v_k$ broadcast $b_k$. Hence, $i_{k+1} = (b_k +1) \mod3 = b_{k+1}$. Hence there exists $j_{k+1} > j_k$ such that $C_{j_{k+1}} \transup{v_{k+1}, b_{k+1}} C_{j_{k+1}+1}$.

	\textbf{Case $k+1 = m+1$.} By construction of $V$, there exists $j'_{m+1}$ such  that $L_{j'_{m+1}}(v_{m+1}) = q^{\textsf{tail}}$. By \cref{lemma:undecidability:correctness:structure:vm-neighbor}, $v_{m+1}$ has exactly one active neighbor and this neighbor broadcasts "1". As $m \geq 1$, consider $v_{m}$. By induction hypothesis, and construction of $\PP$, there exists a unique $j_m$ such that $C_{j_m} \transup{v_m, b_m} C_{j_{m} + 1}$. Hence, $b_m = m \mod 3 = 1$, finally there exists $j_{m+1} > j_m$ such that $C_{j_{m+1}} \transup{v_{m+1}, 2} C_{j_{m+1} + 1}$.%
\end{proof}

We are now ready to prove \cref{lemma:undecidability:correctness:structure}.

\begin{proofof}{\cref{lemma:undecidability:correctness:structure}}	
	Let $V = \set{v_0, \dots, v_{m+1}}$ such that, there exist $ j_0, j_1,\dots, j_{m+1}$ with $L_{j_0}(v_0) = \ell_f$,  $L_{j_m}(v_{m+1}) = q^\textsf{tail}$ and for all $1 \leq k \leq m$, there exists $i \in \set{0, 1, 2}$ with $L_{j_k}(v_k) = \zerostate^i$.
	
	From \cref{lemma:undecidability:correctness:structure-indixes-mod}\ there exists $j'_0 < \dots < j'_{m+1}$ with $C_{j'_k} \transup{b_k} C_{j'_{k+1}}$ for all $0\leq k \leq m+1$, and $(m+1) \mod3 = 2$.
	Furthermore, $L_{j'_1}(v_0) = q_1$ and $L_{j'_1+1}(v_0) = q_2$. In the same way, for all $1 \leq k \leq m$, $L_{j'_{k+1}}(v_k) = q_2^{b_k}$ and $L_{j'_{k+1} + 1}(v_k) = q_3^{b_k}$. Lastly, $L_{j'_{m+1} + 1}(v_{m+1}) = q_2^\textsf{tail}$. Hence, by construction of $\PP$, there exist $j''_0 \dots j''_{m},j''_{m+1}$ such that $L_{j''_i}(v_i) = \zerostate^{b_i}$ for all $1 \leq i \leq m$, and $L_{j''_0}(v_0) = \ellinit$ and $L_{j''_{m+1}}(v_{m+1}) = q^{\textsf{tail}}$.
\end{proofof}

\subsubsection{Second step: the simulation.} 
We are now ready to extract the execution of the machine from the execution between $C$ and $C'$.
Fix $V = \set{v_0, \dots, v_{m+1}}$ the set of vertices from \cref{lemma:undecidability:correctness:structure}. We forget about nodes not in $V$, as they are useless for $v_f$ to cover $q_f$: either they are 
neighbors of some node in $V$ and so they are inactive, either they are not connected to $v_f$ through an inactive neighbor, or they are not connected to $v_f$.

Let $C, C' \in \CC$ and $v \in \Vert{\Gamma}$, we denote $C \transup{\mid v}^\ast C'$ whenever $C = C'$ or there exist $(u_1, t_1), \dots (u_N,t_N)$, with $N \in \nat$ and for all $1 \leq i \leq N$, $u_i \in \Vert{\Gamma} \setminus \set{v}$, $t_i \in \Delta$, and $C_1 \transup{u_1, t_1} \transup{u_2,t_2} \cdots \transup{u_N,t_N} C'$.

Let $\AOp \in \text{OP}$, we denote the set $\text{OK}(\AOp)$ as follows:
\begin{itemize}
	\item $\text{OK}(\todoinc{\counter}{})= \set{\ovtodoinc{\counter}{}, \ovdoneinc{\counter}{}}$ for some $\counter \in X$;
	\item $\text{OK}(\tododec{\counter}{})= \set{\ovtododec{\counter}{}, \ovdonedec{\counter}{}}$ for some $\counter \in X$;
	\item $\text{OK}(\overincmess{\counter}{})= \set{\ovdoneinc{\counter}{}}$ for some $\counter \in X$;
	\item $\text{OK}(\overdecmess{\counter}{})= \set{\ovdonedec{\counter}{}}$ for some $\counter \in X$;
	\item $\text{OK}({\testmess{\counter}{}})= \set{\ovtest{\counter}{}}$ for some $\counter \in X$.
\end{itemize}

We denote $\text{Act} = \set{\counter +1, \counter - 1, \counter = 0 \mid \counter \in \mathsf{X}}$.

We start by some observations.
\begin{lemma}\label{lemma:undecidability:correctness:language-broadcasts}
	For all $0 < i < m+1$, the word broadcast by $v_i$ belongs to the language $E_i = b_i \cdot \$\cdot  (\sum_{\AOp \in \text{OP}, \AOk \in \text{OK}(\AOp)} \AOp^{b_i} \cdot \AOk^{b_i} )^\ast \cdot \text{done}$. Furthermore, the word broadcast by $v_0$ belongs to the language $E_0 = 0 \cdot \$\cdot (\sum_{\AOp \in \set{\todoinc{\counter}{}, \tododec{\counter}{}, \testmess{\counter}{} \mid \counter \in \mathsf{X}} }(\AOp^0\cdot \ovanOp{\AOp}{0}))^\ast \cdot \text{done}$.
\end{lemma}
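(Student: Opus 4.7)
The plan is to prove both statements by inspection of the protocols $\PP_M$, $\PP_0$, $\PP_1$, $\PP_2$ (Figures \ref{fig:undec:prot-M} and \ref{fig:undec:prot}), using the structural lemmas already established in the first step, which fix the initial trajectory of every vertex of $V$.

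First I would handle the vertices $v_i$ with $0 < i < m+1$. By \cref{lemma:undecidability:correctness:structure} and \cref{lemma:undecidability:correctness:structure-indixes-mod}, $v_i$ reaches the state $\zerostate^{b_i}$, and the only way to reach $\zerostate^{b_i}$ from $\qinit$ in \PP{} (without visiting $\frownie$) is through the sequence of broadcast transitions $\qinit \transup{!!b_i} q_1^{b_i} \transup{?b_i\ominus 1}\transup{!!b_i} q_2^{b_i} \transup{!!\$} \zerostate^{b_i}$, which produces the prefix $b_i \cdot \$$ of the broadcast word. Once inside $\PP_{b_i}$, inspection of the figure shows that the states can be partitioned into \emph{rest} states ($\zerostate^{b_i}$, $\counter_1^{b_i}$, $\counter_2^{b_i}$) and \emph{intermediate} states ($q_\counter^{b_i}$, $q_\counter^{'b_i}$, $r_\AOp^{b_i}$, $r_\AOp^{b_i,\counter}$). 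From every rest state the only broadcast transitions go to an intermediate state and are labelled by some $\AOp^{b_i}$ with $\AOp\in \text{OP}$; from every intermediate state the only broadcast transitions go back to a rest state and are labelled by some $\AOk^{b_i}$ with $\AOk \in \text{OK}(\AOp)$ (a direct case analysis on each intermediate state in $\PP_1$, transposed to $\PP_0$ and $\PP_2$ by relabelling). Hence every maximal subword of broadcasts that stays inside the $\PP_{b_i}$ part of the protocol matches the regular expression $(\sum_{\AOp, \AOk} \AOp^{b_i} \cdot \AOk^{b_i})^\ast$. Finally, the only exit out of $\PP_{b_i}$ is $\zerostate^{b_i} \transup{?\text{done}} \text{d}^{b_i} \transup{!!\text{done}} \text{d}^{'b_i}$, so the last broadcast of $v_i$ is $\text{done}$. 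This yields exactly $E_i$.

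For $v_0$, the same argument is applied but within the sub-protocol $\PP_M$. The structure lemmas give that $v_0$ starts by broadcasting $0$, receives $1$, and broadcasts $\$$ to enter $\ellinit$. Inside $\PP_M$ (Figure~\ref{fig:undec:prot-M}), every block of transitions associated with a Minsky transition $t=(\ell,\text{op},\ell')$ produces exactly a pair $\AOp^0 \cdot \ovanOp{\AOp}{0}$ where $\AOp \in \{\todoinc{\counter}{}, \tododec{\counter}{}, \testmess{\counter}{} \mid \counter \in X\}$, and the only broadcast leaving $\ell_f$ is $\text{done}$. Any deviation from this pattern requires taking a reception transition to $\frownie$, so the broadcast word of $v_0$ lies in $E_0$.

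The only delicate step is to justify that no broadcast transition of a different type can be taken. This is handled uniformly: after each broadcast, the state of the vertex is entirely determined (by the structural lemmas and the preceding reasoning), and the outgoing broadcast transitions at that state are exactly those listed in the figures. The $\frownie$ sink transitions of the protocol ensure that any interleaved reception that does not conform to the expected pattern would force the vertex into $\frownie$, from which it could not broadcast $\text{done}$ nor propagate further, contradicting \cref{lemma:undecidability:correctness:structure}. Thus the inspection is sound and complete, and the lemma follows.
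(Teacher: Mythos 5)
Your proof is correct and follows the same route as the paper, which simply states that the lemma ``follows from the construction of $V$ and $\PP$''; your version spells out the inspection of the sub-protocols that the paper leaves implicit. One small transcription slip: the initialization sequence for $v_i$ is $\qinit \transup{?b_i\ominus 1} q_1^{b_i} \transup{!!b_i} q_2^{b_i} \transup{?b_i\oplus 1} q_3^{b_i} \transup{!!\$} \zerostate^{b_i}$ (you swapped the first reception and broadcast), but the extracted broadcast prefix $b_i\cdot\$$ is unaffected.
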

\begin{proof}
	Follows from the construction of $V$ and $\PP$.
\end{proof}
We denote $n_i$ the index from which $v_i$ broadcasts "done" for all $0 \leq i \leq m$. It holds that $n_0 < \cdots < n_m$ by construction of $V$.
\begin{lemma}\label{lemma:undecidability:correctness:interleaving-1}
	For all $0 \leq i < m$, $0 < j_1 < j_2 \leq n_i$, $\AOp \in \text{OP}$,  $\AOk \in \text{OK}(\AOp)$, such that $C_{j_1} \transup{v_i, \AOp^{b_i}} C_{j_1 +1} \transup{\mid v_i}^\ast C_{j_2} \transup{v_i, \AOk^{b_i}} C_{j_2 + 1}$ there exists a unique $j_1 < j_3 < j_2$ such that $C_{j_3} \transup{v_{i+1} ,w} C_{j_3 +1}$ for some $w \in \Sigma$. Furthermore, if $\AOk = \ovanOp{\AOp_2}{}$, then $w = \AOp_2^{b_{i+1}}$.
\end{lemma}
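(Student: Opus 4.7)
The plan is to fix $(\AOp,\AOk)$ satisfying the hypotheses and identify, by inspection of $\PP_{b_i}$, the unique intermediate state $s$ of $v_i$ between steps $j_1+1$ and $j_2$. For the transmit sub-cases $\AOk = \ovanOp{\AOp}{}$, $s$ is $r_\AOp^{b_i}$ (when the issuing broadcast came from $\zerostate^{b_i}$) or $r_\AOp^{b_i,\counter}$ (from $\counter^{b_i}$); for the execute sub-cases $(\todoinc{\counter}{},\ovdoneinc{\counter}{})$ and $(\tododec{\counter}{},\ovdonedec{\counter}{})$, $s$ is $q_\counter^{b_i}$ and $q_\counter^{'b_i}$ respectively; the test case is analogous. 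In every case, the ``bad-transition'' labels of the figure immediately give the two-element set of messages that $v_i$ can receive at $s$ without transitioning to $\frownie$: one of subscript $b_i - 1$ (coming from $v_{i-1}$) and one of subscript $b_{i+1}$ (coming from $v_{i+1}$). A uniform check across all sub-cases shows that the latter is precisely $\AOp_2^{b_{i+1}}$, where $\AOp_2$ is the unique operation with $\AOk = \ovanOp{\AOp_2}{}$.

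To bound the broadcasts of $v_{i+1}$ in $(j_1,j_2)$ from above, I would argue as follows. Since $v_i$ does not enter $\frownie$ in the interval (it still issues $\AOk^{b_i}$ at $j_2$), every broadcast of $v_{i+1}$ --- being received by $v_i$ --- must lie in the safe reception set at $s$; by Lemma~\ref{lemma:undecidability:correctness:language-broadcasts} it has subscript $b_{i+1}$, so it must equal $\AOp_2^{b_{i+1}}$. Moreover, past the init phase the broadcasts of $v_{i+1}$ alternate between ``todo''-type and overlined messages, so a second broadcast of $v_{i+1}$ inside $(j_1,j_2)$ would be an overlined message of subscript $b_{i+1}$, which is not in the safe set at $s$ (only the overlined message of subscript $b_i - 1$ is) --- a contradiction. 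This gives at most one broadcast, forced to equal $\AOp_2^{b_{i+1}}$.

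For the matching lower bound, suppose $v_{i+1}$ stays silent throughout $(j_1,j_2)$. Then at $j_2$ it still receives the overlined message $\AOk^{b_i} = \ovanOp{\AOp_2}{b_i}$ of subscript $b_{i+1}-1$, and an inspection of $\PP_{b_{i+1}}$ shows that any state in which such a reception is safe is itself reached only by $v_{i+1}$ itself broadcasting $\AOp_2^{b_{i+1}}$ beforehand. A symmetric application of the upper-bound argument (swapping the roles of $v_i$ and $v_{i+1}$ and running it on the preceding broadcast pair of $v_{i+1}$), together with Lemma~\ref{lemma:undecidability:correctness:language-broadcasts}, then rules out that such a broadcast took place strictly before $j_1$: it would already have been followed by its matching overlined broadcast before $j_1$, leaving $v_{i+1}$ outside the states that safely receive $\AOk^{b_i}$ at $j_2$. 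This yields a unique $j_3 \in (j_1,j_2)$ with $w = \AOp_2^{b_{i+1}}$. The main obstacle is a uniform, case-by-case verification: each $(\AOp,\AOk)$ dictates its own intermediate state in $\PP_{b_i}$ and its own pair of safe receptions, and the symmetric argument requires matching the state diagrams of $\PP_{b_i}$ and $\PP_{b_{i+1}}$ simultaneously.
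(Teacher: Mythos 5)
Your proposal is correct and follows essentially the same route as the paper: both pin down the admissible states of $v_i$ and $v_{i+1}$ at the endpoints via the $\frownie$-transitions, derive existence of the broadcast from the fact that $v_{i+1}$ cannot reach a state safely receiving $\AOk^{b_i}$ without first emitting $\AOp_2^{b_{i+1}}$ after $j_1$, and derive uniqueness from the overline-alternation of $v_{i+1}$'s broadcasts combined with the two-element safe reception set of $v_i$'s intermediate state. The only cosmetic difference is that the paper anchors existence on $v_{i+1}$'s state at $j_1$ being stable and at $j_2$ being intermediate (so any connecting path must broadcast), whereas you reach the same conclusion by excluding a pre-$j_1$ broadcast; the underlying facts used are identical.
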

\begin{proof}
	We denote $b := b_{i+1}$ for readability's sake, and denote $\AOk = \ovanOp{\AOp_2}{}$.
	Recall that $n_i < n_{i+1}$ and $b= b_i\oplus 1$. Hence:
	\begin{itemize}
		\item if $\AOp_2 = \todoinc{\counter}{}$ for some $\counter \in \mathsf{X}$, then $L_{j_2}(v_{i+1}) \in \set{r^b_{\AOp_2}, r^{b,\counter_1}_{\AOp_2}, r^{b,\counter_2}_{\AOp_2}, q^b_\counter}$;
		\item if $\AOp_2 = \tododec{\counter}{}$ for some $\counter \in \mathsf{X}$, then $L_{j_2}(v_{i+1}) \in \set{r^b_{\AOp_2}, r^{b,\counter_1}_{\AOp_2}, r^{b,\counter_2}_{\AOp_2}, q'^b_\counter}$;
		\item otherwise, $L_{j_2}(v_{i+1}) \in \set{r^b_{\AOp_2}, r^{b,\counter_1}_{\AOp_2}, r^{b,\counter_2}_{\AOp_2}}$.
	\end{itemize}
	Indeed, otherwise, $L_{j_2+1}(v_{i+1}) = \frownie$ as $(L_{j_2}(v_{i+1}), ?\ovanOp{\AOp_2}{b_i}, \frownie) \in \Delta_b$.
	Furthermore:
	\begin{itemize}
	\item if $\AOp = \testmess{\counter}{}$ for some $\counter \in \mathsf{X}$, then $L_{j_1}(v_{i+1}) \in \set{\textsf{z}^b, \othercounter^b \mid \othercounter \in \textsf{X}\setminus \set{\counter}}$;
	\item otherwise, $L_{j_1}(v_{i+1}) \in \set{\textsf{z}^b, \othercounter^b \mid \othercounter \in \textsf{X}}$.
	\end{itemize}
	Indeed, otherwise, $L_{j_2+1}(v_{i+1}) = \frownie$ as $(L_{j_1}(v_{i+1}), ?\AOp^{b_i}, \frownie) \in \Delta_b$.
	Note that any path between $L_{j_1 + 1}(v_{i+1})$ and $L_{j_2}(v_{i+1})$ requires at least one broadcast from $v_{i+1}$, and in particular a broadcast of $\AOp_2^b$.
	Hence, there exists $j_1 < j_3 < j_2$ such that $C_{j_3} \transup{v_{i+1}, \AOp_2^b} C_{j_3+1}$, consider the smallest such index. We now show that this is the unique index between $j_1$ and $j_2$ from which $v_{i+1}$ broadcasts something.

	If there exists $j_1 < j_4 <j_3$ such that $C_{j_4} \transup{v_{i+1}, w} C_{j_4+1}$ for some $w \in \Sigma$, and take $j_4$ the largest such index. By construction of $j_4$ and $j_3$,  $L_{j_4 + 1}(v_{i+1}) \in \set{\textsf{z}^b, \othercounter^b \mid \othercounter \in \textsf{X}}$, and so $w \in \text{OK}^b$. Hence, it holds that $L_{j_4}(v_i) \nin \set{r^{b_i}_{\AOp_2}, r^{b_i,\counter_1}_{\AOp_2}, r^{b_i,\counter_2}_{\AOp_2}, q^{b_i}_\counter, q'^{b_i}_\counter}$, as otherwise, $(L_{j_4}(v_i) , ?w, \frownie)$. This contradicts the fact that $C_{j_1 +1}\transup{\mid v_i}^\ast C_{j_4}$ and $C_{j_1} \transup{v_i, \AOp^{b_i}} C_{j_1+1}$.
	With a similar argument, there is no $j_3 < j_5 \leq j_2$ such that $C_{j_5} \transup{v_{i+1}, w} C_{j_5+1}$ for some $w \in \Sigma$.

\end{proof}

\begin{lemma}\label{lemma:undecidability:correctness:interleaving-2}
	For all $0 \leq i < m$, $0 < j_1 < j_2 \leq n_i$, $\AOp \in \text{OP}$, $w \in \Sigma$, such that $C_{j_1} \transup{v_i, 
		\ovanOp{\AOp}{b_i}} C_{j_1 +1} \transup{\mid v_i}^\ast C_{j_2} \transup{v_i, w} C_{j_2 + 1}$, there exists a unique $j_1 < j_3 < j_2$ such that $C_{j_3} \transup{v_{i+1} ,w'} C_{j_3 +1}$ for some $w' \in \Sigma$. Furthermore, $w' = \AOk^{'b}$ with $\AOk' \in \text{OK}(\AOp)$.
\end{lemma}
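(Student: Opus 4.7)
My plan is to mirror the strategy used for \cref{lemma:undecidability:correctness:interleaving-1}, but now tracking the states of $v_{i+1}$ at the endpoints $j_1$ and $j_2$ to argue that exactly one broadcast of an $\text{OK}$-style message must occur between them. Denote $b := b_{i+1} = b_i \oplus 1$ and write $\AOp \in \text{OP}$, so that $\ovanOp{\AOp}{} \in \set{\ovtodoinc{\counter}{}, \ovtododec{\counter}{}, \ovdoneinc{\counter}{}, \ovdonedec{\counter}{}, \ovtest{\counter}{}}$ for some $\counter \in X$.

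First I would determine $L_{j_1}(v_{i+1})$: since $C_{j_1} \transup{v_i, \ovanOp{\AOp}{b_i}} C_{j_1+1}$ and $v_{i+1} \in \NeighG{\Gamma}{v_i}$, inspecting the transitions of $\PP_b$ leaving the reply states (where a message $\ovanOp{\cdot}{b_i}$ can be received without falling into $\frownie$) shows that $L_{j_1}(v_{i+1}) \in \set{r^b_{\AOp'}, r^{b,\counter_1}_{\AOp'}, r^{b,\counter_2}_{\AOp'}}$ for some $\AOp' \in \text{OP}$ with $\ovanOp{\AOp}{} \in \text{OK}(\AOp')$; and $L_{j_1+1}(v_{i+1})$ is the corresponding ``returned'' state in $\set{\textsf{z}^b, \counter_1^b, \counter_2^b}$ minus the one forbidden by $\frownie$-transitions. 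Next I would determine $L_{j_2}(v_{i+1})$: since $C_{j_2} \transup{v_i, w} C_{j_2+1}$ and $v_{i+1}$ must not fall into $\frownie$, and using \cref{lemma:undecidability:correctness:language-broadcasts} to constrain $w$ to $E_i$, $v_{i+1}$ must be in a ``broadcast-ready'' state of $\PP_b$, i.e. $L_{j_2}(v_{i+1}) \in \set{\textsf{z}^b, \counter_1^b, \counter_2^b}$ (with the corresponding restriction if $w$ is a test message).

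Any path of local transitions of $v_{i+1}$ from a state in $\set{\textsf{z}^b, \counter_1^b, \counter_2^b}$ back into that same set must cross at least one reply state via a broadcast, and returning from that reply state requires precisely one further broadcast of some $\ovanOp{\AOp'}{b}$ with $\AOp' \in \text{OP}$. Thus there exists $j_1 < j_3 < j_2$ with $C_{j_3} \transup{v_{i+1}, w'} C_{j_3+1}$, and by picking the smallest such $j_3$ and inspecting the protocol states, $w'$ must be of the form $\ovanOp{\AOk'}{b}$ for some $\AOk' \in \text{OK}(\AOp)$ (the constraint $\AOk' \in \text{OK}(\AOp)$ comes from matching the transition in $\PP_b$ that enabled reception of $\ovanOp{\AOp}{b_i}$ at time $j_1$).

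For uniqueness, the argument replays that of \cref{lemma:undecidability:correctness:interleaving-1}: if there were a further broadcast $j_1 < j_4 < j_3$ by $v_{i+1}$, then between $j_1$ and $j_4$ there must also be a return broadcast $\ovanOp{\cdot}{b}$ from a reply state; but $v_i$'s state between $j_1+1$ and $j_4$ lies in the ``broadcast-ready'' states of $\PP_{b_i}$ (since $v_i$ performs no transition in that interval by the hypothesis $C_{j_1+1} \transup{\mid v_i}^\ast C_{j_2}$), and the reception of $\ovanOp{\cdot}{b}$ from any such state leads to $\frownie$ via the $(\cdot, ?m, \frownie)$ transitions of $\PP_{b_i}$, contradicting that $v_i$ later broadcasts $w$. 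The symmetric argument rules out any $j_3 < j_5 \leq j_2$ with $v_{i+1}$ broadcasting. The only delicate point, and the main obstacle of the proof, is the careful case analysis on the $\frownie$-labels of $\PP_b$ and $\PP_{b_i}$ to make sure every forbidden transition is covered; this is essentially the same bookkeeping as in \cref{lemma:undecidability:correctness:interleaving-1}, so no new idea is required.
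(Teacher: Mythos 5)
Your overall strategy---pin down the state of $v_{i+1}$ at $j_1$ and at $j_2$, deduce that a broadcast must separate them, then rule out a second broadcast by making $v_i$ fail---is the same as the paper's, but several concrete steps are wrong, and the first is a genuine gap. The states of $\PP_b$ in which $\ovanOp{\AOp}{b_i}$ can be received without falling into $\frownie$ are \emph{not} only the reply states $r_{\AOp'}^b, r_{\AOp'}^{b,\counter}$ (and for those the admissibility condition is $\AOp'=\AOp$, not $\ovanOp{\AOp}{}\in\text{OK}(\AOp')$): they also include the \emph{executing} states $q_{\counter}^{b}$ (when $\AOp=\todoinc{\counter}{}$) and $q'^{b}_{\counter}$ (when $\AOp=\tododec{\counter}{}$). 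This omitted case is the heart of the reduction: the node that actually performs the increment sits in $q_{\counter}^{b}$ when it receives $\ovtodoinc{\counter}{b_i}$, and its next broadcast is $\ovdoneinc{\counter}{b}$, which lies in $\text{OK}(\todoinc{\counter}{})$ but is not $\ovtodoinc{\counter}{b}$. Restricting to reply states you would conclude $w'=\ovanOp{\AOp}{b}$, which is false in exactly the executions the reduction is designed to allow; conversely, your relaxed condition $\ovanOp{\AOp}{}\in\text{OK}(\AOp')$ admits $\AOp'=\todoinc{\counter}{}$ when $\AOp=\overincmess{\counter}{}$, and the exit broadcast $\ovtodoinc{\counter}{b}$ of that reply state is not in $\text{OK}(\overincmess{\counter}{})$, so the stated conclusion would not follow from your intermediate claim.

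Two further steps fail. First, receiving an allowed message does not move $v_{i+1}$ back into $\set{\zerostate^{b},\counter_1^{b},\counter_2^{b}}$: reply and executing states have no productive reception transitions (allowed messages are simply not in $R(\cdot)$), so $L_{j_1+1}(v_{i+1})=L_{j_1}(v_{i+1})$ remains there, and it is precisely the fact that these states are entered and left only by broadcasts that forces the existence of $j_3$. Under your premise that $L_{j_1+1}(v_{i+1})$ is already stable, the empty path from $j_1+1$ to $j_2$ would be admissible and existence would not follow. Second, your uniqueness argument blames the wrong message: $v_i$, sitting in $\set{\zerostate^{b_i},\counter^{b_i}}$ after $j_1$, \emph{can} receive a reply $\ovanOp{\AOp'}{b}\in\text{OK}^{b_i\oplus 1}$ without failing (it must, in order to hear its right neighbour's acknowledgement); what it cannot absorb is the second \emph{request} in $\text{OP}^{b}$ that $v_{i+1}$ would have to emit to re-enter a reply or executing state before $j_3$. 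The paper's proof takes $j_3$ maximal and $j_4$ the largest broadcast index below it, so that the broadcast at $j_4$ is necessarily such a request, and that is what sends $v_i$ to $\frownie$.
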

\begin{proof}
	We denote $b := b_{i+1}$ for readability's sake.
	Note that from \cref{lemma:undecidability:correctness:language-broadcasts}, $w \in \text{OP}^{b_i} \cup \set{\text{done}}$.
	Recall that $n_i < n_{i+1}$ and $b = b_i \oplus 1$. Hence, $L_{j_2}(v_{i+1}) \in \set{\zerostate^b, \counter_1^b, \counter_2^b}$, as otherwise $L_{j_2+1}(v_{i+1}) = \frownie$, hence $(L_{j_2}(v_{i+1}) ,?w, \frownie) \in \Delta_b$.
	Furthermore, 
	\begin{itemize}
		\item if $\AOp = \todoinc{\counter}{}$ for some $\counter \in \mathsf{X}$, then $L_{j_1}(v_{i+1})\in \set{r_{\AOp}^b, r_{\AOp}^{b,\counter_1}, r_{\AOp}^{b,\counter_2}, q_\counter^b}$;
		\item if $\AOp = \tododec{\counter}{}$ for some $\counter \in \mathsf{X}$, then $L_{j_1}(v_{i+1})\in \set{r_{\AOp}^b, r_{\AOp}^{b,\counter_1}, r_{\AOp}^{b,\counter_2}, q'^b_\counter}$;
		\item otherwise, $L_{j_1}(v_{i+1})\in \set{r_{\AOp}^b, r_{\AOp}^{b,\counter_1}, r_{\AOp}^{b,\counter_2}}$.
	\end{itemize}
	Note that any path between $L_{j_1}(v_{i+1})$ and $L_{j_2}(v_{i+1})$ requires at least one broadcast from $v_{i+1}$ of a message $w'$, and if $w'$ is the last broadcast from $v_{i+1}$ before reaching $L_{j_2}(v_{i+1})$, it holds that:
	\begin{itemize}
		\item if $\AOp = \todoinc{\counter}{}$, then $w' \in \set{\ovtodoinc{\counter}{b}, \incmess{\counter}{b}}$;
		\item if $\AOp = \tododec{\counter}{}$, then $w' \in \set{\ovtododec{\counter}{b}, \decmess{\counter}{b}}$;
		\item otherwise, $w'= \ovanOp{\AOp}{b}$.
	\end{itemize}
	Hence $w' = \AOk'^b$ with $\AOk' \in \text{OK}(\AOp)$.
	We now show that $j_3$ is the unique index between $j_1$ and $j_2$ from which $v_{i+1}$ broadcasts something. Remember that it has been chosen as the greatest such index.

	If there exists $j_1 < j_4 <j_3$ such that $C_{j_4} \transup{v_{i+1}, w''} C_{j_4+1}$ for some $w'' \in \Sigma$, and take $j_4$ the largest such index. By construction of $j_3$ and $j_4$, $L_{j_4 + 1}(v_{i+1}) \in \set{r_{\AOp}^b, r_{\AOp}^{b,\counter},  q_\counter^b, q'^b_\counter \mid \counter \in \mathsf{X}}$, and so $w'' \in \text{OP}^b$. Hence, it holds that $L_{j_4}(v_i) \nin \set{r^{b_i}_\AOp, r^{b_i, \counter}_\AOp \mid \AOp \in \text{OP}, \counter \in \mathsf{X}}$, as otherwise, $(L_{j_4}(v_i) , ?w, \frownie)$. This contradicts the fact that $C_{j_1 +1}\transup{\mid v_i}^\ast C_{j_4}$ and $C_{j_1} \transup{v_i, \ovanOp{\AOp}{b_i}} C_{j_1+1}$.

\end{proof}

\begin{lemma}\label{lemma:undecidability:correctness:seq-of-indexes}
	Let $0 <j_1^0 < j_2^0< n_0$ and $\AOp \in \set{\todoinc{\counter}{}, \tododec{\counter}{}, \testmess{\counter}{} \mid \counter \in X}$ such that $C_{j_1^0} \transup{v_0,w_0} C_{j_1^0 + 1 } \transup{\mid v_0}^\ast C_{j_2^0 } \transup{v_0, w'_0} C_{j_2^0 +1}$ and $w_0 = \AOp^0$ and $w'_0 = \ovanOp{\AOp}{0}$. Then there exists a unique \emph{sequence of indices} $j_1^1, j_2^1, \dots, j_1^m, j_2^m$ such that for all $1 \leq i \leq m$:
	\begin{itemize}
		\item $j_1^{i-1} < j_1^i  < j_2^{i-1} <j_2^i < n_i$, and
		\item $C_{j_1^i} \transup{v_i, w_i} C_{j_1^i +1} \trans^\ast C_{j_2^i} \transup{v_i, w'_i} C_{j_2^i +1}$ for some $w_i,w'_i \in \Sigma$, and
		\item if $w_{i-1} = \ovanOp{\AOp_i}{b_{i-1}}$ for some $\AOp_i \in \text{OP}$, then $w_i = \AOp_{i}^{b_i}$ and $w'_i = \AOk_i^{b_i}$ with $\AOk_i\in \text{OK}(\AOp_i)$.
	\end{itemize}
	\end{lemma}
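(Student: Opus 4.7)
The plan is to proceed by induction on $i\in\{1,\dots,m\}$, at each stage applying Lemmas \ref{lemma:undecidability:correctness:interleaving-1} and \ref{lemma:undecidability:correctness:interleaving-2} to the pair of consecutive vertices $v_{i-1},v_i$ in order to transfer the pair of ``matched'' broadcasts of $v_{i-1}$ into a corresponding pair of broadcasts of $v_i$. The base case uses the pair $(w_0,w'_0)$ of $v_0$ supplied by the hypothesis, and the inductive step uses the pair $(w_i,w'_i)$ produced by the previous stage. Existence and uniqueness of each $j_1^{i+1}, j_2^{i+1}$ will both come directly from the uniqueness parts of those two lemmas.

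For the base case ($i=1$), I would first apply Lemma \ref{lemma:undecidability:correctness:interleaving-1} to $v_0$ on the window $[j_1^0,j_2^0]$ with $\AOk=\ovanOp{\AOp}{}\in \text{OK}(\AOp)$. This yields a unique $j_1^0<j_1^1<j_2^0$ at which $v_1$ performs its single broadcast in that window, and that broadcast is forced to be $w_1=\AOp^{b_1}$. To obtain $j_2^1$, I would invoke Lemma \ref{lemma:undecidability:correctness:language-broadcasts} to locate the next broadcast of $v_0$ after $j_2^0$, call it $j_3^0$ (it exists and satisfies $j_3^0\leq n_0<n_1$ since $v_0$ eventually broadcasts $\text{done}$). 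Applying Lemma \ref{lemma:undecidability:correctness:interleaving-2} to $v_0$ on the window $[j_2^0,j_3^0]$, where the initial broadcast is $w'_0=\ovanOp{\AOp}{0}$, produces a unique $j_2^0<j_2^1<j_3^0$ at which $v_1$ broadcasts, and this broadcast is of the form $w'_1=\AOk_1^{b_1}$ with $\AOk_1\in \text{OK}(\AOp)$.

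For the inductive step, the induction hypothesis provides $w_i=\AOp_i^{b_i}$ and $w'_i=\AOk_i^{b_i}$ with $\AOk_i\in \text{OK}(\AOp_i)$, so I may write $\AOk_i=\ovanOp{\AOp_{i+1}}{}$ for some $\AOp_{i+1}\in \text{OP}$. Applying Lemma \ref{lemma:undecidability:correctness:interleaving-1} to $v_i$ on $[j_1^i,j_2^i]$ produces a unique $j_1^i<j_1^{i+1}<j_2^i$ at which $v_{i+1}$ broadcasts, and the broadcast is $w_{i+1}=\AOp_{i+1}^{b_{i+1}}$. Letting $j_3^i$ be the next broadcast of $v_i$ after $j_2^i$ (which again exists and satisfies $j_3^i\leq n_i<n_{i+1}$ by Lemma \ref{lemma:undecidability:correctness:language-broadcasts}), I would apply Lemma \ref{lemma:undecidability:correctness:interleaving-2} to $v_i$ on $[j_2^i,j_3^i]$ to obtain a unique $j_2^i<j_2^{i+1}<j_3^i<n_{i+1}$ at which $v_{i+1}$ broadcasts, and this broadcast is of the form $w'_{i+1}=\AOk_{i+1}^{b_{i+1}}$ with $\AOk_{i+1}\in \text{OK}(\AOp_{i+1})$. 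The required chain $j_1^i<j_1^{i+1}<j_2^i<j_2^{i+1}<n_{i+1}$ is immediate from the windows chosen.

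I expect the main obstacle to be purely bookkeeping: at each stage one must identify the pivot index $j_3^i$ (the ``next'' broadcast of $v_i$ after $j_2^i$) to activate Lemma \ref{lemma:undecidability:correctness:interleaving-2}, and one must match the shapes of the messages $\AOp_i^{b_i}$, $\AOk_i^{b_i}=\ovanOp{\AOp_{i+1}}{b_i}$ so that the induction hypothesis on $v_i$ feeds correctly into the hypotheses of the two lemmas applied at stage $i+1$. Uniqueness propagates through the induction because at each step the two lemmas produce \emph{the} unique broadcast index of $v_{i+1}$ in the corresponding window, leaving no ambiguity in the sequence $(j_1^i,j_2^i)_{1\leq i\leq m}$.
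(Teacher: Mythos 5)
Your proposal is correct and follows essentially the same route as the paper's proof: an induction that alternates applications of Lemma \ref{lemma:undecidability:correctness:interleaving-1} on the window $[j_1^i,j_2^i]$ and Lemma \ref{lemma:undecidability:correctness:interleaving-2} on the window from $j_2^i$ to the next broadcast of $v_i$ (whose existence the paper likewise justifies via $j_2^i<n_i$ and the broadcast language of $v_i$), with uniqueness inherited from the uniqueness clauses of those two lemmas.
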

\begin{proof}

	Let $0 <j_1^0 < j_2^0< n_0$ and $\AOp \in \set{\todoinc{\counter}{}, \tododec{\counter}{}, \testmess{\counter}{} \mid \counter \in X}$ such that $C_{j_1^0} \transup{v_0, \AOp^0} C_{j_1^0 + 1 } \transup{\mid v_0}^\ast C_{j_2^0 } \transup{v_0, \ovanOp{\AOp}{0}} C_{j_2^0 +1}$. 
	
	We build the sequence of indexes $j_1^1, j_2^1, \dots, j_1^m, j_2^m$ and the sequence of operations $\AOp_1, \dots, \AOp_m$ such that $w_i = \AOp_i^{b_i}$ and $w'_i = \AOk_i^{b_i}$ with $\AOk_i \in \text{OK}(\AOp_i)$ inductively. 
	Using \cref{lemma:undecidability:correctness:interleaving-1}, there exists a unique $j_1^0 < j_1^1 < j_2^0$ such that $C_{j_1^1} \transup{v_1, w_1} C_{j_1^1 +1}$ for some $w_1 \in \Sigma$, and it holds that $w_1= \AOp^1$. As $j_2^0 <n_0$, there exists $j_2^0 <n\leq n_0$ such that $C_{n} \transup{v_0, u} C_{n+1}$ for some $u \in \Sigma$. Using  \cref{lemma:undecidability:correctness:interleaving-2}, there exists a unique $j_2^0 < j_2^1 < n $ such that $C_{j_2^1} \transup{v_1, w'_1} C_{j_2^1 +1}$ for some $w'_1 \in \Sigma$ with $w'_1 = \AOk_1^{1}$ and $\AOk_1 \in \text{OK}(\AOp)$.
	
	Define $\AOp_1 = \AOp$.
	We found a unique pair of indexes $j_1^1, j_2^1$ such that $j_1^0 < j_1^1 < j_2^0 < j_2^1 \leq n_0 < n_1$ and $C_{j_1^1} \transup{v_1, w_1} C_{j_1^1 +1} \trans^\ast C_{j_2^1} \transup{v_1, w'_1} C_{j_2^1 +1}$ for some $w_1, w'_1 \in \Sigma$. Furthermore, $w'_0= \ovanOp{\AOp}{0}$ and $w_1= \AOp_1^1$ and  $w'_1 = \AOk_1^{b}$ with $\AOk_1 \in \text{OK}(\AOp_1)$.

	Assume we have built $j_1^1, j_2^1, \dots, j_1^k, j_2^k$ for some $k < m$.
	And denote $C_{j_1^k} \transup{v_k, \AOp_k^{b_k}} C_{j_1^k +1}$ and $C_{j_2^k} \transup{v_k, \AOk_k^{b_k}} C_{j_2^k +1}$ with $\AOk_k = \ovanOp{\AOp'_k}{}$.
	Using  \cref{lemma:undecidability:correctness:interleaving-1}, there exists a unique $j_1^k < j_1^{k+1} < j_2^k$ such that $C_{j_1^{k+1}} \transup{v_{k+1}, w_{k+1}} C_{j_1^1 +1}$ for some $w_{k+1}\in \Sigma$. Furthermore $w_{k+1}= \AOp_k^{'b_{k+1}}$. 
	
	As $j_2^k <n_k$, there exists $j_2^k <n\leq n_k$ such that $C_{n} \transup{v_k, u} C_{n+1}$ for some $u \in \Sigma$. Using  \cref{lemma:undecidability:correctness:interleaving-2}, there exists a unique $j_2^k < j_2^{k+1} < n \leq n_k < n_{k+1}$ such that $C_{j_2^{k+1}} \transup{v_{k+1}, w'_{k+1}} C_{j_2^{k+1} +1}$ for some $w'_{k+1} \in \Sigma$. Furthermore, $w'_{k+1} = \AOk^{'b}$ and $\AOk' \in \text{OK}(\AOp'_k)$.

\end{proof}

Let $0 <j_1^0 < j_2^0< n_0$ and $\AOp \in \set{\todoinc{\counter}{}, \tododec{\counter}{}, \testmess{\counter}{} \mid \counter \in X}$ such that $C_{j_1^0} \transup{v_0, \AOp^0} C_{j_1^0 + 1 } \transup{\mid v_0}^\ast C_{j_2^0 } \transup{v_0, \ovanOp{\AOp}{0}} C_{j_2^0 +1}$. 
We denote $\textsf{Seq}(j_1^0, j_2^0)$ the unique sequence of indexes $j_1^1, j_2^1, \dots,\linebreak j_1^m, j_2^m$ defined in the previous lemma.

\begin{lemma}\label{lemma:undecidability:correctness:seq-test}
	If $\AOp = \testmess{\counter}{}$ for some $\counter \in \mathsf{X}$, then for all $1 \leq i \leq m$, $L_{j_1^i}(v_i) = L_{j_2^i + 1}(v_i) \in \set{\zerostate^{b_i}, \othercounter^{b_i} \mid \othercounter\neq \counter}$.
\end{lemma}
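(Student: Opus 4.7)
My plan is to split the proof into two parts: first establishing that $L_{j_1^i}(v_i)$ belongs to the prescribed set, and then showing that the state at $j_2^i + 1$ coincides with the state at $j_1^i$. I would begin by applying \cref{lemma:undecidability:correctness:seq-of-indexes} inductively to identify which operation propagates along the line. Since $\AOk_i \in \text{OK}(\AOp_i)$ and $\text{OK}(\testmess{\counter}{}) = \set{\ovtest{\counter}{}}$ is a singleton, a straightforward induction on $i$ shows that $\AOp_i = \testmess{\counter}{}$ and $\AOk_i = \ovtest{\counter}{}$ for every $1 \leq i \leq m$. In particular, at step $j_1^i$ vertex $v_i$ broadcasts $\testmess{\counter}{b_i}$, and at step $j_2^i$ it broadcasts $\ovtest{\counter}{b_i}$.

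For the first conjunct, the state $L_{j_1^i}(v_i)$ must admit an outgoing $!!\testmess{\counter}{b_i}$ transition in $\PP_{b_i}$. Inspection of \cref{fig:undec:prot} shows that this is possible only from $\zerostate^{b_i}$ (reaching the resting state $r_{\testmess{\counter}{}}^{b_i}$) or from $\othercounter^{b_i}$ with $\othercounter \neq \counter$ (reaching $r_{\testmess{\counter}{}}^{b_i, \othercounter}$). The crucial point, and the one that encodes zero-test correctness at the protocol level, is that $\counter^{b_i}$ admits outgoing broadcasts $!!\textsf{op}^{b_i}$ only for $\textsf{op} \in \text{OP} \setminus \set{\testmess{\counter}{}}$, because the target state $r_{\testmess{\counter}{}}^{b_i, \counter}$ is intentionally absent from the protocol. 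This immediately yields $L_{j_1^i}(v_i) \in \set{\zerostate^{b_i}, \othercounter^{b_i} \mid \othercounter \neq \counter}$.

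For the second conjunct, I would argue that $v_i$ remains in its post-broadcast resting state throughout the interval $(j_1^i, j_2^i)$. By the uniqueness established in \cref{lemma:undecidability:correctness:seq-of-indexes} together with its supporting interleaving lemmas, $v_i$ performs no further broadcast in this interval. In both resting states $r_{\testmess{\counter}{}}^{b_i}$ and $r_{\testmess{\counter}{}}^{b_i, \othercounter}$, every reception transition defined by $\PP_{b_i}$ leads to $\frownie$, while the remaining messages (those that would be benign acknowledgments from neighbors) carry no reception transition at all and therefore leave the state unchanged by the broadcast-network semantics. Since $v_i$ does broadcast $\ovtest{\counter}{b_i}$ at $j_2^i$, it has not failed, so it is still in its resting state at $j_2^i$. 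Executing $!!\ovtest{\counter}{b_i}$ from $r_{\testmess{\counter}{}}^{b_i}$ returns $v_i$ to $\zerostate^{b_i}$, and from $r_{\testmess{\counter}{}}^{b_i, \othercounter}$ returns it to $\othercounter^{b_i}$, in either case exactly matching $L_{j_1^i}(v_i)$. The main obstacle will be the careful bookkeeping of the protocol transitions to confirm the absence of $!!\testmess{\counter}{b_i}$ out of $\counter^{b_i}$ and the structure of the resting states' outgoing arcs; once these two protocol-level facts are pinned down, the rest of the argument is a direct unfolding.
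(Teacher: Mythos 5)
Your proposal is correct and follows essentially the same route as the paper: first use \cref{lemma:undecidability:correctness:seq-of-indexes} together with the fact that $\text{OK}(\testmess{\counter}{})$ is a singleton to force $\AOp_i = \testmess{\counter}{}$ and $\AOk_i = \ovtest{\counter}{}$ for every $i$, then conclude by the structure of the protocol that $v_i$ sits in $\set{\zerostate^{b_i}, \othercounter^{b_i} \mid \othercounter\neq\counter}$ at $j_1^i$ and returns to the same state at $j_2^i+1$. The paper compresses your second step into the phrase ``by construction of the protocol''; your unfolding of it (no $!!\testmess{\counter}{b_i}$ out of $\counter^{b_i}$, and the rest states' receptions either fail or leave the state unchanged) is exactly the intended justification.
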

\begin{proof}
	From \cref{lemma:undecidability:correctness:seq-of-indexes}, as $\text{OK}(\testmess{\counter}{}) = \set{\ovtest{\counter}{}}$, it holds that 
	for all $1 \leq i \leq m$, $C_{j_1^i} \transup{v_i,\testmess{\counter}{b_i}} C_{j_1^i +1}$, and $C_{j_2^i} \transup{v_i,\ovtest{\counter}{b_i}} C_{j_2^i +1}$. As $C_{j_1^i + 1} \transup{\mid v_i}^\ast C_{j_2^i}$, by construction of the protocol, we have that that $L_{j_1^i}(v_i) = L_{j_2^i + 1}(v_i) \in \set{\zerostate^{b_i}, \othercounter^{b_i} \mid \othercounter\neq \counter}$.
\end{proof}

\begin{lemma}\label{lemma:undecidability:correctness:seq-inc}
	If $\AOp = \todoinc{\counter}{}$ for some $\counter \in \mathsf{X}$, then, there exists a unique $1 \leq p \leq m$ such that $L_{j_1^p}(v_p) = \zerostate^{b_p}$, $L_{j_2^p + 1}(v_p) =\counter^{b_p}$, and for all $1 \leq i \leq m$, $i\neq p$, it holds that
	$L_{j_1^i}(v_i) = L_{j_2^i + 1}(v_i) \in \set{\zerostate^{b_i}, \counter_1^{b_i}, \counter_2^{b_i}}$.
\end{lemma}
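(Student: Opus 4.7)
The plan is to apply \cref{lemma:undecidability:correctness:seq-of-indexes} to obtain the sequence $j_1^1, j_2^1, \ldots, j_1^m, j_2^m$ together with the operations $\AOp_i$ and acknowledgments $\AOk_i$ satisfying $w_i = \AOp_i^{b_i}$ and $w'_i = \AOk_i^{b_i}$. Since $w'_0 = \ovtodoinc{\counter}{0} = \ovanOp{\todoinc{\counter}{}}{0}$, the chain rule of that lemma gives $\AOp_1 = \todoinc{\counter}{}$ and $\AOk_1 \in \text{OK}(\todoinc{\counter}{}) = \set{\ovtodoinc{\counter}{}, \ovdoneinc{\counter}{}}$. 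A straightforward induction then shows that each pair $(\AOp_i, \AOk_i)$ can only take one of three forms: $(\todoinc{\counter}{}, \ovtodoinc{\counter}{})$ (forwarding the request), $(\todoinc{\counter}{}, \ovdoneinc{\counter}{})$ (executing the increment), or $(\doneinc{\counter}{}, \ovdoneinc{\counter}{})$ (forwarding the already-executed acknowledgment); moreover, since $\text{OK}(\doneinc{\counter}{}) = \set{\ovdoneinc{\counter}{}}$, once the second or third form appears at some index, every subsequent step must stay in the third form.

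I will define $p$ as the smallest index with $\AOk_p = \ovdoneinc{\counter}{}$ and then analyse the three ranges of $i$ using the structure of the protocols $\PP_{b_i}$. For $i < p$, the pair $(\todoinc{\counter}{b_i}, \ovtodoinc{\counter}{b_i})$ can only be emitted via the forward branch: from any state $s \in \set{\zerostate^{b_i}, \counter_1^{b_i}, \counter_2^{b_i}}$, the broadcast $!!\todoinc{\counter}{b_i}$ leads to the corresponding repeat state $r^{b_i}_{\todoinc{\counter}{}}$ (or $r^{b_i,\othercounter}_{\todoinc{\counter}{}}$), and $!!\ovtodoinc{\counter}{b_i}$ returns to $s$, so $L_{j_1^i}(v_i) = L_{j_2^i+1}(v_i)$. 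For $i = p$, the pair $(\todoinc{\counter}{b_p}, \ovdoneinc{\counter}{b_p})$ is emittable only via the execute-increment branch of $\PP_{b_p}$, which starts in $\zerostate^{b_p}$, passes through $q_\counter^{b_p}$, and ends in $\counter^{b_p}$; hence $L_{j_1^p}(v_p) = \zerostate^{b_p}$ and $L_{j_2^p+1}(v_p) = \counter^{b_p}$. For $i > p$, the pair $(\doneinc{\counter}{b_i}, \ovdoneinc{\counter}{b_i})$ is again a pure forwarding cycle through the relevant repeat state, leaving the state of $v_i$ unchanged.

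For existence of $p$, I will argue by contradiction: if every $\AOk_i$ were $\ovtodoinc{\counter}{}$, then $\AOp_m = \todoinc{\counter}{}$, so $v_m$ would broadcast $\todoinc{\counter}{b_m} = \incmess{\counter}{1}$ (using $b_m = 1$ from \cref{lemma:undecidability:correctness:structure}). By construction of $V$, $v_{m+1}$ is still in $q^{\textsf{tail}}$ just before this broadcast and has $v_m$ as its only active neighbor by \cref{lemma:undecidability:correctness:structure:vm-neighbor}; the tail protocol $\PP_{\textsf{tail}}$ then drives $v_{m+1}$ to $\frownie$, contradicting the hypothesis that $v_{m+1}$ covers $q_f$. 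Uniqueness of $p$ is then immediate from the preceding case analysis, since the state of $v_i$ changes only at index $p$.

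The main obstacle will be the careful case analysis of the protocols $\PP_{b_i}$ ensuring that each message pair $(w_i, w'_i)$ uniquely determines the branch taken. This requires ruling out spurious combinations of two consecutive broadcasts, in particular confirming that $!!\ovdoneinc{\counter}{b_i}$ can follow $!!\todoinc{\counter}{b_i}$ only from $\zerostate^{b_i}$ via $q_\counter^{b_i}$, since the execute-increment branch is unavailable from $\counter^{b_i}$ or $\othercounter^{b_i}$ (where only forwarding is offered), and checking that the forwarding branches at $i < p$ and $i > p$ indeed cycle back to the same state they started in.
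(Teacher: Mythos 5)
Your proof is correct and follows essentially the same route as the paper: use the chain structure of Lemma \ref{lemma:undecidability:correctness:seq-of-indexes} to show the operation sequence is a block of $\todoinc{\counter}{}$'s followed by a block of $\doneinc{\counter}{}$'s with a single switching index $p$ (forced to exist because the tail would reach $\frownie$ on receiving $\todoinc{\counter}{1}$), and then read off the states of each $v_i$ from the branch of $\PP_{b_i}$ that its message pair uniquely determines. The only cosmetic difference is that you run a single forward induction over the three admissible pair forms where the paper states two separate propagation observations (forward for ``done'', backward for ``todo''); these are equivalent.
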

\begin{proof}
	Denote $\AOp_0, \dots, \AOp_m \in \text{OP}$, $\AOk_0, \dots, \AOk_m \in \text{OK}$ as defined in \cref{lemma:undecidability:correctness:seq-of-indexes}.
	
	We start by oberving the following: 
	Let $0 \leq i < m$ such that $\AOp_i = \doneinc{\counter}{}$, then $\AOp_{i+1} = \doneinc{\counter}{}$ and $\AOk_{i+1} = \ovdoneinc{\counter}{}$. Indeed, it holds that $\AOk_i = \overincmess{\counter}{}$ by definition of $\text{OK}(\overincmess{\counter}{})$, hence .
	by \cref{lemma:undecidability:correctness:seq-of-indexes}, $\AOp_{i+1} = \doneinc{\counter}{}$, and so $\AOk_{i+1} = \ovdoneinc{\counter}{}$ by definition of $\text{OK}(\overincmess{\counter}{})$.
	
	As a consequence, $L_{j_1^{i+1}}(v_{i+1}) = L_{j_2^{i+1}+ 1}(v_{i+1}) \in \set{\zerostate^{b_{i+1}}, \counter_1^{b_{i+1}}, \counter_2^{b_{i+1}}}$.
	
	Now let $0  < i \leq m$, such that $\AOp_i = \todoinc{\counter}{}$, then $\AOp_{i-1} = \todoinc{\counter}{}$ and $\AOk_{i-1}= \ovtodoinc{\counter}{}$. 
	Indeed, by \cref{lemma:undecidability:correctness:seq-of-indexes}, as $\AOp_i = \todoinc{\counter}{}$, it must be that $\AOk_{i-1} = \ovtodoinc{\counter}{}$. And from \ref{lemma:undecidability:correctness:language-broadcasts}, $\AOk_{i-1} \in \text{OK}(\AOp_{i-1})$.
	Note that the only $\AOp \in \text{OP}$ such that $\ovtodoinc{\counter}{} \in \text{OP}(\AOp)$ is $\todoinc{\counter}{}$, hence $\AOp_{i-1} = \todoinc{\counter}{}$. 
	
	As a consequence, $L_{j_1^{i-1}}(v_{i+1}) = L_{j_2^{i-1}+ 1}(v_{i-1}) \in \set{\zerostate^{b_{i-1}}, \counter_1^{b_{i-1}}, \counter_2^{b_{i-1}}}$.
	
	Recall that $\AOp_0 = \todoinc{\counter}{}$ and observe that $\textsf{op}_m = \overincmess{\counter}{}$, as otherwise, necessarily $L_{j_1^m + 1}(v_{m+1})= \frownie$. As a consequence, there exists a unique $0 < p < m$ such that $\AOp_p = \todoinc{\counter}{}$ and $\AOp_{p+1} = \overincmess{\counter}{}$ and it holds that for all $i < p$, $\AOp_i = \todoinc{\counter}{}$ and for all $i > p$, $\AOp_i = \overincmess{\counter}{}$.
	
	Observe that, from \cref{lemma:undecidability:correctness:seq-of-indexes}, as $\AOp_{p+1} = \overincmess{\counter}{}$, $\AOk_p = \ovdoneinc{\counter}{}$, and so by construction of $\PP$, it holds that $L_{j_1^p}(v_p) = \zerostate^{b_p}$, $L_{j_2^p + 1}(v_p) =\counter^{b_p}$.
	
\end{proof}

\begin{lemma}\label{lemma:undecidability:correctness:seq-dec}
		If $\AOp = \tododec{\counter}{}$ for some $\counter \in \mathsf{X}$, then, there exists a unique $1 \leq p \leq m$ such that $L_{j_1^p}(v_p) = \counter^{b_p}$, $L_{j_2^p + 1}(v_p) =\zerostate^{b_p}$, and for all $1 \leq i \leq m$, $i\neq p$, it holds that
	$L_{j_1^i}(v_i) = L_{j_2^i + 1}(v_i) \in \set{\zerostate^{b_i}, \counter_1^{b_i}, \counter_2^{b_i}}$.
\end{lemma}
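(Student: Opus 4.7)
The plan is to mirror the proof of \cref{lemma:undecidability:correctness:seq-inc} almost verbatim, swapping the increment-related messages for their decrement analogues. I would first apply \cref{lemma:undecidability:correctness:seq-of-indexes} to produce the sequences $\AOp_0, \ldots, \AOp_m$ and $\AOk_0, \ldots, \AOk_m$, with $\AOp_0 = \tododec{\counter}{}$ and $\AOk_0 = \ovtododec{\counter}{}$, which is forced by the shape of the decrement gadget in $\PP_M$ (\cref{fig:undec:prot-M}). Two propagation facts then need to be established: (i) if $\AOp_i = \tododec{\counter}{}$ for some $i<m$, then $\AOk_i \in \text{OK}(\tododec{\counter}{}) = \set{\ovtododec{\counter}{}, \ovdonedec{\counter}{}}$, whence \cref{lemma:undecidability:correctness:seq-of-indexes} forces $\AOp_{i+1} \in \set{\tododec{\counter}{}, \overdecmess{\counter}{}}$; and (ii) if $\AOp_i = \overdecmess{\counter}{}$ for some $i<m$, then $\AOk_i$ is necessarily the unique element $\ovdonedec{\counter}{}$ of $\text{OK}(\overdecmess{\counter}{})$, and since $\ovdonedec{\counter}{}$ is the overlined form only of $\overdecmess{\counter}{}$, we get $\AOp_{i+1} = \overdecmess{\counter}{}$. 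Together these show that the sequence $\AOp_0, \ldots, \AOp_m$ is an initial block of $\tododec{\counter}{}$'s followed by a terminal block of $\overdecmess{\counter}{}$'s.

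Next I would argue that $\AOp_m = \overdecmess{\counter}{}$: otherwise $v_m$ would broadcast $\tododec{\counter}{1} = \decmess{\counter}{1}$ towards $v_{m+1}$, but in $q^{\textsf{tail}}$ the only transition available on message $\decmess{\counter}{1}$ leads to $\frownie$ (\cref{fig:undec:prot-tail}), preventing $v_{m+1}$ from later receiving $\text{done}$ and reaching $q_f$, a contradiction. Combining (i), (ii) and the endpoint conditions $\AOp_0 = \tododec{\counter}{}$ and $\AOp_m = \overdecmess{\counter}{}$, there is a unique index $p \in \set{1, \ldots, m}$ with $\AOp_p = \tododec{\counter}{}$ and $\AOp_{p'} = \overdecmess{\counter}{}$ for every $p' > p$; by \cref{lemma:undecidability:correctness:seq-of-indexes} this forces $\AOk_p = \ovdonedec{\counter}{}$, while $\AOk_i = \ovtododec{\counter}{}$ for $i < p$ and $\AOk_i = \ovdonedec{\counter}{}$ for $i > p$.

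Finally, I would read off the local states at the indices $j_1^i$ and $j_2^i+1$ directly from the construction of $\PP_{b_i}$ (\cref{fig:undec:prot}). For $i \neq p$, the pair of broadcasts $\AOp_i^{b_i}$ and $\AOk_i^{b_i}$ by $v_i$ corresponds to a "transmit without executing" cycle through one of the repeat states of $\PP_{b_i}$, which returns $v_i$ to its starting zone state, yielding $L_{j_1^i}(v_i) = L_{j_2^i+1}(v_i) \in \set{\zerostate^{b_i}, \counter_1^{b_i}, \counter_2^{b_i}}$. For $i = p$, since $\AOp_p = \tododec{\counter}{}$ and $\AOk_p = \ovdonedec{\counter}{}$, the only path in $\PP_{b_p}$ broadcasting this pair goes $\counter^{b_p} \to q_\counter^{'b_p} \to \zerostate^{b_p}$, giving $L_{j_1^p}(v_p) = \counter^{b_p}$ and $L_{j_2^p+1}(v_p) = \zerostate^{b_p}$. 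The main subtlety is excluding a state-changing detour of $v_i$ inside the interval $(j_1^i, j_2^i)$ for $i \neq p$; this follows because any additional broadcast of $v_i$ in that interval would, by the uniqueness clauses of \cref{lemma:undecidability:correctness:interleaving-1} and \cref{lemma:undecidability:correctness:interleaving-2}, force an extra broadcast of $v_{i+1}$ inside $(j_1^i, j_2^i)$, contradicting the uniqueness of $j_1^{i+1}, j_2^{i+1}$ used in \cref{lemma:undecidability:correctness:seq-of-indexes}.
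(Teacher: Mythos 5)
Your proof is correct and follows essentially the same route as the paper, which simply invokes the increment case (\cref{lemma:undecidability:correctness:seq-inc}) mutatis mutandis: establish via \cref{lemma:undecidability:correctness:seq-of-indexes} that the operation sequence is a block of $\tododec{\counter}{}$'s followed by a block of $\overdecmess{\counter}{}$'s, pin down the unique switch index $p$ using $\AOp_0 = \tododec{\counter}{}$ and the tail's rejection of $\decmess{\counter}{1}$, and read the states off the execute/transmit paths of $\PP_{b_i}$. The only cosmetic difference is that you propagate both facts forward while the paper propagates $\text{done}$ forward and $\text{todo}$ backward; both yield the same block decomposition.
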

\begin{proof}
	Same proof as \cref{lemma:undecidability:correctness:seq-inc}.
\end{proof}

 We now associate to $j_1^0, j_2^0$ two configurations $\textsf{prec}(j_1^0, j_2^0) = (\ell, x_1, x_2)$, $\textsf{succ}(j_1^0, j_2^0) = (\ell', x_1', x_2')$ of the machine as follows: 
\begin{itemize}
	\item $\ell = L_{j_1^0}(v_0)$, $x_1 = |\set{i \mid L_{j_1^i}(v_i) = \counter_1^{b_i}, 1 \leq i \leq m}|$, $x_2 = |\set{i \mid L_{j_1^i}(v_i) = \counter_2^{b_i}, 1 \leq i \leq m}|$;
	\item $\ell' = L_{j_2^0 + 1}(v_0)$, $x'_1 = |\set{i \mid L_{j_2^i + 1}(v_i) = \counter_1^{b_i}, 1 \leq i \leq m}|$, $x'_2 = |\set{i \mid L_{j_2^i + 1}(v_i) = \counter_2^{b_i}, 1 \leq i \leq m}|$.
\end{itemize}

\begin{lemma}\label{lemma:undecidability:correctness:machine-conf-step-trans}
	$\textsf{prec}(j_1^0, j_2^0) \transRelM{} \textsf{succ}(j_1^0, j_2^0)$
\end{lemma}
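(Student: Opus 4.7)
The plan is to prove this by case analysis on $\AOp \in \{\todoinc{\counter}{}, \tododec{\counter}{}, \testmess{\counter}{} \mid \counter \in X\}$. In each case I would first extract the corresponding Minsky machine transition from the structure of $\PP_M$, and then use the matching lemma (\cref{lemma:undecidability:correctness:seq-test}, \cref{lemma:undecidability:correctness:seq-inc}, or \cref{lemma:undecidability:correctness:seq-dec}) to evaluate the effect on counter values.

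First I would recover the location transition. By construction of $\PP_M$ (\cref{fig:undec:prot-M}), the only way $v_0$ can broadcast $\AOp^0$ followed (after some steps in which it does not itself move, by definition of $\transup{\mid v_0}^\ast$) by $\ovanOp{\AOp}{0}$ is via a sequence $\ell \transup{v_0,!!\AOp^0} q_t \transup{v_0,!!\ovanOp{\AOp}{0}} \ell'$, where $t=(\ell,\text{op},\ell')\in \TransM$ and $\text{op}$ is $\inc{\counter}$, $\dec{\counter}$, or $\test{\counter}$ depending on whether $\AOp$ is $\todoinc{\counter}{}$, $\tododec{\counter}{}$, or $\testmess{\counter}{}$. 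By definition $\ell=L_{j_1^0}(v_0)$ and $\ell'=L_{j_2^0+1}(v_0)$, giving us the location component of the transition.

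Then I would treat each case. If $\AOp=\testmess{\counter}{}$, then \cref{lemma:undecidability:correctness:seq-test} gives $L_{j_1^i}(v_i)=L_{j_2^i+1}(v_i)\in\set{\zerostate^{b_i},\othercounter^{b_i}\mid \othercounter\neq\counter}$ for every $1\leq i\leq m$. Hence no $v_i$ is in state $\counter^{b_i}$ either before or after, so the value of $\counter$ is $0$ in both $\textsf{prec}(j_1^0,j_2^0)$ and $\textsf{succ}(j_1^0,j_2^0)$; the value of $\othercounter$ is unchanged since $L_{j_1^i}(v_i)=L_{j_2^i+1}(v_i)$. Combined with $\ell\transRelM{\test{\counter}}\ell'$, this yields the required transition. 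If $\AOp=\todoinc{\counter}{}$, then by \cref{lemma:undecidability:correctness:seq-inc} there is a unique index $p$ at which the state changes from $\zerostate^{b_p}$ to $\counter^{b_p}$, and all other $v_i$ satisfy $L_{j_1^i}(v_i)=L_{j_2^i+1}(v_i)$. Counting, the value of $\counter$ increases by exactly one while the value of $\othercounter$ is unchanged, matching $\ell\transRelM{\inc{\counter}}\ell'$. The decrement case is symmetric, using \cref{lemma:undecidability:correctness:seq-dec}.

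I do not anticipate any major obstacle here: the three enumeration lemmas do all the combinatorial work, and the only thing to check is that the counting on the labelling functions yields exactly the relation between $(\ell,x_1,x_2)$ and $(\ell',x_1',x_2')$ demanded by the Minsky semantics, which is a direct computation in each of the three cases.
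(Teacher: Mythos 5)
Your proposal matches the paper's proof essentially step for step: both extract the machine transition $t=(\ell,\text{op},\ell')\in\TransM$ from the structure of $\PP_M$ (the broadcast of $\AOp^0$ followed by $\ovanOp{\AOp}{0}$ forces the path $\ell \to q_t \to \ell'$), and both then invoke \cref{lemma:undecidability:correctness:seq-test}, \cref{lemma:undecidability:correctness:seq-inc} and \cref{lemma:undecidability:correctness:seq-dec} to count the processes in states $\counter_1^{b_i}$, $\counter_2^{b_i}$ before and after and conclude that the counter values evolve according to the Minsky semantics. The argument is correct and no gap remains.
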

\begin{proof}
	Denote $\AOp_0, \dots, \AOp_m \in \text{OP}$, $\AOk_0, \dots, \AOk_m \in \text{OK}$ as defined in \cref{lemma:undecidability:correctness:seq-of-indexes}.
	
	Observe that by construction, $L_{j_1^0}(v_0) = \ell$, $L_{j_2^0 +1}(v_0) = \ell'$ for some $\ell, \ell' \in \Loc$. 
	
	\begin{itemize}
		\item If $\AOp_0 = \todoinc{\counter}{}$, then by construction, there exists $t = (\ell, \inc{\counter}, \ell') \in \TransM$. Furthermore, by \cref{lemma:undecidability:correctness:seq-inc}, there exists a unique $1 \leq p < m$ such that $L_{j_1^p}(v_p) \neq L_{j_2^p +1 }(v_p)$ and $L_{j_2^p +1 }(v_p) = \counter^{b_p}$ and $L_{j_1^p}(v_p)=\zerostate^{b_p}$. Denote $i \in \set{1, 2}$ such that $\counter =\counter_i$ and $\textsf{prec}(j_1^0, j_2^0) = (\ell, x_1, x_2)$.
		Hence, $\textsf{succ}(j_1^0, j_2^0) = (\ell', x'_1, x'_2)$ with $x'_i = x_i + 1$, and $x'_{3-i} = x_{3-i}$.
		
		\item If $\AOp_0 = \testmess{\counter}{}$, then by construction, there exists $t = (\ell, \test{\counter}, \ell') \in \TransM$. Denote $i \in \set{1, 2}$ such that $\counter =\counter_i$
		By \cref{lemma:undecidability:correctness:seq-test}, for all $1 \leq k \leq m$, $L_{j_1^k}(v_k) = L_{j_2^k+1 }(v_k) \in \set{\zerostate^{b_k}, \counter_{3-i}^{b_k}}$. Hence, if $\textsf{prec}(j_1^0, j_2^0) = (\ell, x_1, x_2)$, then $x_i = 0$. Denote $\textsf{succ}(j_1^0, j_2^0) = (\ell', x'_1, x'_2)$, it holds that $x'_1 = x_1$, $x_2 = x'_2$ and $x'_i = x_i = 0$.
		
		\item If $\AOp_0 = \tododec{\counter}{}$, then by construction, there exists $t = (\ell, \dec{\counter}, \ell') \in \TransM$. Furthermore, by \cref{lemma:undecidability:correctness:seq-dec}, there exists a unique $1 \leq p < m$ such that $L_{j_1^p}(v_p) \neq L_{j_2^p +1 }(v_p)$ and furthermore, $L_{j_2^p +1 }(v_p) = \zerostate^{b_p}$ and $L_{j_1^p}(v_p)=\counter^{b_p}$. Denote $i \in \set{1, 2}$ such that $\counter =\counter_i$ and $\textsf{prec}(j_1^0, j_2^0) = (\ell, x_1, x_2)$.
		Hence, $\textsf{succ}(j_1^0, j_2^0) = (\ell', x'_1, x'_2)$ with $x'_i = x_i -1$, and $x'_{3-i} = x_{3-i}$.
	\end{itemize}
	
\end{proof}

We now denote $j < j_{1, 0} < j_{2,0} < \cdots < j_{1,k} < j_{2,k} < n_0$ for some $k \in \nat$ the indices such that: 
$C_0 \transup{\mid v_0}^\ast C_j \transup{v_0, 0} C_{j+1} \transup{\mid v_0}^\ast C_{j_{1,0}} \transup{v_0, {\AOp^0}^0} C_{j_{1,0} + 1}\transup{\mid v_0}^\ast C_{j_{2,0}} \transup{v_0, \ovanOp{\AOp^0}{0}} C_{j_{2,0} + 1} \transup{\mid v_0}^\ast C_{j_{1,1}}\cdots C_{j_{2,k}} \transup{v_0, \ovanOp{\AOp^k}{0}} C_{j_{2,k} + 1} \transup{\mid v_0}^\ast C_{n_0}$ with $\AOp^0, \dots, \AOp^k \in \set{\todoinc{\counter}{}, \tododec{\counter}{}, \testmess{\counter}{} \mid \counter \in \set{\counter_1, \counter_2}}$.

\begin{lemma}\label{lemma:undecidability:correctness:machine-conf-init}
	$\textsf{prec}(j_{1,0}, j_{2,0}) = (\ellinit, 0, 0)$.
\end{lemma}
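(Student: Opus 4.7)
The plan is to establish two facts and combine them: $L_{j_{1,0}}(v_0) = \ellinit$, and $L_{j_1^i}(v_i) = \zerostate^{b_i}$ for every $1 \leq i \leq m$. Together they yield $\textsf{prec}(j_{1,0}, j_{2,0}) = (\ellinit, 0, 0)$ directly from the definition of $\textsf{prec}$.

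The first fact will follow from \cref{lemma:undecidability:correctness:language-broadcasts}: the word broadcast by $v_0$ begins with $0\cdot\$$, and by choice of $j_{1,0}$ the only broadcasts of $v_0$ strictly before $j_{1,0}$ are $0$ and $\$$. I will trace the unique non-failing path $\qinit \to q_1 \to q_2 \to \ellinit$ in $\PP$ (which forces a reception $?1$ coming from the only neighbor $v_1$), and note that $\ellinit$ is left only by broadcasting a todo-operation or by reaching $\frownie$. Since $v_0$ does broadcast at $j_{1,0}$, it is not at $\frownie$, hence $L_{j_{1,0}}(v_0) = \ellinit$.

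The second fact will follow from an inductive cascade claim, proved by strong induction on $k$: \emph{for every $0 \leq k \leq m$, vertex $v_k$ performs its first broadcast of a message in $\text{OP}^{b_k}$ at index $j_1^k$} (with $j_1^0 := j_{1,0}$). The base case $k=0$ holds by definition. For $k\geq 1$, supposing $v_k$ broadcasts $m \in \text{OP}^{b_k}$ at some $j^\star < j_1^k$, I will examine the state of $v_{k-1}$ at step $j^\star$. By the inductive hypothesis together with \cref{lemma:undecidability:correctness:seq-of-indexes,lemma:undecidability:correctness:interleaving-1,lemma:undecidability:correctness:interleaving-2}, the only broadcasts of $v_{k-1}$ up to $j_1^k$ coming from the first operation cycle of $v_0$ are at $j_1^{k-1}$ and $j_2^{k-1}$, and $v_{k-1}$ has never reached $\counter_1^{b_{k-1}}$ nor $\counter_2^{b_{k-1}}$, since those would require two broadcasts in $\text{OP}^{b_{k-1}}$ through $q_\counter^{b_{k-1}}$. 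Hence $v_{k-1}$ at $j^\star$ lies in an initialisation state, in $\zerostate^{b_{k-1}}$, or in one of the intermediate $r^{b_{k-1}}_\AOp$ states sitting inside $(j_1^{k-1}, j_2^{k-1})$. From an initialisation state or from $\zerostate^{b_{k-1}}$, receiving $m \in \text{OP}^{b_k} = \text{OP}^{b_{k-1}\oplus 1}$ leads to $\frownie$: the allowed silent receptions at $\zerostate^{b_{k-1}}$ are in $\text{OP}^{b_{k-1}\ominus 1} \cup \text{OK}^{b_{k-1}\oplus 1}$, disjoint from $\text{OP}^{b_{k-1}\oplus 1}$ by distinctness of subscripts and by todo versus overlined message names. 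From $r^{b_{k-1}}_\AOp$, the unique allowed $\text{OP}^{b_k}$-reception index identified by \cref{lemma:undecidability:correctness:interleaving-1} is precisely $j_1^k$, contradicting $j^\star < j_1^k$. Any outcome $v_{k-1}\to\frownie$ in turn contradicts the fact that $v_{k-1}$ must eventually broadcast $\text{done}$ (\cref{lemma:undecidability:correctness:structure}).

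Given the cascade claim, at $j_1^i$ vertex $v_i$ broadcasts $\AOp^{b_i}$ (by \cref{lemma:undecidability:correctness:seq-of-indexes}), and the only $\PP_{b_i}$-states from which $\AOp^{b_i}$ can be broadcast are $\zerostate^{b_i}, \counter_1^{b_i}, \counter_2^{b_i}$. Reaching $\counter_1^{b_i}$ or $\counter_2^{b_i}$ demands two prior broadcasts in $\text{OP}^{b_i}$ that the cascade claim forbids, so $L_{j_1^i}(v_i) = \zerostate^{b_i}$, and hence $x_1 = x_2 = 0$. The main obstacle will be the case analysis in the cascade's inductive step: it relies on the careful design of the failure transitions from $\zerostate^{b_{k-1}}$ and from the intermediate $r^{b_{k-1}}_\AOp$ states, and on the disjointness of the todo-alphabets used by $\PP_0, \PP_1, \PP_2$ at neighbouring subscripts.
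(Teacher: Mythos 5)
Your proposal is correct and follows essentially the same route as the paper's proof: first $L_{j_{1,0}}(v_0)=\ellinit$, then an induction along the line showing that each $v_i$ has emitted only its two initialisation broadcasts ($b_i$ and $\$$) before $j_1^i$ — because a premature $\text{OP}^{b_i}$ broadcast would send the left neighbour (still in an initialisation state or $\zerostate^{b_{i-1}}$, or pinned down by the uniqueness in \cref{lemma:undecidability:correctness:interleaving-1}) to $\frownie$ — so that $L_{j_1^i}(v_i)=\zerostate^{b_i}$ and both counters are $0$. Your version is slightly more explicit than the paper's (which takes a minimal counterexample $p$ and asserts $L_{z_p}(v_{p-1})=\zerostate^{b_{p-1}}$ directly) in that you separately treat the case where the premature broadcast falls inside the window $(j_1^{k-1}, j_2^{k-1})$, but the underlying argument is the same.
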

\begin{proof}
	Denote $\textsf{Seq}(j_{1,0}, j_{2,0}) = j_1^1, j_2^1, \dots, j_1^m, j_2^m$.
	
	First note that by construction, $L_{j_1^0}(v_0) = \ellinit$.
	
	In fact: for all $1 \leq i \leq m$, it holds that $C_0 \transup{\mid v_i}^\ast C_{x_i} \transup{v_i, b_i} C_{x_i+1} \transup{\mid v_i}^\ast C_{y_i} \transup{v_i, \$} C_{y_i+1} \transup{\mid v_i}^\ast C_{j_{1}^i}$ for some indices $x_i, y_i$. Assume this is not the case and take the smallest index $p$ such that this is not the case: there exists at least two indices from which it broadcasts $b_p$ and $\$$ as $L_{j_1^p}(v_p) \in \set{\zerostate^p, \counter_1^p, \counter_2^p}$. Assume now, there are more than two indices from which $v_p$ broadcasts, i.e. there exist $x_p < y_p < z_p$ such that 
	$C_0 \transup{\mid v_p}^\ast C_{x_p} \transup{v_p, b_p} C_{x_p+1} \transup{\mid v_p}^\ast C_{y_p} \transup{v_p, \$} C_{y_p+1} \transup{\mid v_p}^\ast C_{z_p} \transup{v_p, w} C_{z_p+1} \trans^\ast C_{j_{1}^p}$. By construction of the protocol, $w\in \text{OP}^{b_p}$.
	If $p = 1$, then $L_{z_p}(v_0) = \ellinit$, and so $L_{z_p +1}(v_0) = \frownie$ which contradicts the existence of $j_{1,0}$. 
	Assume $p > 1$, then $L_{z_p}(v_{p-1}) = \zerostate^{b_{p-1}}$ and so $L_{z_p+1}(v_{p-1}) =  \frownie$ which contradicts the fact that $v^{p-1}$ broadcasts "done".
	
	Hence, for all $1 \leq i \leq m$, it holds that $C_0 \transup{\mid v_i}^\ast C_{x_i} \transup{v_i, b_i} C_{x_i+1} \transup{\mid v_i}^\ast C_{y_i} \transup{v_i, \$} C_{y_i+1} \transup{\mid v_i}^\ast C_{j_{1}^i}$ for some indices $x_i, y_i$. As a consequence, $L_{j_1^i}(v_i) = \zerostate^{b_i}$, and so $\textsf{prec}(j_{1,0}, j_{2,0}) = (\ellinit, 0, 0)$.
\end{proof}

\begin{lemma}\label{lemma:undecidability:correctness:machine-conf-step-eq}
	For all $0 \leq i < k$, $\textsf{succ}(j_{1,i}, j_{2,i}) = \textsf{prec}(j_{1,i+1}, j_{2,i+1})$.
\end{lemma}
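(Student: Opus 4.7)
The plan is to prove that the two machine configurations $\textsf{succ}(j_{1,i}, j_{2,i})$ and $\textsf{prec}(j_{1,i+1}, j_{2,i+1})$ agree componentwise. Writing $\textsf{Seq}(j_{1,i}, j_{2,i}) = j_1^{i,1}, j_2^{i,1}, \ldots, j_1^{i,m}, j_2^{i,m}$ and analogously for step $i+1$, it suffices to establish (i) $L_{j_{2,i}+1}(v_0) = L_{j_{1,i+1}}(v_0)$, and (ii) $L_{j_2^{i,p}+1}(v_p) = L_{j_1^{i+1,p}}(v_p)$ for every $1 \leq p \leq m$. Together with the definitions of $\textsf{succ}$ and $\textsf{prec}$, these pointwise equalities immediately yield matching locations and matching counts for both counters.

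First I would treat the head $v_0$. By the choice of the indices $j_{1,i}, j_{2,i}$ as consecutive operation-broadcast endpoints in $v_0$'s broadcast word, $v_0$ performs no broadcast in the open interval $(j_{2,i}, j_{1,i+1})$. Inspecting $\PP_M$, the location state $\ell' = L_{j_{2,i}+1}(v_0) \in \Loc$ has only fail reception transitions, triggered on messages outside $\text{OK}^1$. Since $v_0$ eventually broadcasts ``done'' from $\ell_f$, it never visits $\frownie$, so every message received in the interval lacks any outgoing reception transition at $\ell'$ and leaves $v_0$'s state unchanged; hence (i) holds.

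Next I would handle $v_p$ for $p \geq 1$. The uniqueness part of \cref{lemma:undecidability:correctness:seq-of-indexes}, applied to both $\textsf{Seq}(j_{1,i}, j_{2,i})$ and $\textsf{Seq}(j_{1,i+1}, j_{2,i+1})$, together with \cref{lemma:undecidability:correctness:language-broadcasts}, forces $v_p$ to perform no broadcast in the interval $(j_2^{i,p}, j_1^{i+1,p})$: any such broadcast would be an operation-broadcast $\AOp^{b_p}$ or a response $\AOk^{b_p}$ and would then have to appear in one of the two unique sequences, a contradiction. Moreover $L_{j_2^{i,p}+1}(v_p)$ is a stable state in $\set{\zerostate^{b_p}, \counter_1^{b_p}, \counter_2^{b_p}}$, and from these stable states the protocol $\PP_{b_p}$ only has fail reception transitions (on forbidden messages); all other incoming messages have no defined transition. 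Since $v_p$ later broadcasts ``done'' successfully, no fail reception can have occurred, so every reception in the interval is silent and (ii) holds.

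The main obstacle I expect is the careful case-analysis of the three stable states of $\PP_{b_p}$ for each $b_p \in \set{0,1,2}$, to confirm that every message actually sent during the interval by any active neighbor, be it a still-completing backward $\AOk^{b_{p+1}}$ from $v_{p+1}$ or an early forward $\AOp^{b_{p-1}}$ from $v_{p-1}$, indeed falls in the silent set and does not trigger a transition at $v_p$. Once this verification is done, the three stable states have no non-fail outgoing reception transitions and the preservation claim follows immediately, yielding $\textsf{succ}(j_{1,i}, j_{2,i}) = \textsf{prec}(j_{1,i+1}, j_{2,i+1})$ and completing the proof of the lemma.
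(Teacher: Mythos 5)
Your proof is correct and establishes exactly the pointwise equalities the lemma needs, but it takes a genuinely different route from the paper. The paper argues by contradiction: assuming the two machine configurations differ, it first gets $\ell=\ell'$ because $v_0$ issues nothing between $j_{2,i}+1$ and $j_{1,i+1}$, then picks the \emph{smallest} $p$ with $L_{j_2^{i,p}+1}(v_p)\neq L_{j_1^{i+1,p}}(v_p)$, observes that the only way $v_p$ can leave a state of $\set{\zerostate^{b_p},\counter_1^{b_p},\counter_2^{b_p}}$ without reaching $\frownie$ is to broadcast, that the first such broadcast is some message of $\text{OP}^{b_p}$, and that this broadcast sends the left neighbour $v_{p-1}$ (still in a stable state by minimality of $p$), or the head when $p=1$, to $\frownie$ --- contradicting its later broadcasts. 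You instead prove directly that no $v_p$ broadcasts in its gap $(j_2^{i,p}, j_1^{i+1,p})$ and that every reception from a stable state is silent. Both work, and they rest on the same two structural facts (stable states admit only failing or undefined receptions; broadcasts are rigidly interleaved). The trade-off is that the paper's contradiction only needs the \emph{first} offending broadcast to be fatal, whereas your direct version needs the global no-broadcast claim, and your justification of it via ``uniqueness'' leaves implicit a small induction on $p$: to apply \cref{lemma:undecidability:correctness:interleaving-1,lemma:undecidability:correctness:interleaving-2} to the intervals $(j_2^{i,p-1}, j_1^{i+1,p-1})$ and $(j_1^{i+1,p-1}, j_2^{i+1,p-1})$, you must already know that $j_1^{i+1,p-1}$ is the next broadcast of $v_{p-1}$ after $j_2^{i,p-1}$, i.e.\ that $v_{p-1}$ is itself silent in its own gap; the base case $p-1=0$ holds by the choice of the indices $j_{1,i},j_{2,i}$. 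With that induction spelled out (and with your correct observation that the only messages reaching $v_p$ in the gap are $\AOk^{b_p\oplus 1}$ from the right and $\AOp^{b_p\ominus 1}$ from the left, both of which have no defined reception at the stable states), your argument is complete.
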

\begin{proof}
	Let $0 \leq i < k$ such that $\textsf{succ}(j_{1,i}, j_{2,i}) \neq \textsf{prec}(j_{1,i+1}, j_{2,i+1})$.
	Denote $\textsf{succ}(j_{1,i}, j_{2,i}) = (\ell, x_1, x_2)$ and $ \textsf{prec}(j_{1,i+1}, j_{2,i+1}) = (\ell', x'_1, x'_2)$.
	Denote $\textsf{Seq}(j_{1,i}, j_{2,i}) = j_1^1, j_2^1, \dots, j_1^m, j_2^m$ and $\textsf{Seq}(j_{1,i+1}, j_{2,i+1}) = i_1^1, i_2^1, \dots,i_1^m, i_2^m$.
	By construction, $C_{j_{2,i}  + 1} \transup{\mid v_0}^\ast C_{j_{1,i+1}}$ hence, $\ell = \ell'$.
	Hence, there exists $1 \leq p \leq m$ such that $L_{j_2^p +1}(v_p) \neq L_{i_1^p}(v_p)$. Consider the first such index $p$. 
	By \ref{lemma:undecidability:correctness:seq-of-indexes}, $L_{j_2^p +1}(v_p) \in \set{\zerostate^{b_p}, \counter_1^{b_p}, \counter_2^{b_p}}$. Again, by  \ref{lemma:undecidability:correctness:seq-of-indexes}, $L_{i_1^p}(v_p) \neq \frownie$ and so there exists $j_2^p +1 \leq j < i_1^p$ such that $C_j \transup{v_p, w} C_{j+1}$. Take $j$ the first such index, by construction of $\PP$, $w \in \text{OP}^{b_p}$.
	If $p = 1$, then $L_{j+1}(v_0) = \frownie$, as $w \in \text{OP}^{b_p}$ and $L_{j+1}(v_0)  = L_{j_{1,i}}(v_0) = \ell$.
	If $p > 1$, then $L_{j}(v_{p-1}) = L_{j_2^{p-1} +1} (v_{p-1})\in \set{\zerostate^{b_{p-1}}, \counter_1^{b_{p-1}}, \counter_2^{b_{p-1}}}$. Hence, $L_{j+1}(v_{p-1}) = \frownie$ which contradicts the existence of $i_{1}^{p-1}$.
\end{proof}

We are now ready to prove \cref{lemma:undecidability:correctness}:

\begin{proofof}{\cref{lemma:undecidability:correctness}}
	We found a sequence of indices $j < j_{1, 0} < j_{2,0} < \cdots < j_{1,k} < j_{2,k} < n_0$  such that:
	\begin{itemize}
		\item $\textsf{prec}(j_{1,0}, j_{2,0}) = (\ellinit, 0, 0)$ (\cref{lemma:undecidability:correctness:machine-conf-init})
		\item for all $0 \leq j \leq k$, $\textsf{prec}(j_{1,j}, j_{2,j}) \transRelM{} \textsf{succ}(j_{1,j}, j_{2,j})$ (\cref{lemma:undecidability:correctness:machine-conf-step-trans})
		\item for all $0 \leq i < k$, $\textsf{succ}(j_{1,i}, j_{2,i}) = \textsf{prec}(j_{1,i+1}, j_{2,i+1})$ (\cref{lemma:undecidability:correctness:machine-conf-step-eq})
		\item $\textsf{succ}(j_{1,k}, j_{2,k}) = (\ell_f, x_1, x_2 )$ for some $x_1, x_2 \in \nat$, as $L_{j_{2,k}}(v_0) = L_{n_0}(v_0) = \ell_f$.
	\end{itemize}
Putting everything together, we get a sequence $(\ellinit, 0, 0)\transRelM{} \textsf{succ}(j_{1,0}, j_{2,0})\transRelM{} \textsf{succ}(j_{1,1}, j_{2,1}) \transRelM{} \cdots \transRelM{}\textsf{succ}(j_{1,k}, j_{2,k}) = (\ell_f, x_1, x_2 )$.
\end{proofof}

\section{Proofs of  \cref{sec:about-1pb}}
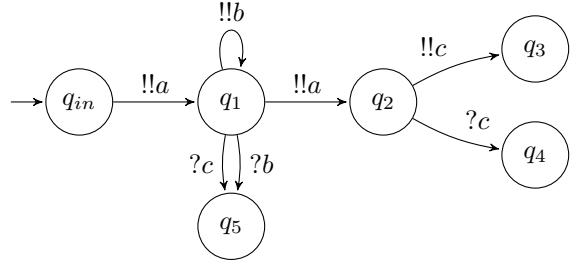
\begin{figure}
	\begin{center}
		\tikzset{box/.style={draw, minimum width=4em, text width=4.5em, text centered, minimum height=17em}}

%
%
%
%
%

\begin{tikzpicture}[-, >=stealth', shorten >=1pt,node distance=2cm,on grid,auto, initial text = {}] 

	\node[state,initial] (q0) [] {$\qinit$};
	\node[state] (q1) [right of = q0, yshift = 0, xshift =0cm] {$q_1$};
	\node[state] (q2) [right  of = q1, yshift = 0, xshift = 00cm] {$q_2$};
	\node[state] (q3) [right  of = q2, yshift = 20, xshift =0] {$q_3$};
	\node[state] (q4) [right of = q2, yshift = -20, xshift = 0] {$q_4$};
	\node[state] (q5) [below  of = q1, yshift = 10]{$q_5$};
	
	\node [] (p1) [left of = q4, yshift = -18] {};
	\node[] (p2) [left of = q1, yshift = 0, xshift =-2cm] {};

	\path[->] 
	(q0) edge [] node {$!!a$} (q1) 
	(q1) edge [bend right = 10] node [left]{$?c$} (q5) 
	(q1) edge [bend left = 10] node {$?b$} (q5) 
	(q1) edge [loop above] node {$!!b$} ()
	(q1) edge [] node {$!!a$} (q2)
	(q2) edge [bend left = 10] node {$!!c$} (q3)
	(q2) edge [bend right = 10] node {$?c$} (q4)

	;
\end{tikzpicture}
	\end{center}
	\caption{Example of a 1-phase-bounded broadcast protocol denoted $\PP$}
\label{fig:bp:example-1pb}
\end{figure}

We start by an example. 
\begin{example}
Consider the protocol $\PP$ depicted in \cref{fig:bp:example-1pb}. $\PP$ is 1-phase-bounded. We depict two star-configurations $C_1$ and $C_2$ in \cref{fig:star-conf}. Both are b-configurations as the root node $v_0$ is in $q_1$ in both configurations and $q_1 \in Q_1^b$. The broadcast-print of $C_1$ is equal to the one of $C_2$ and $\bprint{C_1} = \bprint{C_2} = (q_1, \set{\qinit, q_1, q_2})$.
\end{example}

\begin{figure}
	\begin{minipage}[c]{0.45\linewidth}
		\begin{center}
			\resizebox*{!}{0.065\paperheight}{
				\tikzset{box/.style={draw, minimum width=4em, text width=4.5em, text centered, minimum height=17em}}

\begin{tikzpicture}[-, >=stealth', shorten >=1pt,node distance=1.5cm,on grid,auto, initial text = {}] 
	\node[rounded rectangle, draw, inner sep = 2] (v0) [] {$v_0: q_1$} ;
	\node[rounded rectangle, draw, inner sep = 2] (v1) [left of =v0, yshift = 25] {$v_1: \qinit$};
	\node[rounded rectangle, draw, inner sep = 2] (v2) [left of =v0, yshift = -25] {$v_2: q_5$};
	\node[rounded rectangle, draw, inner sep = 2] (v3) [right of =v0, yshift = 25] {$v_3: q_1$};
	\node[rounded rectangle, draw, inner sep = 2] (v4) [right of =v0, yshift = -25] {$v_4: q_2$};
	\path[-] 
	(v0) edge node {} (v1)
	(v0) edge node {}  (v2)
	(v0) edge node {} (v3)
	(v0) edge node {} (v4);
\end{tikzpicture}
			}
		\end{center}
	\end{minipage}
	\hfill
	\begin{minipage}[c]{0.45\linewidth}
		\begin{center}
			\resizebox*{!}{0.065\paperheight}{
				\tikzset{box/.style={draw, minimum width=4em, text width=4.5em, text centered, minimum height=17em}}

\begin{tikzpicture}[-, >=stealth', shorten >=1pt,node distance=1.5cm,on grid,auto, initial text = {}] 
	\node[rounded rectangle, draw, inner sep = 2] (v0) [] {$v_0: q_1$} ;
	\node[rounded rectangle, draw, inner sep = 2] (v1) [left of =v0, yshift = 25] {$v_1: \qinit$};
	\node[rounded rectangle, draw, inner sep = 2] (v2) [left of =v0, yshift = -25] {$v_2: q_1$};
	\node[rounded rectangle, draw, inner sep = 2] (v3) [right of =v0, yshift = 25] {$v_3: q_1$};
	\node[rounded rectangle, draw, inner sep = 2] (v4) [right of =v0, yshift = -25] {$v_4: q_2$};
	\path[-] 
	(v0) edge node {} (v1)
	(v0) edge node {}  (v2)
	(v0) edge node {} (v3)
	(v0) edge node {} (v4);
\end{tikzpicture}
			}
		\end{center}
	\end{minipage}
	\caption{Example of two star-configurations of $\PP$ denoted $C_1$ to the left and $C_2$ to the right.}\label{fig:star-conf}
\end{figure}
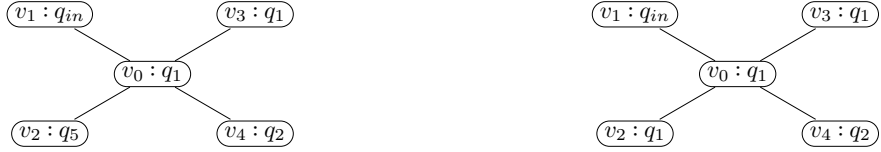

\subsection{Proof of \cref{lemma:1-bounded-cover-star}}

\begin{proofof}{\cref{lemma:1-bounded-cover-star}}
  Assume there exists $\Gamma=(V,E)$ in $\Topo$, $C = (\Gamma, L)\in \II$ and $D
  =(\Gamma, L')\in \CC_\PP$ and $v\in \Vert{\Gamma}$ such that $C
  \trans^\ast D$ and $L'(v) = q_f$. Thanks to \cref{thm:Cover-CoverTree-equivalent}, we can assume that $\Gamma \in \Trees$ and without loss of generality that $v=\epsilon$ (i.e. it the root of the tree). We consider the execution:
  $$
  C_0 \transup{v_1,\delta_1}  C_1\transup{v_2,\delta_2} C_2 \transup{v_3,\delta_3} \cdots \transup{v_n,\delta_n} C_n
  $$
  such that $C_0=C$ and $C_n=D$ and $C_i=(\Gamma,L_i)$ for all $i \in [0,n]$. Note in particular that $L_n(\epsilon)=q_f$. Without loss of generality we can as assume that $|v_n| \leq 1$ otherwise it means that the last step of this execution does not affect the root node and hence we could have stop it at $C_{n-1}$. We denote by $m$ the number of steps in this executions labelled by the root node or the node at height $1$ whose label in the previous step belongs to $Q_0 \cup Q^b_1$, i.e. $m=|\set{i \in [1,n] \mid v_i=\epsilon \mbox{ or  } (|v_i| =1 \mbox{ and } L_{i-1}(v_i) \in Q_0 \cup Q^b_1)}|$. Intuitively $m$ is the number of steps in the execution, which can affect the root node or which an internal move of a node at height $1$ which stays in the $Q_0 \cup Q^b_1$ component. Note that since we are dealing with 1-phase-bounded protocol if for some $i \in [1,n]$ and some nodes $v \in V$, we have $L_i(v) \notin Q_0 \cup Q^b_1$ then $L_j(v) \notin Q_0 \cup Q^b_1$ for all $j \in [i,n]$.
  We consider then the configuration $\Gamma' \in \Stars$ which is  a restriction of the tree $\Gamma$ to nodes of height at most one, i.e. $\Gamma'=(V',E')$ with $V'=\set{v \in V \mid |v| \leq 1}$ and $E'=\set{(v,v') \in E \mid v,v' \in V'}$. 

  We will show how to build an execution of length $m$ from $C'=(\Gamma',L'_0)\in \II$ to a configuration $D'=(\Gamma',L''')$ with $L'''(\epsilon)=q_f$. For this matter we consider an injective function $f : [1,m] \mapsto [1,n]$ which select $m$ indices of the  step of the previously introduced execution keeping only the one labelled by nodes in $V'$. Formally $f$ is the unique injective function respecting the following conditions:
  \begin{itemize}
  \item for all $i \in [1,m]$,$v_{f(i)}=\epsilon$ or $(|v_{f(i)}| =1$ and  $L_{f(i)-1}(v_{f(i)}) \in Q_0 \cup Q^b_1$) ;
  \item for all $i,j \in [1,m]$, if $i <j$ then $f(i)<f(j)$.
  \end{itemize}
  Since $v_n \in V'$, we have $f(m)=n$.  We shall now show that there exists an execution of the form:
  $$
  C'_0 \transup{v_{f(1)},\delta_{f(1)}}  C'_1\transup{v_{f(2)},\delta_{f(2)}} C'_2 \transup{v_{f(3)},\delta_{f(3)}} \cdots \transup{v_{f(m)},\delta_{f(m)}} C'_m
  $$
  where $C'_0=(\Gamma',L'_0)$ and for all $i \in [1,m]$ we  have $C'_i=(\Gamma',L'_i)$ with $L'_i$ satisfying the following condition: $L_i'(\epsilon)=L_{f(i)}(\epsilon)$  and for all $v \in V' \setminus \set{\epsilon}$, if $L_{f(i)}(v) \in Q^0 \cup Q^b_1$ then $L'_i(v)
  =L_{f(i)}(v)$. We shall now see how to build this execution by induction. First note that if we extend $f$ to $0$ by setting $f(0)=0$ for $i=0$, we have that $L_0'(\epsilon)=L_{f(0)}(\epsilon)=\qinit$ and for all $v \in V' \setminus \set{\epsilon}$, $L'_0(v)=L_{f(0)}(v)=\qinit$ and $\qinit \in Q^0$. Now assume that we have build 
  $
  C'_0 \transup{v_{f(1)},\delta_{f(1)}}  C'_1\transup{v_{f(2)},\delta_{f(2)}} C'_2 \transup{v_{f(3)},\delta_{f(3)}} \cdots \transup{v_{f(i)},\delta_{f(i)}} C'_i
  $ with $C'_i=(\Gamma',L'_i)$ with $L'_i$ satisfying the following condition: $L_i'(\epsilon)=L_{f(i)}(\epsilon)$  and for all $v \in V' \setminus \set{\epsilon}$, if $L_{f(i)}(v) \in Q^0 \cup Q^b_1$ then $L'_i(v)
  =L_{f(i)}(v)$. Consider the node $v_{f(i+1)}$,  we have two cases:
  \begin{enumerate}
  \item $v_{f(i+1)}=\epsilon$. Since between $C_{f(i)}$ and $C_{f(i+1)-1}$ the root node did not perform an internal action, nor any broadcast and none of the node at height 1 performed a broadcast, we deduce that $L_{f(i+1)-1}(v_{f(i+1)})=L_{f(i)}(v_{f(i+1)})=L'_i(v_{f(i+1)})$, hence from $C'_i$ the transition $\delta_{f(i+1)}$ can be taken from the root node. And we have $C'_i \transup{v_{f(i+1)},\delta_{f(i+1)}}  C'_{i+1}$ with $C'_{i+1}=(\Gamma',L'_{i+1})$. Note that we have necessarily  $L_{i+1}'(\epsilon)=L_{f(i+1)}(\epsilon)$. And for all $v \in V' \setminus \set{\epsilon}$, if $L_{f(i+1)}(v) \in Q^0 \cup Q^b_1$, it means that the action of the root node did not affect $v$ then $L_{f(i+1)}(v)=L_{f(i+1)-1}(v)$ and by the same reasoning as we did for the root node, since $L_{f(i+1)-1}(v) \in Q^0 \cup Q^b_1 $ , we have $L_{f(i+1)-1}(v)=L_{f(i)}(v)=L'_i(v)$. But in the transition from $C'_{i}$ to $C'_{i+1}$ the same action still not affect $v$, this allows us to deduce that $L'_{i+1}(v)=L'_{i}(v)$. Hence $L'_{i+1}(v)=L_{f(i+1)}(v)$.
  \item $v_{f(i+1)} \neq \epsilon$. In that case $|v_{f(i+1)}|= 1$ and  $L_{f(i+1)-1}(v_{f(i+1)}) \in Q_0 \cup Q^b_1$. Here again we can deduce that $L_{f(i+1)-1}(v_{f(i+1)})=L_{f(i)}(v_{f(i+1)})=L'_i(v_{f(i+1)})$, hence from $C'_i$ the transition $\delta_{f(i+1)}$ can be taken from the  node $v_{f(i+1)}$. Hence we have  $C'_i \transup{v_{f(i+1)},\delta_{f(i+1)}}  C'_{i+1}$ with $C'_{i+1}=(\Gamma',L'_{i+1})$. Consider $v \in V' \setminus \set{\epsilon}$, if $v=v_{f(i+1)}$, it is clear that $L'_{i+1}(v)=L_{f(i+1)}$. If $v \neq v_{f(i+1)}$ and $L_{f(i+1)}(v) \in Q^0 \cup Q^b_1$, we can deduce that $L'_{i+1}(v)=L'_i(v)$ because the action of node at height 1 does not affect the other node at height 1, and $L_{f(i+1)}(v)=L_{f(i+1)-1}(v)$ for the same reason, and $L_{f(i+1)-1}(v)=L_{f(i)}(v)$ by definition of $f$ and $L_{f(i)}(v)=L'_i(v)$. Hence $L'_{i+1}(v)=L_{f(i+1)}(v)$. If $v=\epsilon$, we can show that $L'_i(v)=L_{f(i)}(v)=L_{f(i+1)-1}(v)$ and this allows us to deduce that $L'_{i+1}(v)=L_{f(i+1)}(v)$.
  \end{enumerate}

  We have hence proven that we can build the execution $
  C'_0 \transup{v_{f(1)},\delta_{f(1)}}  C'_1\transup{v_{f(2)},\delta_{f(2)}} C'_2 \transup{v_{f(3)},\delta_{f(3)}} \cdots \transup{v_{f(m)},\delta_{f(m)}} C'_m
  $ where $C'_m=(\Gamma',L'_m)$ such that $L'_m(\epsilon)=L_{f(m)}(\epsilon)=L_n(\epsilon)=q_f$.
\end{proofof}

\subsection{Proof of \cref{lemma:successor-bprint-ptime}}

\begin{proofof}{\cref{lemma:successor-bprint-ptime}}
  Let $(q,\Lambda) \in Q^b \times 2^{Q^b}$. We consider $F$ subset of $Q^b \times 2^{Q^b}$, built according to the following rules:
  \begin{itemize}
  \item Initially, $F=\emptyset$;
  \item for all transitions $(q,\tau,q') \in \Delta$, add $(q',\Lambda)$ to $F$;
  \item for all transitions $(q,!!m,q') \in \Delta$, add $(q',\Lambda')$ to $F$ where $\Lambda'=\Lambda \setminus \set{q'' \in \Lambda \mid \exists q''' \in Q. (q'',?m,q''') \in \Delta}$;
  \item for all transitions $(q',\tau,q'') \in \Delta$, with $q' \in \Lambda$ add $(q,\Lambda \cup \set{q''})$ and  $(q,(\Lambda \setminus \set{q'}) \cup \set{q''})$ to $F$ (the latter case being here to deal with the case where in the configuration with broadcast-print $(q,\Lambda)$ there is a single node labelled with $q'$);
  \item for all transitions $(q',!!m,q'') \in \Delta$ such that $q' \in \Lambda$ and there does not exist $q''' \in Q$ with $(q,?m,q''') \in \Delta$, add $(q,\Lambda \cup \set{q''})$ and  $(q,(\Lambda \setminus \set{q'}) \cup \set{q''})$ to $F$ .
  \end{itemize}

  It is clear that $F$ can be built in polynomial time and a case analysis allows to show $F=\set{(q',\Lambda')
    \mid (q,\Lambda) \Rightarrow (q',\Lambda') }$.
\end{proofof}

\subsection{Proof of \cref{lemma:monotonicity}}

\begin{proofof}{\cref{lemma:monotonicity}}
  \begin{enumerate}
  \item[(i)] Let $C_1=((V_1,E_1),L_1)$, $C'_1=((V_1,E_1),L_1)$ and $C_2=((V_2,E_2),L_2)$ be star-configurations such that $C_1 \transup{v,\delta}  C'_1$ and $C_1 \preceq C_2$. First note that if $v \neq \epsilon$ and $L_1(v) \notin Q^b$, then $C'_1 \preceq C_2$ and $C_2
    \trans^\ast C_2$. If $v =\epsilon$, then $L_2(v)=L_1(v)$ by definition of  $\preceq$, hence we have $C_2
    \transup{v,\delta} C'_2$ and one can show that $C'_1 \preceq C'_2$ (the root node have obviously the same label and if some nodes at height one are affected because of a broadcast from $C_1$ to $C'_1$, their label will go out of $Q^b$ in $C'_1$ and the nodes with the same label will be affected similarly going to $C'_2$). If $v \neq \epsilon$ and $L_1(v) \notin Q^b$, then there exists a node $v' \in V_2 \setminus\set{\epsilon}$ such that $L_2(v')=L_1(v)$ and we have $C_2
    \transup{v',\delta} C'_2$ with $C'_1 \preceq C'_2$ (here if $\delta$ is a broadcast it should affect the root node from $C_1$ and $C_2$ similarly and does not affect the other nodes).

  \item[(ii)] Assume $C_1=((V_1,E_1),L_1)$, $C'_1=((V_1,E_1),L_1)$ and $C_2=((V_2,E_2),L_2)$ are
    b-configurations such that $C_1 \transup{v,\delta}  C'_1$ and
    $\bprint{C_1}=\bprint{C_2}$ and $C_1 \preceq C_2$. From the previous point, we easily deduce that  there
    exists a b-configuration $C'_2$ such that $C'_1 \preceq C'_2$ and $C_2
    \trans^\ast C'_2$ . However it might be the case that $\bprint{C'_1} \neq \bprint{C'_2}$. In that case, we have necessarily  $\bprint{C'_1}=(q,\Lambda_1)$ and $\bprint{C'_2}=(q,\Lambda_2)$ with $\Lambda_1\subset \Lambda_2$.  This happens if $v\neq \epsilon$ is the only node labelled by $L_1(v) \in Q^b$  in $C_1$ and in $C_2$ there are strictly more than $1$ node labelled by $L_1(v)$. Assume $\set{v' \in V_2 \mid L_2(v)=L_1(v)}=\set{v_1,\ldots,v_k}$ then the execution $C_2 \transup{v_1,\delta} C_{2,1}\transup{v_2,\delta} C_{2,2} \transup{v_3,\delta} C_{2,3} \cdots \transup{v_k,\delta} C_{2,k}$ is a valid execution. Note that along this execution the state of the root node does not change, because $C'_1$ is a b-configuration hence $C_1 \transup{v,\delta}  C'_1$ with $v \neq \epsilon$ does not affect the root node (otherwise it would perform a reception and its state would not be in $Q^b$ anymore). Furthermore we have $C'_1 \preceq C'_{2,k}$ and $C_2
    \trans^\ast C'_{2,k}$ and $\bprint{C'_1} =\bprint{C'_{2,k}}$.
  \item[(iii)] Let $C=((V,E),L),$ be a  b-configuration such that $C_{in}
    \trans^\ast C$ for some initial configuration $C_{in}$ and such that $\bprint{C}=(q,\Lambda)$. Let $N \in \nat$.  We assume that $\Lambda=\set{q_1,\ldots,q_k}$. There exist $k$ nodes $u_1,\ldots,u_k$ in $V \setminus \set{\epsilon}$ such that $L(u_i)=q_i$ for all $i \in [1,k]$. Assume furthermore that we have an execution of the form:

    $$
    C_0 \transup{v_1,\delta_1}  C_1\transup{v_2,\delta_2} C_2 \transup{v_3,\delta_3} \cdots \transup{v_n,\delta_n} C_n
    $$
    with $C_0=C_{in}$ and $C_n=C$. Here again since $C$ is a b-configuration and $q_i \in Q^b$ for all $i \in [1,k]$ , none of the transitions $\delta_i$ performed by a node $v_i$ different than $\epsilon$ changes the state of $\epsilon$ and none of the transitions performed by $\epsilon$ changes the states of nodes $u_1, \dots ,u_k$. We build an initial star configuration $C'_{in}=((V',E'),L'_{in})$ where $V'=V \cup \set{w_{1,1},w_{1,2},\ldots,w_{1,N},w_{2,1},w_{2,2},\ldots,w_{2,N},\ldots,w_{k,1},w_{k,2},\ldots,w_{k,N}}$ with $w_{i,j}$ is a node of depth one such that $w_{i,j}=\max(v \in V \setminus\set{\epsilon})+(i-1)*N+j$ and $E'$ is defined such that $(V',E')$ is a star with root $\epsilon$. Now  we build an execution from $C'_{in}$ to $C'=((V',E'),L')$ following the same steps as the execution from $C_{in}$ to $C$ and replacing each step $\transup{v_\ell,\delta_\ell}$ with $v_\ell=u_i$ for some $i \in [1,k]$, by the sequence 
    $\transup{v_\ell,\delta_\ell} \transup{w_{i,1},\delta_\ell} \cdots \transup{w_{i,N},\delta_\ell}$. It is easy to see that at the end   $\bprint{C}=\bprint{C'}=(q,\Lambda)$ and
    $|\set{v \in V'\setminus\set{\epsilon} \mid L'(v)=q'}|\geq N$ for all $q' \in \Lambda$.

  \end{enumerate}
\end{proofof}

\subsection{Proof of \cref{lemma:completeness-bprint}}

\begin{proofof}{\cref{lemma:completeness-bprint}} 
  Let $(q,\Lambda) \in  Q^b\times
  2^{Q^b}$. First assume that there exist
  two $b$-configurations $C_{in} \in \II$ and $C$  such
  that $C_{in}\trans^\ast C$ and $\bprint{C}=(q,\Lambda)$. Hence there exists an execution of the form $C_0 \trans C_1 \trans  \cdots \trans C_n$ with $C_0=C_{in}$ and $C_n=C$. Let $(q_i,\Lambda_i)=\bprint{C_i}$ for all $i \in [0,n]$.  By definition of $\Rightarrow$, we have $(q_0,\Lambda_0) \Rightarrow (q_1,\Lambda_1) \Rightarrow \cdots \Rightarrow (q_n,\Lambda_n)$, hence $(q_0,\Lambda_0) \Rightarrow^\ast (q_n,\Lambda_n)$. Since $C_0$ is an initial configuration and $(q_0,\Lambda_0)=\bprint{C_0}$, we have  $q_0=\qinit$ and $\Lambda_0 \in  \set{\emptyset,\set{\qinit}}$ and by definition we have $(q_n,\Lambda_n)=\bprint{C_n}=\bprint{C}=(q,\Lambda)$.\\

  Now assume that  $(\qinit,\Lambda_{in}) \Rightarrow^\ast (q,\Lambda)$
  with $\Lambda_{in} \in \set{\emptyset,\set{\qinit}}$. Hence we have $(q_0,\Lambda_0) \Rightarrow (q_1,\Lambda_1) \Rightarrow \cdots \Rightarrow (q_n,\Lambda_n)$ with $(q_0,\Lambda_0)=(\qinit,\Lambda_{in}) $ and $(q_n,\Lambda_n)=(q,\Lambda)$. We reason by induction on $n$ to show that for all $i \in [0,n]$, there exist
  two $b$-configurations $C_{in} \in \II$ and $C_i$  such
  that $C_{in}\trans^\ast C_i$ and $\bprint{C_i}=(q_i,\Lambda_i)$. If $n=0$, then it is clear that there exists an initial $b$ configuration $C_{in} \in \II$ such that $\bprint{C_{in}}=(q_0,\Lambda_0)$ and $C_{in}\trans^\ast C_{in}$.

  Suppose the property holds for $i \in [0,n-1]$ and let us prove it still holds for $i+1$. By induction hypothesis, there there exist
  two $b$-configurations $C_{in} \in \II$ and $C_i$  such
  that $C_{in}\trans^\ast C_i$ and $\bprint{C_i}=(q_i,\Lambda_i)$.

  Since $(q_i,\Lambda_i) \Rightarrow (q_{i+1},\Lambda_{i+1})$, there exists two b-configurations $C'_i=((V'_i,E'_i),L'_i)$ and $C_{i+1}$ such that $C'_i \rightarrow C_{i+1}$ and $\bprint{C'_i}=(q_i,\Lambda_i)$ and $\bprint{C_{i+1}}=(q_{i+1},\Lambda_{i+1})$. We let $N= \max_{q' \in Q^b}(|\set{v \in V'_i \mid L_i(v) = q'}|)$.
  Since $C_{in}\trans^\ast C_i$ and $\bprint{C_i}=(q_i,\Lambda_i)$, using Lemma \ref{lemma:monotonicity}.(iii), there exists an initial configuration $C'_{in}$ and a b-configuration $C''_i=((V''_i,E''_i),L''_i)$ such that $C'_{in} \trans^\ast
  C''_i$ and  $\bprint{C''_i}=(q_i,\Lambda_i)$ and
  $|\set{v \in V''_i \mid L''_i(v)=q'}| \geq N$ for all $q' \in \Lambda_i$. By definition we have $C'_i \preceq C''_i$. By Lemma \ref{lemma:monotonicity}.(ii), there
  exists a b-configuration $C'_{i+1}$ such that $C_{i+1} \preceq C'_{i+1}$ and
  $\bprint{C_{i+1}}=\bprint{C'_{i+1}}=(q_{i+1},\Lambda_{i+1})$ and $C''_i
  \trans^\ast C'_{i+1}$. We deduce that we have $C'_{in} \trans^\ast C'_{i+1}$ and  $\bprint{C'_{i+1}}=(q_{i+1},\Lambda_{i+1})$.
\end{proofof}

\subsection{Unary VASS and the control state reachability problem}

We present here the syntax and semantics of (unary) Vector Addition System with States (VASS). In our context a VASS $V$ is a tuple $(S,X,T)$ where: $S$ is a finite set of control states, $X$ is a finite set of variables taking their value in the natural and $T$ is a finite set of transitions of the form $(s,a,s')$ with $s,s' \in S$ and $a \in \set{x++,x-- \mid x \in X} \cup \set{\mathtt{skip}}$. A configuration of such a VASS is a pair $(s,\nu)$ with $s \in S$ and $\nu:X \mapsto \nat$. We define the transition relation $\vartriangleright$ between VASS configurations as follows: $(s,\nu) \vartriangleright (s',\nu')$ iff there exists a transition $(s,a,s')$ in $T$ one of the following condition holds:
\begin{itemize}
\item $a=\mathtt{skip}$  and $\nu=\nu'$, or,
\item $a=x++$ and $\nu'(x)=\nu(x)+1$ and $\nu'(x')=\nu(x')$ for all $x' \in X \setminus \set{x}$, or,
\item $a=x--$ and $\nu'(x)=\nu(x)-1$ and $\nu'(x')=\nu(x')$ for all $x' \in X \setminus \set{x}$.
\end{itemize}
Remark that if $\nu(x)=0$ then it is not possible to take the transition $(s,x--,s')$ from $(s,\nu)$. We denote by $\vartriangleright^\ast$ the reflexive and transitive closure of $\vartriangleright$. The control state reachability problem for VASS can be defined as follows:

\begin{decproblem}
  \problemtitle{$\VASSCover$~}
  \probleminput{A VASS $V=(S,X,T)$, an initial configuration $(s_{in},\nu_{in})$ and 
    a  state $s_f \in S$;} 
  \problemquestion{Does there exist $\nu' : X \mapsto \nat$ such that $(s_{in},\nu_{in}) \vartriangleright^\ast (s_f,\nu')$ ?}
\end{decproblem}

From \cite{lipton76reachability,rackoff78covering}, we know that $\VASSCover$ is \textsc{ExpSpace}-complete.

\subsection{Proof of \cref{lemma:expspace-cover-bprint}}

\begin{proofof}{\cref{lemma:expspace-cover-bprint}}
  Let $(q,\Lambda) \in  Q^b\times
  2^{Q^b}$.  From the 1-phase-bounded protocol $\PP = (Q, \Sigma,
  \qinit, \Delta)$, we  build a VASS $V=(S,X,T)$ as follows:
  \begin{itemize}
  \item $S=Q  \cup (Q \times \delta) \cup \set{s_{in}}$;
  \item $X=Q^b$;
  \item $T$ is the smallest set verifying the following conditions:
    \begin{itemize}
    \item $(s_{in},q'++,s_{in}) \in T$ for all $q' \in \Lambda$ (initialisation phase adding processes in $\Lambda$); 
    \item  $(s_{in},\mathtt{skip},q) \in T$;
    \item for all $(q_1,\tau,q_2) \in \Delta$, we have $(q_1,\mathtt{skip},q_2) \in T$ (central node does an internal action);
    \item for all $\delta=(q_1,\tau,q_2) \in \Delta$ with $q_1 \in Q^b$ and all $q' \in Q$, we have $(q',q_1--,(q',\delta)),((q',\delta),q_2++,q') \in T$ (node at height one does an internal action);
    \item for all $\delta=(q_1,!!m,q_2) \in \Delta$ and $(q',?m,q'')\in \Delta$, we have $(q',q_1--,(q'',\delta)),((q'',\delta),q_2++,q'') \in T$ (node at height one broadcasts a message received by the root);
    \item for all $\delta=(q_1,!!m,q_2)\in \Delta$ and all $q' \in Q$ such that there deos not exist $q''$ in $Q$ verifying $(q',?m,q'')\in \Delta$, we have $(q',q_1--,(q',\delta)),((q',\delta),q_2++,q') \in T$ (node at height one broadcasts a message not received by the root).
      
    \end{itemize}

  \end{itemize}

  Intuitively, the control state of the VASS tracks the state of the root whereas the counters count how many processes are in states $Q^b$. One can verify that there exist a b-configuration $C$ and a star-configuration
  $C_f=(\Gamma_f,L_f)$  such that $\bprint{C} =(q,\Lambda)$ and  $L_f(\epsilon)=q_f$
  and  $C \trans_r^\ast C_f$ iff there exists $\nu' : X \mapsto \nat$ such that $(s_{in},\nu_{in}) \vartriangleright^\ast (q_f,\nu')$ where $\nu_{in}(q)=1$ for all $q \in \Lambda$ and $\nu_{in}(q)=0$ for all $q \in Q^b \setminus \Lambda$. We have hence shown that given $(q,\Lambda) \in  Q^b\times
  2^{Q^b}$, proving whether
  there exist a b-configuration $C=(\Gamma_f,L)$ and a star-configuration
  $C_f=(\Gamma_f,L_f)$  such that $\bprint{C} =(q,\Lambda)$ and  $L_f(\epsilon)=q_f$
  and  $C \trans_r^\ast C_f$ reduces to the control state reachability problem for VASS, $\VASSCover$, which is \textsc{ExpSpace}-complete.

\end{proofof}
\subsection{Proof of \cref{theorem:1pb-expspace}}

Thanks to Theorems \ref{thm:Cover-CoverTree-equivalent} and \ref{lemma:1-bounded-cover-star}, we have to prove that given 1-phase-bounded-protocol $\PP = (Q, \Sigma,
\qinit, \Delta)$ and $q_f \in Q$, deciding whether there exists  $\Gamma \in
\Stars$, $C= (\Gamma, L)\in \II$ and $D
=(\Gamma, L') \in \CC_\PP$ such that $C
\trans^\ast D$ and $L'(\epsilon) = q_f$ is an \textsc{ExpSpace}-complete problem.

We begin with the upper bound providing an \textsc{NExpSpace} algorithm. Let $\PP = (Q, \Sigma,
\qinit, \Delta)$ be a 1-phase-bounded protocol and $q_f \in Q$. We first guess a broadcast-print $(q,\Lambda) \in Q^b \times 2^{Q^b}$ and show we have  $(\qinit,\Lambda_{in}) \Rightarrow^\ast (q,\Lambda) $ with $\Lambda_{in} \in \set{\emptyset,\set{\qinit}}$. This boils down to a reachability query in the graph $(Q^b \times 2^{Q^b},\Rightarrow)$ which can be achieved in $\textsc{NPSpace}$ thanks to \cref{lemma:successor-bprint-ptime} and because the number of vertices in this graph is smaller than $|Q| *2^{|Q|}$. Thanks to Savitch's theorem, we can do this reachability query in \textsc{PSpace}. Then we look for a b-configuration $C'=(\Gamma_f,L)$ and a star-configuration
$C_f=(\Gamma_f,L_f)$  such that $\bprint{C} =(q,\Lambda)$ and  $L_f(\epsilon)=q_f$
and  $C \trans_r^\ast C_f$. Thanks to \cref{lemma:expspace-cover-bprint}, this can be done in \textsc{ExpSpace}.  The overall procedure gives rise to an \textsc{NExpSpace} algorithm and using again Savitch's theorem, we obtain an \textsc{ExpSpace}-algorithm.

Let us show that this algorithm is complete. Assume there exist $\Gamma \in
\Stars$, $C= (\Gamma, L)\in \II$ and $D
=(\Gamma, L') \in \CC_\PP$ such that $C
\trans^\ast D$ and $L'(\epsilon) = q_f$. Since the protocol is $1$-phase-bounded, there exists a b-configuration $C'$ such that $C \trans^\ast C' \trans^\ast_r D$. If $q_f \notin Q^b$, take for $C'=(\Gamma,L'')$ the last configuration in the execution $C \trans^\ast D$ such that $L''(\epsilon) \in Q^b$ and otherwise take $C'=D$. Thanks to Lemma \ref{lemma:completeness-bprint}, if $\bprint{C'}=(q,\Lambda)$, we have $(\qinit,\Lambda_{in}) \Rightarrow^\ast (q,\Lambda)$
with $\Lambda_{in} \in \set{\emptyset,\set{\qinit}}$.-
As $L'(\epsilon)=q_f$ and $C' \trans_r^\ast D$, we can conclude that our algorithm is complete.

We shall now show it is sound. Assume there exists $(q,\Lambda) \in  Q^b\times
2^{Q^b}$ such that $(\qinit,\Lambda_{in}) \Rightarrow^\ast (q,\Lambda)$
with $\Lambda_{in} \in \set{\emptyset,\set{\qinit}}$ and such that there exist a b-configuration $C'=(\Gamma_f,L)$ and a star-configuration
$C_f=(\Gamma_f,L_f)$  verifying $\bprint{C'} =(q,\Lambda)$ and  $L_f(\epsilon)=q_f$
and  $C' \trans_r^\ast C_f$. Thanks to \cref{lemma:completeness-bprint}, there exist
two $b$-configurations $C_{in} \in \II$ and $C =((V,E),L)$ in  $\CC$  such
that $C_{in}\trans^\ast C$ and $\bprint{C}=(q,\Lambda)$. We denote by $N=\max_{q' \in Q^b}(|\set{v \in V \mid L(v) = q'}|)$. Using \cref{lemma:monotonicity}.(iii), there exists an initial configuration $C'_{in}$ and a b-configuration $C''=(\Gamma'',L'')$ such that $C'_{in} \trans^\ast
C''$ and  $\bprint{C}=\bprint{C''}=(q,\Lambda)$ and
$|\set{v \in \Vert{\Gamma'}\setminus\set{\epsilon} \mid L'(v)=q'}| \geq N$ for all $q' \in \Lambda$. But we have then that $C' \preceq C''$. Thanks to  \cref{lemma:monotonicity}.(i) applied to each transition of the execution $C' \trans_r^\ast C_f$, we deduce that there exists $C'_f=(\Gamma'_f,L'_f)$ such that
$C_f \preceq C'_f$ and $C'' \trans^\ast C'_f$, Since $C_f \preceq C'_f$ and $L_f(\epsilon)=q_f$ we deduce that $L'_f(\epsilon)=q_f$. Since we have $C'_{in} \trans^\ast C'' \trans^\ast C'_f$, our algorithm is sound.

It remains to prove the lower bound. For this matter, we provide a reduction from $\VASSCover$,  the control state reachability problem for VASS. The intuition being that the root node keeps track of the states in $S$ and the other nodes  represent the value of the counters, the value of $X$ at a certain time being the number of processes in state $x_1$. Then the nodes encoding the counters will only perform broadcast saying whether they increment or decrement a counter and if at some point the root node receives an action on a counter, it is not suppose to do according to its control state, it will go in an error state $err$ from which it will not be able to reach the final state anymore. Let $V=(S,X,T)$ be a VASS, $(s_{in},\nu_{in})$ an initial configuration  and $s_f \in S$. Without loss of generality we assume that $\nu_{in}(x)=0$ for all $x \in X$. We build the following broadcast protocol $\PP = (Q, \Sigma, \qinit, \Delta)$ with:
\begin{itemize}
\item $Q=\set{\qinit,err} \cup S \cup \set{x_0,x_1 \mid x \in X}$;
\item $\Sigma=\set{x++,x-- \mid x \in X}$;
\item $\Delta$ is the smallest set satisfying the following conditions:
  \begin{itemize}
  \item $(\qinit,\tau,s_{in})$ belong to $\Delta$;
  \item for all $m \in \Sigma$, $(\qinit, ?m, err)$ belong to $\Delta$;
  \item for all $x \in X$, $(\qinit,\tau,x_0)$ belong to $\Delta$;
   \item $(x_0,!!x++,x_1)$ and $(x_1,!!x--,x_0)$ are in $\Delta$ for all $x \in X$;
   \item for all $(s,x++,s') \in T$, we have $(s,?x++,s') \in \Delta$;
   \item for all $(s,x--,s') \in T$, we have $(s,?x--,s') \in \Delta$;
   \item for all $(s,\mathtt{skip},s') \in T$, we have $(s,\tau,s'') \in \Delta$;
   \item for all $s \in S$ and all $x \in X$, if there does not exist $s' \in S$ such that $(s,x++,s') \in T$ then $(s,?x++,err) \in \Delta$;
    \item for all $s \in S$ and all $x \in X$, if there does not exist $s' \in S$ such that $(s,x--,s') \in T$ then $(s,?x--,err) \in \Delta$.
   \end{itemize}
 \end{itemize}
 Note that $\PP$ is 1-phase bounded (the transitions leaving a state in $\set{x_0,x_1 \mid x \in X}$ only perform broadcasts and the the transitions leaving a state in $\set{s\mid s \in S}$ only perform receptions or internal actions). We can then show that there exists  $\Gamma \in
\Stars$, $C= (\Gamma, L)\in \II$ and $D
=(\Gamma, L') \in \CC_\PP$ such that $C
\trans^\ast D$ and $L'(\epsilon) = s_f$ iff there exists $\nu' : X \mapsto \nat$ such that $(s_{in},\nu_{in}) \vartriangleright^\ast (s_f,\nu')$.

\section{Proofs of \cref{sec:about-2pb}}

\subsection{Proofs of \cref{subsec:proof-dec-2phases-covertree}}\label{appendix:dec-2pb-covertree}
\Ifshort{
We recall the definition of simple paths. 
A \emph{simple path between $u$ and $u'$} in a topology $\Gamma = (V,E)$ is a sequence of distinct vertices $v_0, \dots, v_k$ such that $u = v_0$, $u' = v_k$, and for all $0 \leq i <k$, $(v_i, v_{i+1}) \in E$. Its length is denoted $d(v_0,\dots, v_k)$ and is equal to $k$. In a tree topology $\Gamma'$, for two vertices $u, u'$, there exists a unique simple path between $u$ and $u'$, hence we denote $d(u, u')$ to denote the length of the unique path between $u$ and $u'$. Furthermore, for all vertex $u$, $d(\epsilon, u) = |u|$.

	\subsubsection{Proof of \cref{obs:decidability-2phase:bounded-path-to-v}.}
	Given $(\PP, q_f)$ a positive instance of $\CoverTree$, we let $f(\PP, q_f)$ the minimal number of processes needed to cover $q_f$ with a tree topology and we fix 
	$\Gamma = (V,E)$ a tree topology such that $|V| = f(\PP,q_f)$ and that covers $q_f$. Let $v \in \Vert{\Gamma}$ and $(C_i = (\Gamma, L_i))_{0\leq i\leq n}$ configurations such that $C_0 \in \II$, $C_0 \trans C_1\trans\dots \trans C_n$ and $L_n(v) = q_f$. We assume wlog that $v$ is the root of the tree, i.e. $v = \epsilon$.
	
	For all $u \in \Vert{\Gamma}$, we define $b(u)$ as the first index $0 \leq i<n$ from which $u$ takes a broadcast transition, and $\infty$ if it never broadcasts anything.
	%
	%
	%
	
}

We define $\Gamma[u]$ as the tree topology obtained from $\Gamma$ by removing $u$ and all $u' \in \Vert{\Gamma}$ which admits $u$ as a prefix. More formally, $\Gamma[u] = (V', E')$ with $V' = V \setminus \set{w \in V \mid w = u \cdot w', w' \in \nat^\ast}$ and $E' = E \cap (V'\times V') $.

\Ifshort{
	We establish the following lemma before proving \cref{obs:decidability-2phase:bounded-path-to-v}.
\begin{lemma}\label{obs:decidability-2phase:everybody-broadcasts}
	For all $u \in \Vert{\Gamma} \setminus \set{\epsilon}$, $b(u) \neq \infty$. Moreover, for all $u_1, u_2 \in \Vert{\Gamma} \setminus \set{\epsilon}$ such that $u_2 = u_1 \cdot x$ for some $x \in \nat$,
	it holds that $b(u_1) > b(u_2)$.
	
\end{lemma}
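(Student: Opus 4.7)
The plan is to prove both claims by contradiction, exploiting the minimality of $\Gamma$: in each case I assume the negation and construct a strictly smaller tree topology over which $q_f$ is still coverable, contradicting $|V| = f(\PP, q_f)$.

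For the first claim, take $u \in \Vert{\Gamma} \setminus \set{\epsilon}$ with $b(u) = \infty$. Let $S$ denote the subtree of $\Gamma$ rooted at $u$ (all words of $V$ having $u$ as a prefix). I would project $\rho = C_0 \trans \cdots \trans C_n$ onto $\Gamma[u]$ by walking through the steps of $\rho$ in order, discarding every step whose initiator lies in $S$, and applying every other step to the configuration restricted to $V \setminus S$. The key observation is that no broadcast initiated inside $S$ ever reaches a vertex outside $S$ in $\rho$: in the tree $\Gamma$ the only edge joining $S$ to the rest is $(u[-1], u)$, and transmitting information along it from $S$ to $V \setminus S$ would require a broadcast by $u$, which by assumption never happens. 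Consequently, the labelling of every vertex outside $S$ in the projection coincides with its labelling in $\rho$ throughout, and in particular $\epsilon$ still reaches $q_f$, contradicting minimality.

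For the second claim, take $u_1 \in \Vert{\Gamma} \setminus \set{\epsilon}$ and $u_2 = u_1 \cdot x$ and assume $b(u_1) \leq b(u_2)$ (a single step of $\rho$ has a unique initiator, so equality is actually impossible and $b(u_1) < b(u_2)$). The crucial consequence of 2-phase-boundedness that I would extract is as follows: before its first broadcast, $u_1$ occupies a state of $Q_0 \cup Q_1^r$; right after $b(u_1)$ it lies in $Q_1^b \cup Q_2^b$; any reception transition fired by $u_1$ from $Q_1^b \cup Q_2^b$ leads directly into $Q_2^r$; and from $Q_2^r$ no broadcast transition is available. Letting $r(u_1) \in \nat \cup \set{+\infty}$ be the first index at which $u_1$ enters $Q_2^r$ along $\rho$, all broadcasts of $u_1$ take place at indices strictly between $b(u_1)$ and $r(u_1)$. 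Moreover any reception by $u_1$ from $u_2$ occurs at an index $\geq b(u_2) > b(u_1)$, and if it is fired while $u_1$ is still in $Q_1^b \cup Q_2^b$ it must itself coincide with $r(u_1)$; hence $u_1$ receives nothing from $u_2$ at any index strictly less than $r(u_1)$.

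I would then project $\rho$ on $\Gamma' = \Gamma[u_2]$, discarding every step initiated inside the subtree rooted at $u_2$ and every step initiated by $u_1$ at an index $\geq r(u_1)$ (each of those is an internal $Q_2^r$-move, since $u_1$ cannot broadcast from $Q_2^r$). I claim every other step is a valid transition in the projection and that the label of each vertex outside $u_2$'s subtree agrees with $\rho$ whenever it matters. By induction on the time index: for indices $< r(u_1)$, nothing from the removed subtree has yet influenced $u_1$, so $u_1$'s label matches $\rho$; and for every vertex strictly closer to $\epsilon$ than $u_1$ the match then follows by a secondary induction on distance to $u_1$, because any broadcast from $u_2$'s subtree that could reach such a vertex would have to be relayed through $u_1$ via a broadcast of $u_1$, all of which occur before $r(u_1)$. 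For indices $\geq r(u_1)$, $u_1$ performs no further broadcasts in $\rho$, so even though its label in the projection may diverge, the divergence stays confined to $u_1$ itself. In particular $\epsilon$ continues to evolve exactly as in $\rho$ and $L_n(\epsilon) = q_f$ is preserved, yielding the desired contradiction. The main obstacle will be this last piece of bookkeeping: verifying that the post-$r(u_1)$ divergence of $u_1$'s label remains confined to $u_1$, which is precisely where 2-phase-boundedness is essential — with $k \geq 3$, $u_1$ could re-enter a broadcast phase and the argument would break.
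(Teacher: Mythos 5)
Your proposal is correct and follows essentially the same route as the paper: both parts argue by contradiction with the minimality of $\Gamma$, prune the offending subtree ($\Gamma[u]$, resp.\ $\Gamma[u_2]$), and replay the execution with the invariant that every remaining vertex keeps its original label except possibly $u_1$, whose divergence is confined to $Q_2^r$ where no further broadcast is possible. Your observation that a reception by $u_1$ from $u_2$ can only occur after $b(u_1)$ and therefore lands $u_1$ in $Q_2^r$ is precisely the paper's key use of 2-phase-boundedness.
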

}
\Iflong{\begin{proofof}{\cref{obs:decidability-2phase:everybody-broadcasts}}}
\Ifshort{\begin{proof}}
We first prove the first part of the lemma.
	Assume there exists $u \in \Vert{\Gamma}$ such that $u\neq v$ and $u$ does not broadcast anything during the execution. Take $\Gamma' = \Gamma[u]= (V',E')$.	
	From  $C'_0 = (\Gamma', L'_0) \in \II$, we make the processes in $\Vert{\Gamma'}$ perform the same sequence of transitions as in the original execution.  The remaining parent of $u$ can perform the same sequence of transitions as it never receives anything from $u$.

Now we prove the second part of the lemma. 
	Assume for the sake of contradiction that there exist $u_1, u_2 \in \Vert{\Gamma} \setminus \set{\epsilon}$ such that $u_2 = u_1 \cdot x$ for some $x \in \nat$,
	and $b(u_1) < b(u_2)$.
	From the first part of the lemma, $b(u_1) \leq b(u_2) < n$. 
	We show now that $q_f$ is covered with $\Gamma' = \Gamma[u_2]$. For that, we will define by induction configurations $C'_0 = (\Gamma', L'_0)$, \dots, $C'_n=(\Gamma', L'_n)$ such that $C'_0\in \II$, $C'_1, \dots, C'_n \in \CC$ and $C'_0 \trans^\ast C'_1 \trans^\ast \cdots \trans^\ast C'_n$. We will prove that,
	for all $0\leq k\leq n$, 
	\begin{align*}
	P(k): &  \textrm{ for all $v' \in \Vert{\Gamma'}\setminus\set{u_1}$, $L'_k(v') = L_k(v')$, and if $L_k(u_1) \nin Q_2^r$, $L'_k(u_1) = L_k(u_1)$}.
	\end{align*}

	For $k=0$, $C'_0$ is uniquely defined, and obviously, for all $v'\in \Vert{\Gamma'}$, $L'_0(v')=L_0(v')$. Let $0\leq k <n$, and assume we have built $C'_0 ,\dots, C'_k \in \CC$ and that $P(k)$ holds. Let $u\in\Vert{\Gamma}$, $t\in \Delta$ such that $C_k\transup{u,t} C_{k+1}$. 
	
	\begin{itemize}
	\item If $t= (q,\tau,q')$, we define $L'_{k+1}$ as follows. 
	\begin{equation*}
	L'_{k+1}(u')=\begin{cases} q' & \textrm{ if $u'=u$ and $L'_k(u)=L_k(u)$}\\
						L'_k(u') & \textrm{ otherwise.}
	\end{cases}
	\end{equation*}
	\textbf{First case: $L'_k(u)=L_k(u)$}. Then $L'_k(u)=q$, and it follows immediately that $C'_k\transup{u,t} C'_{k+1}$. Also, let $v'\in\Vert{\Gamma'}\setminus\set{u_1}$. By induction hypothesis,
	$L'_k(v')=L_k(v')$ hence if $v'\neq u$, $L'_{k+1}(v')=L'_k(v')=L_k(v')=L_{k+1}(v')$ by definition of $C_k \transup{u,t}C_{k+1}$. Also, $L'_{k+1}(u)=L_{k+1}(u)$ 
	by definition
	of $C_k\transup{u,t} C_{k+1}$. If $L_{k+1}(u_1)\notin Q_2^r$ and $u\neq u_1$, then $L_k(u_1)=L_{k+1}(u_1)\notin Q_2^r$, and 
	$L'_{k+1}(u_1)= L'_k(u_1)=L_k(u_1)=L_{k+1}(u_1)$.  If $u=u_1$, then by construction $L'_{k+1}(u_1)=q'=L_{k+1}(u_1)$. Then $P(k+1)$ holds. 
	
		\textbf{Second case: $L'_k(u)\neq L_k(u)$}.  Then $C'_{k+1}=C'_k$ and $C'_k\trans^* C'_{k+1}$. Moreover, since $P(k)$ holds, it implies that $u=u_1$ and $L_k(u_1)\in Q_2^r$.
	Then $L_{k+1}(u_1)\in Q_2^r$ and $P(k+1)$ holds by induction hypothesis. 
	
	\item If $t=(q,!!m,q')$, we differentiate three cases.
	\begin{enumerate}
		\item if $u \in \Vert{\Gamma'}$, and $u\neq u_1$, we let $C'_{k+1}=(\Gamma', L'_{k+1})$ defined as follows: 
		
		\begin{align*}
		L'_{k+1}(u') &= L_{k+1}(u') \textrm{ for all $u'\in\Vert{\Gamma'}\setminus\set{u_1}$}\\
		L'_{k+1}(u_1) &=\begin{cases}
					L'_k(u_1) & \textrm{ if $u_1\nin\NeighG{\Gamma'}{u}$}\\
					L_{k+1}(u_1) & \textrm{ if $u_1\in\NeighG{\Gamma'}{u}$ and $L'_k(u_1)=L_k(u_1)$}\\ 		
					p & \textrm{ if $u_1\in\NeighG{\Gamma'}{u}$, $L'_k(u_1)\neq L_k(u_1)$ and there exists $(L'_k(u_1), ?m, p)\in \Delta$}\\
					L'_k(u_1) & \textrm{ otherwise}.			
					\end{cases}
		\end{align*}
		
		First, we show that $C'_k\transup{u,t} C'_{k+1}$. By induction hypothesis, $L_k(u')=L'_k(u')$ for all $u'\in\Vert{\Gamma'}\setminus\set{u_1}$. Then
		$L'_{k}(u)=L_k(u)=q$ and $L'_{k+1}(u)=q'$. For all $u'\neq u_1$, $L'_{k+1}(u')=L_{k+1}(u')$ and $L_k(u')=L'_k(u')$. 
		If $u_1\notin\NeighG{\Gamma'}{u}$, then $L'_{k+1}(u_1)=L'_k(u_1)$, and $C'_k\transup{u,t}C'_{k+1}$.  Otherwise, if $L'_k(u_1)=L_k(u_1)$,
		then $L'_{k+1}(u_1)=L_{k+1}(u_1)$ and $u_1$ has behaved correctly, hence $C'_k\transup{u,t}C'_{k+1}$. If $u_1\in\NeighG{\Gamma'}{u}$ and $L'_k(u_1)
		\neq L_k(u_1)$, then either there exists a transition $(L'_k(u_1), ?m, p)\in \Delta$ and $L'_{k+1}(u_1)=p$ or $L'_{k+1}(u_1)=L'_k(u_1)$. 
		In any case, $C'_k\transup{u,t} C'_{k+1}$. 
		
		Let $v'\in\Vert{\Gamma'}\setminus\set{u_1}$. By definition, $L'_{k+1}(v')=L_{k+1}(v')$. 
		If $L_{k+1}(u_1)\notin Q_2^r$ 
		then as $\PP$ is 2-phase bounded, $L_k(u_1) \nin Q_2^r$ and so by induction hypothesis $L'_k(u_1) = L_k(u_1)$.
		Then, by construction of $L'_{k+1}(u_1)$: either $u_1 \nin \NeighG{\Gamma'}{u}$ and so $L'_{k+1}(u_1) = L'_k(u_1) = L_k(u_1) = L_{k+1}(u_1)$, or $u_1 \in \NeighG{\Gamma'}{u}$ and $L'_{k+1}(u_1) =L_{k+1}(u_1)$.\lug{j'ai changé ici car je ne voyais pas pourquoi ce qui était commenté était vrai}
		
		\item If $u=u_1$, note that, as $(L_k(u_1), !!m, L_{k+1}(u_1)) \in \Delta$, by definition of 2-phase-bounded protocols, $L_k(u_1) \nin Q_2^r$. Hence, by induction hypothesis, $L'_{k}(u_1) = L_k(u_1) = q$. 
		We then simply let $L'_{k+1}(u') = L_{k+1}(u')$ for all $u'\in\Vert{\Gamma'}$. As $C_k \trans C_{k+1}$, and by the induction hypothesis,
		$C'_k \trans C'_{k+1}$. Moreover, by construction, for all $u' \in \Vert{\Gamma'}$, $L'_k(u') = L'_{k+1}(u')$ hence $P(k+1)$ holds. 
		
		\item If $u\nin \Vert{\Gamma'}$, we let $C'_{k+1} = C'_k$ and then $C'_{k}\trans^\ast C'_{k+1}$. If $u \neq u_2$, then $\NeighG{\Gamma}{u} \cap \Vert{\Gamma'} = \emptyset$. Indeed, $u=u_2\cdot w$ for $w\in \nat ^+$, hence $u[-1]=u_2\cdot w'\notin\Vert{\Gamma'}$ and for every $x \in \nat$, $u\cdot x \nin \Vert{\Gamma'}$. Hence, for $u'\in\Vert{\Gamma'}$, $L_{k+1}(u')=L_k(u')$. Then, $P(k+1)$ holds.

		If $u = u_2$, its only neighbor in $\Vert{\Gamma'}$ is $u_1$. Furthermore, as $b(u_1) < b(u_2) \leq k $, $u_1$ already broadcast some messages, and as $\PP$ is 2-phase-bounded, $L_k(u_1) \in Q_1^b\cup Q_2^b \cup Q_2^r$.
		Either $L_{k+1}(u_1)  = L_k(u_1)$, and since $L'_{k+1}(u_1) = L'_k(u_1)$, the induction property still holds, or $(L_k(u_1), ?m, L_{k+1}(u_1)) \in \Delta$. Then, as $\PP$ is 2-phase-bounded, $L_{k+1}(u_1) \in Q_2^r$, and the induction property holds too. 
	\end{enumerate}

	\end{itemize}
	
	Then $L'_n(\epsilon)=L_n(\epsilon)$ (since $u_1\neq\epsilon)$, and $L'_n(\epsilon)=q_f$. 
	Hence we found a tree topology $\Gamma' = (V' ,E')$ with $|V'| < |V|$ with which $q_f$ is coverable. This is not possible as $|V| = f(\PP, q_f)$, hence $b(u_1) > b(u_2)$.
	\Ifshort{\end{proof}}
\Iflong{\end{proofof}}

We are now ready to prove \cref{obs:decidability-2phase:bounded-path-to-v}.

\begin{proofof}{\cref{obs:decidability-2phase:bounded-path-to-v}}
	Assume there exists $u \in \Vert{\Gamma}$ such that $|u| > |Q|+1$. For all $1 \leq i \leq |u|$, we denote $v_i$ the prefix of $u$ of length $i$. By definition of a tree topology, for al $1 \leq i \leq |u|$, $v_i \in \Vert{\Gamma}$.
	Let $1 \leq i \leq n$, from \cref{obs:decidability-2phase:everybody-broadcasts}\ we know that $0\leq b(v_i)< n$. Hence, we can
	denote $q_i$ the state from which $v_i$ performs its first broadcast, i.e. $C_{b(v_i)} \transup{v_i, (q_i, !!m_i, q'_i)} C_{b(v_i) +1}$ for some $m_i \in \Sigma$ and $q'_i \in Q$. 
	
	As $|u| > |Q|+1$, there exists $1 <i_1 <i_2 \leq |u|$ such that $q_{i_1} = q_{i_2}$. Note that we take two such nodes in $v_2,\dots, v_{|u|}$ instead of in $v_1, \dots, v_{|u|}$. We will see later that we need $v_{i_1}$ to \emph{not have} $\epsilon$ as neighbor in order to apply \cref{obs:decidability-2phase:everybody-broadcasts}\ on $v_{i_1}$ and its neighbor $v_{i_1 - 1}$.
	
	From \cref{obs:decidability-2phase:everybody-broadcasts}, $b(v_{i_1}) > b(v_{i_2})$. Take $\Gamma' = (V', E')$ 	
	defined as
	the tree topology obtained from $\Gamma$ by only keeping processes $u$ such that (1) the unique path between path from $u$ to $v$ does not contain $v_{i_1}$, i.e. $v_{i_1}$ is not a prefix of $u$, or (2) the unique path between $u$ and $v$ contains $v_{i_2}$, i.e. $v_{i_2}$ is a prefix of $u$. We also add an edge between $v_{i_2}$ and the remaining parent of $v_{i_1}$. To stay consistent with the definition of our tree topology, we also rename nodes.
	More formally:
	$V' = V \setminus \set{u \in V \mid v_{i_1}\text{ is a prefix of } u} \cup \set{v_{i_1}\cdot w \mid w \in \nat^\ast \text{ and }v_{i_2} \cdot w \in V}$, and $E' = \set{(w, w\cdot x) \mid x \in \nat, w \in V', w\cdot x \in V'}$.
	Let $w \in V'$, we let 
	\begin{align*}
	\textsf{tr}: V'&\rightarrow V\\
	\textsf{tr}:w&\mapsto \begin{cases} w & \textrm{ if $v_{i_1}$ is not a prefix of $w$ in $V'$}\\
						v_{i_2}\cdot w' & \textrm{ if $w = v_{i_1}\cdot w'$ in $V'$ for some $w' \in \nat^\ast$}.
						\end{cases}
	\end{align*}
	
	Note that for all $u_1, u_2 \in \Vert{\Gamma'}$, such that $(u_1,u_2) \neq (v_{i_1 - 1}, v_{i_1})$, it holds that: $(u_1, u_2) \in \Edges{\Gamma'}$ if and only if $(\textsf{tr}(u_1), \textsf{tr}(u_2)) \in \Edges{\Gamma}$.
%
	
	We prove the existence of an execution covering $q_f$ with $\Gamma'$ in three steps.
	\paragraph*{First step.} We start by building $C'_0 = (\Gamma', L'_0) \in \II$, $C'_1, \dots, C'_{b(v_{i_2})} \in \CC$ such that for all $0 \leq i \leq b(v_{i_2})$, and $C'_i =(\Gamma', L'_i)$, for all $u \in \Vert{\Gamma'}$, $L'_i(u) = L_i(\textsf{tr}(u))$.
	We now prove by induction that 
	$C'_0 \trans^\ast C'_1 \trans^\ast \cdots \trans^\ast C'_{b(v_{i_2})}$.
	Let $k\geq 0$, and assume that we have proved that $C'_0 \trans^\ast C'_k$. 

	Assume that $C_k\transup{u,t} C_{k+1}$.
	
	\begin{itemize}
		\item if $t = (q, \tau, q')$ and there exists $v \in \Vert{\Gamma'}$ such that $\textsf{tr}(v) = u$, then by construction, $L'_k(v) = L_k(v) = q$ and $L'_{k+1}(v) = L_{k+1}(u) =q'$. For all other nodes $v' \in \Vert{\Gamma'}$, $L'_{k+1}(v') = L_{k+1}(\textsf{tr}(v')) = L_{k}(\textsf{tr}(v')) = L'_k(v')$ as $\textsf{tr}$ is injective. Hence, $C'_k\transup{v,t} C'_{k+1}$.
		\item if $t = (q, \tau, q')$ and there is no $v \in \Vert{\Gamma'}$ such that $\textsf{tr}(v) = u$, then for all $v \in \Vert{\Gamma'}$, $L'_{k+1}(v) = L_{k+1}(\textsf{tr}(v)) = L_{k}(\textsf{tr}(v')) = L'_k(v)$ and so $C'_{k+1} = C'_k$.
		\item if $t = (q, !!m, q')$ and there $v \in \Vert{\Gamma'}$ such that $\textsf{tr}(v) = u$, then by construction, $L'_k(v) = L_k(v) = q$ and $L'_{k+1}(v) = L_{k+1}(u) =q'$. 
		Furthermore, for all nodes $v' $ such that $(v', v) \in \Edges{\Gamma'}$, then either $(\textsf{tr}(v'), \textsf{tr}(v)) \in \Edges{\Gamma'}$ or $\set{v, v'} = \set{v_{i_1 -1}, v_{i_1}}$.
		In the first case, as $C_k \transup{\textsf{tr}(v), t} C_{k+1}$, either $(L'_k(v'), ?m, L'_{k+1}(v')) \in \Delta$ or $L'_k(v') = L'_{k+1}(v')$ and there is no $p \in Q$ such that $(L'_k(v'), ?m,p) \in \Delta$. In the latter case, recall that $k < b(v_{i_2})$ and, from  \cref{obs:decidability-2phase:everybody-broadcasts}, $b(v_{i_2}) < \cdots < b(v_{i_1}) <b(v_{i_1 - 1})$. Hence, $u \nin \set{v_{i_2}, v_{i_1 - 1}}$. Hence, $v \nin \set{v_{i_1 -1}, v_{i_1}}$, and we reach a contradiction. 
		
		All other nodes $u'$ not neighbors of $v$ are such that $L'_{k+1}(u') = L_{k+1}(\textsf{tr}(u'))$ and $(\textsf{tr}(u'), \textsf{tr}(v)) \nin \Edges{\Gamma'}$, as $v \nin v_{i_1}$. Hence, $L'_{k+1}(u') = L_{k+1}(\textsf{tr}(u')) = L_k(\textsf{tr}(u')) = L'_k(u')$.
		
		Hence $C'_k \transup{v,t} C'_{k+1}$.
		
		\item  if $t = (q, !!m, q')$ and there is no $v \in \Vert{\Gamma'}$ such that $\textsf{tr}(v) = u$, hence $u = v_{i_1}\cdot x$ for some $x \in \nat^\ast$ and $v_{i_2}$ is not a prefix of $u$. 
		We prove that for all $v \in \Vert{\Gamma}$, it holds that $L_{k+1}(\textsf{tr}(v)) = L_k(\textsf{tr}(v))$. Assume this is not the case, then $(\textsf{tr}(v), u) \in \Edges{\Gamma}$. 
		If $\textsf{tr}(v)$ is a parent of $u$ and so $\textsf{tr}(v) = v_{i_1 - 1}$ and $u = v_{i_1}$. With \cref{obs:decidability-2phase:everybody-broadcasts}\ it implies that $k< b(v_{i_2})< b(u)$, which is absurd as $C_k \transup{u, t} C_{k+1}$. 
		If $u$ is a parent of $\textsf{tr}(v)$, then $\textsf{tr}(v)= v_{i_2}$ and $u = v_{i_2 - 1}$. For the same reason, we get  $k< b(v_{i_2})< b(u)$, which is absurd as $C_k \transup{u, t} C_{k+1}$. Hence $(\textsf{tr}(v), u) \nin \Edges{\Gamma}$ and $L_{k+1}(\textsf{tr}(v)) = L_k(\textsf{tr}(v))$. Hence, $L'_{k+1}(v) = L_{k+1}(\textsf{tr}(v)) = L_k(\textsf{tr}(v)) = L'_k(v)$ for all $v \in \Vert{\Gamma'}$ and so $C'_k = C'_{k+1}$.
	\end{itemize}
Hence $C'_0 \trans^\ast C'_{b(v_{i_2})}$ where for all $u \in \Vert{\Gamma'}$, $L'_{b(v_{i_2})}(u) = L_{b(v_{i_2})}(\textsf{tr}(u))$.
	\paragraph*{Second step.} 
	Next, we build $C'_{b(v_{i_2})  +1} ,C'_{b(v_{i_2})  +2}, \dots, C'_{b(v_{i_1})} \in \CC$ such that $C'_{b(v_{i_2}) } \trans^\ast  \cdots \trans^\ast C'_{b(v_{i_1})}$. Moreover, we ensure that for all $b(v_{i_2}) \leq k \leq b(v_{i_1})$, if we let $C'_k =(\Gamma', L'_k)$, 
	\begin{align*}
	\textrm{for all }u \in \Vert{\Gamma'} \textrm{ such that }u = \textsf{tr}(u), L'_k(u) = L_k(u)\\
	\textrm{ and }L'_k(v_{i_1})=q_{i_2}. 
	\end{align*}
	
	Observe that, from the first step, $C'_{b(u_{i_2})}$ is such that for all $u\in\Vert{\Gamma'}$, $L'_{b(u_{i_2})}(u)=L_{b(u_{i_2})}(\textsf{tr}(u))$, and then
	$L'_{b(v_{i_2})}(v_{i_1})=L_{b(v_{i_2})}(\textsf{tr}(v_{i_1}))=L_{b(v_{i_2})}(v_{i_2})=q_{i_2}$. So the induction hypothesis holds for $k=b(v_{i_2})$. 
	
	Now, let $b(u_{i_2})\leq k < b(u_{i_1})$ and assume that we have built $C'_{b(u_{i_2})}$, \dots, $C'_k$. Let $v\in\Vert{\Gamma}$
	and $t=(q,\alpha,q')$ such that $C_{k}\transup{v,t} C_{k+1}$. If there is no $u\in\Vert{\Gamma'}$ such that $\textsf{tr}(u)=v$, or if the only $u\in\Vert{\Gamma'}$
	such that $\textsf{tr}(u)=v$ is such that $u\neq v$, then we let $C'_{k+1}=C'_k$. 
	\begin{itemize}
	\item If $\alpha=\tau$, we know that for all $u'\in\Vert{\Gamma}\setminus\set{v}$,
	$L_{k+1}(u')=L_k(u')$, hence, by induction hypothesis, if $u'=\textsf{tr}(u')$, we have that $L'_{k+1}(u')=L_{k+1}(u')$, and $L'_{k+1}(v_{i_1})=q_{i_2}$. 
	
	\item If $\alpha=!!m$ for some $m\in\Sigma$, then the neighbors of $v$ may have changed state. 
	Since $k<b(v_{i_1})$, we know that $u\neq v_{i_1}$. Hence, $u=v_{i_1}\cdot w$ for some $w\in\nat^+$ and then, for all $u'\in\Vert{\Gamma'}$, if
	$u'=\textsf{tr}(u')$, then $u'\notin \NeighG{\Gamma}{v}$. Thus, for all $u'\in\Vert{\Gamma'}$ such that $u'=\textsf{tr}(u')$, $L_{k+1}(u')=L_k(u')$ and
	then $L'_{k+1}(u')=L_{k+1}(u')$. Moreover, $L'_{k+1}(v_{i_1})=L'_k(v_{i_1})=q_{i_2}$. 
	\end{itemize}
	
	Assume now that $v=\textsf{tr}(v)$. By definition, $v_{i_1}$ is not a prefix of $v$ in $V'$. 
	\begin{itemize}
	\item If $\alpha = \tau$, we let $L'_{k+1}(v)=q'$ and, for all $u\in\Vert{\Gamma'}\setminus\set{v}$, $L'_{k+1}(u)=L_{k+1}(u)=L_k(u)$. By induction hypothesis, $L'_k(v)=L_k(v)$, hence $(L'_k(v), \tau, L'_{k+1}(v))\in\Delta$ and $C'_k\transup{v,t} C'_{k+1}$.
	Moreover, for all $u\in\Vert{\Gamma'}$ such that $u=\textsf{tr}(u)$, $L'_{k+1}(u)=L_{k+1}(u)$. Also, $v\neq v_{i_1}$ hence $L'_{k+1}(v_{i_1})=L'_k(v_{i_1})=q_{i_2}$. 
	\item If $\alpha = !!m$, let $L'_{k+1}(v)=q'$ and, for all $u\in\NeighG{\Gamma'}{v}$, if $\textsf{tr}(u)=u$, let $L'_{k+1}(u)=L_{k+1}(u)$. For all other $u\in\Vert{\Gamma'}$, let $L'_{k+1}(u)=L'_k(u)$. First, we show that $C'_k\transup{v,t} C'_{k+1}$. By induction hypothesis, $L'_k(v)=L_k(v)$, hence $(L'_k(v), !!m, L'_{k+1}(v))\in\Delta$. Let $u\in\NeighG{\Gamma'}{v}$ and assume that $\textsf{tr}(u)\neq u$. It means that $u=v_{i_1}\cdot w$ for some $w\in\nat^*$. Since 
	$v=\textsf{tr}(v)$, we also know that $v_{i_1}$ is not a prefix of $v$.
	If $v=u\cdot x$ then $v=v_{i_1}\cdot w\cdot x$, which is impossible since $u_{i_1}$ is not a prefix of $v$. If $u=v\cdot x$, then either $v_{i_1}$ is also
	a prefix of $v$ which is impossible, or $u=v_{i_1}$ and then $v_{i_1}=v\cdot x$. This means that $v=v_{i_1-1}$, the father of $v_{i_1}$. By~\cref{obs:decidability-2phase:everybody-broadcasts}, it would mean that $b(v) = b(v_{i_1})< b(v)$ but $k<b(v_{i_1})$ and $v$ broadcasts a message from $C_k$. 
	This leads to a contradiction. Hence, for all $u\in\NeighG{\Gamma'}{v}$, $\textsf{tr}(u)=u$, and then, for all $u\in\NeighG{\Gamma'}{v}$, $L'_{k+1}(u)=L_{k+1}(u)$. By induction hypothesis, $L_k(u)=L'_k(u)$, then either $L_k(u)=L_{k+1}(u)$ and there is no $(L'_k(u), ?m, p)\in\Delta$, or $(L'_k(u), ?m, L'_{k+1}(u))$. Hence
	$C'_k\transup{v,t} C'_{k+1}$. Moreover, by induction hypothesis, for all $u\in\Vert{\Gamma'}$ such that $\textsf{tr}(u)=u$, $L'_k(u)=L_k(u)$. Then, if $u\in\NeighG{\Gamma'}{v}$, $L'_{k+1}(u)=L_{k+1}(u)$, and otherwise, $L'_{k+1}(u)=L'_k(u)=L_k(u)$. Let again $u$ such that $u = \textsf{tr}(u)$ and $(u,v)\notin \Edges{\Gamma'}$, then $(\textsf{tr}(u), \textsf{tr}(v))\notin\Edges{\Gamma}$, hence $(u,v)\notin\Edges{\Gamma}$. Then $u\notin\NeighG{\Gamma}{v}$ and $L_{k+1}(u)=L_k(u)$. Hence, for all $u\in\Vert{\Gamma'}$
	such that $u=\textsf{tr}(u)$, $L'_{k+1}(u)=L_{k+1}(u)$. Since $v_{i_1}\neq\textsf{tr}(v_{i_1})$, and since $v\neq v_{i_1}$, $L'_{k+1}(v_{i_1})=L'_k(v_{i_1})=q_{i_2}$ by induction hypothesis. 
		\end{itemize}

		In this part of the execution, we have forgotten the transitions issued by nodes $u=v_{i_1}\cdot w$ for $w\in\nat^+$ in $\Vert{\Gamma}$. The configuration
		we reach is such that $L'_{b(v_{i_1})}(v_{i_1})=q_{i_2}$ and correspond to the states reached in $C_{b(v_{i_1})}$ in the rest of the tree. This will allow us
		to reach $q_f$. The subtree rooted in $v_{i_1}$ might visit different states than in the original execution, but this will not influence the states visited
		by the root, thanks to the fact that the protocol is 2-phase bounded. 

	\paragraph*{Third step.}
	

	Finally, we build $C'_{b(v_{i_1})  } ,C'_{b(v_{i_1})  +1}, \dots, C'_{n} \in \CC$ such that $C'_{b(v_{i_1}) } \trans^\ast C'_{b(v_{i_1})  +1} \trans^\ast \cdots \trans^\ast C'_{n}$. Moreover, we ensure that for all $b(v_{i_1}) \leq k \leq n$, if we let $C'_k =(\Gamma', L'_k)$,  
	
	\begin{align*}
	\textrm{for all }u \in \Vert{\Gamma'}\textrm{ such that }u = \textsf{tr}(u), L'_k(u) = L_k(u) \\
	\textrm{ and } L'_k(v_{i_1}) = L_k(v_{i_1})\textrm{ or }L_k(v_{i_1}) \in Q_2^r.
	\end{align*}
	
	Processes $u \in \Vert{\Gamma'}$ such that $u = \textsf{tr}(u)$, perform the same sequence of transitions than between $C_{b(v_{i_1}) } $ and $C_{n}$ and process $v_{i_1}$ broadcasts the sequence of messages broadcast by the same node in $\Gamma$. 
	Formally we build the sequence by induction on $b(v_{i_1}) \leq k \leq n$. $C'_{b(v_{i_1})}$ is already defined and satisfies the induction property.
	Assume we have built $C'_{b(v_{i_1})  } , \dots ,C'_{k}$ for some $b(v_{i_2}) \leq k <n $ such that:  
	for all $u \in \Vert{\Gamma'}$ with $u = \textsf{tr}(u)$, it holds that $L'_k(u) = L_k(u)$,
	and, furthermore either $L'_k(v_{i_1})=L_k(v_{i_1})$ or $L_k(v_{i_1}) \in Q_2^r$.
	
	Denote $C_k \transup{u, t} C_{k+1}$ with $t = (q, \alpha, q')$ and $\alpha \in \set{\tau} \cup !! \Sigma$.
	\begin{itemize}
	\item First, if $\alpha = \tau$ and $u = \textsf{tr}(u)$ or $u = v_{i_1}$, then define $L'_{k+1}$ as follows: if $L'_k(u) = L_k(u)$, define $L'_{k+1}(u) = q'$ and $L'_{k+1}(u') = L'_{k}(u')$ for all other nodes, otherwise $L'_{k+1} = L'_k$. 
	\begin{itemize}
	\item If $L'_k(u)=L_k(u)$, $L_k(u) = q$ and so we immediately get that $C'_k \trans C'_{k+1}$. Furthermore, for all $u' \in \Vert{\Gamma'} \setminus \set{u}$, it holds that $L'_{k+1}(u') = L'_k(u')$. If $u' \neq v_{i_1}$ and $u'=\textsf{tr}(u')$, by induction hypothesis, $L'_{k+1}(u') = L'_k(u') = L_k(u') = L_{k+1}(u')$.
	Furthermore, either $u=v_{i_1}$ and $L'_{k+1}(v_{i_1})=L_{k+1}(v_{i_1})$, or $u \neq v_{i_1}$ and as $L_{k+1}(v_{i_1}) = L_k(v_{i_1})$, it holds that $L'_{k+1}(v_{i_1}) = L_{k+1}(v_{i_1})$ or $L_{k+1}(v_{i_1}) = L_k(v_{i_1}) \in Q_2^r$.
	\item If $L'_k(u)\neq L_k(u)$, by induction hypothesis $u = v_{i_1}$
	and $L_k(v_{i_1}) \in Q_2^r$. As $\PP$ is 2-phase-bounded, for all $p$ such that $(L_k(v_{i_1}), \tau, p)\in \Delta$, it holds that $p \in Q_2^r$. Hence, $L_{k+1}(v_{i_1}) \in Q_2^r$ and the induction property holds.
		\end{itemize}
		Note that if  $u \neq \textsf{tr}(u)$ and $u \neq  v_{i_1}$, for all $u'$ such that $u' = \textsf{tr}(u')$ or $u '= v_{i_1}$, it holds that $L_{k+1}(u') = L_k(u')$, hence by taking  $C'_{k+1} = C'_k$, the induction property holds.

	\item In the following, we assume that $\alpha = !!m$ for some $m \in \Sigma$.
	We start by defining $C'_{k+1}$, then we prove that $C'_k \trans^\ast C'_{k+1}$ and finally we prove that $C'_{k+1}$ respects the induction property. We build $C'_{k+1}$ depending on $u$. 
	\begin{enumerate}
		\item if $\textsf{tr}(u) = u$ and $u\neq v_{i_1 - 1}$, then let $L'_{k+1}(u')=L_{k+1}(u')$ for all $u'$ such that $\textsf{tr}(u') = u'$, and $L'_{k+1}(u') = L'_{k}(u')$ for all other nodes.
		
		\item if $u = v_{i_1- 1}$, then let $L'_{k+1}(u')=L_{k+1}(u')$ for all $u'$ such that $\textsf{tr}(u') = u'$, and $L'_{k+1}(u') = L'_{k}(u')$ for all other nodes $u' \neq v_{i_1}$. It is left to define $L'_{k+1}(v_{i_1})$. If $L'_{k}(v_{i_1}) = L_k(v_{i_1})$, then define $L'_{k+1}(v_{i_1})=L_{k+1}(v_{i_1})$, otherwise if there exists $(L'_k(v_{i_1}), ?m, p) \in \Delta$, define $L'_{k+1}(v_{i_1}) = p$, and else define $L'_{k+1}(v_{i_1}) = L'_k(v_{i_1})$.
		
		\item if $u = v_{i_1}$, then let $L'_{k+1}(v_{i_1}) = L_{k+1}(v_{i_1}) = q'$, and let $L'_{k+1}(u')=L_{k+1}(u')$ for all $u'$ such that $\textsf{tr}(u') = u'$. Furthermore, for all other nodes $u'$, if $(v_{i_1}, u')\in \Edges{\Gamma'}$, then if there exists $(L'_k(u'), ?m, p) \in \Delta$, let $L'_{k+1}(u')=p$, otherwise $L'_{k+1}(u')=L'_{k}(u')$. If $(v_{i_1}, u')\nin \Edges{\Gamma'}$, then $L'_{k+1}(u')=L'_{k}(u')$.
		
		\item otherwise, $C'_{k+1} = C'_k$.
		
	\end{enumerate}
	We prove now that in all cases 1, 2 and 3, it holds that $C'_k \trans C'_{k+1}$.

	In case 1, observe that as $u \neq v_{i_1-1}$ and $\textsf{tr}(u) = u$, then all neighbors $u'\in\NeighG{\Gamma'}{u}$ are such that $\textsf{tr}(u') = u'$. Indeed, if $v_{i_1}$ is not a prefix of $u$, it is not a prefix of its parent (if any), and for $x \in \nat$, $v_{i_1}$ is a prefix of $u\cdot x$ if and only if $u\cdot x = v_{i_1}$, i.e. $u = v_{i_1 -1}$. Hence, for all $(u',u) \in \Edges{\Gamma'}$, it holds that $(\textsf{tr}(u'), \textsf{tr}(u')) = (u, u')\in \Edges{\Gamma}$, and by induction hypothesis $L'_k(u') = L_k(u')$. Therefore, as $C_k \trans C_{k+1}$, $(L'_k(u'), ?m, L'_{k+1}(u')) \in \Delta$ or there is no transition labeled by $?m$ from $L'_k(u')$ and $L'_{k+1}(u') = L'_k(u')$.
	Furthermore, $(L'_k(u), !!m, L'_{k+1}(u)) \in \Delta$. Let $u'\in\Vert{\Gamma'}$ which is not a neighbor of $u$, and such that $\textsf{tr}(u') = u'$. Then, $u'\notin\NeighG{\Gamma}{u}$, and by induction hypothesis $L'_k(u') = L_k(u')$. As $C_k \trans C_{k+1}$, $L_k(u') = L_{k+1}(u')$. And so by definition of $L'_{k+1}$, $L'_{k+1}(u') = L_{k+1}(u') = L'_k(u')$. Let $u'\in\Vert{\Gamma'}$ which is not a neighbor of $u$ and such that $\textsf{tr}(u')\neq u'$. Then, by definition of $L'_{k+1}$, $L'_{k+1}(u') =  L'_k(u')$.
	Hence $C'_k \trans C'_{k+1}$.
	
	In case 2, the only difference is that $v_{i_1}$ is now a neighbor of the broadcasting node $u = v_{i_1 - 1}$. If $L'_{k}(v_{i_1}) = L_k(v_{i_1})$, then we defined $L'_{k+1}(v_{i_1}) = L_{k+1}(v_{i_1})$. As $(v_{i_1-1}, v_{i_1}) \in \Edges{\Gamma}$ and $C_k \trans C_{k+1}$, then $(L'_k(v_{i_1}), ?m, L'_{k+1}(v_{i_1})) \in \Delta$ or there is no transition labeled by $?m$ from $L'_k(v_{i_1})$ and $L'_{k+1}(v_{i_1}) = L'_k(v_{i_1})$. Hence, $C'_k \trans C'_{k+1}$.
	Otherwise, $L'_k(v_{i_1}) \neq L_k(v_{i_1})$, and by definition of $L'_{k+1}$, either $(L'_{k}(v_{i_1}), ?m, L'_{k+1}(v_{i_1})) \in \Delta$, or $L'_{k+1}(v_{i_1}) = L'_k(v_{i_1})$ and there is no transitions $(L'_{k}(v_{i_1}), ?m, p) \in \Delta$ for some $p \in Q$. Hence, $C'_k \trans C'_{k+1}$.
	
%
%
%
	
	In case 3, note that by definition of a 2-phase-bounded protocol, $L_k(v_{i_1}) \nin Q_2^r$ as $(L_k(v_{i_1}), !!m,q') \in \Delta$. Hence,	by induction hypothesis, $L'_k(v_{i_1}) = L_k(v_{i_1})$ and so $v_{i_1}$ can perform the broadcast of $m$ from $C'_k$.
	Furthermore, by induction hypothesis, its parent $v_{i_1 -1}$ is such that $L'_k(v_{i_1 - 1}) = L_k(v_{i_1-1})$ and, by definition, $L'_{k+1}(v_{i_1 - 1}) = L_{k+1}(v_{i_1-1})$.  As $(v_{i_1-1}, v_{i_1}) \in \Edges{\Gamma}$, and $C_k \trans C_{k+1}$, either $(L_k(v_{i_1-1}), ?m,  L_{k+1}(v_{i_1-1})) \in \Delta$ or there is no transitions $(L_k(v_{i_1-1}), ?m ,p) \in \Delta$ for some $p \in Q$. 
	For other neighbors $u'$ of $v_{i_1}$, $u'\neq \textsf{tr}(u')$ and by construction of $L'_{k+1}$, $(L'_k(u'), ?m, L'_{k+1}(u')) \in \Delta$, or there is no reception transitions labeled by $?m$ from $L'_k(u')$ and $L'_{k+1}(u')=L'_{k}(u')$. Other nodes $u'$ which are not neighbors of $v_{i_1}$ are either such that $u' = \textsf{tr}(u')$, and it holds that $L'_{k+1}(u') = L_{k+1}(u') = L_k(u') = L'_k(u')$, as $(u', v_{i_1}) \nin \Edges{\Gamma}$. Or $u' \neq \textsf{tr}(u')$ and by construction, $L'_{k+1}(u') = L'_k(u')$.
	Hence it holds that $C'_k \trans C'_{k+1}$.\\

	We prove now that in all cases 1, 2, 3 and 4, $C'_{k+1}$ satisfies the induction property, i.e. for all $u \in \Vert{\Gamma'}$ such that $\textsf{tr}(u) = u$, it holds that $L'_{k+1}(u) = L_{k+1}(u)$, and $L'_{k+1}(v_{i_1}) = L_{k+1}(v_{i_1})$ or $ L_{k+1}(v_{i_1}) \in Q_2^r$. In case 1, the first part is trivial by construction. As $(u, v_{i_1}) \nin \Edges{\Gamma}$, $L_{k+1}(v_{i_1}) = L_k(v_{i_1})$, hence either $L'_{k+1}(v_{i_1}) = L'_k(v_{i_1}) = L_k(v_{i_1}) = L_{k+1}(v_{i_1})$ by induction hypothesis, or $ L_{k+1}(v_{i_1}) = L_k(v_{i_1})\in Q_2^r$.
	
	In case 2, either $L'_{k+1}(v_{i_1}) = L_{k+1}(v_{i_1})$ or $L'_{k}(v_{i_1}) \neq L_{k}(v_{i_1})$. In the latter case, by induction hypothesis, $L_k(v_{i_1}) \in Q_2^r$, and so, as $\PP$ is 2-phase-bounded, $L_{k+1}(v_{i_1}) \in Q_2^r$.
	Case 3 is direct from the definition of $L'_{k+1}$.
	
	In case 4, for all $u' \in \Vert{\Gamma'}$ such that $u' = \textsf{tr}(u')$, $L_{k+1}(u') = L_k(u')$ as $(u' ,u) \nin \Edges{\Gamma}$. Hence, $L'_{k+1}(u') = L'_k(u') = L_k(u') = L_{k+1}(u')$. However, if $u = v_{i_1+1}$, $v_{i_1}$ might change its state between $C_k$ and $C_{k+1}$ upon reception of $m$. If it does so, $(L_k(v_{i_1} ), ?m, L_{k+1}(v_{i_1})) \in \Delta$. As $k > b(v_{i_1})$ and $\PP$ is 2-phase-bounded, $v_{i_1}$ already went through its broadcasting phase, 
	and $L_{k+1}(v_{i_1}) \in Q_2^r$, and so the induction property holds.
\end{itemize}	
	
	\paragraph*{Conclusion.} Hence we found an execution covering $q_f$ with a tree topology $\Gamma' = (V', E')$ and $|V'| < |V|$. It contradicts the fact that $|V| = f(\PP,q_f)$.
	
\end{proofof}

\Ifshort{
\subsubsection{Proof of \cref{thm:Cover-decidable-2pb}}
\begin{proofof}{\cref{thm:Cover-decidable-2pb}}
	We prove the decidability by reducing both \Cover\ and \CoverTree\ to $\textsc{Cover}[K-\textsf{BP}]$. Let $\PP$ a protocol and $q_f$ a state of $\PP$.
	Assume that there is a topology $\Gamma$ with which 
	$q_f$ is coverable. Then, by~\cref{{thm:Cover-CoverTree-equivalent}}, there exist a tree topology $\Gamma'$ with wich $q_f$ is coverable. Assume that 
	$|V|= f(P,q_f)$. 
	%
	Then, each simple path in $\Gamma'$ has a length bounded by $2(|Q| +1)$.
	Indeed, let $u_1, u_2 \in \Vert{\Gamma'} $, as $\Gamma'$ is a tree topology, the unique simple path between $u_1$ and $u_2$ either contains $\epsilon$ and $d(u_1, u_2) = |u_1| + |u_2|$, or there exists $u \in \nat^+$ such that $u_1 = u\cdot u_1'$ and $u_2 = u \cdot u'_2$, and $u_1'$ and $u_2'$ have no common prefix. In the latter case, the length of the simple path between $u_1$ and $u_2$ is $|u_1'| + |u_2'| < |u_1|+ |u_2|$. In both cases, $d(u_1, u_2) \leq |u_1| + |u_2|$, and hence from  \cref{obs:decidability-2phase:bounded-path-to-v}, $d(u_1, u_2) \leq 2(|Q| +1)$. Then $q_f$ is coverable in a $2(|Q|+1)$-bounded path topology. 
	
	Conversely, assume that there exists a $2(|Q|+1)$-bounded path topology $\Gamma$ with which $q_f$ is coverable. Then, immediately, there is a topology  with which 
	$q_f$ is coverable. By~\cref{thm:Cover-CoverTree-equivalent}, there also exists a tree topology with which $q_f$ is coverable. \cref{th:concur-2010}
	allows to conclude to decidability of both \Cover\ and \CoverTree. 
\end{proofof}
}
\subsection{Proofs of \cref{subsec:coverLine-inP}}\label{appendix:proof-inP}
\subsubsection{Proof of \cref{lemma:CoverLine-2pb-inP:execution-shape-2}}\label{appendix:subsec:CoverLine-inP:lemma:shape}
Let $q_f \in Q$ be a coverable state.
We fix $\Gamma$ with $\Vert{\Gamma} = \set{v_1, \dots, v_\ell}$ the line topology such that there exist $\rho = C_0 \trans \cdots \trans C_n$ with $C_n = (\Gamma, L_n)$ and $L_n(v_N) = q_f$ for some $v_N \in \Vert{\Gamma}$. We suppose wlog that $N \geq 3$ and $N \leq \ell- 2$, otherwise, we can just add artificial nodes not issuing any transition in the execution.

We start by proving the following lemma.
\begin{lemma}\label{lemma:CoverLine-2pb-inP:execution-shape-1}
There exists $\rho' = C'_0 \trans^\ast C'_n$ such that $C'_n = (\Gamma, L'_n)$ with $L'_n(v_N)= q_f$, and
	\begin{itemize}
		\item for all $1 \leq i \leq N-2$, $\lastBroadcast{v_i}{\rho'} \leq \firstBroadcast{v_{i+1}}{\rho'}$;
		\item for all $N+2 \leq i \leq \ell$, $\lastBroadcast{v_i}{\rho'} \leq \firstBroadcast{v_{i-1}}{\rho'}$.
	\end{itemize} 
\end{lemma}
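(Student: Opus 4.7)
The plan is to construct $\rho'$ by modifying $\rho$ through a series of reorderings and possible deletions of transitions, while preserving the fact that $v_N$ ends in state $q_f$. We treat the two conditions independently: for the left-side ordering (the nodes $v_1,\ldots,v_{N-2}$) and for the right-side ordering (the nodes $v_{N+2},\ldots,v_\ell$). The right side is handled symmetrically, so I would focus on the left.

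The central structural property I would exploit is that $\PP$ is 2-phase-bounded: each process has at most a single contiguous block of broadcast transitions. Indeed, whenever a process $v$ in a state of some $Q_j^b$ performs a reception, it moves either to $Q_{j+1}^r$ or to $Q_2^r$, and neither of these admits outgoing broadcast transitions. Hence, between any two consecutive broadcasts of the same process $v$ in $\rho$, no reception transition of $v$ occurs. From this I would derive the key commutation claim: if $v_i$ (with $i \leq N-2$) performs a broadcast of some $m$ at a step $t$ with $\firstBroadcast{v_{i+1}}{\rho} \leq t < \lastBroadcast{v_{i+1}}{\rho}$, then $v_{i+1}$ cannot receive $m$ at step $t$ (otherwise $v_{i+1}$ would fall into its terminal reception phase and be unable to perform its later broadcast at $\lastBroadcast{v_{i+1}}{\rho}$). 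Consequently, such a broadcast of $v_i$ does not alter the state of $v_{i+1}$ and commutes with $v_{i+1}$'s broadcasts.

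Using this, I would proceed by induction on $i$ from $1$ up to $N-2$, maintaining as invariant that $\lastBroadcast{v_j}{} \leq \firstBroadcast{v_{j+1}}{}$ for every $j < i$. At each step, I would identify any broadcast of $v_i$ occurring after $\firstBroadcast{v_{i+1}}{}$ and push it earlier in the execution: the commutation claim ensures it can be swapped past $v_{i+1}$'s broadcasts without affecting $v_{i+1}$, and it can be similarly swapped past transitions of any node $v_k$ with $k > i+1$ (non-neighbor, so independent). When the broadcast reaches the position immediately after $\firstBroadcast{v_{i+1}}{}$, one further swap places it before, preserving the inductive invariant while maintaining the coverage of $q_f$.

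The main obstacle will be handling the effect of these rearrangements on $v_{i-1}$, whose receptions from $v_i$ may change state depending on the ordering. The resolution is that we process the layers from the outermost leftward node inward ($i=1$ first), so when adjusting layer $i$, the trajectory of $v_{i-1}$ has already been fixed and we only need to verify that $v_{i-1}$ can still receive $v_i$'s (now re-ordered) broadcasts in a valid way. By the 2-phase-bounded property applied to $v_{i-1}$, once $v_{i-1}$ has stopped broadcasting (before $\firstBroadcast{v_i}{}$ by the invariant), any subsequent reception only places it in its final phase $Q_2^r$, so it does not need to perform further broadcasts after such receptions; this is precisely what allows us to maintain a valid execution.
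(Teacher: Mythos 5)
There is a genuine gap in the swapping step, and it is precisely the point where reordering and the paper's actual strategy diverge. Your commutation claim only covers a broadcast of $v_i$ performed at a step $t$ with $\firstBroadcast{v_{i+1}}{\rho} \leq t < \lastBroadcast{v_{i+1}}{\rho}$, and even there it only tells you that $v_{i+1}$ does not receive the message \emph{at the original position} $t$ (where $v_{i+1}$ sits in a broadcast phase). It tells you nothing about what happens once you push that broadcast to a position $t' < \firstBroadcast{v_{i+1}}{\rho}$: at $t'$ the node $v_{i+1}$ is still in $Q_0\cup Q_1^r$, the message may well belong to $R(L_{t'}(v_{i+1}))$, and the semantics then \emph{forces} $v_{i+1}$ to receive it, altering the state from which it was supposed to perform its first broadcast and invalidating the rest of its trajectory. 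The "one further swap" across $\firstBroadcast{v_{i+1}}{\rho}$ is therefore unsound. Worse, broadcasts of $v_i$ occurring \emph{after} $v_{i+1}$'s broadcasting phase has ended are not covered by your claim at all: those may genuinely be received by $v_{i+1}$ in the original run (sending it to $Q_2^r$), and relocating them earlier changes both whether and in which state $v_{i+1}$ receives them. A symmetric problem arises for $v_{i-1}$, since moving $v_i$'s broadcasts changes when its left neighbour receives them.

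The paper resolves this not by commuting the offending transitions earlier but by \emph{suppressing} them. For each $K$ it identifies the first moment $j_1$ after $\firstBroadcast{v_K}{}$ at which some node in $\{v_1,\dots,v_{K-1}\}$ acts, and from $j_1$ on it simply drops every transition of $\{v_1,\dots,v_{K-1}\}$, keeping the invariant that $v_K$ is either in its original state or already trapped in $Q_2^r$. The 2-phase bound guarantees that anything $v_K$ receives from the left after its first broadcast only pushes it into $Q_2^r$, from which it can no longer broadcast, so the deleted actions could never have influenced $v_{K+1},\dots,v_\ell$; the final states of those nodes (in particular $v_N$) are preserved even though the final states of $v_1,\dots,v_K$ may change. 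Your proposal tries to keep every transition and rearrange them, which is exactly what the state-dependence of receptions forbids here; to repair it you would have to replace the "push earlier" step by "delete", and then argue, as the paper does, that deletion preserves the part of the configuration that matters.
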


\paragraph{Proof of \cref{lemma:CoverLine-2pb-inP:execution-shape-1}.}

To prove \cref{lemma:CoverLine-2pb-inP:execution-shape-1}, we give two symmetric lemmas.
\begin{lemma}\label{lemma:CoverLine-2pb-inP:execution-shape-right}
	Let $\rho = C_0\trans^\ast C$ with $C = (\Gamma, L)$ and $1 \leq K < \ell$. 
	There exists $\rho' = C_0  \trans^\ast C'$ such that $C' = (\Gamma, L')$, and 
	\begin{itemize}
	\item for all $v \in \set{v_{K+1}, \dots, v_\ell}$, $L'(v) = L(v)$;
	\item for all $1 \leq i \leq K-1$, $\lastBroadcast{v_i}{\rho'} \leq \firstBroadcast{v_{i+1}}{\rho'}$;
	\item for all $K \leq i \leq \ell$, $\lastBroadcast{v_i}{\rho'} \leq \firstBroadcast{v_{i+1}}{\rho'}$ if and only if $\lastBroadcast{v_i}{\rho} \leq \firstBroadcast{v_{i+1}}{\rho}$.
	\end{itemize}
\end{lemma}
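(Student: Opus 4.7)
The plan is to proceed by induction on $K$, leveraging the 2-phase-boundedness of $\PP$ in the inductive step. For the base case $K = 1$, the second bullet ranges over the empty set $[1, 0]$, and so $\rho' := \rho$ satisfies all three conclusions trivially.

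For the inductive step with $K \geq 2$, first apply the induction hypothesis to $\rho$ with parameter $K - 1$, obtaining an intermediate execution $\bar\rho : C_0 \trans^\ast \bar C$, where $\bar C = (\Gamma, \bar L)$, such that $\bar L$ agrees with $L$ on $\{v_K, v_{K+1}, \ldots, v_\ell\}$, the left-ordering property holds on $[1, K-2]$, and each right-ordering condition for $i \in [K-1, \ell-1]$ holds in $\bar\rho$ iff it holds in $\rho$. If moreover $\lastBroadcast{v_{K-1}}{\bar\rho} \leq \firstBroadcast{v_K}{\bar\rho}$ (in particular if $v_{K-1}$ never broadcasts), then $\rho' := \bar\rho$ already witnesses the conclusion. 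Otherwise, we must rearrange $\bar\rho$ so that the last broadcast of $v_{K-1}$ precedes the first broadcast of $v_K$.

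The essence of the rearrangement is a commutation of $v_{K-1}$'s late broadcasts past $v_K$'s actions, justified by the rigid structure of 2-phase-bounded protocols. Once $v_K$ has broadcast for the first time it lies in $Q_1^b$; any message it receives thereafter transits it by rule~(4) to $Q_2^r$, from which no broadcast is possible. Consequently, if $v_K$ issues several broadcasts in $\bar\rho$, it receives nothing between them, so every late broadcast of $v_{K-1}$ occurring before $v_K$'s final broadcast is effectively ignored by $v_K$ (no matching reception transition is available). Only $v_{K-1}$-broadcasts happening after $v_K$'s final broadcast can alter $v_K$'s state, and even then only by pushing it into $Q_2^r$ without triggering further outputs. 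This structural information lets one construct $\rho'$ by relocating all broadcasts of $v_{K-1}$ to occur before the first broadcast of $v_K$, while preserving the exact sequence of messages that $v_K$ transmits to $v_{K+1}$. That sequence-preservation invariant is what guarantees that the entire right subline $v_{K+1}, \ldots, v_\ell$ evolves identically in $\rho'$ and in $\bar\rho$, hence ending in the same states $L(v_{K+1}), \ldots, L(v_\ell)$; the right-ordering conditions for $i \geq K$ are preserved because the construction does not reorder any broadcasts of $v_K, v_{K+1}, \ldots, v_\ell$.

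The main obstacle is verifying the commutations formally. The subtle point is that relocating a broadcast of $v_{K-1}$ to an earlier time encounters $v_K$ in a different phase of its trajectory, where different reception transitions are enabled; one must check that this does not inadvertently change $v_K$'s first broadcast or introduce a spurious transit of $v_K$ out of its broadcasting phase. Navigating this requires a careful case analysis on $v_K$'s phase (whether in $Q_0$, $Q_1^r$, $Q_1^b$, or $Q_2^b$) at the times of interest, again using the 2-phase-bounded decomposition of $Q$ to show that the receptions available to $v_K$ line up consistently in both executions and that $v_K$'s transmitted-message sequence is unaffected. Once this bookkeeping is completed, the construction yields a valid $\rho'$ satisfying all three conclusions, closing the induction.
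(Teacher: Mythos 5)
Your skeleton --- induction on $K$, then taming the interaction between $v_{K-1}$ and $v_K$ using the fact that once $v_K$ has broadcast, any further reception sends it into $Q_2^r$ where it can no longer influence $v_{K+1}$ --- is the same as the paper's. The gap is the operation you apply to the offending events: you \emph{relocate} the late broadcasts of $v_{K-1}$ to before $\firstBroadcast{v_K}{\bar\rho}$, whereas the paper \emph{suppresses} them. Relocation fails for two concrete reasons. First, a late broadcast of $v_{K-1}$ can causally depend on an event that happens only after $v_K$'s first broadcast: for instance $v_{K-1}$ may sit in $Q_0$, receive $v_K$'s first broadcast, move to $Q_1^r$, and only then broadcast (into $Q_2^b$); that broadcast cannot be scheduled before the reception that enables it. Second, even an independent late broadcast, once moved before $v_K$'s first broadcast, reaches $v_K$ while it is still in $Q_0$ or $Q_1^r$; the semantics then \emph{force} $v_K$ to take any matching reception (e.g.\ out of $Q_0$ into $Q_1^r$), so the state from which $v_K$ was supposed to perform its first broadcast may no longer be reached. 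You name exactly this as ``the main obstacle'', but the case analysis you appeal to cannot close it: the harmlessness of these receptions to $v_K$ holds only \emph{after} its first broadcast, i.e.\ at the original positions of the events, not at the relocated ones.

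The paper's construction instead picks a cut point $j_1$ (the first action of $v_{K-1}$ after $\firstBroadcast{v_K}{\rho_{K-1}}$, or the very first action of any of $v_1,\dots,v_{K-1}$ if $v_{K-1}$'s first broadcast already comes after $v_K$'s) and, from $j_1$ on, turns every step of $v_1,\dots,v_{K-1}$ into a stuttering step while keeping the steps of $v_K,\dots,v_\ell$ verbatim. The invariant carried through is that at every subsequent time $j$ either $L'_j(v_K)=L_j(v_K)$ or $L_j(v_K)\in Q_2^r$; in the latter case $v_K$ broadcasts nothing in the original run either, so $v_{K+1},\dots,v_\ell$ observe exactly the same messages and end in the same states. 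Replacing your commutation step by this suppression-plus-invariant argument repairs the proof.
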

\begin{proof}
We show that we can build $\rho_K$ by induction on $K$:
	for $K = 1$, there is nothing to do and $\rho_1 = \rho$. Let $1 < K < \ell$, and assume we have built $\rho_{K-1} = C_0 \transup{v^0, t^0} C_1 \transup{v^1, t^1} \cdots \transup{v^{n-1}, t^{n-1}} C_n$ such that $C_n = (\Gamma, L_n)$ and:
	\begin{itemize}
		\item for all $v \in \set{v_{K}, \dots, v_\ell}$, $L_n(v) = L(v)$;
		\item for all $1 \leq i \leq K-2$, $\lastBroadcast{v_i}{\rho_{K-1}} \leq \firstBroadcast{v_{i+1}}{\rho_{K-1}}$;
		\item for all $K \leq i \leq \ell$, $\lastBroadcast{v_i}{\rho_{K-1}} \leq \firstBroadcast{v_{i+1}}{\rho_{K-1}}$ if and only if $\lastBroadcast{v_i}{\rho} \leq \firstBroadcast{v_{i+1}}{\rho}$.
	\end{itemize}
	If $\lastBroadcast{v_{K-1}}{\rho_{K-1}} \leq \firstBroadcast{v_K}{\rho_{K-1}}$, there is nothing to do.
	
	Assume $\lastBroadcast{v_{K-1}}{\rho_{K-1}} > \firstBroadcast{v_K}{\rho_{K-1}}$, then $\lastBroadcast{v_{K-1}}{\rho_{K-1}} \geq 0$ and $v_{K-1}$ issues at least one transition in $\rho_{K-1}$ after $\firstBroadcast{v_{K}}{\rho_{K-1}}$. Let
	\begin{equation*}
	j_1 = \begin{cases} \min\set{j > \firstBroadcast{v_K}{\rho_{K-1}}\mid C_j \transup{v_{K-1}, t} C_{j+1} \text{ for some } t \in \Delta} & \textrm{ if }
	\firstBroadcast{v_{K-1}}{\rho_{K-1}}\leq\firstBroadcast{v_K}{\rho_{K-1}}\\
	\min{\set{j \geq 0\mid C_j \transup{v_i, t} C_{j+1} \text{ for some } t \in \Delta\textrm{ and for some }1\leq i\leq K-1}} & \textrm{otherwise.}
	\end{cases}
	\end{equation*}
	
	We now build the following sequence of configurations $C'_{j_1}, \dots C'_n$:
	\begin{itemize}
		\item $C'_{j_1} = C'_{j_1 +1 } =C_{j_1}$;
		\item for $j_1 + 1 \leq j < n$, $C'_{j+1} = (\Gamma, L'_{j+1})$ and 
		\begin{itemize}
			\item $L'_{j+1}(v) = L_{j+1}(v)$ for all $v \in \set{v_{K+1}, \dots, v_\ell}$, 
			\item $L'_{j+1}(v) = L'_{j}(v)$ for all $v \in \set{v_1, \dots, v_{K-2}}$,
			\item $L'_{j+1}(v_{K-1})$ is defined as follows: if $v^j = v_K$ and $t^j $ is a broadcast transition of some message $m \in \Sigma$, and $(L'_j(v_{K-1}) ,?m, p) \in \Delta$ for some $p \in Q$, then $L'_{j+1}(v_{K-1}) = p$, otherwise $L'_{j+1}(v_{K-1}) = L'_j(v_{K-1})$;
			\item $L'_{j+1}(v_{K})$ is defined as follows: 
			\begin{itemize}
				\item if $v^j = v_{K+1}$ and $L'_j(v_K) \neq L_j(v_K)$ and $t^j $ is a broadcast transition of some message $m \in \Sigma$, and $(L'_j(v_{K}), ?m, p) \in \Delta$ for some $p \in Q$, then $L'_{j+1}(v_{K}) = p$;
				\item if $v^j = v_{K+1}$ and $L'_j(v_K) = L_j(v_K)$, then $L'_{j+1}(v_{K}) = L_{j+1}(v_K)$;
				\item if $v^j = v_K$ and $L'_j(v_K) = L_j(v_K)$, then $L'_{j+1}(v_K) = L_{j+1}(v_K)$;
				\item otherwise, $L'_{j+1}(v_K) = L'_j(v_K)$.
			\end{itemize}
		\end{itemize}
	\end{itemize}
	
	At $j_1$, actions of process $v_{K-1}$ (and hence actions of all processes in $\set{v_1, \dots, v_{K-1}}$ will have no influence any more on $\set{v_{K+1},
	\dots, v_\ell}$, either
	because $v_K$ has emitted its first broadcast, and thus will only receive messages from $\set{v_1, \dots, v_{K-1}}$ in its last reception phase (then
	will never feed it to $v_{K+1}$), or because anyway the first broadcast of $v_{K-1}$ happens to late (after the first broadcast of $v_K$). 	
	In order to prove the induction step, we have just built a sequence of configurations that ignore subsequent actions of $\set{v_1,\dots, v_{K-1}}$, starting from
	$j_1$, while
	still being an execution. The state of the protocol in which $v_K$ is during this part of the execution is either faithful to what it was in $\rho_{K-1}$, or 
	$v_{K}$ is in its last reception phase. 
	
	Formally, we now prove that, for all $j_1 +1 \leq j \leq n$,
	\begin{enumerate}
	\item  $P_1(j)$: either
	$C'_{j-1} \transup{v^{j-1}, t^{j-1}} C'_{j}$ and $v^{j-1}\in \set{v_K, \dots, v_\ell}$ or $C'_{j} = C'_{j-1}$ and $v^{j-1} \in \set{v_1, \dots v_{K}}$,\label{item:1}
	\item $P_2(j)$: $L'_j(v_K) = L_j(v_K)$ or $L_j(v_K) \in Q_2^r$.\label{item:2}
	\end{enumerate}
	
	We prove $P_1(j)$ and $P_2(j)$ by induction on $j$.
	For $j = j_1+1$, $P_1(j)$ trivially holds as $C'_{j_1 +1 }= C'_{j_1}$ and $C_{j_1}\transup{v_i, t} C_{j_1 +1}$ for $1\leq i\leq K-1$, by
	construction. Let us prove now that $P_2(j_1 + 1)$ holds: recall first that $L'_{j_1 +1}(v_K) = L'_{j_1}(v_K) = L_{j_1}(v_K)$. 
	Since $C_{j_1}\transup{v_i, t} C_{j_1 +1}$, for some $1\leq i\leq K-1$, either $L_{j_1 + 1}(v_K) = L_{j_1}(v_K)$ and $P_2(j)$ holds, or $(L_{j_1}(v_K), ?m, L_{j_1+1}(v_K)) \in \Delta$ and $t=(p,!!m,q)$.  In that case $C_{j_1}\transup{v_{K-1}, t} C_{j_1+1}$. Since either $j_1 > \firstBroadcast{v_{K}}{\rho_{K-1}}$ if 
	$\firstBroadcast{v_{K-1}}{\rho_{K-1}}\leq \firstBroadcast{v_K}{\rho_{K-1}}$, or $\firstBroadcast{v_{K-1}}{\rho_{K-1}}> \firstBroadcast{v_K}{\rho_{K-1}}$ and $C_{j_1} \transup{v_{K-1}, t} C_{j_1 +1}$ with $t = (p, !!m, q)$,
	then in any case $j_1> \firstBroadcast{v_K}{\rho_{K-1}}$. Since $\PP$ is 2-phase-bounded, $L_{j_1}(v_K) \in Q_1^b\cup Q_2^b \cup Q_2^r$ and so $L_{j_1+1}(v_K) \in Q_2^r$. Hence $P_2(j)$ holds.
	
	Let now $j_1 +1\leq j < n$ such that $P_1(j)$ and $P_2(j)$ hold. 
	We make the following observation.
\begin{observation} \label{obs:conf-suc}
Let $C_1 \transup{v,t} C_1'$ with $C_1 = (\Gamma, L_1)$ and $C'_1 = (\Gamma, L'_1)$. Consider $C_2 = (\Gamma, L_2)$, 
$C_2' = (\Gamma, L'_2)$  two configurations such that, for all $u \in \set{v}\cup \Neigh{v}$, $L_2(u) = L_1(u)$ and $L_2'(u) = L_1'(u)$, and for all $u\nin \set{v}\cup \Neigh{v}$, $L'_2(u) = L_2(u)$. Then $C_2 \transup{v,t} C_2'$.
\end{observation}

	The induction step is proved by a case analysis over $v^j \in \Vert{\Gamma}$. (Recall that $C_j \transup{v^j ,t^j} C_{j+1}$.)
	\begin{itemize}
		\item Assume $v^j \in \set{v_{K+2}, \dots, v_\ell}$. Then $\set{v^j} \cup \Neigh{v^j}\subseteq \set{v_{K+1}, \dots, v_\ell}$, and by construction of $C'_j$ and
		$C'_{j+1}$, for all $v \in \set{v^j} \cup \Neigh{v^j}$, $L'_j(v) = L_j(v)$ and $L'_{j+1}(v) = L_{j+1}(v)$. Then, nodes of $\set{v^j}\cup\Neigh{v^j}$
		behave in the same way than in $\rho_{K-1}$. Let $v \nin \set{v^j} \cup \Neigh{v^j}$ and $v \in \set{v_1, \dots, v_K}$. In that case, either 
		$v\in\set{v_1,\dots, v_{K-2}}$, and by construction, $L'_{j+1}(v) = L'_j(v)$, or $v\in\set{v_{K-1}, v_K}$, and since $v^j\notin\set{v_{K-2}, v_{K-1}, v_K, v_{K+1}}$, by construction, $L'_{j+1}(v)=L'_j(v)$. Let $v \nin \set{v^j} \cup \Neigh{v^j}$ such that $v \in \set{v_{K+1}, \dots, v_\ell}$. In that case, $L'_{j+1}(v) = L_{j+1}(v) = L_j(v) = L'_j(v)$. This allows us to conclude that $C'_j \transup{v^j ,t^j} C'_{j+1}$ thanks to \cref{obs:conf-suc}, and so $P_1(j+1)$ holds. By $P_2(j)$, either $L'_j(v_K)=L_j(v_K)$ and we have just proved that $L'_{j+1}(v_K)=L'_j(v_K)$, and that 
		$L_j(v_K)=L_{j+1}(v_K)$. Then $L'_{j+1}(v_K)=L_{j+1}(v_K)$ and $P_2(j+1)$ holds. Or $L_{j}(v_K)\in Q_2^r$ and since $L_j(v_K)=L_{j+1}(v_K)$, 
		$L_{j+1}(v_K)\in Q_2^r$. Hence, $P_2(j+1)$ holds. 
		\item Assume that $v^j =v_{K+1}$, or that $v^j = v_K$ and $L'_j(v_K) = L_j(v_K)$. By construction, $L_{j+1}(v^j) = L'_{j+1}(v^j)$ and $L_j(v^j) = L'_j(v^j)$.
		Observe that $v^j$ has at most two neighbors $v, v'$ such that:
		\begin{itemize}
			\item $v = v_{K+2}$ or $ v =v_{K+1}$, and so by construction $L_{j+1}(v) = L'_{j+1}(v)$ and $L_j(v) = L'_j(v)$;
			\item $v' = v_{K}$ or $v' = v_{K-1}$.
			If $v' = v_{K-1}$ or $v' = v_K$ and $L'_j(v_K) \neq L_j(v_K)$, then by construction either $t^j$ is a broadcast transition of some message $m \in \Sigma$ and $(L'_j(v'), ?m , L'_{j+1}(v')) \in \Delta$. Or 
			$L'_{j+1}(v') = L'_{j}(v')$.
			If $v' = v_K$ and $L'_j(v_K) = L_j(v_K)$, then $L'_{j+1}(v_K) = L_{j+1}(v_K)$ by definition.
		\end{itemize}
		Furthermore, for all $v \nin \set{v^j} \cup \Neigh{v^j}$, either 
		$v\in\set{v_{K+2}, \dots, v_\ell}$ and $L'_{j+1}(v)=L_{j+1}(v)=L_j(v)=L'_j(v)$, or $v\in\set{v_1,\dots, v_{K-2}}$ and $L'_{j+1}(v)=L'_j(v)$, or $v=v_{K-1}$ 
		(if $v^j=v_{K+1}$), and $L'_{j+1}(v)=L'_j(v)$. 
		This allows us to conclude (with \cref{obs:conf-suc}) that $C'_j \transup{v^j ,t^j} C'_{j+1}$ and so $P_1(j+1)$ holds. For $P_2(j+1)$, if $v^j = v_K$ and $L'_j(v_K) = L_j(v_K)$, then by construction $L'_{j+1}(v_K) = L_{j+1}(v_K)$ and so $P_2(j+1)$ trivially holds. If $v^j = v_{K+1}$, then either $L'_j(v_K) = L_j(v_K)$ and so by construction, $L'_{j+1}(v_K) = L_{j+1}(v_K)$, or $L'_j(v_K) \neq L_j(v_K)$ and so by induction hypothesis $L_j(v_K) \in Q_2^r$, and since $\PP$ is 2-phase-bounded, $L_{j+1}(v_K) \in Q_2^r$.
		
		\item Assume that $v^j = v_K$ and $L'_j(v_K) \neq L_j(v_K)$. Then, by $P_2(j)$, $L_j(v_K) \in Q_2^r$. Hence, $t$ is an internal transition (it
		cannot be a broadcast), and so $L_{j+1}(v) = L_j(v) = L'_j(v) = L'_{j+1}(v)$ for all 
		$v\in\set{ v_{K+1}, \dots, v_\ell}$, and $L'_{j+1}(v) = L'_j(v)$ for all $v \in \set{v_1, \dots v_{K-2}}$, and
		$L'_{j+1}(v_{K-1})=L'_j(v_{K-1})$ and $L'_{j+1}(v_K)=L'_j(v_K)$.  As a consequence,
		 $C'_{j+1} = C'_j$ and $P_1(j+1)$ holds. Furthermore, by induction hypothesis, $L_j(v_K) \in Q_2^r$ and so, as $\PP$ is 2-phase-bounded, it holds that $L_{j+1}(v_K) \in Q_2^r$ and we have proved $P_2(j+1)$. 
		 
		 \item Finally, assume that $v^j \in \set{v_1, \dots, v_{K-1}}$. Then, for all $v \in \set{v_{K+1}, \dots, v_\ell}$, $L_{j+1}(v) = L_j(v)$,
		 and $L'_{j+1}(v)=L_{j+1}(v)$ and $L'_j(v)=L_j(v)$ by construction. For all $v^j\in \set{v_1, \dots, v_{K-1}, v_K}$, $L'_{j+1}(v)=L'_j(v)$ 
		  and so $C'_{j+1} = C'_j$ and $P_1(j)$ holds. Furthermore, either $L_j(v_K) = L_{j+1}(v_K)$ or $(L_j(v_K), ?m, L_{j+1}(v_K))\in \Delta$ and
		  $v^j=v_{K-1}$ and $t^j=(L_j(v_{K-1}), !!m, L_{j+1}(v_{K-1}))$. In the first case, 
		  by $P_2(j)$, either $L_j(v_K)=L'_j(v_K)$ and then $L_{j+1}(v_K)=L_j(v_k)=L'_j(v_K)=L'_{j+1}(v_K)$, or  $L_j(v_K)= L_{j+1}(v_K) \in Q_2^r$ and so 
		  $P_2(j+1)$ holds. In the second case, $j > \firstBroadcast{v_K}{\rho_{K-1}}$. Indeed, either $\firstBroadcast{v_{K-1}}{\rho_{K-1}}\leq 
		  \firstBroadcast{v_K}{\rho_{K-1}}$ and then $j>j_1 > \firstBroadcast{v_K}{\rho_{K-1}}$, or $j\geq \firstBroadcast{v_{K-1}}{\rho_{K-1}}> \firstBroadcast{v_{K}}{\rho_{K-1}}$.  By definition of a 2-phase-bounded protocol, $L_j(v_K) \in Q_1^b \cup Q_2^b\cup Q_2^r$ hence, $L_{j+1}(v_K) \in Q_2^r$ and so $P_2(j+1)$ holds.
	\end{itemize}
	Hence, $P_1(j)$ and $P_2(j)$ holds for all $j_1+1\leq j\leq n$. 
	
	We build a sequence of configurations $C'_0\trans^\ast C'_1\trans^\ast C'_n$ by letting $C'_i=C_i$ for all $0\leq i\leq j_1$. 
	Let us show now that $C'_0, \dots, C'_n$ that we have defined form the expected execution $\rho_K$. 
	\begin{itemize}
	\item Let $v \in \set{v_{K+1}, \dots, v_\ell}$. By construction $L'_n(v)=L_n(v)$. By induction hypothesis (on $K$), $L_n(v)=L(v)$. 
	\item Since $P_1(j)$ holds for all $ j_1+1\leq j\leq n$, $\lastBroadcast{v_{K-1}}{\rho_K} < j_1$ and, by definition of $j_1$, it implies that $\lastBroadcast{v_{K-1}}{\rho_K} \leq
	\firstBroadcast{v_K}{\rho_K}$. Let $1\leq i < K-1$. By induction hypothesis, $\lastBroadcast{v_i}{\rho_{K-1}}\leq \firstBroadcast{v_{i+1}}{\rho_{K-1}}$. 
	If $\firstBroadcast{v_{K-1}}{\rho_{K-1}}\leq \firstBroadcast{v_K}{\rho_{K-1}}$, then $\lastBroadcast{v_{K-2}}{\rho_{K-1}}\leq \firstBroadcast{v_{K-1}}{\rho_{K-1}}
	\leq \firstBroadcast{v_K}{\rho_{K-1}}<j_1$. Thus, $\lastBroadcast{v_{K-2}}{\rho_K}=\lastBroadcast{v_{K-2}}{\rho_{K-2}}\leq \firstBroadcast{v_{K-1}}{\rho_{K-1}}
	= \firstBroadcast{v_{K-1}}{\rho_{K}}$. Since, for all $1\leq i < K-2$, $\firstBroadcast{v_{i+1}}{\rho_{K-1}}\leq \lastBroadcast{v_{i+1}}{\rho_{K-1}}<j_1$, and by
	induction hypothesis, $\lastBroadcast{v_{i}}{\rho_{K-1}} \leq \firstBroadcast{v_{i+1}}{\rho_{K-1}}$, we deduce that $\lastBroadcast{v_{i}}{\rho_{K}}=
	\lastBroadcast{v_{i}}{\rho_{K-1}}\leq \firstBroadcast{v_{i+1}}{\rho_{K-1}}=\firstBroadcast{v_{i+1}}{\rho_{K}}$. 
	If now $\firstBroadcast{v_{K-1}}{\rho_{K-1}}> \firstBroadcast{v_K}{\rho_{K-1}}$, then by $P_1(j)$, $\lastBroadcast{v}{\rho_K}=-1$, for all $v\in\set{v_1,\dots, v_{K-1}}$. Obviously then, for all $1 \leq i \leq K-1$, $\lastBroadcast{v_i}{\rho_K} \leq \firstBroadcast{v_{i+1}}{\rho_K}$.
	
	\item Let $K + 1\leq i \leq \ell$. Then, by $P_1(j)$, $\lastBroadcast{v_i}{\rho_K} \leq \firstBroadcast{v_{i+1}}{\rho_K}$ if and only if $\lastBroadcast{v_i}{\rho_{K-1}} \leq \firstBroadcast{v_{i+1}}{\rho_{K-1}}$ if and only if, by induction hypothesis, $\lastBroadcast{v_i}{\rho} \leq \firstBroadcast{v_{i+1}}{\rho}$.
	\end{itemize}
		
	\color{black}
\end{proof}

By symmetry, we get the following lemma.
\begin{lemma}\label{lemma:CoverLine-2pb-inP:execution-shape-left}
	Let $\rho = C_0\trans^\ast C$ with $C = (\Gamma, L)$ and $1 < K \leq \ell$. 
	There exists $\rho' = C_0  \trans^\ast C'$ such that $C' = (\Gamma, L')$, and 
	\begin{itemize}
		\item for all $v \in \set{v_{1}, \dots, v_{K-1}}$, $L'(v) = L(v)$;
		\item for all $K+1 \leq i \leq \ell$, $\lastBroadcast{v_i}{\rho'} \leq \firstBroadcast{v_{i-1}}{\rho'}$;
		\item for all $1\leq i \leq K$, $\lastBroadcast{v_i}{\rho'} \leq \firstBroadcast{v_{i-1}}{\rho'}$ if and only if $\lastBroadcast{v_i}{\rho} \leq \firstBroadcast{v_{i-1}}{\rho}$.
	\end{itemize}
\end{lemma}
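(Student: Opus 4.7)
The plan is to derive this lemma as a direct mirror image of Lemma \ref{lemma:CoverLine-2pb-inP:execution-shape-right}, exploiting the fact that a line topology is invariant under reversal. Concretely, I would define the line-reversal $\sigma : \Vert{\Gamma} \to \Vert{\Gamma}$ by $\sigma(v_i) = v_{\ell+1-i}$, and lift it to configurations by $\sigma(\Gamma, L) = (\Gamma, L \circ \sigma^{-1})$ and to executions by relabelling each step $C_j \transup{v_i, t} C_{j+1}$ of $\rho$ into $\sigma(C_j) \transup{\sigma(v_i), t} \sigma(C_{j+1})$. Since $\sigma$ is a graph automorphism (it preserves $\Edges{\Gamma}$, because $(v_i,v_{i+1}) \in \Edges{\Gamma}$ iff $(\sigma(v_{i+1}), \sigma(v_i)) \in \Edges{\Gamma}$), and the semantics of broadcast and reception depend only on the neighbour relation, this relabelling yields a valid execution $\tilde\rho$ from $\sigma(C_0)$ (still an initial configuration) to $\sigma(C)$.

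Next, I would set $K' := \ell + 1 - K$ and apply Lemma~\ref{lemma:CoverLine-2pb-inP:execution-shape-right} to $\tilde\rho$ and $K'$. The hypothesis $1 < K \leq \ell$ translates into $1 \leq K' < \ell$, so the earlier lemma applies and provides an execution $\tilde\rho' = \sigma(C_0) \trans^\ast \tilde C'$ such that (a) the labels of $v_{K'+1}, \dots, v_\ell$ in $\tilde C'$ coincide with those in $\sigma(C)$, (b) $\lastBroadcast{v_i}{\tilde\rho'} \leq \firstBroadcast{v_{i+1}}{\tilde\rho'}$ for $1 \leq i \leq K'-1$, and (c) the analogous ordering on the indices $K' \leq i \leq \ell$ is preserved iff it held in $\tilde\rho$. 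Applying $\sigma^{-1}$ to every configuration of $\tilde\rho'$ yields an execution $\rho' = C_0 \trans^\ast C'$ of the original system; the three conditions on $\tilde\rho'$ translate, using $\firstBroadcast{v_i}{\tilde\rho'} = \firstBroadcast{\sigma^{-1}(v_i)}{\rho'}$ and similarly for $\lastBroadcast{}{}$, into exactly the three required properties: (a) $L'(v) = L(v)$ for $v \in \set{v_1, \dots, v_{K-1}}$, (b) the right-to-left ordering on $\set{v_{K+1}, \dots, v_\ell}$, and (c) preservation of the right-to-left orderings on $\set{v_1, \dots, v_K}$ iff they held in $\rho$.

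There is essentially no obstacle here beyond bookkeeping: the only point that deserves care is verifying that the protocol transition relation really is invariant under $\sigma$, which is immediate because $\NeighG{\Gamma}{v_i} = \sigma(\NeighG{\Gamma}{\sigma^{-1}(v_i)})$ and the definition of $\transup{v,\delta}$ refers only to the state of $v$, the states of its neighbours, and the transition $\delta$ of $\PP$. Since the protocol itself is not modified, the 2-phase-bounded assumption used inside the proof of Lemma~\ref{lemma:CoverLine-2pb-inP:execution-shape-right} carries over verbatim, and the boundary convention ($\firstBroadcast{}{}$ being $-1$ for nodes that never broadcast) makes the index translation at the endpoints unambiguous.
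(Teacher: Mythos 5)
Your proof is correct and follows exactly the paper's approach: the paper derives this lemma from Lemma~\ref{lemma:CoverLine-2pb-inP:execution-shape-right} with the single phrase ``by symmetry,'' and your reversal map $\sigma(v_i)=v_{\ell+1-i}$ with $K'=\ell+1-K$ is precisely the formalisation of that symmetry, with the index translations checking out.
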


We are now ready to prove \cref{lemma:CoverLine-2pb-inP:execution-shape-1}.

\begin{proofof}{\cref{lemma:CoverLine-2pb-inP:execution-shape-1}}
	From $\rho = C_0 \trans^\ast C$ with $C = (\Gamma, L)$ we apply \cref{lemma:CoverLine-2pb-inP:execution-shape-right}\ for $K = N-1$ and we get an execution $\rho' = C'_0 \trans^\ast C'$ such that $C'= (\Gamma, L')$ and 
	\begin{itemize}
		\item for all $v \in \set{v_{N}, \dots, v_\ell}$, $L'(v) = L(v)$;
		\item for all $1 \leq i \leq N-2$, $\lastBroadcast{v_i}{\rho'} \leq \firstBroadcast{v_{i+1}}{\rho'}$;
		\item for all $N-1\leq i \leq \ell$, $\lastBroadcast{v_i}{\rho'} \leq \firstBroadcast{v_{i+1}}{\rho'}$ if and only if $\lastBroadcast{v_i}{\rho} \leq \firstBroadcast{v_{i+1}}{\rho}$.
	\end{itemize}
	Next, from $\rho'$, we apply \cref{lemma:CoverLine-2pb-inP:execution-shape-left}\ for $K = N+1$ and we get an execution $\rho'' = C''_0 \trans^\ast C''$ such that $C'' = (\Gamma, L'')$ and
	\begin{itemize}
		\item for all $v \in \set{v_{1}, \dots, v_{N}}$, $L''(v) = L'(v)$;
		\item for all $N+2 \leq i \leq \ell$, $\lastBroadcast{v_i}{\rho''} \leq \firstBroadcast{v_{i-1}}{\rho''}$;
		\item for all $1\leq i \leq N+1$, $\lastBroadcast{v_i}{\rho''} \leq \firstBroadcast{v_{i-1}}{\rho''}$ if and only if $\lastBroadcast{v_i}{\rho'} \leq \firstBroadcast{v_{i-1}}{\rho'}$.
	\end{itemize}
Hence, $\rho''$ covers $q_f$ as $L''(v_N) = L'(v_N) = L(v_N) = q_f$ and 
\begin{itemize}
	\item for all $N+2 \leq i \leq \ell$, $\lastBroadcast{v_i}{\rho''} \leq \firstBroadcast{v_{i-1}}{\rho''}$;
	\item for all $1\leq i \leq N-2$, $\lastBroadcast{v_i}{\rho''} \leq \firstBroadcast{v_{i+1}}{\rho''}$.
\end{itemize}
\end{proofof}

\paragraph{Proof of \cref{lemma:CoverLine-2pb-inP:execution-shape-2}}
We are now ready to prove \cref{lemma:CoverLine-2pb-inP:execution-shape-2}.
We begin with three observations that hold regardless of whether $\PP$ is 2-phase-bounded or not. 
In a line $\Gamma = (V, E)$ with $V = \set{v_1, \dots v_\ell}$ and $E = \set{\langle v_i, v_{i+1} \rangle \mid 1 \leq i < \ell}$, we define the distance between two nodes $v_i, v_j$ as $d(v_i, v_j) = |j-i|$.
\begin{observation}\label{obs:CoverLine-2pb-inP:reordening-broadcasts}
	Let $C, C' \in \CC$, and $u,v \in \Vert{\Gamma}$ such that $d(v,u) \geq 3$. If $C \transup{u,t} C_1 \transup{v,t'} C'$ for two transitions $t, t' \in \Delta$ and a configuration $C_1 \in \CC$, then there exists $C_2 \in \CC$ such that $C \transup{v,t'} C_2 \transup{u,t} C'$.
\end{observation}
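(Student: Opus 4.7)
The plan is to exploit the fact that in a line topology, two nodes at distance at least $3$ have disjoint closed neighborhoods. I would first observe that since $d(u,v)\geq 3$, the triangle inequality gives
$$(\{u\}\cup \NeighG{\Gamma}{u})\cap (\{v\}\cup \NeighG{\Gamma}{v})=\emptyset,$$
for otherwise some common vertex $w$ would yield $d(u,v)\leq d(u,w)+d(w,v)\leq 2$.

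Write $C=(\Gamma,L)$, $C_1=(\Gamma,L_1)$ and $C'=(\Gamma,L')$. By definition of the transition relation, the step $C\transup{u,t}C_1$ affects only vertices in $\{u\}\cup \NeighG{\Gamma}{u}$, so $L_1(w)=L(w)$ for every $w\in \{v\}\cup \NeighG{\Gamma}{v}$. Symmetrically, the step $C_1\transup{v,t'}C'$ affects only $\{v\}\cup \NeighG{\Gamma}{v}$, so $L'(w)=L_1(w)$ for every $w\in \{u\}\cup \NeighG{\Gamma}{u}$. In particular, the states of $v$ and its neighbors right before the second step are identical to those already present in $C$, and the states of $u$ and its neighbors right after the first step are identical to those appearing in $C'$.

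I then define $C_2=(\Gamma,L_2)$ by setting $L_2(w)=L'(w)$ on $\{v\}\cup \NeighG{\Gamma}{v}$ and $L_2(w)=L(w)$ elsewhere. To check $C\transup{v,t'}C_2$, note that on the closed neighborhood of $v$ the pair $(L,L_2)$ coincides with $(L_1,L')$, while outside this set $L_2$ equals $L$; hence whichever of the three conditions of the transition relation was fulfilled by the step $C_1\transup{v,t'}C'$ is still fulfilled by $C\transup{v,t'}C_2$. The verification of $C_2\transup{u,t}C'$ is symmetric: on $\{u\}\cup \NeighG{\Gamma}{u}$ the pair $(L_2,L')$ coincides with $(L,L_1)$, and outside this set $L'$ equals $L_2$ by disjointness.

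This is essentially a diamond/commutation lemma and I do not anticipate a real obstacle; the main care is simply to unfold the three cases of the transition relation ($\alpha=\tau$, $\alpha=!!m$ with a receiving neighbor, and $\alpha=!!m$ with a non-receiving neighbor) uniformly, which disjointness of the closed neighborhoods renders entirely mechanical.
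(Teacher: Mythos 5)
Your proof is correct and is exactly the intended justification: the paper states this as an observation without proof, and the disjoint-closed-neighborhood commutation argument you give (with $d(u,v)\geq 3$ guaranteeing $(\set{u}\cup\Neigh{u})\cap(\set{v}\cup\Neigh{v})=\emptyset$) is the standard reasoning it relies on. No gaps; the case analysis over $\tau$, received broadcast, and non-received broadcast goes through uniformly as you describe.
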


\begin{observation}\label{obs:CoverLine-2pb-inP:reordening-internal-2}
	Let $C, C' \in \CC$, and $u,v \in \Vert{\Gamma}$ such that $d(u, v) \geq 2$. Then if $C \transup{u,t} C_1 \transup{v,t'} C'$ for two transitions $t, t' \in \Delta$ with one of the two transitions \emph{internal} and a configuration $C_1 \in \CC$, then there exists $C_2 \in \CC$ such that $C \transup{v,t'} C_2 \transup{u,t} C'$.
\end{observation}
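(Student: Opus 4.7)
The plan is to argue that the two transitions act on disjoint sets of nodes, so they commute. Concretely, write $C=(\Gamma,L)$, $C_1=(\Gamma,L_1)$, $C'=(\Gamma,L')$ and assume without loss of generality that $t=(q,\tau,q')$ is the internal transition (the case where $t'$ is internal is entirely symmetric, and the two-internal-transitions case is subsumed). The key observation to exploit is that since $d(u,v)\geq 2$, we have $u\notin\{v\}\cup\Neigh{v}$ and $v\notin\{u\}\cup\Neigh{u}$; moreover an internal transition at $u$ modifies only the label of $u$, so $L$ and $L_1$ agree on $\{v\}\cup\Neigh{v}$.

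Next, I would define $C_2=(\Gamma,L_2)$ as the configuration obtained by letting $L_2$ coincide with $L$ on every node outside $\{v\}\cup\Neigh{v}$ and with $L'$ on $\{v\}\cup\Neigh{v}$. Verifying $C\transup{v,t'}C_2$ is the first step: the transition $t'$ was enabled at $v$ in $C_1$, and since $L$ agrees with $L_1$ on $\{v\}\cup\Neigh{v}$, the same transition is enabled at $v$ in $C$, and by construction produces exactly the labels we defined for $L_2$ on $\{v\}\cup\Neigh{v}$ while leaving all other labels unchanged.

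The second step is $C_2\transup{u,t}C'$. Because $u\notin\{v\}\cup\Neigh{v}$, we have $L_2(u)=L(u)=q$, so $t$ is enabled at $u$ in $C_2$. Applying $t$ at $u$ only changes the label of $u$ to $q'$, which gives a configuration that must be checked to equal $C'$: on $u$, $L'(u)=q'$ because $t$ was applied at $u$ between $C$ and $C_1$ and then $t'$ (which does not touch $u$) was applied between $C_1$ and $C'$; on every node $w\in\{v\}\cup\Neigh{v}$, the effect of $t'$ starting from $C_1$ equals the effect of $t'$ starting from $C$ (again since $L_1$ and $L$ coincide on $\{v\}\cup\Neigh{v}$), which is precisely $L_2(w)=L'(w)$; on remaining nodes, nothing ever changed, so all of $L,L_1,L_2,L'$ agree.

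I do not expect a serious obstacle: the argument is purely local and the distance hypothesis is used exactly once, to force the two ``action zones'' $\{u\}$ and $\{v\}\cup\Neigh{v}$ to be disjoint. The only point that requires mild care is keeping track of which labels coincide in which pair of configurations, which is why I would organise the bookkeeping by explicitly naming $L_2$ before verifying the two transitions, rather than trying to define $C_2$ implicitly by the transition $C\transup{v,t'}C_2$.
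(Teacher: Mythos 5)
Your proof is correct: the paper states this as an unproven observation, and your argument -- that the distance hypothesis forces the action zone of the internal transition (a singleton) and that of the other transition (a node plus its neighbours) to be disjoint, so the two steps commute -- is exactly the justification the authors leave implicit. The only point worth making explicit is that your ``without loss of generality'' covers two slightly different cases (first or second transition internal), but both reduce to the same disjointness of $\set{u}$ from $\set{v}\cup\Neigh{v}$ (resp.\ $\set{v}$ from $\set{u}\cup\Neigh{u}$), which indeed follows from $d(u,v)\geq 2$.
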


\begin{observation}\label{obs:CoverLine-2pb-inP:reordening-internal-1}
	Let $C, C' \in \CC$, and $u,v \in \Vert{\Gamma}$. Then if $C \transup{u,t} C_1 \transup{v,t'} C'$ for two internal transitions $t, t' \in \Delta$ and a configuration $C_1 \in \CC$, then there exists $C_2 \in \CC$ such that $C \transup{v,t'} C_2 \transup{u,t} C'$.
\end{observation}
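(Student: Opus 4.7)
The plan is essentially immediate from the semantics of $\tau$-actions: an internal transition labeled at a vertex only updates the state of that single vertex and leaves every other vertex untouched (in contrast to a broadcast, which also affects the neighborhood). So once I observe that the two actions happen at distinct vertices, their effects on the labeling function act on disjoint coordinates and therefore commute. The only bookkeeping is to check that the intermediate configuration obtained after swapping is a valid configuration and that both transitions are enabled in the new order.

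Concretely, I first dispose of the degenerate case $u=v$, which should be read as vacuous (two sequential $\tau$-moves of the same process cannot in general be swapped, and this observation is only invoked in the paper to commute actions of different processes). So assume $u\neq v$, write $C=(\Gamma,L)$, $C_1=(\Gamma,L_1)$, $C'=(\Gamma,L')$, and let $t=(q,\tau,q')$, $t'=(p,\tau,p')$. From the semantics of $C\transup{u,t}C_1$ I get $L(u)=q$, $L_1(u)=q'$, and $L_1(w)=L(w)$ for every $w\neq u$; in particular $L_1(v)=L(v)$. From $C_1\transup{v,t'}C'$ I then get $L(v)=L_1(v)=p$, $L'(v)=p'$, $L'(u)=L_1(u)=q'$, and $L'(w)=L(w)$ for all $w\notin\{u,v\}$.

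Now I would define the witness $C_2=(\Gamma,L_2)$ by $L_2(u)=q$, $L_2(v)=p'$, and $L_2(w)=L(w)$ for all other $w$. The step $C\transup{v,t'}C_2$ is valid because $L(v)=p$, the $\tau$-action moves $v$ to $p'$, and the labeling of every other vertex is unchanged; the step $C_2\transup{u,t}C'$ is valid because $L_2(u)=q$, the $\tau$-action moves $u$ to $q'$, and the labeling of every other vertex (including $v$, which remains at $p'=L'(v)$) is unchanged. A vertex-by-vertex comparison with the values of $L'$ computed above closes the argument. There is no substantive obstacle here: the lack of any receiver side-effect for internal transitions makes the commutation completely local.
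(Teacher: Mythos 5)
Your proof is correct and matches the paper's treatment: the paper states this as an unproved observation precisely because it follows immediately from the semantics of internal transitions, which affect only the acting vertex, and your vertex-by-vertex verification is the intended argument. Your handling of the degenerate case $u=v$ (reading it as excluded, since the observation is only ever invoked for distinct vertices) is a sensible and honest clarification of a point the paper leaves implicit.
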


For the rest of this proof we let $\rho = C_0 \transup{v^0, t^0} C_1 \transup{v^1, t^1} \cdots \transup{v^{n-1}, t^{n-1}} C_n$ be the execution obtained from \cref{lemma:CoverLine-2pb-inP:execution-shape-1}. We define $j_1 := \lastBroadcast{v_{N-3}}{\rho} + 1$ and $j_2 = \lastBroadcast{v_{N+3}}{\rho} + 1$.

\begin{lemma}\label{lemma:CoverLine-2pb-inP:delay-1}
	Let $0 \leq j < j_1$ be the maximal index such that $C_j \transup{v^j, t^j} C_{j+1}$, for some $v^j \in \set{v_{N-1}, v_N, \dots, v_\ell}$
	then,
	$C_0 \trans^\ast C_j \transup{v^{j+1}, t^{j+1}} C'_{j+1} \trans^\ast C'_{j_1 -2} \transup{v_{N-3}, t^{j_1}} C'_{j_1 - 1} \transup{v^j,t^j} C_{j_1}$.
\end{lemma}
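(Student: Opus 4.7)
The approach is a commutation (bubble-sort) argument that delays the transition $(v^j,t^j)$ past every transition at positions $j+1,\dots,j_1-1$, so that it ends up immediately after the broadcast by $v_{N-3}$ at position $j_1-1$. By the maximality of $j$, every such intermediate transition is performed by a node in $\set{v_1,\dots,v_{N-2}}$, so it suffices to show that $(v^j,t^j)$ commutes with each of them in turn using Observations~\ref{obs:CoverLine-2pb-inP:reordening-broadcasts},~\ref{obs:CoverLine-2pb-inP:reordening-internal-2} and~\ref{obs:CoverLine-2pb-inP:reordening-internal-1}.

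I split on $v^j$. First, if $v^j\in\set{v_{N+1},\dots,v_\ell}$ then the distance from $v^j$ to any node of $\set{v_1,\dots,v_{N-2}}$ is at least $3$, so Observation~\ref{obs:CoverLine-2pb-inP:reordening-broadcasts} justifies each adjacent swap regardless of whether the transitions in play are broadcasts or internal. A straightforward induction on the number of intermediate transitions then produces the desired execution.

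The delicate case is $v^j\in\set{v_{N-1},v_N}$, where the distance to $v_{N-2}$ (respectively $v_{N-3}$) may be only $1$ or $2$, so Observations~\ref{obs:CoverLine-2pb-inP:reordening-internal-2} or~\ref{obs:CoverLine-2pb-inP:reordening-internal-1} would require at least one of the two transitions to be internal. I claim that $t^j$ itself must be internal. Indeed, suppose for contradiction that $t^j$ is a broadcast. Then $\firstBroadcast{v^j}{\rho}\leq j$, and chaining the inequalities from part~\ref{item:lemma:exec-shape-order} of Lemma~\ref{lemma:CoverLine-2pb-inP:execution-shape-2} gives
\[
\lastBroadcast{v_{N-3}}{\rho}\ \leq\ \firstBroadcast{v_{N-2}}{\rho}\ \leq\ \lastBroadcast{v_{N-2}}{\rho}\ \leq\ \cdots\ \leq\ \firstBroadcast{v^j}{\rho}\ \leq\ j.
\]
Since $\lastBroadcast{v_{N-3}}{\rho}=j_1-1$ and $j<j_1$, this forces $j=j_1-1$; but the transition at position $j_1-1$ is issued by $v_{N-3}$, not by $v^j\in\set{v_{N-1},v_N}$, a contradiction. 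Hence $t^j$ is internal, and each swap past an intermediate transition involves a pair of nodes at distance $\geq 2$ with at least one internal transition, so Observations~\ref{obs:CoverLine-2pb-inP:reordening-internal-2} and~\ref{obs:CoverLine-2pb-inP:reordening-internal-1} apply.

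In both cases a routine induction on the number of remaining intermediate transitions produces configurations $C'_{j+1},\dots,C'_{j_1-1}$ with
\[
C_j\transup{v^{j+1},t^{j+1}}C'_{j+1}\trans^\ast C'_{j_1-1}\transup{v^j,t^j}C_{j_1},
\]
which, prepended with the unchanged prefix $C_0\trans^\ast C_j$, is exactly the claimed execution. The main obstacle is establishing that $t^j$ is internal in the boundary cases $v^j\in\set{v_{N-1},v_N}$; once this is done, the rest is a mechanical application of the three commutation observations.
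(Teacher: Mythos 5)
Your overall strategy --- delaying $(v^j,t^j)$ by successive adjacent commutations justified by the three observations --- is the same as the paper's, and your treatment of the cases $v^j\in\set{v_{N+1},\dots,v_\ell}$ and $v^j=v_{N-1}$ is essentially the intended one. The gap is in the case $v^j=v_N$. Your claim that $t^j$ must be internal there rests on chaining $\lastBroadcast{v_{N-3}}{\rho}\leq\firstBroadcast{v_{N-2}}{\rho}\leq\lastBroadcast{v_{N-2}}{\rho}\leq\cdots\leq\firstBroadcast{v^j}{\rho}$, but the ordering property (\cref{item:lemma:exec-shape-order} of \cref{lemma:CoverLine-2pb-inP:execution-shape-2}, item (a)) only gives $\lastBroadcast{v_i}{\rho}\leq\firstBroadcast{v_{i+1}}{\rho}$ for $1\leq i\leq N-2$, so the chain stops at $\firstBroadcast{v_{N-1}}{\rho}$: there is no inequality relating $\lastBroadcast{v_{N-1}}{\rho}$ to $\firstBroadcast{v_N}{\rho}$, and nothing prevents $v_N$ from broadcasting at some index $j<j_1$. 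Your contradiction therefore does not go through, and the claim is false in general. The case is still salvageable, but by the paper's route rather than yours: when $v^j=v_N$, every intermediate transition is issued by some $v_k$ with $k\leq N-2$; if $k\leq N-3$ then $d(v_N,v_k)\geq 3$ and \cref{obs:CoverLine-2pb-inP:reordening-broadcasts} commutes the two steps with no internality assumption at all, and if $k=N-2$ then $d(v_N,v_{N-2})=2$ and the $v_{N-2}$-transition is internal, so \cref{obs:CoverLine-2pb-inP:reordening-internal-2} applies.

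Two secondary points. First, your blanket assertion that every swap involves nodes at distance $\geq 2$ fails for the pair $(v_{N-1},v_{N-2})$, which is at distance $1$; there you must invoke \cref{obs:CoverLine-2pb-inP:reordening-internal-1}, which requires \emph{both} transitions to be internal. Second, the internality of the $v_{N-2}$-transitions occurring before $j_1$ does not follow from the maximality of $j$ (which only constrains \emph{which} nodes act); it follows from $j_1-1=\lastBroadcast{v_{N-3}}{\rho}\leq\firstBroadcast{v_{N-2}}{\rho}$ together with the fact that the step at index $j_1-1$ is issued by $v_{N-3}$ and not $v_{N-2}$. You need to state this explicitly, since it is used both in the $(v_{N-1},v_{N-2})$ swap and in the corrected $(v_N,v_{N-2})$ swap.
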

\begin{proof}
	Let $0 \leq j < j_1$ be the maximal index such that $C_j\transup{v^j,t^j} C_{j+1}$ for some $v \in \set{v_{N-1}, v_{N}, \dots v_{\ell}}$.
	
	Recall that $j_1 = \lastBroadcast{v_{N-3}}{\rho} + 1$, and recall that $ \lastBroadcast{v_{N-3}}{\rho} \leq \firstBroadcast{v_{N-2}}{\rho}$, hence either $\firstBroadcast{v_{N-2}}{\rho} = -1$ and so it never broadcast anything, either $j_1 \leq \firstBroadcast{v_{N-2}}{\rho}$.
	In both cases, $v_{N-2}$ does not broadcast anything between $C_0$ and $C_{j_1}$.

	As $j$ is maximal, for all $j <k < j_1$, $v^{k} \in \set{v_1, \dots, v_{N-2}}$ and if $v^{k} = v_{N-2}$, $t^{k}$ is internal. 
	We prove by induction that for all $j \leq  k < j_1$: 
	
	$P(k)$: there exists $C'_{j+1} ,\dots ,C'_{k}$ such that $C_{j}\transup{v^{j+1}, t^{j+1}} C'_{j+1} \trans \cdots \transup{v^{k},t^{k}} C'_{k} \transup{v^{j}, t^j} C_{k+1}$. 
	
	For $k = j$, the induction property trivially holds. 
	
	Let $j \leq k <j_1 - 1$ such that $C_{j}\transup{v^{j+1}, t^{j+1}} C'_{j+1} \cdots \transup{v^{k},t^{k}} C'_{k} \transup{v^{j}, t^j} C_{k+1}$. Let $C_{k+1} \transup{v^{k+1}, t^{k+1}} C_{k+2}$.
	
	\textbf{First case: $v^j = v_{N-1}$.} Then, since $j \leq k+1\leq j_1-1= \lastBroadcast{v_{N-3}}{\rho}\leq\firstBroadcast{v_{N-2}}{\rho}\leq \lastBroadcast{v_{N-2}}{\rho}
	\leq\firstBroadcast{v_{N-1}}{\rho}$, by~\cref{lemma:CoverLine-2pb-inP:execution-shape-1}, $t^j$ cannot be a broadcast and is thus an internal transition. 
	If $v^{k+1} \in \set{v_1, \dots, v_{N-3}}$, then $d(v_{N-1}, v^{k+1}) \geq 2$ and we apply \cref{obs:CoverLine-2pb-inP:reordening-internal-2}, to get that there exists $C'_{k+1}$ such that $C'_{k} \transup{v^{k+1}, t^{k+1}} C'_{k+1} \transup{v^j, t^j} C_{k+2}$. If $v^{k+1} = v_{N-2}$, then $t^{k+1}$ is internal and we can hence apply \cref{obs:CoverLine-2pb-inP:reordening-internal-1}\ to get that there exists $C'_{k+1}$ such that $C'_{k} \transup{v^{k+1}, t^{k+1}} C'_{k+1} \transup{v^j, t^j} C_{k+2}$.
	
	\textbf{Second case: $v^j = v_{N}$.} Then, if $v^{k+1} \in \set{v_1, \dots v_{N-3}}$ and so $d(v_N, v^{k+1}) \geq 3$: We can apply \cref{obs:CoverLine-2pb-inP:reordening-broadcasts}\ to get that there exists $C'_{k+1}$ such that $C'_{k} \transup{v^{k+1}, t^{k+1}} C'_{k+1} \transup{v^j, t^j} C_{k+2}$. Otherwise, $v^{k +1} = v_{N-2}$ and $t^{k+1}$ is internal and $d(v_N, v_{N-2}) = 2$ and so we can apply \cref{obs:CoverLine-2pb-inP:reordening-internal-2}\ 
	to get that there exists $C'_{k+1}$ such that $C'_{k} \transup{v^{k+1}, t^{k+1}} C'_{k+1} \transup{v^j, t^j} C_{k+2}$.
	
	\textbf{Third case: $v^j\in \set{v_{N+1}, \dots v_\ell}$.} Then $d(v^j, v^{k+1}) \geq 3$ and so we can apply \cref{obs:CoverLine-2pb-inP:reordening-broadcasts}\ to get that there exists $C'_{k+1}$ such that $C'_{k} \transup{v^{k+1}, t^{k+1}} C'_{k+1} \transup{v^j, t^j} C_{k+2}$.
	
\end{proof}

\begin{lemma}\label{lemma:CoverLine-2pb-inP:delay-2}
	Let $j_1 \leq j < j_2$ the \emph{maximal} index such that $C_j \transup{v^j, t^j} C_{j+1}$, for some $v^j \in \set{v_{N-2}, v_{N-1}, v_{N}, v_{N+1}}$, then,
	$C_{j_1} \trans^\ast C_j \transup{v^{j+1}, t^{j+1}} C'_{j+1} \transup{v^{j+2}, t^{j+2}} \cdots \trans C'_{j_2-2} \transup{v_{N+3}, t^{j_2}} C'_{j_2 - 1} \transup{v^j,t^j} C_{j_2}$.
\end{lemma}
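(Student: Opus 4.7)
The plan is to follow verbatim the argument used for Lemma \ref{lemma:CoverLine-2pb-inP:delay-1}, reflected across $v_N$. First I would read off from the definition $j_2 = \lastBroadcast{v_{N+3}}{\rho} + 1$ together with Lemma \ref{lemma:CoverLine-2pb-inP:execution-shape-1} (which yields $\lastBroadcast{v_{N+3}}{\rho} \leq \firstBroadcast{v_{N+2}}{\rho}$) that $v_{N+2}$ issues no broadcast between $C_{j_1}$ and $C_{j_2}$. Combined with the maximality of $j$, this gives that for every $k$ with $j < k < j_2$ one has $v^k \in \set{v_{N+2}, \ldots, v_\ell}$, and if $v^k = v_{N+2}$ then $t^k$ is internal.

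Then I would prove by induction on $k \in [j, j_2-1]$ the property $P(k)$: there exist configurations $C'_{j+1}, \ldots, C'_k$ such that
\[
C_j \transup{v^{j+1}, t^{j+1}} C'_{j+1} \transup{v^{j+2}, t^{j+2}} \cdots \transup{v^k, t^k} C'_k \transup{v^j, t^j} C_{k+1}.
\]
The base case $k = j$ is trivial. For the inductive step I would commute $(v^{k+1}, t^{k+1})$ with $(v^j, t^j)$ by a case analysis on $v^j$. If $v^j \in \set{v_{N-2}, v_{N-1}}$, then $d(v^j, v^{k+1}) \geq 3$ and Observation \ref{obs:CoverLine-2pb-inP:reordening-broadcasts} applies directly. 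If $v^j = v_N$ and $v^{k+1} \in \set{v_{N+3}, \ldots, v_\ell}$, the same observation works, while the remaining case $v^{k+1} = v_{N+2}$ is handled by Observation \ref{obs:CoverLine-2pb-inP:reordening-internal-2} (internal transition at distance $2$). If $v^j = v_{N+1}$, I would first argue that $t^j$ must be internal by chaining the inequalities $j \leq j_2 - 1 = \lastBroadcast{v_{N+3}}{\rho} \leq \firstBroadcast{v_{N+2}}{\rho} \leq \lastBroadcast{v_{N+2}}{\rho} \leq \firstBroadcast{v_{N+1}}{\rho}$ given by Lemma \ref{lemma:CoverLine-2pb-inP:execution-shape-1}, after which Observation \ref{obs:CoverLine-2pb-inP:reordening-internal-2} (when $v^{k+1} \in \set{v_{N+3}, \ldots, v_\ell}$, giving distance $\geq 2$) and Observation \ref{obs:CoverLine-2pb-inP:reordening-internal-1} (when $v^{k+1} = v_{N+2}$ and $t^{k+1}$ is internal) cover the two sub-subcases.

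Instantiating $P$ at $k = j_2 - 1$ yields precisely the rewritten execution stated in the lemma. The main obstacle, mirroring what occurs in Lemma \ref{lemma:CoverLine-2pb-inP:delay-1}, is the subcase $v^j = v_{N+1}$: a broadcast at $v_{N+1}$ cannot in general be swapped past an internal step at the adjacent node $v_{N+2}$, so one has to exploit the ordering provided by Lemma \ref{lemma:CoverLine-2pb-inP:execution-shape-1} (itself a consequence of the 2-phase-bounded hypothesis) to rule out this configuration and force $t^j$ to be internal before commuting.
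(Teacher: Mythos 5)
Your proof follows the paper's argument for this lemma essentially verbatim: the same induction on $k\in[j,j_2-1]$, the same three-way case analysis on $v^j$ with the same pairing of distances to Observations \ref{obs:CoverLine-2pb-inP:reordening-broadcasts}--\ref{obs:CoverLine-2pb-inP:reordening-internal-1}, and the same use of the ordering from Lemma \ref{lemma:CoverLine-2pb-inP:execution-shape-1} to force $t^j$ to be internal when $v^j = v_{N+1}$. The only detail you gloss over is why no $v^k$ with $j<k<j_2$ lies in $\set{v_1,\dots,v_{N-3}}$ --- maximality of $j$ alone does not exclude these; the paper chains the inequalities $\lastBroadcast{v_i}{\rho}\leq\firstBroadcast{v_{i+1}}{\rho}$ up to $i=N-3$ to conclude that all such nodes are already silent before $j_1$.
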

\begin{proof}
	Let $j_1 \leq j < j_2$ be the maximal index such that $C_j\transup{v^j,t^j} C_{j+1}$ for some $v \in \set{v_{N-2}, v_{N-1}, v_N, v_{N+1}}$.
	
Recall that $j_1 = \lastBroadcast{v_{N-3}}{\rho} +1$, and that $\lastBroadcast{v_{N-3}}{\rho}  \geq \firstBroadcast{v_{N-3}}{\rho} \geq \lastBroadcast{v_{N-4}}{\rho}$, and so by a simple inductive reasoning: $j_1 > \lastBroadcast{v}{\rho}$ for all $v \in \set{v_1, \dots, v_{N-3}}$. Hence, for all $j_1 \leq k  < j_2$, $v^k \nin \set{v_1, \dots, v_{N-3}}$.
	
	Hence, as $j$ is maximal, for all $j <k < j_1$, $v^{k} \in \set{v_{N+2}, \dots, v_{\ell}}$. 
	We prove by induction that for all $j \leq k < j_2$, there exist $C'_{j+1}, \dots C'_k$ such that $C_j\transup{v^{j+1}, t^{j+1}} C'_{j+1} \trans \cdots C'_k \transup{v^j, t^j} C_{k+1}$.
	For $k =j$, the induction property trivially holds. Let $j \leq k <j_2-1$ such that there exist $C'_{j+1}, \dots C'_k$ with $C_j\transup{v^{j+1}, t^{j+1}} C'_{j+1} \trans \cdots C'_k \transup{v^j, t^j} C_{k+1}$. Denote $C_{k+1} \transup{v^{k+1}, t^{k+1}} C_{k+2}$.
	
	\textbf{First case:} $v^j = v_{N+1}$. Then, $\lastBroadcast{v_{N+2}}{\rho}\leq \firstBroadcast{v_{N+1}}{\rho}$ by~\cref{lemma:CoverLine-2pb-inP:execution-shape-1}, and $\lastBroadcast{v_{N+3}}{\rho}\leq\firstBroadcast{v_{N+2}}{\rho}\leq \lastBroadcast{v_{N+2}}{\rho}$. Since $k+1\leq j_2-1 = \lastBroadcast{v_{N+3}}{\rho}$, $k+1\leq \lastBroadcast{v_{N+3}}{\rho}\leq\firstBroadcast{v_{N+1}}{\rho}$, and so, $t^j$ cannot be a broadcast, and is an internal transition.
	If $v^{k+1} \in \set{v_{N+3}, \dots v_\ell}$ then $d(v_{N+1}, v^{k+1}) \geq 2$ and we can apply \cref{obs:CoverLine-2pb-inP:reordening-internal-2}\ to get that there exists $C'_{k+1}$ such that $C'_k \transup{v^{k+1}, t^{k+1}} C'_{k+1}\transup{v^j, t^j}  C_{k+2}$. Otherwise, $v^{k+1} = v_{N+2}$. Recall that $j_2 = \lastBroadcast{v_{N+3}}{\rho} +1$ and that $k+1\leq \lastBroadcast{v_{N+3}}{\rho} \leq \firstBroadcast{v_{N+2}}{\rho}$, hence $t^{k+1}$ is an internal transition. Hence we can apply \cref{obs:CoverLine-2pb-inP:reordening-internal-1}\ to get that there exists $C'_{k+1}$ such that $C'_k \transup{v^{k+1}, t^{k+1}} C'_{k+1}\transup{v^j, t^j}  C_{k+2}$. 
	
	\textbf{Second case:} $v^j = v_N$. Then, if $v^{k+1} \in \set{v_{N+3}, \dots v_{\ell}}$ and so $d(v_N, v^{k+1}) \geq 3$, we can apply 
	\cref{obs:CoverLine-2pb-inP:reordening-broadcasts}\ to get that there exists $C'_{k+1}$ such that $C'_{k} \transup{v^{k+1}, t^{k+1}} C'_{k+1} \transup{v^j, t^j} 
	C_{k+2}$. Otherwise, $v^{k +1} = v_{N+2}$ and $t^{k+1}$ is internal and $d(v_N, v_{N+2}) = 2$ and so we can apply 
	\cref{obs:CoverLine-2pb-inP:reordening-internal-2}\ to get that there exists $C'_{k+1}$ such that $C'_{k} \transup{v^{k+1}, t^{k+1}} C'_{k+1} \transup{v^j, t^j} 
	C_{k+2}$.
	
	\textbf{Third case: $v^j\in \set{v_{N-2}, v_{N-1}}$.} Then $d(v^j, v^{k+1}) \geq 3$ and so we can apply \cref{obs:CoverLine-2pb-inP:reordening-broadcasts}\ to get that there exists $C'_{k+1}$ such that $C'_{k} \transup{v^{k+1}, t^{k+1}} C'_{k+1} \transup{v^j, t^j} C_{k+2}$.
	
\end{proof}

We are now ready to prove \cref{lemma:CoverLine-2pb-inP:execution-shape-2}.

\begin{proofof}{\cref{lemma:CoverLine-2pb-inP:execution-shape-2}}
	Let 
	$j_1(\rho) := \lastBroadcast{v_{N-3}}{\rho} + 1$ and $j_2(\rho) = \lastBroadcast{v_{N+3}}{\rho} + 1$. We let $i_1, \dots i_k$  the indices $0 \leq i_1 < \cdots < i_k  < j_1(\rho)$ such that $C_{i_j }\transup{v, t_j} C_{i_j +1}$ for some $v \in \set{v_{N-1}, \dots, v_{\ell}}$ and $t_j \in \Delta$ We denote $\textsf{EarlyActions}[\rho]$ the number of such indices.
	
	We build inductively $\rho_0, \rho_1, \dots \rho_{k}$ such that for all $j$, $\rho_j= C_0 \trans^\ast C^j = (\Gamma, L^j)$ with:
	\begin{itemize}
		\item $P_1(j)$: $L^j(v_N) = q_f$, and 
		\item $P_2(j)$: $\textsf{EarlyActions}[\rho_{j}] = k-j$, and 
		\item $P_3(j)$:  the order of \cref{lemma:CoverLine-2pb-inP:execution-shape-1}\ is preserved, i.e.  for all $1 \leq i \leq N-2$,
		$\lastBroadcast{v_i}{\rho_{j}} \leq \firstBroadcast{v_{i+1}}{\rho_{j}}$ and for all $N+2 \leq i \leq \ell$, $\lastBroadcast{v_i}{\rho_{j}} \leq \firstBroadcast{v_{i-1}}{\rho_{j}}$.
	\end{itemize}
	
	For $j = 0$, define $\rho_0 = \rho$. It trivially satisfies the induction properties.
	
	Let $0 \leq j < k$ and assume we have built $\rho_j$ satisfying $P_1(j)$, $P_2(j)$ and $P_3(j)$ and denote it $\rho_j = C^j_0 \transup{v^0, t^0} C^j_1 \transup{v^1, t^1} \cdots \transup{v^{n-1} t^{n-1}} C^j_n= (\Gamma, L^j_n)$.
	Let $i$ be the maximal index such that $i <j_1(\rho_j)$ and $C_{i }\transup{v^i, t^i} C_{i +1}$ for some $v^i \in \set{v_{N-1}, \dots, v_{\ell}}$
	and $t \in \Delta$. From \cref{lemma:CoverLine-2pb-inP:delay-1}, there exists $C'_{i+1}, \dots C'_{j_1(\rho_j) -1}$ such that $C^{j}_0 \trans^\ast C^j_{i} 
	\transup{v^{i+1}, t^{i+1}} C'_{i +1} \transup{v^{i+2}, t^{i+2}}  \cdots \trans C'_{j_1(\rho_j) - 1} \transup{v^i,t^i} C^j_{j_1(\rho_j)}$, hence we define 
	$\rho_{j+1} = C^{j}_0 \trans^\ast C^j_{i} \transup{v^{i+1}, t^{i+1}} C'_{i +1} \trans^\ast  C'_{j_1(\rho_j)- 1} \transup{v^i,t^i} C^j_{j_1(\rho_j)} \trans^\ast C_n$. 
	By construction $P_1(j+1)$ holds.
	Furthermore, $j_1(\rho_{j+1})=\lastBroadcast{v_{N-3}}{\rho_{j+1}} + 1 = j_1(\rho_j) - 1$ and so $\textsf{EarlyActions}[\rho_{j+1}] = 
	\textsf{EarlyActions}[\rho_j] -1= k-j-1$ and so $P_2(j+1)$ holds.
	
	We denote $v^i = v_r$.
	Recall that $v_r \in \set{v_{N-1}, \dots, v_{\ell}}$. 
	Observe that for all $1 \leq k \leq \ell$, it holds that:
	\begin{itemize}
		\item if $k <\ell$ and $k \nin \set{r-1, r}$, then $\lastBroadcast{v_k}{\rho_{j+1}} \leq \firstBroadcast{v_{k+1}}{\rho_{j+1}}$ if and only if $\lastBroadcast{v_k}{\rho_{j}} \leq \firstBroadcast{v_{k+1}}{\rho_{j}}$;
		\item if $k > 0$ and $k \nin \set{r, r+1}$, then $\lastBroadcast{v_k}{\rho_{j+1}} \leq \firstBroadcast{v_{k-1}}{\rho_{j+1}}$ if and only if $\lastBroadcast{v_k}{\rho_{j}} \leq \firstBroadcast{v_{k-1}}{\rho_{j}}$.
	\end{itemize}
	Hence, by $P_3(j)$, for all $1 \leq k \leq N-2$, if $k \nin \set{r-1, r}$
	$\lastBroadcast{v_k}{\rho_{j+1}} \leq \firstBroadcast{v_{k+1}}{\rho_{j+1}}$ and for all $N+2 \leq k \leq \ell$, if $k \nin \set{r, r+1}$, $\lastBroadcast{v_k}{\rho_{j}} \leq \firstBroadcast{v_{k-1}}{\rho_{j}}$.
	
	Let now $1 \leq k\leq N-2$, such that $k \in \set{r-1, r}$. As $N-1 \leq r \leq \ell$, $k = N-2$ and $r = N-1$. 
	As $j_1(\rho_j) > i \geq 0$, from $P_3(j)$, we get that $j_1(\rho_j) - 1 = \firstBroadcast{v_{N-3}}{\rho_j} \leq \lastBroadcast{v_{N-3}}{\rho_j} < \firstBroadcast{v_{r-1}}{\rho_j} \leq \lastBroadcast{v_{r-1}}{\rho_j}< \firstBroadcast{v_{r}}{\rho_j}$. Hence $\lastBroadcast{v_{r-1}}{\rho_{j+1}}  = \lastBroadcast{v_{r-1}}{\rho_j}$, and $ \firstBroadcast{v_{r}}{\rho_{j+1}} =\firstBroadcast{v_{r}}{\rho_j}$.
	
	Let now $N+2 \leq k \leq \ell$, such that $ k \in \set{r, r+1}$, hence, $N +1 \leq r \leq \ell$.
	 If $k = r$, then notice that either $\lastBroadcast{v_r}{\rho_{j}} > i$ or $\lastBroadcast{v_r}{\rho_{j}}=i$. In the first case, 
	 		as for all $i< i_2 < j_1(\rho_j)$, $v^{i_2} \nin \set{v_{N+1}, \dots, v_\ell}$, it holds that $\lastBroadcast{v_r}{\rho_{j}} > j_1(\rho_{j+1})$, and so $\lastBroadcast{v_r}{\rho_{j+1 }} = \lastBroadcast{v_r}{\rho_{j}}$, and, 
	 			from $P_3(j)$, $j_1(\rho_j) \leq \lastBroadcast{v_r}{\rho_{j+1 }} \leq \firstBroadcast{v_{r-1}}{\rho_{j}} = \firstBroadcast{v_{r-1}}{\rho_{j+1}}$.
	 If $\lastBroadcast{v_r}{\rho_{j}}=i$, then $\lastBroadcast{v_r}{\rho_{j+1}}=j_1(\rho_j)-1$. As for all $i< i_2 < j_1(\rho_j)$, $v^{i_2} \nin \set{v_{N+1}, \dots, v_\ell}$, it holds that $\lastBroadcast{v_r}{\rho_{j}} \leq j_1(\rho_j) - 1 \leq \firstBroadcast{v_{r-1}}{\rho_j} = \firstBroadcast{v_{r-1}}{\rho_{j+1}}$. Hence, $ \lastBroadcast{v_r}{\rho_{j+1 }} \leq \firstBroadcast{v_{r-1}}{\rho_{j+1}}$.
	 
	 If $k = r+1$, then, as for all $i< i_2 < j_1(\rho_j)$, $v^{i_2} \nin \set{v_{N+1}, \dots, v_\ell}$, it holds that $\lastBroadcast{v_k}{\rho_{j+1}} = \lastBroadcast{v_k}{\rho_j}$ and if $t^i$ is internal, $\firstBroadcast{v_{r}}{\rho_{j+1}} = \firstBroadcast{v_{r}}{\rho_j}$.
	Otherwise, $\firstBroadcast{v_{r}}{\rho_j} \leq \firstBroadcast{v_{r}}{\rho_{j+1}}$. From $P_3(j)$, it holds that $\lastBroadcast{v_k}{\rho_{j+1}} =\lastBroadcast{v_k}{\rho_j} \leq  \firstBroadcast{v_{k-1}}{\rho_j} \leq \firstBroadcast{v_{k-1}}{\rho_{j+1}}$.
%
%
 Hence $P_3(j+1)$ holds.\lug{j'ai encore changé mais je n'arrivais pas à suivre la preuve}
	Hence, we build $\rho_k$ such that $\rho_k = C^k_0 \trans^\ast C^k_{j_1(\rho_k)} \trans^\ast C^k_{n}$ and $P_1(k)$, $P_2(k)$ and $P_3(k)$ hold.

	With the same reasoning between $C_{j_1}$ and $C_{j_2}$, and applying this time \cref{lemma:CoverLine-2pb-inP:delay-2}, we finally get an execution $\rho' = C'_0 \trans^\ast C'_{j_1(\rho')} \trans^\ast C'_{j_2(\rho')} \trans^\ast C'_n $ such that:
	\begin{itemize}
		\item for all $1 \leq i \leq N-2$, $\lastBroadcast{v_i}{\rho'} \leq \firstBroadcast{v_{i+1}}{\rho'}$;
		\item for all $N+2 \leq i \leq \ell$, $\lastBroadcast{v_i}{\rho'} \leq \firstBroadcast{v_{i-1}}{\rho'}$.
		\item nodes issuing transitions between $C'_0$ and $C'_{j'_1}$ belong to $\set{v_1, \dots, v_{N-3}, v_{N-2}}$ and $v_{N-2}$ only performs internal transitions (because $j_1(\rho') = \lastBroadcast{v_{N-3}}{\rho'} + 1 \leq \firstBroadcast{v_{N-2}}{\rho'}$;
		\item nodes issuing transitions between $C'_{j'_1}$ and $C'_{j'_2}$ belong to $\set{v_{N+2}, v_{N+3}, \dots, v_{\ell}}$ and $v_{N+2}$ only performs internal transitions;
	\end{itemize}

It is left to prove that in $C'_{j_2(\rho')} \trans^\ast C'_n $, only nodes in $\set{v_{N-2}, v_{N-1} , v_N, v_{N+1}, v_{N+2}}$ issue transitions.
	Assume a node $v \nin \set{v_{N-2}, v_{N-1} , v_N, v_{N+1}, v_{N+2}}$ issues a transition between $C'_{j_2(\rho')}$ and $C'_n$, hence by definition, $\lastBroadcast{v}{\rho'} \geq j_2(\rho')$.
	Either $v \in \set{v_1, \dots v_{N-3}}$ and so $\lastBroadcast{v}{\rho'} \leq \lastBroadcast{v_{N-3}}{\rho'} < j_1(\rho') <j_2(\rho')$ and we reach a contradiction. Or $v \in \set{v_{N+3}, \dots v_{\ell}}$ and so $\lastBroadcast{v}{\rho'} \leq \lastBroadcast{v_{N+3}}{\rho'} < j_2(\rho')$ and we reach a contradiction.
\end{proofof}

\subsubsection{Completeness of the algorithm}\label{appendix:subsec:CoverLine-inP:lemma:complete}
This part is dedicated to prove the following lemma.
\begin{lemma}\label{lemma:CoverLine-2pb-inP:complete}
	If $q_f$ is coverable with a line topology, then there exist $q_1, q_2 \in H$ such that $C_{q_1, q_2} \trans^\ast C$ and $C = (\Gamma_5, L)$ with $L(v_3)  = q_f$.
\end{lemma}

If $q_f$ is coverable, then let $\rho = C_0 \trans^\ast C_{j_1} \trans^\ast C_{j_2} \trans^\ast C_n$ be the execution obtained from \cref{lemma:CoverLine-2pb-inP:execution-shape-2}\ with $j_1 = \firstBroadcast{v_{N-3}}{\rho} + 1$ and $j_2 = \firstBroadcast{v_{N+3}}{\rho} + 1$. We denote $C_i = (\Gamma, L_i)$ for all $0\leq i \leq n$.

We start by proving that $L_{j_1}(v_{N-2}) \in H$ and $L_{j_2}(v_{N+2}) \in H$.
We need three preliminaries lemmas.
\begin{lemma}\label{lemma:proof:lemma:CoverLine2pb-complete-case0}
	For all $0 \leq j \leq \firstBroadcast{v_1}{\rho}$, $(L_{j}(v_1), \qinit) \in S$.
\end{lemma}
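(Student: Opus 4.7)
The plan is to prove this lemma by straightforward induction on $j$, using the particular shape of $\rho$ guaranteed by \cref{lemma:CoverLine-2pb-inP:execution-shape-2} to rule out receptions by $v_1$ in the relevant window.

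First I would handle the base case $j=0$: since $C_0 \in \II$ we have $L_0(v_1) = \qinit$, so $(L_0(v_1),\qinit) = (\qinit,\qinit) \in S_0 \subseteq S$ by definition. For the inductive step, fix $0 < j \leq \firstBroadcast{v_1}{\rho}$ and assume $(L_{j-1}(v_1),\qinit) \in S$. The key observation is that $v_2$ has not broadcast anything up to and including step $j-1$: indeed, by \cref{item:lemma:exec-shape-order}.(a) of \cref{lemma:CoverLine-2pb-inP:execution-shape-2}, $\firstBroadcast{v_2}{\rho} \geq \lastBroadcast{v_1}{\rho} \geq \firstBroadcast{v_1}{\rho} \geq j > j-1$. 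Since $v_1$ is the extremity of the line, its unique neighbor is $v_2$, so no reception by $v_1$ can occur at step $j-1$, and by definition of $\firstBroadcast{v_1}{\rho}$, $v_1$ has not broadcast yet either.

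Hence the transition $C_{j-1} \transup{v,t} C_j$ at step $j-1$ either (i) does not involve $v_1$, in which case $L_j(v_1) = L_{j-1}(v_1)$ and we conclude by the induction hypothesis; or (ii) is an internal transition $(L_{j-1}(v_1),\tau,L_j(v_1)) \in \Delta$ of $v_1$, in which case we apply the rule in the definition of $S_{i+1}$ corresponding to an internal transition on the first component (taking $j=1$, $p_1 = L_{j-1}(v_1)$, $q_1 = L_j(v_1)$, and $p_2 = q_2 = \qinit$) to obtain $(L_j(v_1),\qinit) \in S$ from $(L_{j-1}(v_1),\qinit) \in S$.

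The proof is essentially routine; the only subtle point — and the only place where the 2-phase-bounded and line hypotheses are used — is in justifying that no broadcast by $v_2$ happens before index $\firstBroadcast{v_1}{\rho}$, which is where we appeal to the ordering property from \cref{lemma:CoverLine-2pb-inP:execution-shape-2}. Everything else follows directly from the inductive definition of $S$.
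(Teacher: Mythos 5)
Your proof is correct and takes essentially the same route as the paper's: induction on $j$, using the ordering property (\cref{item:lemma:exec-shape-order} of \cref{lemma:CoverLine-2pb-inP:execution-shape-2}) to conclude that $v_2$ --- the unique neighbour of $v_1$ --- has not yet broadcast, so the only step that can change the state of $v_1$ in this window is an internal transition of $v_1$ itself, which is covered by the corresponding closure rule in the definition of $S$. The only cosmetic difference is that you make explicit the "extremity of the line" argument ruling out receptions, which the paper leaves implicit.
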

\begin{proof}
	We prove the lemma by induction on $j$.
	For $j = 0$, $L_0(v_1) = \qinit$ and by construction $(\qinit, \qinit) \in S$.
	Let $0 \leq j < \firstBroadcast{v_1}{\rho}$ such that $(L_{j}(v_1), \qinit) \in S$. 
		Denote $C_j \transup{v^j, t^j} C_{j+1}$. Observe that either $L_j(v_1) = L_{j+1}(v_1)$ (and there is nothing to do), or $v^j \in \set{v_1, v_2}$. 
		As $j < \firstBroadcast{v_1}{\rho}$, from \cref{item:lemma:exec-shape-steps} of \cref{lemma:CoverLine-2pb-inP:execution-shape-2}, it holds that $j  < \firstBroadcast{v_1}{\rho} \leq \firstBroadcast{v_2}{\rho}$. Hence $t^j$ is internal and if $v^j = v_2$, $L_j(v_1) = L_{j+1}(v_1)$. Otherwise, $v^j = v_1$ and $(L_j(v_1), \qinit) \in S$. Hence, $(L_{j+1}(v_1), \qinit) \in S$ by definition of $S$.
	
\end{proof}

We adopt the convention that $L_{-1}(v) = \qinit$ for all $v$.
We prove the following lemma by induction on $k$.

\begin{lemma}\label{lemma:proof:lemma:CoverLine2pb-complete-case-ind}
	For all $1 < k \leq N-2$, 
	if $L_{\firstBroadcast{v_{k-1}}{\rho}}(v_{k-1}) \in H$, then $L_{j}(v_k)\in H$ for all $\firstBroadcast{v_{k-1}}{\rho} \leq j \leq \firstBroadcast{v_{k}}{\rho}$.
\end{lemma}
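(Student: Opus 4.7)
The plan is to prove, by inner induction on $j$, the stronger invariant that for all $j_0 := \firstBroadcast{v_{k-1}}{\rho} \leq j \leq \firstBroadcast{v_k}{\rho}$, the pair $(L_j(v_k), L_j(v_{k-1}))$ lies in $S$. Since this immediately yields $L_j(v_k) \in H$, this suffices. Throughout, I would use \cref{item:lemma:exec-shape-order} of \cref{lemma:CoverLine-2pb-inP:execution-shape-2}: the orderings $\lastBroadcast{v_i}{\rho} \leq \firstBroadcast{v_{i+1}}{\rho}$ imply that in the interval $[j_0, \firstBroadcast{v_k}{\rho}]$ neither $v_k$ nor $v_{k+1}$ has yet broadcast, and $v_{k-2}$ has already stopped broadcasting by time $j_0$. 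Hence $v_k$'s state can only be affected by its own internal transitions or by broadcasts of $v_{k-1}$, which is exactly the two-party interaction abstracted by $S$.

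For the base case $j = j_0$, the hypothesis yields some $q'$ with $(L_{j_0}(v_{k-1}), q') \in S$, and the reset clause $\set{(\qinit,q) \mid (q,q') \in S_i}$ gives $(\qinit, L_{j_0}(v_{k-1})) \in S$. Since no neighbor of $v_k$ has broadcast strictly before $j_0$, the state $L_{j_0}(v_k)$ is reachable from $\qinit$ by a sequence of $\tau$-transitions of $\PP$. Iterating the internal-transition clause of $S$ on the first component of the pair then produces $(L_{j_0}(v_k), L_{j_0}(v_{k-1})) \in S$.

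For the inductive step, fix $j_0 \leq j < \firstBroadcast{v_k}{\rho}$ and write $C_j \transup{v^j, t^j} C_{j+1}$; I would split on $v^j$. If $v^j \notin \set{v_{k-1}, v_k}$, the pair is unchanged (the orderings rule out any broadcast of $v_{k+1}$ reaching $v_k$ at this point, and no earlier node can influence $v_{k-1}$). If $v^j = v_k$, then $t^j$ must be internal since $j < \firstBroadcast{v_k}{\rho}$, and the internal clause of $S$ applied to the first component concludes. If $v^j = v_{k-1}$ with an internal $t^j$, the same clause applied to the second component concludes. Finally, if $v^j = v_{k-1}$ broadcasts some $m$, two sub-cases arise depending on whether $L_j(v_k)$ admits an outgoing $?m$-transition: the rendez-vous clause of $S$ handles the reception sub-case, while the broadcast-without-reception clause (using $m \notin R(L_j(v_k))$) handles the other. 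In each sub-case the updated pair lies in $S$.

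The main obstacle will be the careful bookkeeping of the orderings, which is what lets us reduce the full line dynamics on $\Gamma$ during $[j_0, \firstBroadcast{v_k}{\rho}]$ to a two-node interaction that $S$ abstracts soundly. A secondary subtlety is that $L_{j_0}(v_k)$ need not equal $\qinit$, which is absorbed by composing the reset rule with the sequence of internal transitions that $v_k$ performed before~$j_0$; once this scaffolding is in place the case analysis of the inductive step is routine.
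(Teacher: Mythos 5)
Your proposal is correct and follows essentially the same route as the paper's proof: strengthen the claim to membership of the pair $(L_j(v_k),L_j(v_{k-1}))$ in $S$, use the ordering constraints of \cref{lemma:CoverLine-2pb-inP:execution-shape-2} to reduce the dynamics on the window $[\firstBroadcast{v_{k-1}}{\rho},\firstBroadcast{v_k}{\rho}]$ to the two-node interaction that the clauses of $S$ encode, and conclude by case analysis on the acting node. The only cosmetic difference is that the paper handles the prefix $[0,\firstBroadcast{v_{k-1}}{\rho}]$ by an explicit sub-induction with the second component frozen, whereas you absorb it into the base case via $\tau$-reachability of $L_{j_0}(v_k)$ from $\qinit$ and iteration of the internal clause — these are the same argument.
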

\begin{proof}
	Let $1 < k \leq N-2$ such that $L_{\firstBroadcast{v_{k-1}}{\rho}}(v_{k-1}) \in H$, we prove the lemma by induction. 
	For ease of readability, denote $j^{k-1} = \firstBroadcast{v_{k-1}}{\rho}$ and $j^k =  \firstBroadcast{v_{k}}{\rho}$.
	By definition of $H$, there exists $q \in Q$ such that $(L_{j^{k-1}}(v_{k-1}),q)\in S$. 

	First let us show that for all $0 \leq j \leq j^{k-1}$, $(L_j(v_k), L_{j^{k-1}}(v_{k-1})) \in S$: for $j = 0$, as $(L_{j^{k-1}}(v_{k-1}),q)\in S$, by definition of $S$, $(\qinit, L_{j^{k-1}}(v_{k-1}))\in S$.
	Let now $0 \leq j < j^{k-1}$ and denote $C_j \transup{v^j, t^j} C_{j+1}$. 
	Recall that in $\rho$, as $j < j^{k-1} \leq j_1$, it holds that $v^j \in \set{v_1, \dots v_{N-2}}$ (\cref{item:lemma:exec-shape-steps} of \cref{lemma:CoverLine-2pb-inP:execution-shape-2}). Furthermore, as $j < j^{k-1}$, then $j < \firstBroadcast{v}{\rho}$ for all $v \in \set{v_{k-1}, \dots, v_{N-2}}$ (\cref{item:lemma:exec-shape-order} of \cref{lemma:CoverLine-2pb-inP:execution-shape-2}). Hence, either $t^j$ is an internal transition, or $v^j \in \set{v_1, \dots v_{k-2}}$. In the latter case, $L_{j+1}(v_k) = L_j(v_k)$.
	Otherwise, either $v^j \neq v_k$ and so $L_{j+1}(v_k) = L_j(v_k)$ and there is nothing to do, or $v^j = v_k$ and so $(L_j(v_k), \tau, L_{j+1}(v_k))\in \Delta$ and so, as $(L_j(v_k), L_{j^{k-1}}(v_{k-1})) \in S$, $(L_{j+1}(v_k), L_{j^{k-1}}(v_{k-1})) \in S$ by definition of $S$.
	
	Hence $(L_{j^{k-1}}(v_k), L_{j^{k-1}}(v_{k-1})) \in S$. We now prove that for all $j^{k-1} \leq j \leq j^k$, $(L_j(v_k), L_j(v_{k+1})) \in S$. As we just proved it for $j = j^{k-1}$, let $j^{k-1} \leq j < j^k$, and denote $C_j \transup{v^j, t^j} C_{j+1}$. Note that from \cref{lemma:CoverLine-2pb-inP:execution-shape-2}, we get that
	(a) $v^j \in \set{v_1, \dots, v_{N-2}}$ as $j < \firstBroadcast{v_k}{\rho} \leq j_1$; (b) as $j \geq j^{k-1}$, it holds that $j \geq \lastBroadcast{v}{\rho}$ for all $v \in \set{v_1, \dots, v_{k-2}}$, hence $v^j \nin  \set{v_1, \dots, v_{k-2}}$; and (c) if $v^j \in \set{v_{k}, \dots v_{N-2}}$, then $t^j$ is an internal transition as $j < \firstBroadcast{v_k}{\rho} \leq \firstBroadcast{v}{\rho}$ for all $v\in \set{v_{k}, \dots v_{N-2}}$.
	
	Overall, we get that either $v^j = v_{k-1}$, or $v^j \in \set{v_{k}, \dots, v_{N-2}}$ and $t^j$ is internal. 
	In the first case, all cases ($t^j$ is an internal transition, $t^j$ is a broadcast transition and the message is received by $v_k$, or $t^j$ is a broadcast transition and the message is not received by $v_k$) are covered by our definition of $S$.
	In the latter case, either $v^j \neq v_k$ and $v_k$ and $v_{k-1}$ remain on the same states, or $v^j =v_k$ and this case is covered by our definition of $S$ ($t^j$ is internal).
	
\end{proof}
Hence, as $L_{\firstBroadcast{v_1}{\rho}}(v_1) \in H$ from \cref{lemma:proof:lemma:CoverLine2pb-complete-case0}, by applying inductively \cref{lemma:proof:lemma:CoverLine2pb-complete-case-ind}, we get that $L_{j}(v_{N-2})\in H$ for all $\firstBroadcast{v_{N-3}}{\rho} \leq j \leq \firstBroadcast{v_{N-2}}{\rho}$. As $j_1 = \firstBroadcast{v_{N-3}}{\rho} + 1$, then $L_{j_1}(v_{N-2})\in H$.

With a similar reasoning on $v_{N+2}$ and between $C_{j_1}$ and $C_{j_2}$, we get the following lemma.
\begin{lemma}
	$L_{j_1}(v_{N-2}) \in H$ and $L_{j_2}(v_{N+2}) \in H$.
\end{lemma}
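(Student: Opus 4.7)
The plan is to combine \cref{lemma:proof:lemma:CoverLine2pb-complete-case0} and \cref{lemma:proof:lemma:CoverLine2pb-complete-case-ind} into a straightforward induction on the position along the line to obtain $L_{j_1}(v_{N-2}) \in H$, and then to carry out the mirror argument on the suffix $C_{j_1} \trans^\ast C_{j_2}$ of the execution to obtain $L_{j_2}(v_{N+2}) \in H$.

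For the first claim, the base case $k=1$ is provided by \cref{lemma:proof:lemma:CoverLine2pb-complete-case0}: specialising to $j = \firstBroadcast{v_1}{\rho}$ gives $(L_{\firstBroadcast{v_1}{\rho}}(v_1), \qinit) \in S$, so $L_{\firstBroadcast{v_1}{\rho}}(v_1) \in H$ by definition of $H$. The inductive step is exactly \cref{lemma:proof:lemma:CoverLine2pb-complete-case-ind}, which propagates membership in $H$ one vertex to the right: from $L_{\firstBroadcast{v_{k-1}}{\rho}}(v_{k-1}) \in H$ we derive $L_j(v_k) \in H$ for every $j \in [\firstBroadcast{v_{k-1}}{\rho}, \firstBroadcast{v_{k}}{\rho}]$. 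Iterating from $k=2$ up to $k=N-2$ yields $L_j(v_{N-2}) \in H$ for every $j \in [\firstBroadcast{v_{N-3}}{\rho}, \firstBroadcast{v_{N-2}}{\rho}]$. It only remains to check that $j_1 = \lastBroadcast{v_{N-3}}{\rho}+1$ lies in this range: the lower bound is obvious since $\firstBroadcast{v_{N-3}}{\rho} \leq \lastBroadcast{v_{N-3}}{\rho} < j_1$, and the upper bound follows from item~2(a) of \cref{lemma:CoverLine-2pb-inP:execution-shape-2} which gives $\lastBroadcast{v_{N-3}}{\rho} \leq \firstBroadcast{v_{N-2}}{\rho}$, with equality excluded because these indices label actions of two different vertices.

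For the second claim, I would redo the identical argument on the suffix $C_{j_1} \trans^\ast C_{j_2}$, now moving from $v_\ell$ towards $v_{N+2}$. By item~1(b) of \cref{lemma:CoverLine-2pb-inP:execution-shape-2}, the vertices acting during this portion of the execution lie in $\{v_{N+2}, \dots, v_\ell\}$, and item~2(b) gives the mirror ordering $\lastBroadcast{v_i}{\rho} \leq \firstBroadcast{v_{i-1}}{\rho}$ for $N+2 \leq i \leq \ell$, which plays the role item~2(a) played on the left. The generating rules of $S$ treat its two components asymmetrically, with the first always playing the receiver role and the second the broadcaster role; here the broadcaster $v_{k+1}$ is to the right of the receiver $v_k$, so maintaining the invariant $(L_j(v_k), L_{\firstBroadcast{v_{k+1}}{\rho}}(v_{k+1})) \in S$ instead of $(L_j(v_k), L_{\firstBroadcast{v_{k-1}}{\rho}}(v_{k-1})) \in S$ makes the proofs of the mirrored analogues of \cref{lemma:proof:lemma:CoverLine2pb-complete-case0,lemma:proof:lemma:CoverLine2pb-complete-case-ind} run through verbatim, and the same set $H$ can be used.

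The only real obstacle is the indexing bookkeeping: one must check that $j_1$ and $j_2$ actually fall inside the intervals returned by the iterated lemma, which relies on combining the orderings of \cref{lemma:CoverLine-2pb-inP:execution-shape-2} with the observation that actions of distinct vertices carry distinct execution indices. Past that, the argument is a direct application of the two preparatory lemmas.
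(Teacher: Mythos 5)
Your proof follows the paper's argument essentially verbatim: the paper likewise obtains $L_{j_1}(v_{N-2})\in H$ by seeding with \cref{lemma:proof:lemma:CoverLine2pb-complete-case0} and iterating \cref{lemma:proof:lemma:CoverLine2pb-complete-case-ind} up to $k=N-2$, then checks that $j_1$ falls in the resulting interval, and dispatches $L_{j_2}(v_{N+2})\in H$ with the phrase ``with a similar reasoning on $v_{N+2}$ and between $C_{j_1}$ and $C_{j_2}$.'' Your added remarks on why the mirrored invariant $(L_j(v_k), L_j(v_{k+1}))\in S$ is compatible with the receiver/broadcaster asymmetry of $S$, and on the index bookkeeping for $j_1$, only make explicit what the paper leaves implicit.
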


Denote $q_1 =L_{j_1}(v_{N-2})$ and $q_2 = L_{j_2}(v_{N+2}) $.  Denote $C_{j_2} \transup{v^{j_2}, t^{j_2}} C_{j_2+1} \transup{v^{j_2 + 1}, t^{j_2 +1}} \cdots \transup{v^{n-1}, t^{n-1}} C_n$. Recall that from \cref{item:lemma:exec-shape-steps} of \cref{lemma:CoverLine-2pb-inP:execution-shape-2}, $v^{j_2}, \dots v^{n-1} \in \set{v_{N-2}, v_{N-1}, v_N, v_{N+1}, v_{N+2}}$. Denote $C_{q_1, q_2} = (\Gamma_5, L_{q_1, q_2})$, and observe that $L_{q_1, q_2}(v_i) = L_{j_2}(v_{N-3 + i})$ for all $1 \leq i \leq 5$.
Hence, $C_{q_1, q_2}= C'_0  \transup{v_2^0, t_2^0} C'_1 \transup{v_2^1, t_2^1} \cdots \transup{v_2^{n-j_2-1}, t_2^{n-j_2-1}} C'_{n-j_2}$ where for all $ 0 \leq j \leq n-j_2$: $C'_j = (\Gamma_5, L'_j)$ and $L'_j(v_i)=  L_{j_2+j}(v_{N-3 + i})$ for all $1 \leq i \leq 5$. And for all $ 0 \leq j < n-j_2$: $t_2^j = t^{j_2 + j}$ and if $v^{j_2 + j} = v_{N-3+i}$ for some $1 \leq i \leq 5$, then $v_2^j = v_i$.
\lug{ici je n'ai pas prouvé formellement, qu'en pensez vous }\nas{je pense que ça va}
Hence, $C_{q_1, q_2} \trans^\ast C'_{n-j_2}$ with $L'_{n-j_2}(v_3) = L_n(v_N) = q_f$.
This concludes the proof of \cref{lemma:CoverLine-2pb-inP:complete}.

\subsubsection{\Iflong{Proof of \cref{lemma:CoverLine-2pb-inP:correct}}Correctness of the algorithm}\label{appendix:subsec:CoverLine-inP:lemma:correct}
\Ifshort{
	This part is devoted to prove the following lemma.
\begin{lemma}\label{lemma:CoverLine-2pb-inP:correct}
	If there exist $q_1, q_2 \in H$, $v \in \Vert{\Gamma_5}$, such that $C_{q_1, q_2} \trans^\ast C$ and $C = (\Gamma_5, L)$ with $L(v) = q_f$, then there exist $\Gamma \in \Lines$, $C_0 \in \II$, $C' = (\Gamma, L')\in \CC$ such that $C_0 \trans^\ast C'$ and $L'(v) = q_f$.
\end{lemma}
}
We start by proving the following lemma.
\begin{lemma}\label{lemma:CoverLine-inP:inH-coverable}
	For all $q \in H$, there exists $k \in \nat$ such that $q$ is coverable with a line topology $\Gamma = (\set{v_1, \dots, v_k}, \set{\langle v_i, v_{i+1}\rangle \mid 1 \leq i <k})$ and vertex $v_1$.
\end{lemma}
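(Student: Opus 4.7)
To prove the lemma, my plan is to prove by induction on the stage index a stronger statement about the whole relation $S$, and then read off the lemma as the special case $q_1=q$. The strengthening reads: for every $(q_1,q_2)\in S$ there exist an integer $k\geq 2$, a line topology $\Gamma_k$ on $\{v_1,\dots,v_k\}$, and an execution $C_0\trans^\ast C$ from the initial configuration to some $C=(\Gamma_k,L)$ such that $L(v_1)=q_1$, $L(v_2)=q_2$, and moreover $v_1$ never performs a broadcast transition along this execution. The lemma then follows at once: if $q\in H$, pick any $q'$ with $(q,q')\in S$ and invoke the strengthened claim.

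The induction is on the smallest $n$ with $(q_1,q_2)\in S_n$. The base case $n=0$ forces $(q_1,q_2)=(\qinit,\qinit)$, handled by the empty execution on a two-vertex line. For the inductive step I look at the rule that introduced $(q_1,q_2)$ into $S_{n+1}\setminus S_n$ from a predecessor pair $(p_1,p_2)\in S_n$, apply the induction hypothesis to that predecessor, and extend the resulting execution by one step. For the internal-action rule this step is a $\tau$-transition either of $v_1$ (which stays silent) or of $v_2$ (which does not affect $v_1$). For the received-broadcast rule the step fires the broadcast $(p_2,!!m,q_2)$ from $v_2$ together with the matching reception $(p_1,?m,q_1)$ at $v_1$; the message may also reach $v_3$, but its state is not tracked by the invariant. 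For the ignored-broadcast rule, $v_2$ broadcasts $m$ with $m\notin R(q_1)$ and the semantics forces $v_1$ to stay in $q_1$. In all three cases $v_1$ performs at most a reception or an internal transition, so it remains broadcast-silent as required.

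The delicate case is the reset rule that adds $(\qinit,q)$ whenever $(q,q')\in S_n$. Applying the induction hypothesis to $(q,q')$ yields a line $\Gamma_k$ and an execution along which the old $v_1$ reaches $q$ and never broadcasts. I extend $\Gamma_k$ on the left by one fresh vertex and renumber so that the fresh vertex becomes the new $v_1$ and every old $v_i$ becomes the new $v_{i+1}$. I then replay the original sequence of transitions on the enlarged line, with each action of old $v_i$ performed by new $v_{i+1}$. The crucial check is that the new $v_1$ is untouched during the replay: its only neighbor is the new $v_2$, which is the old $v_1$, and by the induction hypothesis old $v_1$ never broadcasts, so no message ever reaches the new $v_1$ and it stays in $\qinit$. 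The replay therefore ends with the new $v_1$ in $\qinit$, the new $v_2$ in $q$, and still no broadcast by the new $v_1$, which closes the induction. The main obstacle of the whole argument is precisely this reset case: it is the sole reason for threading the auxiliary ``$v_1$ is broadcast-silent'' clause through the induction.
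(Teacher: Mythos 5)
Your proof is correct and follows essentially the same route as the paper: strengthen to pairs $(q_1,q_2)\in S$, induct on the stage at which the pair enters $S$, handle each rule by extending the execution one step, and treat the reset rule by prepending a fresh leftmost vertex and replaying. The only (harmless) difference is that you thread the ``$v_1$ never broadcasts'' fact as an explicit induction invariant, whereas the paper derives it in the reset case from $H\subseteq Q_0\cup Q_1^r$ and $2$-phase-boundedness.
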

\begin{proof}
	We will in fact prove that for all $(q_1, q_2) \in S$, there exists $\Gamma_{q_1, q_2} \in \Lines$ such that $\Vert{\Gamma_{q_1, q_2}} = \set{v_1, \dots, v_k}$, $\Edges{\Gamma_{q_1,q_2}} = \set{\langle v_i, v_{i+1} \rangle \mid 1 \leq i <k}$ for some $k \in \nat$ with $k \geq 2$ and there exists $C_0  \in \II$, $C = (\Gamma_{q_1, q_2}, L)$ with $C_0 \trans^\ast C$ and $L(v_1) = q_1$ and $L(v_2) = q_2$.
	
	Denote $S_0, \dots, S_K$ the subsets defined by the algorithm to build $S$. We have that $S = S_{N}$.
	We prove the property to be true for all $(q_1, q_2) \in S_i$ by induction on $0 \leq i \leq N$.
	For  $i = 0$, $S_0 = \set{(\qinit, \qinit)}$, this is trivial: define $\Gamma_{\qinit, \qinit} = (\set{v_1,v_2}, \set{\langle v_1, v_2\rangle})$ and take the initial configuration $C = (\Gamma_{\qinit, \qinit}, L_0)$ with $L_0(v_1)=L_0(v_2) = \qinit$.
	Let $0\leq i <K$, and assume we proved the property to be true for all $(q_1,q_2) \in S_i$. 
	Let $(q_1,q_2) \in S_{i+1} \in S_i$.
	\begin{itemize}
		\item if there exists $(q_1, p_2) \in S_i$ such that $(p_2, \tau, q_2) \in \Delta$, let $\Gamma_{q_1, p_2}\in \Lines$, and $C_0$ and $C = (\Gamma_{q_1, p_2}, L)$ obtained from the induction hypothesis such that $C_0 \trans^\ast C$ and $L(v_1) = q_1$ and $L(v_2) = p_2$. Then, $C \transup{v_2, (p_2, \tau, q_2)} C'$ and $C' = (\Gamma_{q_1, p_2}, L')$ with $L'(v_1) = q_1$ and $L'(v_2) = q_2$;
		
		\item if there exists $(p_1, q_2) \in S_i$ such that $(p_1, \tau, q_1) \in \Delta$, the reasoning is the same as in the previous case;
		
		\item if there exists $(p_1, p_2) \in S_i$ and $m \in \Sigma$ such that $(p_1, ?m, q_1) \in \Delta$ and $(p_2, !!m, q_2) \in \Delta$, we use a similar reasoning as in the previous cases;
		
		\item  if there exists $(q_1, p_2) \in S_i$ and $m \in \Sigma$ such that$(p_2, !!m, q_2) \in \Delta$ and $m\nin R(q_1)$, we use a similar reasoning as in the previous cases;
		
		\item if there exists $q_3$ such that $(q_2, q_3) \in S_i$ and $q_1 = \qinit$, then let $\Gamma_{q_2, q_3}$, $C_0$ and $C = (\Gamma_{q_2, q_3}, L)$ obtained from the induction hypothesis such that $C_0 \trans^\ast C$ and $L(v_1) = q_2$ and $L(v_2) = q_3$. Define $\Gamma_{q_1, q_2} = (\set{v'_1} \cup \Vert{\Gamma_{q_2,q_3}}, \set{ \langle v'_1, v_1 \rangle } \cup \Edges{\Gamma_{q_2,q_3}})$. Denote $C_0 \transup{v^0, t^0}\cdots \transup{v^{n-1}, t^{n-1}} C_n = C$. Recall that as $L_n(v_1) \in H$, $L_n(v_1) \in Q_0 \cup Q_1^r$, hence for all $0 \leq i \leq n-1$, if $v^i = v_1$ then $t^i$ is internal. Hence, there exists $C'_0 \dots C'_n$ such that $C'_0 \transup{v^0, t^0} \cdots \transup{v^{n-1}, t^{n-1}} C'_n$ and: for all $0 \leq i < n-1$, with $C'_i = (\Gamma_{q_1, q_2}, L'_i)$ and $C_i = (\Gamma_{q_2, q_3})$, and for all $v \in \Vert{\Gamma_{q_2, q_3}}$, $L'_i(v) = L_i(v)$ and $L'_i(v'_1) = \qinit$. Hence $L'_n(v'_1) = \qinit$ and $L'_n(v_1) = q_2$.
	\end{itemize}
\end{proof}
We are now ready to prove \cref{lemma:CoverLine-2pb-inP:correct}.\\
\begin{proofof}{\cref{lemma:CoverLine-2pb-inP:correct}}
	Let $q_1, q_2 \in H$, from \cref{lemma:CoverLine-inP:inH-coverable}, let $\Gamma_{q_1} = (\set{v_1, \dots, v_k}, \set{\langle v_i, v_{i+1} \rangle \mid 1 \leq i < k})$, and $\Gamma_{q_2}= (\set{v'_1, \dots, v'_n}, \set{\langle v'_i, v'_{i+1} \rangle \mid 1 \leq i < n})$ the two line topologies such that $q_1$ is coverable with $v_1$ and $q_2$ with $v'_1$.
	Denote $C_0 \trans^\ast C$ with $C = (\Gamma_{q_1}, L)$ and $L(v_1)= q_1$ and $C'_0 \trans^\ast C'$ with $C' = (\Gamma_{q_2}, L')$ and $L'(v'_1)= q_2$. As $q_1, q_2 \in H$ and $H \subseteq Q_0 \cup Q_1^r$, $v_1$ does not broadcast anything between $C_0$ and $C$ and $v'_1$ does not broadcast anything between $C'_0$ and $C'$.
	
	Define $\Gamma = (V, E)$ such that $V = \Vert{\Gamma_{q_1}} \cup \Vert{\Gamma_{q_2}} \cup \set{u_2, u_3,u_4}$ and $E = \Edges{\Gamma_{q_1}}\cup \Edges{\Gamma_{q_2}} \cup \set{\langle v_1, u_2\rangle, \langle u_2, u_3 \rangle, \langle u_3, u_4 \rangle, \langle u_4, v'_1 \rangle}$.
	Note that $v_1$ and $v'_1$ are not neighbors, and that each node in $\Vert{\Gamma_{q_1}}$ [resp. in $\Vert{\Gamma_{q_2}}$] has the same neighborhood as before, except for $v_1$, and $v'_1$ which have an additional new neighbor, respectively $u_2$ and $u_4$.
	Denote $C''_0 = (\Gamma, L''_0)$ with $L''_0(v) = \qinit$ for all $v \in \Vert{\Gamma}$. 
	
	Denote $C_0 \transup{v^1, t^1} C_1 \transup{v^2, t^2} \cdots \transup{v^{n_1}, t^{n_1}} C_{n_1} = C$ and $C'_0 \transup{v'^1, t'^1} C'_1 \transup{v'^2, t'^2} \cdots \transup{v'^{n_2}, t'^{n_2}} C'_{n_2} =  C'$.
	Then $C''_0\transup{v^1, t^1} C''_1 \transup{v^2, t^2} \cdots \transup{v^{n_1}, t^{n_1}} C''_{n_1}\transup{v'^1, t'^1} C''_{n_1 +1} \transup{v'^2, t'^2} \cdots \transup{v'^{n_2}, t'^{n_2}} C''_{n_1 + n_2}$ with $C''_{n_1 + n_2} = (\Gamma, L''_{n_1 + n_2})$ and $L''_{n_1 + n_2}(v) = L_{n_1}(v)$ for all $v \in \Vert{\Gamma_{q_1}}$ and $L''_{n_1 + n_2}(v) = L'_{n_2}(v)$ for all $v \in \Vert{\Gamma_{q_2}}$ and $L''_{n_1 + n_2}(v) = \qinit$ for all $v \in \set{u_2, u_3, u_4}$.
	Hence $L''_{n_1 + n_2}(v_1) = q_1$ and $L''_{n_1 +n_2}(v'_1) = q_2$.
	
	Finally, rename $\Gamma_5 = (\set{u_1, u_2, u_3, u_4, u_5}, \set{\langle u_i, u_{i+1} \rangle \mid 1 \leq i < 5})$, and $v_1$ by $u_1$ and $v'_1$ by $u_5$.
	Denote $C^5_{q_1, q_2} \transup{u^1, \delta^1} C^5_1 \transup{u^2, \delta^2} \cdots \transup{u^{n_3}, \delta^{n_3}} C^5_{n_3}$. Denote $C^5_{q_1, q_2} = (\Gamma_5, L_{q_1, q_2})$ and $C^5_{n_3} = (\Gamma_5, L^5_{n_3})$ 
	Note that for all $u \in \Vert{\Gamma_5}$, $L_{q_1, q_2}(u) = L''_{n_1 +n_2}(u)$. Hence, $C''_{n_1 + n_2}\transup{u^1, \delta^1} C''_{n_1 +n_2 +1} \transup{u^2, \delta^2} \cdots \transup{u^{n_3}, \delta^{n_3}} C''_{n_1 + n_2 + n_3}$ with $C''_{n_1 + n_2 +n_3} = (\Gamma, L''_{n_1 + n_2 +n_3})$ and $L''_{n_1 + n_2+n_3}(v) = L_{n_1}(v)$ for all $v \in \set{u_1, u_2, u_3, u_4, u_5}$.
	Hence $L''_{n_1 + n_2+n_3}(u_3) = q_f$.
	\lug{c'est moins formel que d'habitude car je ne pense pas que ce soit compliqué et on est deja à 45 pages, est ce que ça vous va ? }
	

\end{proofof}
\clearpage

\end{document}